\newcommand{\C}{\mathbbm{C}}
\newcommand{\N}{\mathbbm{N}}
\newcommand{\cT}{\mathcal{T}}
\newcommand{\supp}{\operatorname{supp}}
\newtcolorbox{mybox}{colback=bluebox!5!white,colframe=bluebox!75!black}
\newtheorem{thm}{\protect\theoremname}[section]
\theoremstyle{plain}
\newtheorem{lem}[thm]{\protect\lemmaname}
\theoremstyle{plain}
\theoremstyle{plain}
\theoremstyle{plain}
\newtheorem*{lem*}{\protect\lemmaname}
\theoremstyle{plain}
\newtheorem*{rem*}{\protect\remarkname}
\theoremstyle{plain}
\newtheorem*{thm*}{\protect\theoremname}
\theoremstyle{plain}
\newtheorem{prop}[thm]{\protect\propositionname}
\theoremstyle{plain}
\newtheorem*{prop*}{\protect\propositionname}
\theoremstyle{plain}
\newtheorem*{alg*}{Algorithm}
\theoremstyle{plain}
\newtheorem{cor}[thm]{\protect\corollaryname}
\newtheorem{defn}[thm]{Definition}
  \providecommand{\corollaryname}{Corollary}
  \providecommand{\lemmaname}{Lemma}
  \providecommand{\propositionname}{Proposition}
  \providecommand{\remarkname}{Remark}
\providecommand{\theoremname}{Theorem}
\newcommand{\cH}{\mathcal{H}}
\newcommand{\cK}{\mathcal{K}}
\newcommand{\cL}{\mathcal{L}}
\newcommand{\cHproj}{\mathcal{H}_{\mathrm{proj}}}
\newcommand{\cKproj}{\mathcal{K}_{\mathrm{proj}}}
\newcommand{\Or}{\mathcal{O}}
\newcommand{\tr}{\mathrm{tr}}
\newcommand{\dd}{\mathrm{d}}
\DeclarePairedDelimiter{\norm}{\Vert}{\Vert}
\renewcommand{\Re}{\operatorname{Re}}
\renewcommand{\Im}{\operatorname{Im}}
\title{Heisenberg-limited Hamiltonian learning continuous variable systems via engineered dissipation}
\author{%
  Tim Möbus \\
  Department of Mathematics \\
  University of T\"ubingen\\
  T\"ubingen, 72074, Germany \\
  \texttt{tim.moebus@uni-tuebingen.de} \\
   \And
  Andreas Bluhm \\
  Univ.\ Grenoble Alpes, CNRS, Grenoble INP, LIG\\
  38401 Saint Martin d'Hères, France \\
  \texttt{andreas.bluhm@univ-grenoble-alpes.fr} \\
   \AND
  Tuvia Gefen \\
  Racah Institute of Physics\\
  The Hebrew University of Jerusalem \\
  Jerusalem 91904, Givat Ram, Israel\\
  \texttt{tuvia.gefen@mail.huji.ac.il} \\
  \And
  Yu Tong \\
  Department of Mathematics,\\
  Department of Electrical and Computer Engineering,\\
  and Duke Quantum Center\\
  Duke University\\
  Durham, NC 27708, USA\\
  \texttt{yu.tong@duke.edu} \\
  \And
  Albert H. Werner \\
  QMATH, Department of Mathematical Sciences\\
  University of Copenhagen\\
  2100 Copenhagen, Denmark\\
  \texttt{Werner@math.ku.dk} \\
  \And
  Cambyse Rouz{\'e}\\
  Inria Saclay, Télécom Paris - LTCI, \\
  Institut Polytechnique de Paris\\
  91120 Palaiseau, France\\
  \texttt{cambyse.rouze@inria.fr}\\
}
\begin{document}

\maketitle

\begin{abstract}
    \vspace*{-1ex}Discrete and continuous variables oftentimes require different treatments in many learning tasks. Identifying the Hamiltonian governing the evolution of a quantum system is a fundamental task in quantum learning theory. While previous works mostly focused on quantum spin systems, where quantum states can be seen as superpositions of discrete bit-strings, relatively little is known about Hamiltonian learning for continuous-variable quantum systems.
    In this work we focus on learning the Hamiltonian of a bosonic quantum system, a common type of continuous-variable quantum system. This learning task involves an infinite-dimensional Hilbert space and unbounded operators, making mathematically rigorous treatments challenging. We introduce an analytic framework to study the effects of strong dissipation in such systems, enabling a rigorous analysis of cat qubit stabilization via engineered dissipation. This framework also supports the development of Heisenberg-limited algorithms for learning general bosonic Hamiltonians with higher-order terms of the creation and annihilation operators. Notably, our scheme requires a total Hamiltonian evolution time that scales only logarithmically with the number of modes and inversely with the precision of the reconstructed coefficients. On a theoretical level, we derive a new quantitative adiabatic approximation estimate for general Lindbladian evolutions with unbounded generators. Finally, we discuss possible experimental implementations.
\end{abstract}

\newpage

\section{Introduction}

Hamiltonian learning is the task of identifying the unknown Hamiltonian governing the evolution of a quantum system. Besides being a natural problem fundamental to our understanding of quantum systems, it is useful for tasks ranging from quantum simulation to quantum error correction, where precise knowledge of the Hamiltonian allows us to better control the quantum systems and mitigate errors. 
From a learning theory perspective, Hamiltonian learning is deeply connected to many classical tasks such as learning Markov random fields \cite{GaitondeMoitraMossel2024bypassing}.

While Hamiltonian learning has been studied in a large number of works \cite{Seif2021compressed,evans2019scalablebayesianhamiltonianlearning,li2020hamiltonian,che2021learning,HaahKothariTang2022optimal,yu2023robust,hangleiter2024robustlylearninghamiltoniandynamics,StilckFrança2024,ZubidaYitzhakiEtAl2021optimal,BaireyAradEtAl2019learning, bairey2020learning,GranadeFerrieWiebeCory2012robust,gu2022practical,wilde2022learnH,KrastanovZhouEtAl2019stochastic,Caro_2024,mobus2023dissipation,HolzapfelEtAl2015scalable, HuangTongFangSu2023learning,dutkiewicz2023advantage,MiraniHayden2024learning,NiLiYing2024quantum,LiTongNiGefenYing2023heisenberg,BoixoSomma2008parameter,bakshi2024structure,WangLi2024simulation,odake2023universal,ZhangLinNarangLuo2025hamiltonian,MaFlammiaPreskillTong2024learning,HuMaGongEtAl2025ansatz}, most of the works focus on quantum spin systems where the local Hilbert space dimension is finite and the Hamiltonian is a bounded operator. These assumptions are often physically acceptable and make the analysis much easier. However, many quantum systems involve infinite-dimensional Hilbert spaces and unbounded operators in the Hamiltonians. Examples include superconducting circuits \cite{krantz2019quantum, clerk2020hybrid, blais2021circuit}, integrated photonic circuits \cite{wang2020integrated} and optomechanical platforms \cite{aspelmeyer2014cavity, metcalfe2014applications}. Previous works that study Hamiltonian learning for bosonic systems \cite{hangleiter2024robustlylearninghamiltoniandynamics,LiTongNiGefenYing2023heisenberg,mobus2023dissipation,ZhangLinNarangLuo2025hamiltonian} typically restrict to special models, e.g., generalized Bose-Hubbard models, or do not achieve the Heisenberg-limited scaling, which is the fundamental bound of quantum metrology that can be saturated in the spin Hamiltonian learning setting.

Studying a more general form of bosonic Hamiltonians presents many challenges. Notably, the squeezing terms, which are quadratic in the bosonic creation and annihilation operators and are commonly seen in the quantum metrology setting, can cause the particle number to grow exponentially in time, leading to the breakdown of the Lieb-Robinson bound that governs the information propagation speed in the quantum system \cite{eisert2009supersonic}. Terms that are higher-order polynomials of the creation and annihilation operators present even greater challenges.

In this work we use engineered dissipation to regularize the dynamics to partially avoid the problems caused by general bosonic Hamiltonians. Engineered dissipation also allows us to reshape the Hamiltonian following \cite{HuangTongFangSu2023learning} into a form that makes it easy for us to extract coefficients from it. This forms the basis for our learning protocol, which allows us to learn an $m$-mode low-intersection Hamiltonian (each Hamiltonian terms involves at most a constant number of bosonic modes and each mode is involved in at most a constant number of terms) with $\Or(\epsilon^{-1}\log(m/\delta))$ total evolution time, where $\epsilon$ is the maximum error on each Hamiltonian coefficients, and $\delta$ is the allowed failure probability. Here, by the total evolution time we mean the total time we need to let the system evolve under the unknown Hamiltonian.

Our result relies on a framework to analyze the effect of engineered dissipation. Specifically, we show strong dissipation projects the dynamics to the kernel of the jump operators. This effect is used in various ways (see Propositions~\ref{prop-main:single-modes-adiabatic-limit}, \ref{prop:few-modes-adiabatic-limit}, \ref{prop:decoupling-adiabatic-limit}) to decouple the system into clusters that do not interact with each other, and to make the Hamiltonian diagonal in a given basis within each cluster. 

The same effect also underlies the stability of the cat code, where a bosonic mode is stabilized within the subspace spanned by $\ket{\pm \alpha}$ through engineered dissipation, significantly reducing the phase-flip (and sometimes also the bit-flip) error rate \cite{Mirrahimi.2014,Azouit.2016,Guillaud.2019,Guillaud.2023}. While previous works have analyzed convergence toward the limiting dynamics in related settings, these results have so far focused on the optimal convergence rate and extensions in finite dimensions \cite{Zanardi.2014, Zanardi.2015, burgarth2020quantum} or on bounded Lindbladians \cite{Glueck.2016, Barankai.2018}. More recently, a convergence result for an unbounded driving term with a bounded supplementary Lindbladian and a slightly suboptimal convergence rate was proven in \cite{salzmann2024quantitative}. Here, the convergence of the dissipative driving term to a projection for large times is assumed.

Our work is the first to directly address the cat qubit setting. To do so, we first prove that the dissipation converges exponentially fast to the cat code space for large times, thereby extending the result in \cite{Azouit.2016}. Next, we provide an adiabatic limit showing that Hamiltonians are projected to the cat code space. For the driving Hamiltonian defined by the position operator, the effective dynamics reduces to the rotation around the $x$-axis \cite{Mirrahimi.2014}, experimentally realized in \cite{Touzard.2018}, and we provide a rigorous proof with optimal and explicit constants to verify the construction of this in the cat code. The techniques we develop in this work open the way to analyzing the effect of strong dissipation in quantum control and learning settings in a mathematically rigorous manner. %

Hamiltonian learning is essentially the task of identifying the generator of Markovian dynamics, and can be seen as a quantum analog of learning the generator of a Markov chain \cite{Hao2018learning,Han2021optimal,Wolfer2019minimax,Wolfer2021discrete}. We believe that some of the ideas presented in this work, such as the divide-and-conquer approach used in multi-mode learning, and the multi-step decision-making strategy to enhance the robustness of frequency estimation, can also be valuable for classical learning tasks. Additionally, our work motivates further exploration of learning tasks for continuous-variable quantum systems. While these systems present certain challenges, they also offer unique features that can enhance learning tasks. As an example, diffusion models \cite{sohl2015deep,ho2020denoising,song2019generative,song2020score} are most naturally formulated for distributions over continuous variables, and are only recently generalized to discrete variables with non-trivial modifications \cite{lou2023discrete,ren2024discrete,ren2025fast}. Our work provides important ideas and techniques for exploring continuous-variable learning in the quantum realm.

\section{Main results}
\label{sec:main_results}

The main contribution of this article is to show that there exist Heisenberg-limited algorithms to learn bosonic Hamiltonians of bounded degree, with the help of engineered dissipation. 

\paragraph{Setting} Bosonic systems with $\mathfrak{k}$ bosonic modes are described in terms of creation and annihilation operators $b^\dagger_i$ and $b_i$, one for each mode. 
A bosonic mode $i$ is in a state $\ket{k}_i$, indexed by a non-negative integer $k$, or their superposition. The annihilation operator $b_i$ annihilates an excitation on the $i$th mode by $b_i\ket{k}_i=\sqrt{k}\ket{k-1}_i$, while the creation operator $b_i^\dag$ has the opposite effect, with $b_i^\dag\ket{k}_i=\sqrt{k+1}\ket{k+1}_i$.
We start by defining the considered Hamiltonian as follows:
        \begin{defn}[Low-intersection bosonic Hamiltonian] \label{def-main:bosonic-H}
            A low-intersection bosonic Hamiltonian acting on $m$ modes is a Hamiltonian that takes the following form:
            \begin{align}\label{eq:main-bosonic-H}
                H = \sum_{a=1}^M E_a\,,
            \end{align}
            where, labeling the sites in $\operatorname{supp}(E_a)$ as $1, \ldots, \mathfrak{k}$, each $E_a$ is an $\mathfrak{k}$-mode interaction of the form
            \begin{align*} 
                E_a = \sum_{\mathbf{j},\mathbf{j}' \in \mathbb{N}^{ \mathfrak{k}}\,:\, \|\mathbf{j}+\mathbf{j}'\|_1\le d} h^{(a)}_{\mathbf{j},\mathbf{j}'} (\mathbf{b}^\dagger)^{\mathbf{j}}\,\mathbf{b}^{\mathbf{j}'}\,.
            \end{align*}
            We assume that at least one of $\mathbf{j} \neq 0$ or $\mathbf{j}' \neq 0$ holds. More generally, given a set $C\subseteq [m]$, we denote by $H_C$ the sum of all interactions supported in $C$. Here, we use the multi-index notation $\mathbf{b}^{\mathbf{j}}=\prod_{\ell\in \operatorname{supp}(E_a)}b_\ell^{j_\ell}$. Additionally, we assume that $|h^{(a)}_{\mathbf{j},\mathbf{j}'}| \le 1$. Since $E_a$ is self-adjoint, we have $\overline{h^{(a)}_{\mathbf{j},\mathbf{j}'}} = h^{(a)}_{\mathbf{j}',\mathbf{j}}$. We assume that each $E_a$ acts on at most $ \mathfrak{k}=\mathcal{O}(1)$ modes and that each $E_a$ overlaps with at most $ \mathfrak{d}=\mathcal{O}(1)$ other interactions $E_b$.
        \end{defn}
Our learning protocol generates estimates $\hat{h}^{(a)}_{\mathbf{j},\mathbf{j}'}$ such that
\begin{equation}
    \max_{\mathbf{j}, \mathbf{j}'}|\hat{h}^{(a)}_{\mathbf{j},\mathbf{j}'}-{h}^{(a)}_{\mathbf{j},\mathbf{j}'}|\leq \epsilon\text{ with probability at least }1-\delta.
\end{equation} 

    \paragraph{Heisenberg-limited learning of low-intersection bosonic Hamiltonians} Our main result is that a Hamiltonian as in Definition \ref{def-main:bosonic-H} can be learned by a Heisenberg-limited algorithm. Our core insight is that by adding sufficiently strong engineered dissipation, we can restrict the time evolution to a subspace of our choosing, which allows us to extract the coefficients of $H$ from the time evolution under an effective Hamiltonian. The following is an informal version of Theorem \ref{thm:estimating_multi_mode}: 
\begin{thm} \label{thm-main:multi-mode-learning}
    There exists an algorithm which makes use of dissipation with strength $\gamma=\Or(m^2\epsilon^{-1}\log^{2d+1/2}(1/\epsilon))$ as introduced below, that can estimate all coefficients of $H$ to precision $\epsilon$ with probability at least $1-\delta$. It requires
    \begin{align*}
        \Or((1/\epsilon)\log(m/\delta)) & \quad \text{total evolution time, and} \\
        \Or(\log^2(\log(1/\epsilon)/\epsilon)\log(m/\delta)) & \quad \text{experiments.}
    \end{align*}
\end{thm}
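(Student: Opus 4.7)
The plan is to combine a divide-and-conquer strategy over the interaction hypergraph with the engineered-dissipation toolkit advertised in the introduction. Since each term $E_a$ overlaps with at most $\mathfrak{d}=\Or(1)$ others, the interaction graph admits a proper coloring with $\chi=\Or(\mathfrak{d})$ color classes; within one class the targeted interactions have disjoint supports, so handling them in parallel only costs a constant overhead in total evolution time. It therefore suffices to design a protocol that learns the coefficients of a single interaction $E_a$ while all the other terms of $H$ act as unwanted background, and then union-bound over interactions.

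To isolate and reshape $E_a$, I would add dissipation of strength $\gamma$ with jump operators tailored so that two things happen simultaneously. First, on every mode belonging to a neighboring interaction $E_b$ but not to $E_a$, the jump operator has a one-dimensional vacuum-like kernel; by Proposition~\ref{prop:decoupling-adiabatic-limit} this effectively switches off the neighbors and decouples $\supp(E_a)$ from the remainder of the system. Second, on $\supp(E_a)$ itself, the jump operators are chosen so that the projection of $E_a$ onto their common kernel is diagonal in a known, experimentally accessible orthonormal basis; this is where Proposition~\ref{prop-main:single-modes-adiabatic-limit} and Proposition~\ref{prop:few-modes-adiabatic-limit} enter, turning the full dynamics into an effective diagonal evolution on an $\Or(1)$-dimensional subspace up to an adiabatic error of order $\gamma^{-1}$ times a polynomial in the photon-number moments of the state. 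Truncating photon number at level $\Or(\log(1/\epsilon))$ (with exponentially small tail loss) and tracking how these moments grow through the creation operators of degree $d$ yields the stated $\gamma=\Or(m^{2}\epsilon^{-1}\log^{2d+1/2}(1/\epsilon))$ scaling.

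Given the diagonal effective Hamiltonian, I would read off its eigenvalues by a robust Heisenberg-limited frequency-estimation subroutine in the spirit of \cite{HuangTongFangSu2023learning,LiTongNiGefenYing2023heisenberg}: a bootstrapped Ramsey / iterative phase-estimation scheme attaining precision $\epsilon$ with total evolution time $\Or(1/\epsilon)$ and $\Or(\log^{2}(1/\epsilon))$ experiments per eigenvalue, boosted to success probability $1-\delta'$ by a standard median trick at a further $\Or(\log(1/\delta'))$ factor in the experiment count only. Inverting a small linear system turns the $\Or(1)$ diagonal entries into the coefficients $h^{(a)}_{\mathbf{j},\mathbf{j}'}$. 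A union bound over the $M=\Or(m)$ interactions with boosted failure probability $\delta'=\delta/M$ then yields the advertised $\Or(\epsilon^{-1}\log(m/\delta))$ total evolution time and $\Or(\log^{2}(\log(1/\epsilon)/\epsilon)\log(m/\delta))$ experiments, the extra $\log\log(1/\epsilon)$ inside the log reflecting the geometrically growing phase-estimation rounds.

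The hard part will be the adiabatic control in infinite dimensions. Both the Hamiltonian and the Lindbladian generator are unbounded, so standard operator-norm adiabatic theorems do not apply and the error analysis has to track the growth of photon-number moments along the entire evolution. This is exactly where the new quantitative adiabatic estimate for unbounded Lindbladian generators announced in the introduction does the real work; its careful application, combined with a photon-cutoff argument that keeps the tail beyond $\Or(\log(1/\epsilon))$ negligible and a moment bound that is polynomial of degree depending on $d$, is what fixes both the polylog factor $\log^{2d+1/2}(1/\epsilon)$ in $\gamma$ and the $m^2$ prefactor (needed to suppress residual crosstalk between clusters down to $\epsilon/m$ before union-bounding). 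A secondary subtlety is choosing the jump operators so that decoupling and diagonalization happen on the \emph{same} kernel simultaneously, which constrains the dissipation design and ties the analysis back to the cat-code stabilization results established earlier in the paper.
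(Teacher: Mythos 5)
Your overall architecture is the same as the paper's: color the cluster interaction graph, use vacuum-projecting dissipation on the complementary modes to decouple each color class into disjoint $\Or(1)$-mode clusters (Proposition~\ref{prop:decoupling-adiabatic-limit}), project each cluster onto a low-dimensional subspace via Propositions~\ref{prop-main:single-modes-adiabatic-limit} and~\ref{prop:few-modes-adiabatic-limit}, extract eigenvalues of the effective Hamiltonian by robust Heisenberg-limited frequency estimation, and union-bound over the $\Or(m)$ clusters. That skeleton is correct and matches Appendices~\ref{subsec:decoupling-mode-adiabatic-thm} and~\ref{sec:learn_multimode}.

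The genuine gap is in the step ``inverting a small linear system turns the $\Or(1)$ diagonal entries into the coefficients $h^{(a)}_{\mathbf{j},\mathbf{j}'}$.'' For a fixed choice of dissipation parameter $\alpha$, the experiment yields essentially a single number per cluster, namely $\braket{\alpha|H_C|\alpha}$; the diagonal entries of one effective Hamiltonian do not determine the $\Or(1)$-but-many coefficients $h^{(a)}_{\mathbf{j},\mathbf{j}'}$. The paper must rerun the protocol for a structured family of $\alpha=Ae^{i\theta}$ (Chebyshev nodes in the amplitude $A$, Fourier nodes in the phase $\theta$) and then invert what is effectively an extrapolation of a degree-$d$ polynomial to $A=0$ from a window $[A_-,A_+]$ with $A_-=\Omega(\sqrt{\log(1/\epsilon)})$. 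The conditioning of that inversion is not automatic: it is controlled only because of the constraints $A_+-A_-\geq 4A_->0$ and the Chebyshev/Fourier node choices (Corollary~\ref{cor:coef_est_error}, Lemma~\ref{lem:bounding_coef_with_chebyshev_nodes_multivariate}); a generic ``small linear system'' here would amplify the $\epsilon_1$-level eigenvalue errors by unbounded factors. Two further, smaller discrepancies: the $\log^{2d+1/2}(1/\epsilon)$ in $\gamma$ does not come from a photon-number cutoff (the paper never truncates; its adiabatic theorem handles unbounded generators via relative boundedness), but from requiring the coherent-state non-orthogonality error $\Or(|\alpha|^d e^{-|\alpha|^2/2}t)$ to stay $\Or(1)$ at $t=\Or(1/\epsilon)$, forcing $|\alpha|=\Omega(\sqrt{\log(1/\epsilon)})$ and hence $\gamma=\Or(t|\alpha|^{4d+2})$; and your claim that median boosting costs a $\log(1/\delta)$ factor ``in the experiment count only'' contradicts the theorem you are proving, whose total evolution time is $\Or((1/\epsilon)\log(m/\delta))$ precisely because each of the $\Or(\log(m/\delta))$ repetitions of the frequency-estimation subroutine consumes its own $\Or(1/\epsilon)$ of evolution.
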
 
Thus, the learning results in this article can be seen as combining the best aspects of previous works like \cite{mobus2023dissipation} and \cite{LiTongNiGefenYing2023heisenberg}: While \cite{mobus2023dissipation} gives learning algorithms for any bosonic Hamiltonian of bounded degree using engineered dissipation, it does not attain Heisenberg scaling, whereas \cite{LiTongNiGefenYing2023heisenberg} gives learning algorithms with Heisenberg scaling, but is limited to Hamiltonians of Bose-Hubbard type.

The generator $\mathcal L$ of the dissipation consists of single-mode terms $\mathcal L_i$.  On each mode $i$, for $\alpha_i \in \mathbb C$ we consider a dissipation in the Lindbladian form $\mathcal L_i := \mathcal L[L_{1,\alpha_i}]+\mathcal L[L_{ r_i,\alpha_i}]$, where $\mathcal{L}[L]=L\cdot L^\dagger-\frac{1}{2}\{L^\dagger L,\cdot\}$ and
\begin{equation*}
    L_{r, \alpha} = b^{r} (b-\alpha)
\end{equation*}
are the jump operators of the Lindblad operators $\mathcal L[L_{r,\alpha}]$. Here, $r_i$ has to be chosen sufficiently large as a function of $d$. We then run a time evolution defined by
\begin{equation}\label{eq:main-master-equation}
    \frac{d}{dt}\rho(t) = -i[H, \rho(t)] + \gamma \mathcal{L}(\rho(t)), \qquad \rho(0) = \rho_0 \,.
\end{equation}
Above, we used the standard notations for the (formal) commutator $[A,B]=AB-BA$ and anticommutator $\{A,B\}=AB+BA$. Motivated by the bosonic cat codes \cite{lescanne2020exponential, chamberland2022building, Guillaud.2023}, which are based on the jump operators $ L = b^r - \alpha^r $ defining higher-order photon losses, we use a modified photon dissipation with the jump operators $ L_{r, \alpha} $. In the case of $ r = 2 $, the photon dissipation of the bosonic cat code was experimentally realized, for example, in \cite{Touzard.2018}. Following the construction of the photon dissipation, we discuss a possible implementation of the modified photon dissipation in Section \ref{sec:experimental_realization}. As in \cite{mobus2023dissipation}, one can use Trotterization to alternate between Hamiltonian evolution and dissipation for short times.

\paragraph{Adiabatic approximation for general Lindbladian evolutions}
To prove soundness of our learning scheme, our main technical contribution is to provide precise and explicit bounds on the convergence of the engineered dissipation to the time evolution generated by the effective Hamiltonian. For that, we derive a new quantitative adiabatic approximation estimate for general Lindbladian evolutions with unbounded generators. As we will show at the end of this section, it has applications beyond the setting of Hamiltonian learning and is therefore of independent interest. The following theorem is a simplified version of Theorem \ref{thm:general-adiabatic-limit}. 
\begin{thm}[Informal]\label{thm-main:adiabatic-approximation}
    Let $H$ be a self-adjoint operator defined on a Hilbert space $\mathscr{H}$, and let $\mathcal{L}$ be a Lindbladian over the state space on $\mathscr{H}$. Let $P$ be the orthogonal projection onto the intersection of the finite-dimensional kernels of the jump operators defining $\mathcal{L}$. Then, under some reasonable technical conditions on $H$ and $\mathcal L$, for any state $\rho=P\rho P$ and $\gamma>0$,
    \begin{equation*}
        \left\|e^{t(\gamma \mathcal{L}+\mathcal{H})}(\rho)-e^{t\mathcal{H}_{\operatorname{proj}}}(\rho)\right\|_1\le \frac{tC}{\gamma}+\frac{C'}{\gamma}\,,    
    \end{equation*}
    where $C$ and $C^\prime$ are constants. Here, $\mathcal H=-i[H, \cdot]$ and $\mathcal{H}_{\operatorname{proj}}$ is an effective version of $\mathcal H$ restricted to the image of $P$.
\end{thm}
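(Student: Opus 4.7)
The plan is to combine a Duhamel-type identity with an integration-by-parts argument using the pseudoinverse of $\mathcal{L}$ on the complement of its kernel, in the spirit of the classical adiabatic-elimination/singular-perturbation strategy. Since $e^{s\mathcal{H}_{\operatorname{proj}}}(\rho)$ stays in the image of $P$ and $\mathcal{L}$ annihilates that image, Duhamel's formula gives
\begin{equation*}
    \Delta(t) := e^{t(\gamma\mathcal{L}+\mathcal{H})}(\rho) - e^{t\mathcal{H}_{\operatorname{proj}}}(\rho) = \int_0^t e^{(t-s)(\gamma\mathcal{L}+\mathcal{H})}\,R\bigl(e^{s\mathcal{H}_{\operatorname{proj}}}\rho\bigr)\,ds,
\end{equation*}
where $R := \mathcal{H} - \mathcal{H}_{\operatorname{proj}}$. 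For any $\sigma = P\sigma P$ one checks directly that $PR(\sigma)P = 0$, so $R(\sigma)$ lies in the range of $\mathcal{L}$. Introducing the Moore--Penrose pseudoinverse $\mathcal{L}^+$ of $\mathcal{L}$ on that range, we obtain the key identity $R = \mathcal{L}\mathcal{L}^+R$.

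Next, use $\gamma\mathcal{L} = (\gamma\mathcal{L}+\mathcal{H}) - \mathcal{H}$ together with the semigroup relation $e^{u(\gamma\mathcal{L}+\mathcal{H})}(\gamma\mathcal{L}+\mathcal{H}) = \tfrac{d}{du}e^{u(\gamma\mathcal{L}+\mathcal{H})}$ to rewrite $e^{(t-s)(\gamma\mathcal{L}+\mathcal{H})}\mathcal{L}$ as an $s$-derivative plus a correction. Integration by parts in $s$ then produces the clean decomposition
\begin{equation*}
    \Delta(t) = \frac{1}{\gamma}\Bigl[e^{t(\gamma\mathcal{L}+\mathcal{H})}\mathcal{L}^+R(\rho) - \mathcal{L}^+R\bigl(e^{t\mathcal{H}_{\operatorname{proj}}}\rho\bigr)\Bigr] + \frac{1}{\gamma}\int_0^t e^{(t-s)(\gamma\mathcal{L}+\mathcal{H})}\bigl[\mathcal{L}^+R\,\mathcal{H}_{\operatorname{proj}} - \mathcal{H}\,\mathcal{L}^+R\bigr]\bigl(e^{s\mathcal{H}_{\operatorname{proj}}}\rho\bigr)\,ds.
\end{equation*}
Trace-norm contractivity of both semigroups now yields
\begin{equation*}
    \|\Delta(t)\|_1 \leq \frac{2}{\gamma}\sup_\sigma\|\mathcal{L}^+R(\sigma)\|_1 + \frac{t}{\gamma}\sup_\sigma\bigl\|(\mathcal{L}^+R\mathcal{H}_{\operatorname{proj}} - \mathcal{H}\mathcal{L}^+R)(\sigma)\bigr\|_1,
\end{equation*}
where the suprema range over states along the unperturbed orbit $\{e^{s\mathcal{H}_{\operatorname{proj}}}(\rho) : 0\le s\le t\}$. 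This is exactly the advertised $C'/\gamma + tC/\gamma$ form.

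The principal obstacle is to justify these formal manipulations and bound $C,C'$ explicitly in the presence of unbounded generators. Boundedness of $\mathcal{L}^+$ on the off-diagonal subspace should follow from an exponential contraction estimate for the dissipative semigroup on the complement of the kernel of $\mathcal{L}$, i.e.\ a spectral-gap-type property of the single-mode dissipation (whose convergence to the cat-code space is already promised earlier in the excerpt). Much more delicate is the composite term $\mathcal{H}\mathcal{L}^+R$: the unbounded $\mathcal{H}$ appears twice and is evaluated on images of the pseudoinverse of an unbounded dissipator, so it will only be finite on states carrying enough regularity in the higher moments of the number operators $\bd_i b_i$. The technical hypotheses on $H$ and $\mathcal{L}$ alluded to in the statement are precisely what is needed to guarantee (i) an invariant domain of sufficiently regular states preserved by $e^{s\mathcal{H}_{\operatorname{proj}}}$, and (ii) uniform bounds on each of the above compositions over that domain. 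Establishing these moment/regularity estimates in the infinite-dimensional unbounded setting, and then stitching them together to control $C$ and $C'$, is where the bulk of the analytic work will lie.
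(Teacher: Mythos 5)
Your proposal is correct in outline and algebraically sound (the integration-by-parts identity you write down does follow from $e^{(t-s)\mathcal{G}}\mathcal{G}=-\tfrac{d}{ds}e^{(t-s)\mathcal{G}}$ with $\mathcal{G}=\gamma\mathcal{L}+\mathcal{H}$), but it takes a genuinely different route from the paper. You follow the classical adiabatic-elimination strategy: factor the off-diagonal perturbation as $R=\mathcal{L}\,\mathcal{L}^+R$ through a pseudoinverse of the full Lindbladian superoperator and integrate by parts, so that the boundary terms give $C'/\gamma$ and the commutator-type term $\mathcal{L}^+R\,\mathcal{H}_{\operatorname{proj}}-\mathcal{H}\,\mathcal{L}^+R$ gives $tC/\gamma$. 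The paper instead never inverts $\mathcal{L}$: it observes that $(I-\mathcal{P})\mathcal{H}\mathcal{P}$ lands in the off-diagonal subspace $\widetilde{\mathcal{P}}=(I-P)\cdot P+P\cdot(I-P)$, on which $\mathcal{L}$ reduces exactly to the anticommutator $-\tfrac12\{L',\cdot\}$ with $L'=\sum_i(I-P)L_i^\dagger L_i(I-P)$ (the jump part vanishes because $L_iP=0$), so that $e^{t\mathcal{L}}\widetilde{\mathcal{P}}=e^{-\frac{t}{2}\{L',\cdot\}}\widetilde{\mathcal{P}}$ is an explicitly solvable, exponentially contracting product semigroup; a second Duhamel expansion against this semigroup, followed by integration of $e^{-\gamma\eta u/2}$, produces the same $C'/\gamma+tC/\gamma$ structure. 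The two approaches are closely related — your $\mathcal{L}^+$ restricted to $\widetilde{\mathcal{P}}$ is precisely $(-\tfrac12\{L',\cdot\})^{-1}$, bounded by $2/\eta$ under the gap assumption — but the paper's version is easier to make rigorous for unbounded generators, since one only ever applies $\mathcal{H}$ to $e^{-\frac{\gamma u}{2}L'}B$ with $B=(I-P)(iH-G)P$ bounded, and the relative bound $\|H\psi\|\le\iota_1\|L'\psi\|+\iota_2\|\psi\|$ together with $\|L'HP\|_\infty\le C_L$ then controls everything; these are exactly the "compositions" whose finiteness you correctly flag as the crux. Two cautionary points on your version: first, the inference ``$PR(\sigma)P=0$, hence $R(\sigma)\in\operatorname{ran}(\mathcal{L})$'' is too weak as stated — you need the stronger fact, which does hold by direct computation, that $R(\sigma)$ is purely off-diagonal, plus the gap of $L'$ to invert $\mathcal{L}$ there (for a general Lindbladian $\ker\mathcal{L}$ can contain off-diagonal coherences, so ``range of $\mathcal{L}$'' needs this restriction). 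Second, defining a closed, bounded Moore--Penrose pseudoinverse of an unbounded superoperator on trace class is itself nontrivial; restricting to the off-diagonal block where $\mathcal{L}$ is an explicit anticommutator, as the paper does, is the cleanest way to discharge that obligation.
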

The theorem states that if the dissipation strength $\gamma$ is large enough, the time evolution generated by the Hamiltonian $H$ is effectively restricted to the invariant subspace of the dissipative evolution generated by $\mathcal L$. 

\paragraph{Application to cat codes} Another interesting and valuable application of our adiabatic Theorem \ref{thm-main:adiabatic-approximation} is in the context of bosonic cat codes.
This class of continuous-time error correction codes relies on the use of $r$-photon driven dissipation, which takes the form 
\begin{equation*}
    \frac{d}{dt} \rho(t) = \mathcal{L}[L_r](\rho(t)) \qquad \text{with} \qquad L_r \coloneqq b^r - \alpha^r\,.
\end{equation*}
One fundamental property of the above dissipation is that for large times, the evolution drives any state exponentially fast to the finite-dimensional code space $$ \mathcal{C}_r(\alpha) \coloneqq \mathrm{span}\left\{\ket{\alpha_1}\bra{\alpha_2} : \alpha_1, \alpha_2 \in \left\{\alpha e^{\frac{i2\pi j}{r}} : j \in \{0, \ldots, r-1\}\right\}\right\}$$ in the following sense \cite{Azouit.2016}:
\begin{equation*}
    \tr[L_r e^{t\mathcal{L}[L_r]}(\rho) L_r^\dagger] \leq e^{-{tr}!} \tr[L_r \rho L_r^\dagger]\,.
\end{equation*}
We do not only extend the above weighted convergence rate to the convergence of the quantum dissipative evolution in the natural trace norm (see Proposition \ref{prop:cat-convergence}), but we also prove an adiabatic limit for bounded-degree Hamiltonians, which is stated informally below (see Proposition \ref{prop:adiabatic-cat}):
\begin{prop}\label{prop-main:adiabatic-cat} 
    Let $H$ be as in \eqref{eq:main-bosonic-H} for $m=1$, $d/2\leq r$, and let $P$ be the orthogonal projection onto $\mathcal{C}_r(\alpha)$. Then, for all $t \geq 0$ and $\rho \in \mathcal{C}_r(\alpha)$,
    \begin{equation*}
        \|e^{-it[H, \cdot] + t\gamma\mathcal{L}[b^r - \alpha^r]}(\rho) - e^{-it[PHP, P \cdot P]}(\rho)\|_1 \leq \frac{tC}{\gamma}+\frac{C'}{\gamma}\,
        \end{equation*}
    for constants $C, C'\geq0$.
\end{prop}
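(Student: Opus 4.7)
The plan is to invoke the general adiabatic estimate of Theorem \ref{thm-main:adiabatic-approximation} with the single jump operator $L = L_r = b^r - \alpha^r$, so that $\mathcal{L} = \mathcal{L}[L_r]$, and to identify all the objects appearing in that theorem with the specific ones in the cat-code setting.

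First, I would identify the kernel projection $P$ of Theorem \ref{thm-main:adiabatic-approximation} with the orthogonal projection onto $\operatorname{span}\{|\alpha e^{2\pi i j/r}\rangle : j = 0, \ldots, r-1\}$. Since coherent states are eigenvectors of $b$, we have $L_r |\beta\rangle = (\beta^r - \alpha^r)|\beta\rangle$, so $\ker L_r$ consists exactly of the superpositions of the $r$ coherent states $|\alpha e^{2\pi i j/r}\rangle$. For $\alpha \neq 0$ these are linearly independent, so the kernel is $r$-dimensional and its image in operator space is precisely $\mathcal{C}_r(\alpha)$. In particular the hypothesis that the kernel be finite dimensional is met, and the projected generator $\mathcal{H}_{\mathrm{proj}}$ specializes to $-i[PHP, P \cdot P]$ because $P$ has finite rank.

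Second, I would invoke Proposition \ref{prop:cat-convergence} to verify the exponential convergence of $e^{t\gamma\mathcal{L}[L_r]}$ to $P \cdot P$ in trace norm, which is the version of the contraction estimate needed to feed into Theorem \ref{thm-main:adiabatic-approximation} (as opposed to the weaker weighted convergence of \cite{Azouit.2016}). This is exactly the strengthening that makes the general adiabatic machinery applicable in trace norm.

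The main obstacle, and the step I would carry out most carefully, is checking the moment and relative-boundedness conditions on $H$ required by Theorem \ref{thm-main:adiabatic-approximation}. The degree bound $d/2 \le r$ is the decisive assumption here: each monomial $(b^\dagger)^{\mathbf j} b^{\mathbf j'}$ appearing in $H$ with $\|\mathbf j + \mathbf j'\|_1 \le d$ is relatively bounded by $N^{d/2}$ where $N = b^\dagger b$, while $\mathcal{L}[L_r]$ generates a Lyapunov drift dominating $N^{2r}$, so for $2r \ge d$ the dissipation controls the growth of all commutators $[H, \cdot]$ appearing in the adiabatic expansion. Concretely, I would propagate these moment bounds to obtain uniform-in-$t$ estimates on $\tr[N^k e^{t(\gamma\mathcal{L}+\mathcal{H})}(\rho)]$ for the initial $\rho \in \mathcal{C}_r(\alpha)$ (which is automatic since $\rho$ then has support in a finite-dimensional coherent-state subspace), thereby verifying the technical conditions of the general adiabatic theorem.

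Finally, with the projection identified, the contraction rate supplied by Proposition \ref{prop:cat-convergence}, and the moment bounds in hand, I would plug everything into Theorem \ref{thm-main:adiabatic-approximation} and read off the conclusion $\|e^{-it[H,\cdot]+t\gamma \mathcal{L}[L_r]}(\rho) - e^{-it[PHP, P\cdot P]}(\rho)\|_1 \le tC/\gamma + C'/\gamma$, with explicit constants depending on $r$, $d$, $\alpha$, and the coefficients $h_{\mathbf j, \mathbf j'}$ of $H$.
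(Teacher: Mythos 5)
Your overall strategy coincides with the paper's: Proposition~\ref{prop:adiabatic-cat} is proved by specializing the general adiabatic estimate (Theorem~\ref{thm:general-adiabatic-limit}) to the single jump $L_r=b^r-\alpha^r$ with $\mathcal{K}=0$, identifying $P$ as the rank-$r$ projection onto $\operatorname{span}\{\ket{\alpha e^{2\pi i j/r}}\}$, and using $d/2\le r$ to control $H$ by the dissipation. Your identification of $\ker L_r$ and of $\mathcal{H}_{\operatorname{proj}}$ is exactly right, and the relative bound you describe is the content of Corollary~\ref{cor:rel-bounded-gate} (via $\|H\psi\|\lesssim\|(N+I)^{d/2}\psi\|$ and Lemma~\ref{lem:rel-bounded-learning-N-photon-loss}, which gives $(N+I)^{2r}\lesssim (L_r^\dagger L_r)^2+c$).

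Where your plan diverges from what is actually required: the hypotheses of Theorem~\ref{thm:general-adiabatic-limit} are \emph{static} operator inequalities, not dynamical ones. Condition (i) is the gap $L'=(I-P)L_r^\dagger L_r(I-P)\ge \eta(I-P)$, which the paper obtains with $\eta\ge r!$ from the pseudoinverse bounds (Lemmas~\ref{lemmaetaAB} and \ref{lem:pseudoinverse-bound-photon-dissipation}); Proposition~\ref{prop:cat-convergence} is a downstream consequence of this gap, not an input to the adiabatic theorem, so feeding it in reverses the logical dependency. More importantly, your proposed step of propagating moment bounds to get uniform-in-$t$ control of $\tr[N^k e^{t(\gamma\mathcal{L}+\mathcal{H})}(\rho)]$ is neither needed nor easy: for a general degree-$d$ Hamiltonian (which may contain squeezing-type terms) such uniform moment bounds along the \emph{perturbed} evolution are precisely the kind of statement this framework is designed to avoid. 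The Duhamel argument in Theorem~\ref{thm:general-adiabatic-limit} only ever applies the full semigroup as a trace-norm contraction, and the unbounded operators only act on objects of the form $e^{-u\gamma L'/2}(I-P)(iH-G)P$; what remains to check is condition (iv), namely that $HP$ and $L'HP$ are bounded, which is immediate because $P$ has finite rank and coherent states lie in the domain of every polynomial in $b,b^\dagger$. If you replace the moment-propagation step by these two static verifications, your proof goes through as in the paper.
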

Full details of our results on cat codes can be found in Appendix \ref{subsec:bosonic_cat}. Thus, we have demonstrated that the $r$-photon driven dissipation of sufficient strength leads to an effective time evolution on the code space for any bounded-degree bosonic Hamiltonian. For $r = 2$, this specifically provides explicit convergence bounds for implementing rotations around the $x$-axis on the code space, which has been experimentally realized in \cite{Touzard.2018}.

\section{Adiabatic evolutions}
    Quantum dynamical time-evolutions are usually characterized by the initial value problem
    \begin{equation}\label{eq:inital-value-pb}
        \frac{d}{dt}\rho(t)=\mathcal{L}(\rho(t))\,,\qquad \rho(0)=\rho_0
    \end{equation}
    given by a possibly unbounded operator $\mathcal{L}$ defined on a subspace (its domain $\mathcal{D}(\mathcal{L})$) of trace-class operators $T_1(\mathscr{H})$ on a separable Hilbert space $\mathscr{H}$ (see \cite[Sec.~3.6]{Simon.2015-4} for more details). Note that the trace-class operators are equipped with the trace norm $\|\cdot\|_1$, the Hilbert space with the induced norm $\|\cdot\|$, and the bounded operators $\mathcal{B}(\mathscr{H})$ with the operator norm $\|\cdot\|_{\infty}$. Due to the quantum mechanical nature of the generated evolution, it is assumed that the initial value is a quantum state, i.e., a positive semidefinite operator $\rho_0\geq0$ with unit trace $\tr[\rho_0]=1$. 

    Assume that the initial value problem \eqref{eq:inital-value-pb} admits a unique, trace-preserving, and completely positive solution, called a quantum dynamical semigroup, for which we write $\rho(t)=e^{t\mathcal{L}}$ (see \cite[Sec.~3.3]{Holevo.2001} for more details). Motivated by the GKSL-representation theorem \cite{Gorini.1976,Lindblad.1976}, we consider unbounded generators --- Lindbladians --- of the form
    \begin{equation}\label{eq:linbladian}
        \mathcal{L}(X):=\sum_{i\in\mathcal{I}} L_i X L_i^\dagger - \frac{1}{2} \big\{L_i^\dagger L_i, X\big\}
    \end{equation}
    defined by a finite set of jump operators $\{L_i\}_{i\in\mathcal{I}}$, i.e., for some closed, densely defined linear operators $(L_i,\operatorname{dom}(L_i))$ on $\mathscr{H}$. Note that $L_i^\dagger$ denotes the adjoint operator of $L_i$ (see \cite[Ch.~7.1]{Simon.2015-4}).

The two main problems we need to overcome to prove Theorem \ref{thm-main:adiabatic-approximation} are, first, that we are dealing with unbounded generators, meaning that in every step where we use tools like Duhamel's formula, operator norm bounds are cannot be used—for example, to approximate the difference to the effective time evolution. Second, we are dealing with Lindbladians, which describe non-reversible quantum evolutions. In addition to having no quantum mechanical inverse, they do not have a mathematical inverse time representation either due to the unboundedness of the generator. The reasonable technical conditions mentioned in Theorem \ref{thm-main:adiabatic-approximation} guarantee that we can still control the generators even though they are unbounded. Concretely, we have to impose that 
\begin{enumerate}
    \item the jump operators $L_i$ defining $\mathcal L$ have an energy gap,
    \item $H$ is relatively bounded with respect to the jumps $L_i$,
    \item different products of jump operators and the Hamiltonian are bounded on the image of $P$.
\end{enumerate}
To use the adiabatic approximation estimate in Theorem \ref{thm-main:adiabatic-approximation}, we thus need to verify that these technical conditions are met. This way, we obtain an explicit dependence of $C$ and $C^\prime$ on the degree of the Hamiltonian $d$ and the parameter $\alpha$ in the dissipation. For the proof of Theorem \ref{thm-main:adiabatic-approximation} and more applications, we refer the reader to Appendix \ref{sec:adiabatic_evolutions}.

We will illustrate this for the single-mode setting, i.e., where the Hamiltonian is as in \eqref{eq:main-bosonic-H} with $m=1$. The dissipation is defined by $\mathcal{L}=\mathcal{L}[L_{1,\alpha}]+\mathcal{L}[L_{r,\alpha}]$. It is easy to see in this case that $P$ is the projection onto the kernel of $L_{1, \alpha}$, since the latter is included into that of $L_{r, \alpha}$. Then, we prove the following adiabatic theorem:
        \begin{prop}\label{prop-main:single-modes-adiabatic-limit}
            Let $d \in \mathbb{N}$, $H$ be as defined as in \eqref{eq:main-bosonic-H} with $m=1$,  $r=\lceil\frac{d}{2}\rceil-1$, $\alpha\in\C$, and let $P$ be the orthogonal projection onto $\mathrm{span}(\ket{0},\ket{\alpha})$. Then, for all $t \geq 0$ and $\rho=P\rho P$, 
            \begin{equation*}
                \|e^{t\cH + t\gamma\mathcal{L}[b^r(b-\alpha)]+t\gamma\mathcal{L}[b(b-\alpha)]}(\rho) - e^{-it[PHP, P \cdot P]}(\rho)\|_1 \le \frac{tC}{\gamma}+\frac{C'}{\gamma}\,
            \end{equation*}
            for constants $C=C_d(|\alpha|^{4d+2}+1)$ and $C'=C_d'(|\alpha|^d+1)$ for $C_d$ and $C_d'$ depending on $d$. 
        \end{prop}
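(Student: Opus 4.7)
The proof strategy is to invoke the general adiabatic estimate, Theorem~\ref{thm-main:adiabatic-approximation}, with the dissipation $\mathcal L = \mathcal L[L_{1,\alpha}]+\mathcal L[L_{r,\alpha}]$ and verify its three technical conditions in the present single-mode setting. First, I would identify the projection $P$: since $b\ket{0} = 0$ and $(b-\alpha)\ket{\alpha}=0$, the kernel of $L_{1,\alpha} = b(b-\alpha)$ equals $\mathrm{span}(\ket{0},\ket{\alpha})$, and the kernel of $L_{r,\alpha} = b^r(b-\alpha)$ contains this subspace, so the intersection of the two kernels is precisely the image of $P$ asserted in the statement.

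Next, I would verify the three hypotheses: (i) the \emph{energy gap}, by showing that $L_{1,\alpha}^\dagger L_{1,\alpha}$ and $L_{r,\alpha}^\dagger L_{r,\alpha}$ each admit a strictly positive lower bound on the orthogonal complement of their kernels; for $L_{r,\alpha}$ this follows along the lines of the cat-code dissipation analysis used in Proposition~\ref{prop:cat-convergence}; (ii) \emph{relative boundedness} of $H$, a polynomial of degree at most $d$ in $b,b^\dagger$, with respect to a suitable power of $L_{r,\alpha}^\dagger L_{r,\alpha}$, using standard bounds such as $\|(b^\dagger)^j b^{j'}\psi\|\le C\,\|(b^\dagger b+1)^{(j+j')/2}\psi\|$ on the domain of the number operator, with the choice $r=\lceil d/2\rceil-1$ tuned to make the degrees match; (iii) \emph{boundedness of products} of $H$ and jump operators on the image of $P$, which is where the explicit $|\alpha|$-dependence in $C$ and $C'$ is produced.

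For the last step, the key observation is that for any monomial $(b^\dagger)^j b^{j'}$ with $j+j'\le d$ one has $\|(b^\dagger)^j b^{j'} P\|\lesssim |\alpha|^d+1$, because $\ket{\alpha}$ has photon-number moments growing only polynomially in $|\alpha|^2$. A single $H$-factor acting on the image of $P$ therefore contributes a factor of $|\alpha|^d+1$, which is the scaling appearing in $C'$. The bulk constant $C$ multiplying $t/\gamma$ in Theorem~\ref{thm-main:adiabatic-approximation} involves compositions that effectively contain up to two Hamiltonian factors sandwiched by quasi-inverses of $\mathcal L$, which themselves may introduce further bosonic operators of degree up to $r+1$ on each side; summing the resulting exponents yields the advertised $|\alpha|^{4d+2}+1$ factor.

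The main obstacle I expect is the detailed bookkeeping in step (iii): one has to trace, through the derivation of Theorem~\ref{thm-main:adiabatic-approximation}, exactly which products of $H$ and jump operators appear on the image of $P$, and then add up their $|\alpha|$-degrees to obtain the stated sharp constants. The spectral-gap verification for the non-normal jump $L_{r,\alpha}$ in step (i) is technically delicate as well, but the argument required is essentially already contained in the cat-code convergence analysis performed elsewhere in the paper.
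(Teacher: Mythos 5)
Your proposal follows essentially the same route as the paper's proof (Appendix~\ref{subsec:single-mode-adiabatic-thm}): invoke the general adiabatic estimate and verify its hypotheses, with $P$ identified via $\ker(L_{1,\alpha})\subset\ker(L_{r,\alpha})$, the gap, the relative bound of $H$ by $L'=L_{1,\alpha}^\dagger L_{1,\alpha}+L_{r,\alpha}^\dagger L_{r,\alpha}$, and the operator-norm bounds on $HP$ and $L'HP$ supplying the $|\alpha|$-powers. Two details differ from the paper but do not affect correctness: the gap is obtained solely from the $L_{1,\alpha}$ term via pseudoinverse bounds (giving $\eta=1$), with the $L_{r,\alpha}$ contribution simply dropped by positivity rather than analyzed cat-code style; and the dominant $|\alpha|^{4d+2}$ arises as the product $\iota_2\cdot C_H$ with $\iota_2=\Or(|\alpha|^{3d+2})$ the additive constant in the relative bound of $H$ by $L'$, rather than from your ``two Hamiltonian factors sandwiched by quasi-inverses'' accounting.
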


        The proof can be found in Appendix \ref{subsec:single-mode-adiabatic-thm}. It consists of verifying the assumptions needed to apply Theorem \ref{thm-main:adiabatic-approximation}.

\section{Learning a single-mode Hamiltonian} \label{sec:main-learning}

In this section we will describe our Hamiltonian learning protocol in the single-mode case, i.e., $m=1$. At the end of this section, we will sketch how to obtain Theorem \ref{thm-main:multi-mode-learning} from this.

We consider learning a single mode Hamiltonian, which is the Hamiltonian \eqref{eq:main-bosonic-H} for $m=1$:
\begin{equation}
    H = \sum_{j=0}^d\sum_{j'=0}^{d-j} h_{j,j'}(b^{\dag})^j b^{j'}.
\end{equation}
To remove the global phase we let $h_{0,0}=0$. We also assume that $|h_{j,j'}|\leq 1$. Because $H$ is self-adjoint, we have $\overline h_{j,j'} = h_{j',j}$. Our learning protocol will generate estimates $\hat{h}_{j,j'}$ such that
\begin{equation}
    \label{eq:single_mode_err_metric}
    \max_{j,j'}|\hat{h}_{j,j'}-h_{j,j'}|\leq \epsilon\text{ with probability at least }1-\delta.
\end{equation}
A pseudocode for the protocol is provided in Algorithm~\ref{alg:single_mode_learning} in the appendix and an illustration of it is given in Fig. \ref{fig:algo}.

\subsection{The effective Hamiltonian}

We apply strong dissipation with jump operators $L_{1, \alpha}=b(b-\alpha)$ and $L_{r, \alpha}=b^{r}(b-\alpha)$ where $r=\lceil\frac{d}{2}\rceil-1$ to constrain the dynamics to the subspace spanned by $\ket{0}$ and $\ket{\alpha}$ as well as to regularize it. Here, $\ket{\alpha}$ is the coherent state defined by $\alpha \in \mathbb C$ as $e^{-|\alpha|^2/2}\sum_{n \geq 0} \alpha^n/\sqrt{n!} \ket{n}$. In particular, it holds that $b\ket{\alpha} = \alpha \ket{\alpha}$. The effective Hamiltonian is then
\begin{equation}
    H^{\mathrm{proj}} = \Pi H \Pi,
\end{equation}
where $\Pi = \ket{\Psi_0}\bra{\Psi_0} + \ket{\Psi_1}\bra{\Psi_1},$
with
\begin{equation}
    \ket{\Psi_0} = \ket{0},\quad \ket{\Psi_1} = \frac{\ket{\alpha}-\ket{0}\braket{0|\alpha}}{\|\ket{\alpha}-\ket{0}\braket{0|\alpha}\|}.
\end{equation}
In the above $\ket{\Psi_0}$ and $\ket{\Psi_1}$ are obtained by the Gram-Schmidt process, and they form an orthonormal basis for the subspace. One can compute that $\braket{0|\alpha}=e^{-|\alpha|^2/2}$, and  $\braket{\alpha|\Psi_1} = \sqrt{1-e^{-|\alpha|^2}}$.
From this we can see that $\ket{\Psi_1}$ is in fact very close to $\ket{\alpha}$ if $|\alpha|$ is large:
\begin{equation}
    \label{eq:distance_bw_Psi1_and_alpha}
    \|\ket{\Psi_1}-\ket{\alpha}\|=\Or(e^{-|\alpha|^2/2}).
\end{equation}
Relative to the basis $\ket{\Psi_0}$ and $\ket{\Psi_1}$, the effective Hamiltonian is represented by
\[
\begin{pmatrix}
    \braket{\Psi_0|H|\Psi_0} & \braket{\Psi_0|H|\Psi_1} \\
    \braket{\Psi_1|H|\Psi_0} & \braket{\Psi_1|H|\Psi_1}
    \end{pmatrix}
    \approx
    \begin{pmatrix}
    0 & 0 \\
    0 & \braket{\alpha|H|\alpha}
\end{pmatrix}.
\]
Direct calculation shows that the entry-wise error in the above approximation is at most $\Or(|\alpha|^d d^2 e^{-|\alpha|^2/2})$. 
Moreover, we have 
$
\|\ket{\Psi_1}\bra{\Psi_1}-\ket{\alpha}\bra{\alpha}\|\leq 2\|\ket{\alpha}-\ket{\Psi_1}\|.
$
Therefore we define
\begin{equation}
\label{eq:approx_effective_ham}
    \widetilde{H}^{\mathrm{proj}} = \ket{\alpha}\braket{\alpha|H|\alpha}\bra{\alpha},
\end{equation}
which is a good approximation of the effective Hamiltonian but is of a simpler form:
\begin{equation}
\label{eq:effective_ham_approx_err}
    \|\widetilde{H}^{\mathrm{proj}}-H^{\mathrm{proj}}\|_\infty = \Or(|\alpha|^d d^2 e^{-|\alpha|^2/2}).
\end{equation}

\subsection{Experimental setup}
\label{sec:experimental_setup}

We will run phase estimation experiments that are to be described in this section, to extract the eigenvalues of $H^{\mathrm{proj}}$, from which we will extract the coefficients $h_{j,j'}$. For this, we need to introduce an ancilla qubit. In each experiment, we initialize the system in the state $\ket{+}\ket{0}$, apply controlled displacement $D(\alpha)$ to obtain the state $\frac{1}{\sqrt{2}}(\ket{0}\ket{0}+\ket{1}\ket{\alpha})$, and then let the system evolve under the effective Hamiltonian. We then apply the controlled displacement $D(-\alpha)$, and measure the ancilla qubit in either the $X$ or $Y$ basis.
This protocol is illustrated with a quantum circuit in Fig. \ref{fig:circuit1}. 
Mathematically, the procedure can be described as follows:
\begin{equation}
    \begin{aligned}
        \ket{+}\ket{0}&\xmapsto{\operatorname{controlled}-D(\alpha)}\frac{1}{\sqrt{2}}(\ket{0}\ket{0}+\ket{1}\ket{\alpha}) \\
        &\xmapsto{e^{-iH^{\mathrm{proj}}t}} \frac{1}{\sqrt{2}}(\ket{0}e^{-iH^{\mathrm{proj}}t}\ket{0}+\ket{1}e^{-iH^{\mathrm{proj}}t}\ket{\alpha}) \\
        &\xmapsto{\operatorname{controlled}-D(-\alpha)} \frac{1}{\sqrt{2}}(\ket{0}e^{-iH^{\mathrm{proj}}t}\ket{0}+\ket{1}D(-\alpha)e^{-iH^{\mathrm{proj}}t}\ket{\alpha}).
    \end{aligned}
\end{equation}
Measuring the ancilla qubit in the $X$ and $Y$ bases, we have
\begin{equation}
    \braket{X\otimes I} = \Re \braket{0|e^{iH^{\mathrm{proj}}t}D(-\alpha)e^{-iH^{\mathrm{proj}}t}|\alpha},\quad \braket{Y\otimes I} = \Im \braket{0|e^{iH^{\mathrm{proj}}t}D(-\alpha)e^{-iH^{\mathrm{proj}}t}|\alpha}.
\end{equation}
From \eqref{eq:distance_bw_Psi1_and_alpha} and \eqref{eq:effective_ham_approx_err}, we have
\begin{equation}
\begin{aligned}
    \braket{0|e^{iH^{\mathrm{proj}}t}D(-\alpha)e^{-iH^{\mathrm{proj}}t}|\alpha} &= \braket{0|e^{i\widetilde{H}^{\mathrm{proj}}t}D(-\alpha)e^{-i\widetilde{H}^{\mathrm{proj}}t}|\alpha} + \Or(|\alpha|^d d^2 e^{-|\alpha|^2/2}t). \\
\end{aligned}
\end{equation}
Moreover
\begin{equation}
\begin{aligned}
     \braket{0|e^{i\widetilde{H}^{\mathrm{proj}}t}D(-\alpha)e^{-i\widetilde{H}^{\mathrm{proj}}t}|\alpha} 
    &= e^{-i\braket{\alpha|H|\alpha}t}\braket{0|e^{i\widetilde{H}^{\mathrm{proj}}t}D(-\alpha)|\alpha}  \\
    &= e^{-i\braket{\alpha|H|\alpha}t}\braket{0|e^{i\widetilde{H}^{\mathrm{proj}}t}|0}  \\
    &= e^{-i\braket{\alpha|H|\alpha}t} + \Or(|\alpha|^d d^2 e^{-|\alpha|^2/2}t), \\
\end{aligned}
\end{equation}
where the last line uses $\|\widetilde{H}^{\mathrm{proj}}\ket{0}\|\leq \Or(|\alpha|^d d^2 e^{-|\alpha|^2/2})$. Hence we have
\begin{equation}
     \braket{0|e^{iH^{\mathrm{proj}}t}D(-\alpha)e^{-iH^{\mathrm{proj}}t}|\alpha} = e^{-i\braket{\alpha|H|\alpha}t}+\Or(|\alpha|^d d^2 e^{-|\alpha|^2/2}t). 
\end{equation}
Therefore from the experiment we can approximately obtain the value of $e^{-i\braket{\alpha|H|\alpha}t}$. We can also write out the real and imaginary parts separately:
\begin{equation}
\label{eq:cos_sin_from_experiment}
    \begin{aligned}
        \braket{X\otimes I} &= \cos(\braket{\alpha|H|\alpha}t)+\Or(|\alpha|^d d^2 e^{-|\alpha|^2/2}t),\,\\\braket{Y\otimes I} &= \sin(\braket{\alpha|H|\alpha}t)+\Or(|\alpha|^d d^2 e^{-|\alpha|^2/2}t)\,.
    \end{aligned}
\end{equation}

\begin{figure}[t]
   \centering
\begin{subfigure}{0.3\textwidth}
        \centering
        \includegraphics[width=\textwidth]{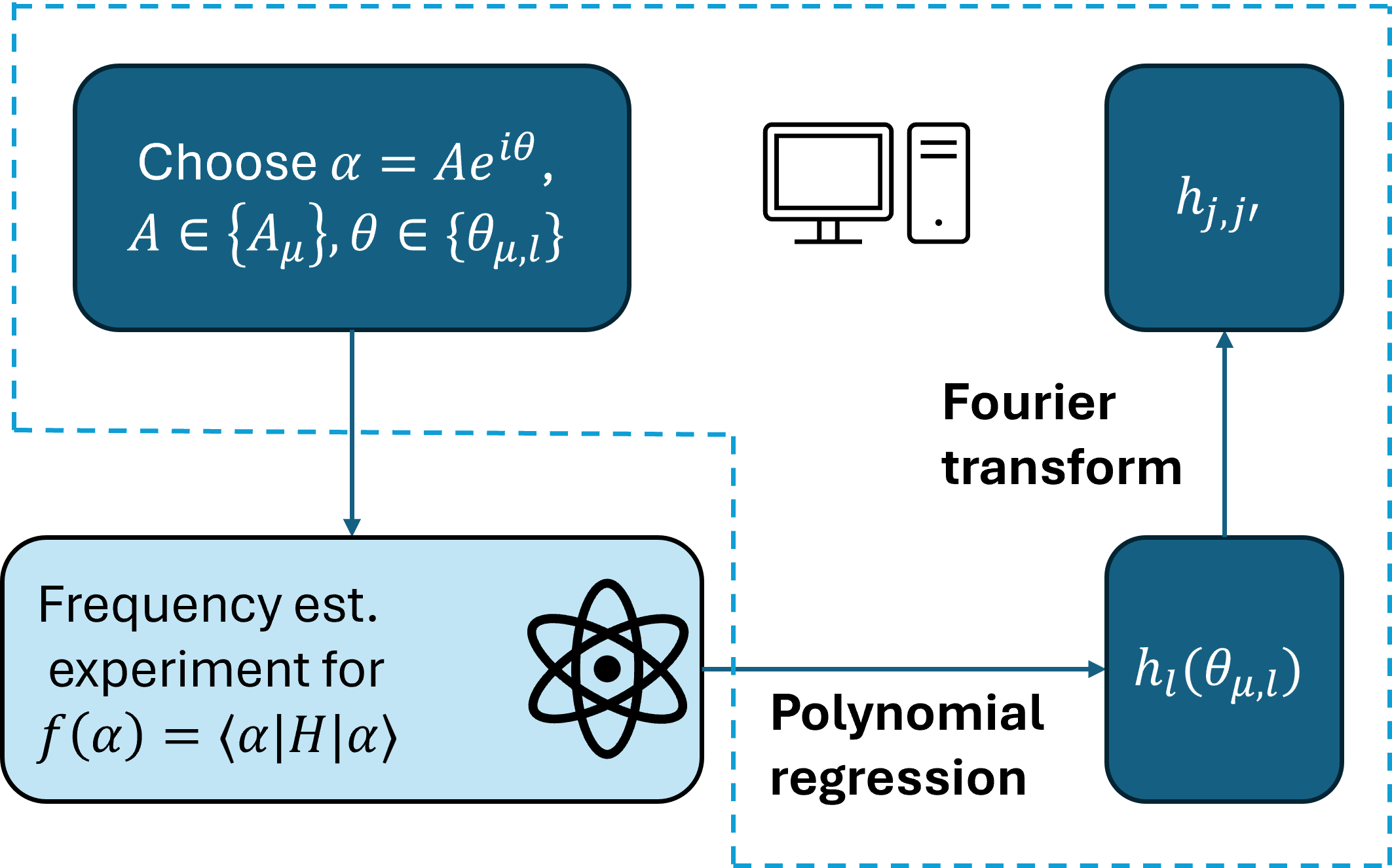}
       \caption{}
       \label{fig:algo}
\end{subfigure}        
\quad
\begin{subfigure}{0.5\textwidth}
\centering
\resizebox{\textwidth}{0.09 \textwidth}    
{\begin{quantikz}
\lstick{qubit $|0\rangle$} & \gate{H} & \ctrl{1} & \qw
& \ctrl{1} &\gate{R_{y/x}(\pi/2)} & \meter{}  \\
\lstick{b. mode $|\text{0}\rangle$} & \qw  & \gate{D(\alpha)} &  \gate{e^{-iH^{\text{proj}}t} } & \gate{D(-\alpha)} &\qw  &
\end{quantikz}
}
\caption{ }
 \label{fig:circuit1}
\end{subfigure}
\caption{(a) The algorithm for estimating coefficients $\{h_{j,j'}\}$ for a single-mode bosonic Hamiltonian. All steps except for the frequency estimation experiment are run on a classical computer. (b) A quantum circuit describing the experimental measurement protocol for a single bosonic mode. Note that the time evolution by $H_{\text{proj}}$ is implemented via the original time evolution and a sufficiently strong dissipation. }
\end{figure}

\subsection{Robust frequency estimation}
\label{sec:robust_frequency_estimation}

In this section we introduce the robust frequency estimation protocol used in \cite{LiTongNiGefenYing2023heisenberg}. 

\begin{thm}
    \label{thm:robust_frequency_estimation}
    Let $\theta\in[-\Phi,\Phi]$.
    Let $X(t)$ and $Y(t)$ be independent random variables satisfying
    \begin{equation*}
        \begin{aligned}
            &|X(t)-\cos(\theta t)|< 1/\sqrt{8}, \text{ with probability at least }2/3, \\
            &|Y(t)-\sin(\theta t)|< 1/\sqrt{8}, \text{ with probability at least }2/3.
        \end{aligned}
    \end{equation*}
    Then with independent samples $X(t_1),X(t_2),\cdots,X(t_{\Gamma})$ and $Y(t_1),Y(t_2),\cdots,Y(t_{\Gamma})$, with
    \begin{equation}
        {\Gamma}=\Or(\log^2(\Phi/\epsilon)), \quad T=t_1+t_2+\cdots+t_{\Gamma}=\Or(1/\epsilon), \quad \max_j t_j=\Or(1/\epsilon),
    \end{equation}
    and $t_j\geq 0$, we can construct  an estimator $\hat{\theta}$ such that
    \begin{equation}
        \sqrt{\mathbb{E}[|\hat{\theta}-\theta|^2]}\leq \epsilon.
    \end{equation}
\end{thm}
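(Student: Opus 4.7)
The plan is to use the standard hierarchical (Kitaev-style) phase estimation scheme. Choose a base time $t_0 = \Theta(1/\Phi)$ such that $|\theta t_0|$ is comfortably below $\pi/2$, and geometrically growing times $t_j = 2^{j-1} t_0$ for $j=1,\ldots,\Gamma$ with $\Gamma = \lceil \log_2(\Phi/\epsilon) \rceil + O(1)$. At scale $j$, I draw $n_j$ independent copies of $X(t_j)$ and $Y(t_j)$ and apply a majority/median-of-means boost to produce $\hat X_j, \hat Y_j$ satisfying
\begin{equation*}
    |\hat X_j - \cos(\theta t_j)| < 1/4\,, \qquad |\hat Y_j - \sin(\theta t_j)| < 1/4
\end{equation*}
with probability at least $1-p_j$. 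The $1/\sqrt{8}$ hypothesis on each single experiment together with the $2/3$ success probability is exactly the regime in which a Chernoff bound yields $n_j = O(\log(1/p_j))$ for this guarantee. Applying $\operatorname{atan2}(\hat Y_j, \hat X_j)$ then returns $\theta t_j \bmod 2\pi$ up to some additive constant error $c_* < \pi/4$.

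\paragraph{Inductive refinement.}
I would show by induction that, conditional on rounds $1,\ldots,j$ all succeeding, the $j$-th estimate $\hat\theta_j$ satisfies $|\hat\theta_j - \theta| \leq c_*/t_j$. The base case $j=1$ holds because $\theta t_1$ lies in a fundamental $2\pi$-domain by choice of $t_0$. For the inductive step, $\hat\theta_{j-1} t_j$ lies within $2c_* < \pi/2$ of the true $\theta t_j$, so combining it with the fresh residue $\operatorname{atan2}(\hat Y_j, \hat X_j)$ uniquely determines the correct $2\pi$-branch, giving $|\hat\theta_j - \theta| \leq c_*/t_j$. Iterating to $j=\Gamma$ yields accuracy $O(1/t_\Gamma) = O(\epsilon)$ whenever no round has failed.

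\paragraph{MSE analysis and budget.}
To pass from high-probability accuracy to the squared-error guarantee, let $F_j$ denote the event that round $j$ is the first to fail (and $F_0$ the no-failure event). On $F_0$ the error is $O(\epsilon)$, and on $F_j$ with $j \geq 1$ the error inherits the previous scale and is at most $O(1/t_{j-1})$, where for $j=1$ we use the a-priori bound $|\theta| \leq \Phi$. Allocating failure probabilities $p_j = c\, 4^{-(\Gamma-j)}/(\Gamma-j+1)^2$ gives
\begin{equation*}
    \mathbb{E}[|\hat\theta - \theta|^2] \lesssim \epsilon^2 + \sum_{j=1}^{\Gamma} p_j \cdot \frac{1}{t_{j-1}^2} \lesssim \epsilon^2 + \epsilon^2 \sum_{j=1}^{\Gamma} \frac{1}{(\Gamma-j+1)^2} = O(\epsilon^2)\,.
\end{equation*}
This choice gives $n_j = O(\log(1/p_j)) = O(\Gamma - j + 1)$, so $\Gamma_{\mathrm{tot}} \coloneqq \sum_j n_j = O(\Gamma^2) = O(\log^2(\Phi/\epsilon))$ samples, and the total evolution time is
\begin{equation*}
    T = \sum_{j=1}^{\Gamma} n_j\, t_j = O\!\left(t_\Gamma \sum_{k\geq 0} (k+1) \, 2^{-k}\right) = O(1/\epsilon)\,,
\end{equation*}
dominated by the last few rounds. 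The maximum single time is $\max_j t_j = t_\Gamma = O(1/\epsilon)$.

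\paragraph{Main obstacle.}
The essential difficulty is the coupled budgeting: earlier (coarser) rounds have smaller $t_j$ but are individually more critical, since a failure there propagates to an error of order $1/t_{j-1}$ that may be as large as $\Phi$. Consequently the $p_j$ must decay geometrically in $\Gamma-j$ (and with a mild $1/(\Gamma-j+1)^2$ factor to absorb the sum in the MSE without a stray $\log$), which drives $n_j$ linearly in $\Gamma-j$. The non-trivial point is that this linear growth of $n_j$ combined with the geometric growth of $t_j$ still yields total time $O(1/\epsilon)$, and total sample count only $O(\log^2(\Phi/\epsilon))$; verifying these three budgets simultaneously, together with the inductive invariant for disambiguation, is the crux of the proof.
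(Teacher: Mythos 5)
Your proposal is correct and follows essentially the same strategy as the paper's proof: geometrically growing evolution times, per-round median/majority boosting with repetition counts growing linearly in the distance to the final round, and an MSE bound obtained by conditioning on the first failing round with failure probabilities decaying geometrically toward the coarse scales. The only substantive differences are the per-round decision rule (you use an $\operatorname{atan2}$ residue with $2\pi$-branch disambiguation, while the paper uses a ternary interval test on the sign of a single rotated quadrature, shrinking the interval by a factor $2/3$ per round) and one small slip: taking medians boosts the confidence but not the accuracy, so your boosted estimates satisfy $|\hat X_j-\cos(\theta t_j)|<1/\sqrt{8}$ rather than $<1/4$ --- which still yields a combined phase error of at most $\arcsin(1/2)=\pi/6<\pi/4$ and leaves the rest of your argument intact.
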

The proof can be found in Appendix \ref{subsec:single_mode_learning}. The proof is given by describing the algorithm to achieve the above scalings. The algorithm works by turning the estimation problem into a sequence of decision problems. Each decision problem refines our knowledge of the unknown parameter $\theta$, and can be correctly solved with high probability using a few rounds of majority voting. This results in an algorithm that is robust against noise and achieves the $1/\epsilon$ total evolution time (i.e., $T$) scaling.

\subsection{Recovering the coefficients}
\label{sec:recovering_the_coefficients}

In Section~\ref{sec:experimental_setup} we introduced an experiment that, for any $t>0$, allows us to obtain in expectation $\cos(\braket{\alpha|H|\alpha}t)$ and $\sin(\braket{\alpha|H|\alpha}t)$ by measuring in the $X$ and $Y$ bases respectively on an ancilla qubit (see \eqref{eq:cos_sin_from_experiment}). We note that this is exactly what is needed in the robust frequency estimation procedure in Theorem~\ref{thm:robust_frequency_estimation}. Using the fact that
$
|\braket{\alpha|H|\alpha}|\leq \sum_{j=0}^d\sum_{j'=0}^{d-{j}}h_{j,j'}|(\alpha^*)^j \alpha^{j'}|=\Or(d^2|\alpha|^{d}),
$ 
we know from Theorem~\ref{thm:robust_frequency_estimation} that $\braket{\alpha|H|\alpha}$ can be estimated to precision $\epsilon_1$ with total evolution time $\Or(1/\epsilon_1)$ and $\Or(\log^{{2}}(d^2|\alpha|^{{d}}/\epsilon_1))$ samples.

With $\braket{\alpha|H|\alpha}$ we will be able to recover the coefficients $h_{j,j'}$ by carefully choosing $\alpha$. Letting $\alpha=A e^{i\theta}$, we have 
\begin{equation}
    \braket{\alpha|H|\alpha} = \sum_{j=0}^d\sum_{j'=0}^{d-{j}}h_{j,j'}A^{j+j'}e^{i(j'-j)\theta} = \sum_{l=1}^d A^l e^{i\theta l} \sum_{j=0}^{l} h_{j,l-j} e^{-2ij\theta}.
\end{equation}
We note that this is a polynomial of $A$ and proceed to estimate the coefficients
\begin{equation}
    \label{eq:poly_of_r_coefs}
    h_l(\theta) = e^{i\theta l} \sum_{j=0}^{l} h_{j,l-j} e^{-2ij\theta}.
\end{equation}
Fixing $\theta$, we first estimate $\braket{\alpha|H|\alpha}$ for $\alpha=A e^{i\theta}$ and $A=A_\mu$, $\mu=1,2,\cdots,d+1$,  
where
\begin{equation}
\label{eq:choices_of_ri_ab}
    A_\mu = \frac{A_-+A_+}{2} + \frac{A_+-A_-}{2}\cos\left(\frac{(2\mu-1)\pi}{2d+2}\right).
\end{equation}
We choose $0<A_-<A_+$ such that $A_+-A_-\geq 4A_-$.
For each $A_\mu$ we estimate $\braket{\alpha|H|\alpha}$ to precision $\epsilon_1$.
We then perform Chebyshev interpolation on the interval $[A_-,A_+]$ using these data points to get a set of coefficients $\hat{h}_l$ for $l=0,1,\cdots,d$.
From Corollary~\ref{cor:coef_est_error} in Appendix \ref{sec:polyregression}, we know that 
\begin{equation}
    \label{eq:coef_est_error_r}
    |h_l(\theta)-\hat{h}_l|=\Or\left(\frac{2^l \epsilon_1}{(A_+-A_-)^l}\right).
\end{equation}
Further requiring $A_+-A_-=\Omega(1)$ we have $|h_l(\theta)-\hat{h}_l|=\Or(\epsilon_1)$.

Having estimated all polynomial coefficients $h_l(\theta)$, we then choose a set of $\theta$ to recover the coefficients $h_{j,j'}$. Note that $h_{j,l-j}$, $j=0,1,\cdots,l$ are in fact the Fourier coefficients of $h_l(\theta)$. Therefore we can recover $h_{j,l-j}$ via a Fourier transform.

Choosing $\theta_{u,l} = \pi u/(l+1)$, $u=0,1,\cdots,l$, we have
\begin{equation}
    e^{-i\theta_{u,l}l}h_l(\theta_{u,l}) = \sum_{j=0}^l h_{j,l-j} e^{-i\frac{2\pi ju}{l+1}},
\end{equation}
and therefore
\begin{equation}
\label{eq:fourier_transform_for_coefs}
    h_{j,l-j} = \frac{1}{l+1}\sum_{u=0}^l e^{-i\theta_{u,l}l}h_l(\theta_{u,l})e^{i\frac{2\pi ju}{l+1}}.
\end{equation}
From this we can see that after we evaluate $h_l(\theta_{u,l})$ to precision $\Or(\epsilon_1)$, \eqref{eq:fourier_transform_for_coefs} helps us compute $h_{j,l-j}$ to precision $\Or(\epsilon_1)$ as well.

In the above we need to estimate $\braket{\alpha|H|\alpha}$ for ${\alpha}=A_\mu e^{i\theta_{u,l}}$, $\mu=1,2,\cdots,d+1$, and $u=0,1,\cdots,l$ for each $l=1,2,\cdots,d$. In total $\Or(d^3)=\Or(1)$ different values of $\braket{\alpha|H|\alpha}$ are needed, and ensuring all of them to be $\epsilon_1$-accurate with probability at least $1-\delta$ requires an overhead of $\Or(\log(1/\delta))$ using the Chernoff bound.

From Proposition~\ref{prop-main:single-modes-adiabatic-limit}, we know that to keep the deviation from the effective dynamics below $\Or(1)$, which is needed for the robust frequency estimation procedure, we need the dissipation strength to be 
\begin{equation}
\label{eq:choice_gamma1_gammad}
    \gamma = \Or\left(t |\alpha|^{4d+2}\right).
\end{equation}
From \eqref{eq:cos_sin_from_experiment} we can see that in order to keep the error from having a finite $\alpha$ to be a small constant for $t=\Or(1/\epsilon)$, we only need to choose $A_-$ (the lower bound of $|\alpha|$) to be
\begin{equation}
\label{eq:choice_of_a}
    A_- = \Or(\sqrt{\log(1/\epsilon)}).
\end{equation}
Therefore the total deviation from the effective dynamics is bounded by a small constant as is needed for Theorem~\ref{thm:robust_frequency_estimation}. We then arrive at the following theorem: 

\begin{thm}
    \label{thm:estimating_single_mode}
    Running the experiments in Section~\ref{sec:experimental_setup} with dissipation strength of order  $\gamma=\Or(\epsilon^{-1}\log^{2d+1/2}(1/\epsilon))$ and measuring the observables in \eqref{eq:cos_sin_from_experiment}, with $\alpha=A e^{i\theta}$, $A$ taking values $A_\mu$ defined in \eqref{eq:choices_of_ri_ab} with $A_-,A_+$ satisfying $A_+-A_-\geq 4A_- >0$ and $A_-$ chosen according to \eqref{eq:choice_of_a}, we can estimate all coefficients $h_{j,j'}$ to precision $\epsilon$ with probability at least $1-\delta$ with 
    \begin{align*}
    \Or((1/\epsilon)\log(1/\delta)) ~~\text{total evolution time and} ~~~~\Or(\log^2(\log(1/\epsilon)/\epsilon)\log(1/\delta)) ~~\text{experiments.}
    \end{align*}
\end{thm}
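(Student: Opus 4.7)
The proof is essentially a stitching together of the ingredients already developed in Sections~\ref{sec:experimental_setup}--\ref{sec:recovering_the_coefficients}: the dissipation-induced adiabatic reduction, robust frequency estimation, Chebyshev interpolation, and a discrete Fourier transform. My plan is to propagate error bounds from the physical dynamics through the classical post-processing, verify that each step satisfies the precision hypothesis of the next, and then account for total evolution time and number of experiments.

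First I would verify the input hypothesis of Theorem~\ref{thm:robust_frequency_estimation}, namely that for every $t \in [0,T]$ with $T = \Or(1/\epsilon)$ the measured $\braket{X\otimes I}$ and $\braket{Y\otimes I}$ in \eqref{eq:cos_sin_from_experiment} deviate from $\cos(\braket{\alpha|H|\alpha}t)$ and $\sin(\braket{\alpha|H|\alpha}t)$ by less than $1/\sqrt{8}$. Two error sources must be controlled in parallel: the adiabatic gap $tC/\gamma + C'/\gamma$ with $C = \Or(|\alpha|^{4d+2})$ (with a $d$-dependent constant) furnished by Proposition~\ref{prop-main:single-modes-adiabatic-limit}, and the coherent-state approximation error $\Or(|\alpha|^d d^2 e^{-|\alpha|^2/2} t)$ coming from \eqref{eq:effective_ham_approx_err}. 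Setting $t = 1/\epsilon$ forces $|\alpha|^2 \gtrsim \log(1/\epsilon)$, which justifies $A_- = \Or(\sqrt{\log(1/\epsilon)})$ as in \eqref{eq:choice_of_a}; the adiabatic bound then yields the claimed $\gamma = \Or(\epsilon^{-1}\log^{2d+1/2}(1/\epsilon))$.

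Next I would apply Theorem~\ref{thm:robust_frequency_estimation} with $\theta = \braket{\alpha|H|\alpha}$ and $\Phi = \Or(d^2|\alpha|^d) = \Or(\log^{d/2}(1/\epsilon))$, producing an estimate of precision $\epsilon_1$ in $\Or(1/\epsilon_1)$ total evolution time using $\Or(\log^2(\Phi/\epsilon_1))$ experiments. Repeating this over the $\Or(d^3) = \Or(1)$ parameter choices $\alpha = A_\mu e^{i\theta_{u,l}}$ from \eqref{eq:choices_of_ri_ab} and the surrounding discussion, and inflating by a Chernoff/union-bound overhead of $\Or(\log(1/\delta))$, guarantees joint $\epsilon_1$-accuracy of all estimates with probability at least $1-\delta$. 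I would then propagate these errors through the Chebyshev interpolation step (Corollary~\ref{cor:coef_est_error}, which is safe because $A_+ - A_- = \Omega(1)$) to obtain the polynomial coefficients $\hat{h}_l(\theta_{u,l})$ to precision $\Or(\epsilon_1)$, and finally through the inverse discrete Fourier transform \eqref{eq:fourier_transform_for_coefs}, whose coefficient matrix is a scaled unitary and so preserves $\ell^\infty$ errors up to an $\Or(1)$ factor. Choosing $\epsilon_1 = c\epsilon$ for a sufficiently small absolute constant $c$ delivers the final precision $\epsilon$, and summing the budgets over the $\Or(1)$ parameter choices and the $\Or(\log(1/\delta))$ repetitions yields exactly the claimed total evolution time and number of experiments.

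The main obstacle is bookkeeping the interplay between $\gamma$, $|\alpha|$ and $t$ in the adiabatic bound. One has to check that driving $|\alpha|$ up only \emph{polylogarithmically} in $1/\epsilon$ simultaneously kills the coherent-state error at $t \sim 1/\epsilon$ and keeps the frequency $\braket{\alpha|H|\alpha} = \Or(d^2|\alpha|^d)$ inside the dynamic range $\Phi$ of Theorem~\ref{thm:robust_frequency_estimation}, while $\gamma$ absorbs only a polylog factor so that the physical evolution time remains $\Or(1/\epsilon)$. The choice $|\alpha|^2 \sim \log(1/\epsilon)$ threads this needle, and the Heisenberg scaling in $t$ is untouched because $\gamma$ enters only through the strength of the Trotterized dissipative step, not the clocked evolution time. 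No further quantum estimate is needed beyond Proposition~\ref{prop-main:single-modes-adiabatic-limit} and Theorem~\ref{thm:robust_frequency_estimation}; the rest is classical error accounting.
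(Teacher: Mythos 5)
Your proposal follows essentially the same route as the paper: Sections~\ref{sec:experimental_setup}--\ref{sec:recovering_the_coefficients} (together with Algorithm~\ref{alg:single_mode_learning}) constitute exactly this argument --- controlling the adiabatic error via Proposition~\ref{prop-main:single-modes-adiabatic-limit} and the coherent-state error via \eqref{eq:effective_ham_approx_err} by taking $|\alpha|^2\sim\log(1/\epsilon)$, feeding the resulting signal into Theorem~\ref{thm:robust_frequency_estimation}, and then propagating $\Or(\epsilon_1)$ errors through the Chebyshev interpolation (Corollary~\ref{cor:coef_est_error}) and the Fourier inversion \eqref{eq:fourier_transform_for_coefs} over the $\Or(d^3)=\Or(1)$ parameter choices with a Chernoff/union-bound overhead of $\Or(\log(1/\delta))$. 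The error bookkeeping and the resource accounting you describe match the paper's, so the proposal is correct and not a genuinely different proof.
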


\subsection{Extensions to few-mode and multi-mode Heisenberg-limited learning}

The great generality of Theorem \ref{thm-main:adiabatic-approximation} allows us to prove that our single mode algorithm can be adapted to the few-mode and multi-mode setting. In both settings, we consider a low-intersection bosonic Hamiltonian as in Definition \ref{def-main:bosonic-H}. The difference between the few-mode and the multi-mode setting is that while the few-mode setting directly generalizes the single-mode setting, the multi-mode setting uses decoupling to reduce the global Hamiltonian to a sum of local few-mode Hamiltonians: First, we introduce a coloring that defines the choice of single-mode dissipation. To decouple a color from its neighboring color, we choose the dissipation $\mathcal{L}^c_{\mathrm{dec}}$ on the interacting/ boundary modes such that it projects to the vacuum state. This directly erases all interactions in the adiabatic limit. For terms within a color, we act in exactly the same way as in the few-mode setting and use the same learning scheme. This guarantees that the total evolution time and the number of experiments scale only logarithmically with the system size, i.e., the number of modes $m$, and that dissipation strength is only required to scale quadratically in $m$. In the few-mode case, we consider $m = \mathcal{O}(1)$, for which Theorem \ref{thm:estimating_few_mode} in Appendix \ref{sec:learning_few_mode_ham} shows that the few-mode setting has essentially the same requirements as the single-mode setting. The few-mode setting can then be extended via decoupling to the multi-mode setting, which gives Theorem \ref{thm-main:multi-mode-learning}. For the details, we refer to Appendix \ref{sec:learn_multimode}.

\section{Possible experimental realization schemes}
\label{sec:experimental_realization}

The modified photon-dissipation master equation, i.e. $L_{r,\alpha}=b^{r}\left(b-\alpha\right)$, can be obtained using a scheme similar to the one used for engineering the cat code dissipation \cite{lescanne2020exponential, chamberland2022building, Guillaud.2023}.
Cat code dissipation, $L=b^{2}-\alpha^{2},$ is obtained by introducing an ancillary mode, denoted as $a$, and engineering a dynamics of:
\begin{align*}
\frac{d}{dt}\rho\left(t\right)=-ig \left[a^{\dagger}\left(b^{2}-\alpha^{2}\right)+h.c.,\rho\right]+\kappa_{a}\mathcal{D}\left[a\right]\rho.    
\end{align*}
The probe mode is thus coupled to a strongly decaying ancilla through $H=g a^{\dagger}\left(b^{2}-\alpha^{2}\right)+h.c.$
In the limit of $g\ll\kappa_{a}$ we can adiabatically eliminate the ancilla, leading to an effective dynamics of $\frac{d}{dt}\rho_{s}=4\frac{g^{2}}{\kappa_{a}}\mathcal{D}\left[b^{2}-\alpha^{2}\right]\left(\rho_{s}\right)$ of the probe's density matrix $\rho_s$.
In our case, we similarly need to engineer a coupling of $H=ga^{\dagger}b^{r}\left(b-\alpha\right)+h.c.$ between the probe and the ancilla. Using the same derivation as in \cite{Guillaud.2023}, we can adiabatically eliminate the ancilla in the limit of $g\ll\kappa_{a}$ leading to $\mathcal{L}\left[b^{r}\left(b-\alpha\right)\right]$.
A coupling of the form of $H=ga^{\dagger}b^{r}\left(b-\alpha\right)+h.c.$ can be engineered using transmons or asymmetrically threaded SQUID (ATS) \cite{lescanne2020exponential, chamberland2022building}. Relevant Hamiltonian engineering schemes are discussed in Appendix \ref{app:experimental_realizations}.

\section{Outlook} \label{sec:outlook}

In this work, we present the first efficient Hamiltonian learning scheme for a broad class of local Hamiltonians governing 
$m$-mode continuous variable systems, expressed as polynomials in the creation and annihilation operators. Our approach achieves a Heisenberg-limited scaling of
$\mathcal{O}(\epsilon^{-1}\log(m))$  in the total Hamiltonian evolution time required to estimate each coefficient to precision 
$\epsilon$. A key technical contribution enabling this optimal scaling is a new adiabatic theorem for unbounded generators of quantum dynamical semigroups, which we anticipate will have broader applications. As a demonstration, we rigorously establish the emergence of effective finite-dimensional unitary dynamics in multi-photon driven dissipation processes, relevant to quantum computing with cat codes.

An important open question is the optimal strength $ \gamma $ required to confine the effective evolutions within the finite-dimensional subspace where the learning protocol is conducted. While our analysis establishes convergence from a quadratic dependence on the number of modes $ m $, we anticipate that a more refined approach could yield a scaling independent of $ m $. A natural first step in this direction would be to incorporate the physically relevant assumption that the unitary dynamics generated by $ H $ does not lead to an unbounded increase in the system’s energy, as measured by the total photon number.

Finally,
It would be interesting to further explore feasible experimental realizations, in particular finding experimental schemes for the required photon dissipation, $L_{r,\alpha},$ for r>1. It would be also interesting to explore the robustness of this protocol to realistic noises.    

\section*{Acknowledgments}

We would like to thank Paul Gondolf, Robert Salzmann and Yotam Vaknin for their valuable discussions on the topic. A.B. was supported by the ANR project PraQPV, grant number ANR-24-CE47-3023. T.G. acknowledges financial support from the quantum science and technology early-career grant of the Israeli council for higher education. T.M.~acknowledges the support QuantERA II Programme that has received funding from the EU’s H2020 research and innovation programme under the GA No 101017733 and of the Deutsche Forschungsgemeinschaft (DFG, German Research Foundation) - Project-ID 470903074 - TRR 352 and funding by the Federal Ministry of Education and Research (BMBF) and the Baden-Württemberg Ministry of Science as part of the Excellence Strategy of the German Federal and State Governments. C.R. acknowledges financial support from France 2030 under the French National Research Agency award number “ANR-22-PNCQ-0002”. A.H.W. thanks the VILLUM FONDEN for its support with a Villum Young Investigator  (Plus) Grant (Grant No. 25452 and Grant No. 60842) as well  as via the QMATH Centre of Excellence (Grant No. 10059).

\newpage 

\bibliographystyle{unsrt}
\bibliography{ref}

\begin{thebibliography}{10}

\bibitem{GaitondeMoitraMossel2024bypassing}
Jason Gaitonde, Ankur Moitra, and Elchanan Mossel.
\newblock Bypassing the noisy parity barrier: Learning higher-order {Markov} random fields from dynamics.
\newblock {\em arXiv preprint arXiv:2409.05284}, 2024.

\bibitem{Seif2021compressed}
Alireza Seif, Mohammad Hafezi, and Yi-Kai Liu.
\newblock Compressed sensing measurement of long-range correlated noise.
\newblock {\em arXiv preprint arXiv:2105.12589}, 2021.

\bibitem{evans2019scalablebayesianhamiltonianlearning}
Tim~J. Evans, Robin Harper, and Steven~T. Flammia.
\newblock Scalable {Bayesian} {Hamiltonian} learning.
\newblock {\em arXiv preprint arXiv:1912.07636}, 2019.

\bibitem{li2020hamiltonian}
Zhi Li, Liujun Zou, and Timothy~H. Hsieh.
\newblock Hamiltonian tomography via quantum quench.
\newblock {\em Physical Review Letters}, 124(16):160502, 2020.

\bibitem{che2021learning}
Liangyu Che, Chao Wei, Yulei Huang, Dafa Zhao, Shunzhong Xue, Xinfang Nie, Jun Li, Dawei Lu, and Tao Xin.
\newblock Learning quantum {Hamiltonians} from single-qubit measurements.
\newblock {\em Physical Review Research}, 3(2):023246, 2021.

\bibitem{HaahKothariTang2022optimal}
Jeongwan Haah, Robin Kothari, and Ewin Tang.
\newblock Optimal learning of quantum {Hamiltonians} from high-temperature {Gibbs} states.
\newblock In {\em 2022 IEEE 63rd Annual Symposium on Foundations of Computer Science (FOCS)}, pages 135--146. IEEE, 2022.

\bibitem{yu2023robust}
Wenjun Yu, Jinzhao Sun, Zeyao Han, and Xiao Yuan.
\newblock Robust and efficient {Hamiltonian} learning.
\newblock {\em Quantum}, 7:1045, 2023.

\bibitem{hangleiter2024robustlylearninghamiltoniandynamics}
Dominik Hangleiter, Ingo Roth, Jon{\'a}{\v{s}} Fuksa, Jens Eisert, and Pedram Roushan.
\newblock Robustly learning the {Hamiltonian} dynamics of a superconducting quantum processor.
\newblock {\em Nature Communications}, 15(1):9595, 2024.

\bibitem{StilckFrança2024}
Daniel Stilck~Fran{\c{c}}a, Liubov~A. Markovich, V.~V. Dobrovitski, Albert~H. Werner, and Johannes Borregaard.
\newblock Efficient and robust estimation of many-qubit {Hamiltonians}.
\newblock {\em Nature Communications}, 15(1):311, Jan 2024.

\bibitem{ZubidaYitzhakiEtAl2021optimal}
Assaf Zubida, Elad Yitzhaki, Netanel~H. Lindner, and Eyal Bairey.
\newblock Optimal short-time measurements for {Hamiltonian} learning.
\newblock {\em arXiv preprint arXiv:2108.08824}, 2021.

\bibitem{BaireyAradEtAl2019learning}
Eyal Bairey, Itai Arad, and Netanel~H. Lindner.
\newblock Learning a local {Hamiltonian} from local measurements.
\newblock {\em Physical Review Letters}, 122(2):020504, 2019.

\bibitem{bairey2020learning}
Eyal Bairey, Chu Guo, Dario Poletti, Netanel~H. Lindner, and Itai Arad.
\newblock Learning the dynamics of open quantum systems from their steady states.
\newblock {\em New Journal of Physics}, 22(3):032001, 2020.

\bibitem{GranadeFerrieWiebeCory2012robust}
Christopher~E. Granade, Christopher Ferrie, Nathan Wiebe, and David~G. Cory.
\newblock Robust online {Hamiltonian} learning.
\newblock {\em New Journal of Physics}, 14(10):103013, 2012.

\bibitem{gu2022practical}
Andi Gu, Lukasz Cincio, and Patrick~J Coles.
\newblock Practical {Hamiltonian} learning with unitary dynamics and {Gibbs} states.
\newblock {\em Nature Communications}, 15(1):312, 2024.

\bibitem{wilde2022learnH}
Frederik Wilde, Augustine Kshetrimayum, Ingo Roth, Dominik Hangleiter, Ryan Sweke, and Jens Eisert.
\newblock Scalably learning quantum many-body {Hamiltonians} from dynamical data.
\newblock {\em arXiv preprint arXiv:2209.14328}, 2022.

\bibitem{KrastanovZhouEtAl2019stochastic}
Stefan Krastanov, Sisi Zhou, Steven~T. Flammia, and Liang Jiang.
\newblock Stochastic estimation of dynamical variables.
\newblock {\em Quantum Science and Technology}, 4(3):035003, 2019.

\bibitem{Caro_2024}
Matthias~C. Caro.
\newblock Learning quantum processes and {Hamiltonians} via the {Pauli} transfer matrix.
\newblock {\em ACM Transactions on Quantum Computing}, 5(2):1–53, June 2024.

\bibitem{mobus2023dissipation}
Tim M{\"o}bus, Andreas Bluhm, Matthias~C. Caro, Albert~H Werner, and Cambyse Rouz{\'e}.
\newblock Dissipation-enabled bosonic {H}amiltonian learning via new information-propagation bounds.
\newblock {\em arXiv preprint arXiv:2307.15026}, 2023.

\bibitem{HolzapfelEtAl2015scalable}
Milan Holz{\"a}pfel, Tillmann Baumgratz, Marcus Cramer, and Martin~B. Plenio.
\newblock Scalable reconstruction of unitary processes and {Hamiltonians}.
\newblock {\em Physical Review A}, 91(4):042129, 2015.

\bibitem{HuangTongFangSu2023learning}
Hsin-Yuan Huang, Yu~Tong, Di~Fang, and Yuan Su.
\newblock Learning many-body {Hamiltonians} with {Heisenberg}-limited scaling.
\newblock {\em Physical Review Letters}, 130(20):200403, 2023.

\bibitem{dutkiewicz2023advantage}
Alicja Dutkiewicz, Thomas~E. O'Brien, and Thomas Schuster.
\newblock The advantage of quantum control in many-body {Hamiltonian} learning.
\newblock {\em Quantum}, 8:1537, 2024.

\bibitem{MiraniHayden2024learning}
Arjun Mirani and Patrick Hayden.
\newblock Learning interacting fermionic {Hamiltonians} at the {Heisenberg} limit.
\newblock {\em Physical Review A}, 110(6):062421, 2024.

\bibitem{NiLiYing2024quantum}
Hongkang Ni, Haoya Li, and Lexing Ying.
\newblock Quantum {Hamiltonian} learning for the {Fermi-Hubbard} model.
\newblock {\em Acta Applicandae Mathematicae}, 191(1):1--16, 2024.

\bibitem{LiTongNiGefenYing2023heisenberg}
Haoya Li, Yu~Tong, Tuvia Gefen, Hongkang Ni, and Lexing Ying.
\newblock Heisenberg-limited {Hamiltonian} learning for interacting bosons.
\newblock {\em npj Quantum Information}, 10(1):83, 2024.

\bibitem{BoixoSomma2008parameter}
Sergio Boixo and Rolando~D. Somma.
\newblock Parameter estimation with mixed-state quantum computation.
\newblock {\em Physical Review A—Atomic, Molecular, and Optical Physics}, 77(5):052320, 2008.

\bibitem{bakshi2024structure}
Ainesh Bakshi, Allen Liu, Ankur Moitra, and Ewin Tang.
\newblock Structure learning of {Hamiltonians} from real-time evolution.
\newblock In {\em 2024 IEEE 65th Annual Symposium on Foundations of Computer Science (FOCS)}, pages 1037--1050. IEEE, 2024.

\bibitem{WangLi2024simulation}
Ke~Wang and Xiantao Li.
\newblock Simulation-assisted learning of open quantum systems.
\newblock {\em Quantum}, 8:1407, 2024.

\bibitem{odake2023universal}
Tatsuki Odake, Hl{\'e}r Kristj{\'a}nsson, Philip Taranto, and Mio Murao.
\newblock Universal algorithm for transforming {Hamiltonian} eigenvalues.
\newblock {\em Physical Review Research}, 7(1):013331, 2025.

\bibitem{ZhangLinNarangLuo2025hamiltonian}
Lixing Zhang, Ze-Xun Lin, Prineha Narang, and Di~Luo.
\newblock Hamiltonian learning at {Heisenberg} limit for hybrid quantum systems.
\newblock {\em arXiv preprint arXiv:2502.20373}, 2025.

\bibitem{MaFlammiaPreskillTong2024learning}
Muzhou Ma, Steven~T. Flammia, John Preskill, and Yu~Tong.
\newblock Learning $ k $-body {Hamiltonians} via compressed sensing.
\newblock {\em arXiv preprint arXiv:2410.18928}, 2024.

\bibitem{HuMaGongEtAl2025ansatz}
Hong-Ye Hu, Muzhou Ma, Weiyuan Gong, Qi~Ye, Yu~Tong, Steven~T. Flammia, and Susanne~F. Yelin.
\newblock Ansatz-free {Hamiltonian} learning with {Heisenberg}-limited scaling.
\newblock {\em arXiv preprint arXiv:2502.11900}, 2025.

\bibitem{krantz2019quantum}
Philip Krantz, Morten Kjaergaard, Fei Yan, Terry~P. Orlando, Simon Gustavsson, and William~D. Oliver.
\newblock A quantum engineer's guide to superconducting qubits.
\newblock {\em Applied Physics Reviews}, 6(2), 2019.

\bibitem{clerk2020hybrid}
A.~A. Clerk, K.~W. Lehnert, P.~Bertet, J.~R. Petta, and Y.~Nakamura.
\newblock Hybrid quantum systems with circuit quantum electrodynamics.
\newblock {\em Nature Physics}, 16(3):257--267, 2020.

\bibitem{blais2021circuit}
Alexandre Blais, Arne~L. Grimsmo, Steven~M. Girvin, and Andreas Wallraff.
\newblock Circuit quantum electrodynamics.
\newblock {\em Reviews of Modern Physics}, 93(2):025005, 2021.

\bibitem{wang2020integrated}
Jianwei Wang, Fabio Sciarrino, Anthony Laing, and Mark~G. Thompson.
\newblock Integrated photonic quantum technologies.
\newblock {\em Nature Photonics}, 14(5):273--284, 2020.

\bibitem{aspelmeyer2014cavity}
Markus Aspelmeyer, Tobias~J. Kippenberg, and Florian Marquardt.
\newblock Cavity optomechanics.
\newblock {\em Reviews of Modern Physics}, 86(4):1391, 2014.

\bibitem{metcalfe2014applications}
Michael Metcalfe.
\newblock Applications of cavity optomechanics.
\newblock {\em Applied Physics Reviews}, 1(3), 2014.

\bibitem{eisert2009supersonic}
Jens Eisert and David Gross.
\newblock Supersonic quantum communication.
\newblock {\em Physical Review Letters}, 102(24):240501, 2009.

\bibitem{Mirrahimi.2014}
Mazyar Mirrahimi, Zaki Leghtas, Victor~V. Albert, Steven Touzard, Robert~J. Schoelkopf, Liang Jiang, and Michel~H. Devoret.
\newblock Dynamically protected cat-qubits: a new paradigm for universal quantum computation.
\newblock {\em New Journal of Physics}, 16(4), 2014.

\bibitem{Azouit.2016}
Rémi Azouit, Alain Sarlette, and Pierre Rouchon.
\newblock Well-posedness and convergence of the {Lindblad} master equation for a quantum harmonic oscillator with multi-photon drive and damping.
\newblock {\em ESAIM: Control, Optimisation and Calculus of Variations}, 22(4), 2016.

\bibitem{Guillaud.2019}
J\'er\'emie Guillaud and Mazyar Mirrahimi.
\newblock Repetition cat qubits for fault-tolerant quantum computation.
\newblock {\em Physical Review X}, 9, 2019.

\bibitem{Guillaud.2023}
Jérémie Guillaud, Joachim Cohen, and Mazyar Mirrahimi.
\newblock Quantum computation with cat qubits.
\newblock {\em SciPost Physics Lecture Notes}, 2023.

\bibitem{Zanardi.2014}
Paolo Zanardi and Lorenzo Campos~Venuti.
\newblock Coherent quantum dynamics in steady-state manifolds of strongly dissipative systems.
\newblock {\em Physical Review Letters}, 113(24), 2014.

\bibitem{Zanardi.2015}
Paolo Zanardi and Lorenzo Campos~Venuti.
\newblock Geometry, robustness, and emerging unitarity in dissipation-projected dynamics.
\newblock {\em Physical Review A}, 91(5), 2015.

\bibitem{burgarth2020quantum}
Daniel Burgarth, Paolo Facchi, Hiromichi Nakazato, Saverio Pascazio, and Kazuya Yuasa.
\newblock Quantum {Zeno} dynamics from general quantum operations.
\newblock {\em Quantum}, 4:289, 2020.

\bibitem{Glueck.2016}
Jochen Gl\"{u}ck.
\newblock A note on approximation of operator semigroups.
\newblock {\em Archiv der Mathematik}, 106(3):265–273, 2016.

\bibitem{Barankai.2018}
Norbert Barankai and Zoltán Zimborás.
\newblock Generalized quantum {Zeno} dynamics and ergodic means.
\newblock {\em arXiv preprint arXiv:1811.02509}, 2018.

\bibitem{salzmann2024quantitative}
Robert Salzmann.
\newblock Quantitative quantum {Zeno} and strong damping limits in strong topology.
\newblock {\em arXiv preprint arXiv:2409.06469}, 2024.

\bibitem{Touzard.2018}
S.~Touzard, A.~Grimm, Z.~Leghtas, S.~O. Mundhada, P.~Reinhold, C.~Axline, M.~Reagor, K.~Chou, J.~Blumoff, K.~M. Sliwa, S.~Shankar, L.~Frunzio, R.~J. Schoelkopf, M.~Mirrahimi, and M.~H. Devoret.
\newblock Coherent oscillations inside a quantum manifold stabilized by dissipation.
\newblock {\em Physical Review X}, 8(2), 2018.

\bibitem{Hao2018learning}
Yi~Hao, Alon Orlitsky, and Venkatadheeraj Pichapati.
\newblock On learning {Markov} chains.
\newblock {\em Advances in Neural Information Processing Systems}, 31, 2018.

\bibitem{Han2021optimal}
Yanjun Han, Soham Jana, and Yihong Wu.
\newblock Optimal prediction of {Markov} chains with and without spectral gap.
\newblock {\em Advances in Neural Information Processing Systems}, 34:11233--11246, 2021.

\bibitem{Wolfer2019minimax}
Geoffrey Wolfer and Aryeh Kontorovich.
\newblock Minimax learning of ergodic {Markov} chains.
\newblock In {\em Algorithmic Learning Theory}, pages 904--930. PMLR, 2019.

\bibitem{Wolfer2021discrete}
Geoffrey Wolfer and Aryeh Kontorovich.
\newblock Statistical estimation of ergodic {Markov} chain kernel over discrete state space.
\newblock {\em Bernoulli}, 27(1):532--553, 2021.

\bibitem{sohl2015deep}
Jascha Sohl-Dickstein, Eric Weiss, Niru Maheswaranathan, and Surya Ganguli.
\newblock Deep unsupervised learning using nonequilibrium thermodynamics.
\newblock In {\em International conference on machine learning}, pages 2256--2265. pmlr, 2015.

\bibitem{ho2020denoising}
Jonathan Ho, Ajay Jain, and Pieter Abbeel.
\newblock Denoising diffusion probabilistic models.
\newblock {\em Advances in neural information processing systems}, 33:6840--6851, 2020.

\bibitem{song2019generative}
Yang Song and Stefano Ermon.
\newblock Generative modeling by estimating gradients of the data distribution.
\newblock {\em Advances in neural information processing systems}, 32, 2019.

\bibitem{song2020score}
Yang Song, Jascha Sohl-Dickstein, Diederik~P. Kingma, Abhishek Kumar, Stefano Ermon, and Ben Poole.
\newblock Score-based generative modeling through stochastic differential equations.
\newblock {\em arXiv preprint arXiv:2011.13456}, 2020.

\bibitem{lou2023discrete}
Aaron Lou, Chenlin Meng, and Stefano Ermon.
\newblock Discrete diffusion modeling by estimating the ratios of the data distribution.
\newblock {\em arXiv preprint arXiv:2310.16834}, 2023.

\bibitem{ren2024discrete}
Yinuo Ren, Haoxuan Chen, Grant~M. Rotskoff, and Lexing Ying.
\newblock How discrete and continuous diffusion meet: Comprehensive analysis of discrete diffusion models via a stochastic integral framework.
\newblock {\em arXiv preprint arXiv:2410.03601}, 2024.

\bibitem{ren2025fast}
Yinuo Ren, Haoxuan Chen, Yuchen Zhu, Wei Guo, Yongxin Chen, Grant~M. Rotskoff, Molei Tao, and Lexing Ying.
\newblock Fast solvers for discrete diffusion models: Theory and applications of high-order algorithms.
\newblock {\em arXiv preprint arXiv:2502.00234}, 2025.

\bibitem{lescanne2020exponential}
Rapha{\"e}l Lescanne, Marius Villiers, Th{\'e}au Peronnin, Alain Sarlette, Matthieu Delbecq, Benjamin Huard, Takis Kontos, Mazyar Mirrahimi, and Zaki Leghtas.
\newblock Exponential suppression of bit-flips in a qubit encoded in an oscillator.
\newblock {\em Nature Physics}, 16(5):509--513, 2020.

\bibitem{chamberland2022building}
Christopher Chamberland, Kyungjoo Noh, Patricio Arrangoiz-Arriola, Earl~T. Campbell, Connor~T. Hann, Joseph Iverson, Harald Putterman, Thomas~C. Bohdanowicz, Steven~T. Flammia, Andrew Keller, et~al.
\newblock Building a fault-tolerant quantum computer using concatenated cat codes.
\newblock {\em PRX Quantum}, 3(1):010329, 2022.

\bibitem{Simon.2015-4}
Barry Simon.
\newblock {\em Operator theory}, volume~4 of {\em A Comprehensive Course in Analysis}.
\newblock {American Mathematical Society}, 2015.

\bibitem{Holevo.2001}
Alexander~S. Holevo.
\newblock {\em Statistical Structure of Quantum Theory}.
\newblock Springer, 2001.

\bibitem{Gorini.1976}
Vittorio Gorini, Andrzej Kossakowski, and E.~C.~George Sudarshan.
\newblock Completely positive dynamical semigroups of n-level systems.
\newblock {\em Journal of Mathematical Physics}, 17(5):821–825, 1976.

\bibitem{Lindblad.1976}
Göran Lindblad.
\newblock On the generators of quantum dynamical semigroups.
\newblock {\em Communications in Mathematical Physics}, 48(2):119--130, 1976.

\bibitem{schmudgen2012unbounded}
Konrad Schm{\"u}dgen.
\newblock {\em Unbounded self-adjoint operators on Hilbert space}, volume 265 of {\em Graduate Texts in Mathematics}.
\newblock Springer, 2012.

\bibitem{Holevo.2012}
Alexander~S. Holevo.
\newblock {\em Quantum Systems, Channels, Information: A Mathematical Introduction}.
\newblock De Gruyter, 2012.

\bibitem{Gondolfetal2023}
Paul Gondolf, Tim M{\"o}bus, and Cambyse Rouz{\'e}.
\newblock Energy preserving evolutions over bosonic systems.
\newblock {\em Quantum}, 8:1551, 2024.

\bibitem{ben1963contributions}
Adi Ben-Israel and Abraham Charnes.
\newblock Contributions to the theory of generalized inverses.
\newblock {\em Journal of the Society for Industrial and Applied Mathematics}, 11(3):667--699, 1963.

\bibitem{tseng1949generalized}
Y.~Y. Tseng.
\newblock Generalized inverses of unbounded operators between two unitary spaces.
\newblock In {\em Doklady Akademii Nauk SSSR}, volume~67, pages 431--434, 1949.

\bibitem{lardy1975series}
Lawrence~J. Lardy.
\newblock A series representation for the generalized inverse of a closed linear operator.
\newblock {\em Atti della Accademia Nazionale dei Lincei. Classe di Scienze Fisiche, Matematiche e Naturali. Rendiconti}, 59:152--157, 1975.

\bibitem{reed1981functional}
Michael Reed and Barry Simon.
\newblock {\em Functional analysis}, volume~1 of {\em Methods of Modern Mathematical Physics}.
\newblock Academic press, 1981.

\bibitem{kato2013perturbation}
Tosio Kato.
\newblock {\em Perturbation theory for linear operators}, volume 132 of {\em Die Grundlagen der mathematischen Wissenschaften in Einzeldarstellungen}.
\newblock Springer, 1966.

\bibitem{pedersen2012analysis}
Gert~K. Pedersen.
\newblock {\em Analysis now}, volume 118 of {\em Graduate Texts in Mathematics}.
\newblock Springer, 2nd edition, 1995.

\end{thebibliography}

\newpage
\appendix

\section{Adiabatic evolutions}
\label{sec:adiabatic_evolutions}

\subsection{Adiabatic approximation for general Lindbladian evolutions with unbounded generators}
    In this appendix, we will prove our main technical tool, the adiabatic approximation for general Lindbladian evolutions, of which Theorem \ref{thm-main:adiabatic-approximation} is a simplified version. Here, we are interested in the so-called adiabatic limit, which describe the convergence of a time-evolution with a dominant term to an effective dynamic, i.e., time-evolutions with generators of the form $\mathcal{H}+\mathcal{K}+\gamma\mathcal{L}$ for large $\gamma>0$ (see Section \ref{sec:adiabatic_evolutions} for an introduction in quantum dynamical time-evolutions). Here, $\mathcal{L}$ and $\mathcal{K}$ are generators in Lindblad form (see \eqref{eq:linbladian}) defined by the sets of unbounded operators $\{L_i\}_{i\in\mathcal{I}}$ and $\{K_j\}_{j\in\mathcal{J}}$. To prove convergence in $\gamma$ and to obtain an explicit representation of the effective dynamics, we further assume that $P$ is the projection onto the intersection of the (finite-dimensional) kernels of each $L_i$. Given a Hamiltonian $H$, i.e., an (unbounded) essentially self-adjoint operator (see \cite[Ch.~7.1]{Simon.2015-4}), we denote by $\mathcal{H}:=-i[H,\cdot]$ the superoperator corresponding to the commutation with $H$, and write $\mathcal{H}_{\operatorname{proj}}:=-i[H_{\operatorname{proj}},\cdot]$ for the generator corresponding to the Hamiltonian $H_{\operatorname{proj}}:=PHP$. Similarly, we denote by $\mathcal{K}_{\operatorname{proj}}$ the generator corresponding to the jumps $\{PK_jP\}_{j\in\mathcal{J}}$: Finally, we make the following additional assumptions, which are trivial in finite dimensions but important in the case of unbounded operators. We denote $L':=\sum_i (I-P)L_i^\dagger L_i (I-P)$ and $G:=-\frac{1}{2}\sum_{j\in\mathcal{J}} K_j^\dagger K_j.$
    \begin{itemize}
        \item[(i)] $L'\ge \eta \,(I-P)$ for some constant $\eta >0$;
        \item[(ii)] There exist constants $\iota_1,\iota_2\geq 0$ such that, for any $|\psi\rangle\in \operatorname{dom}(L')\subset\operatorname{dom}(H)$,
        \begin{equation*}
            \| H|\psi\rangle\|\le \iota_1 \|L'|\psi\rangle\|+\iota_2\||\psi\rangle\|\,;
        \end{equation*}
        \item[(iii)] There exist constants $\iota'_1,\iota'_2\geq 0$ such that, for any $|\psi\rangle\in\operatorname{dom}(L')\subset\operatorname{dom}(K_j)\cap\operatorname{dom}(K_j^\dagger K_j)$ for all $j$, 
        \begin{equation*}
            \sum_j\|K_j|\psi\rangle\|,\,\sum_j\|K_j^\dagger K_j|\psi\rangle\|\le \iota_1'\|L'|\psi\rangle\|+\iota_2'\||\psi\rangle\|\,;
        \end{equation*}
        \item[(iv)] $\|K_jP\|_\infty$, $\|K_j^\dagger K_jP\|_\infty \le C_K$, $\|HP\|_\infty\le C_H$, and $\|L'HP\|_\infty+\|L'GP\|_\infty\le C_L$ for some constants $C_H,C_K,C_L\geq 0$;
        \item[(v)] For all $j\in\mathcal{J}$, $\mathcal{K}_{j,\operatorname{proj}}:=PK_jP=K_jP$.
    \end{itemize}
    Note that Assumptions (iv) and (v) include the domain assumptions $\operatorname{range}(P)\subset \operatorname{dom}({K_j})$, $\operatorname{dom}(K_j^\dagger K_j)$, $\operatorname{dom}({H})$, $\operatorname{dom}({L'H})$ and $\operatorname{dom}({L'G})$ for all $j\in\mathcal{J}$. By the uniform boundedness theorem, the operators 
    \begin{equation*}
        K_jP\,,\quad K_j^\dagger K_jP\,,\quad HP\,,\quad L'HP\,,\quad\text{and}\quad L'GP
    \end{equation*}
    are uniquely extended to an bounded operator defined on $\mathscr{H}$. Under these assumptions, we are interested in the behavior of the semigroup generated by $\mathcal{K}+\mathcal{H}+\gamma \mathcal{L}$ for some strength $\gamma>0$.
    
    \begin{thm}\label{thm:general-adiabatic-limit}
        Assume that $H$ is a Hamiltonian defined on a Hilbert space $\mathscr{H}$, and that $\mathcal{L}$, $\mathcal{K}$ are Lindbladians defined in \eqref{eq:linbladian}, satisfying assumptions $(i)-(v)$. Then, for any state $\rho=P\rho P$ and $\gamma>0$,
        \begin{equation*}
            \left\|e^{t(\gamma \mathcal{L}+\mathcal{H}+\mathcal{K})}(\rho)-e^{t(\mathcal{K}_{\operatorname{proj}}+\mathcal{H}_{\operatorname{proj}})}(\rho)\right\|_1\le \frac{tC+C'}{\gamma}\,,    
        \end{equation*}
        where $ C:=\frac{4}{\eta}\Big(\big((C_K+1) \iota_1'+\iota_1\big)C_L+\big(\iota_2+\iota_2'+C_H+(|\mathcal{J}|+\iota_2')C_K\big) (C_H+|\mathcal{J}|C_K)\Big)$ and $C':=\frac{4}{\eta}(C_H+|\mathcal{J}|C_K)$. If $\mathcal{K}=0$, the constants simplify to $C=\frac{4}{\eta}\Big(\iota_1C_L+(\iota_2+C_H) C_H\Big)$ and $C'=\frac{4C_H}{\eta}$.
    \end{thm}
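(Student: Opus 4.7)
The plan is to combine a Duhamel comparison of the two semigroups with an adiabatic elimination step that extracts a factor of $1/\gamma$ from the spectral gap of $\mathcal L$ on $\operatorname{range}(I-P)$. I would set $A:=\gamma\mathcal L+\mathcal H+\mathcal K$ and $B:=\mathcal H_{\operatorname{proj}}+\mathcal K_{\operatorname{proj}}$, and start from
\begin{equation*}
e^{tA}(\rho)-e^{tB}(\rho)=\int_0^t e^{sA}(A-B)\,e^{(t-s)B}(\rho)\,ds.
\end{equation*}
Since $B$ preserves the subspace $P\,T_1(\mathscr H)\,P$ and $\rho=P\rho P$, the state $\sigma(s):=e^{(t-s)B}(\rho)$ satisfies $\sigma(s)=P\sigma(s)P$; and since $L_iP=0$ for each $i$, the dissipator $\gamma\mathcal L$ annihilates $\sigma(s)$, so the integrand collapses to $(\mathcal H-\mathcal H_{\operatorname{proj}}+\mathcal K-\mathcal K_{\operatorname{proj}})(\sigma(s))$. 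A direct computation, using $L_iP=PL_i^\dagger=0$ together with assumption (v), shows that this quantity is a sum of \emph{off-diagonal} operators of the form $(I-P)Z\,\sigma(s)P$ (and its adjoint), where $Z$ is built from $HP$ or $K_j^\dagger K_j$.

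The second ingredient is that $\mathcal L$ is invertible on off-diagonal pieces. Using $L_iP=0$, one checks that
\begin{equation*}
\mathcal L\big((I-P)YP\big)=-\tfrac12\,L'(I-P)YP,
\end{equation*}
because the jump term $L_i(I-P)YPL_i^\dagger$ vanishes (the right $P$ kills $L_i^\dagger$) and the right anticommutator vanishes (the left $P$ kills $L_i^\dagger L_i$). Assumption (i) then provides a bounded left inverse with norm at most $2/\eta$, and symmetrically for right-off-diagonal elements. Writing each off-diagonal contribution as $X(s)=\mathcal L(Y(s))$ with $Y(s):=\mathcal L^{+}X(s)$, and substituting $\mathcal L=\gamma^{-1}(A-\mathcal H-\mathcal K)$, I would integrate by parts in $s$ to get
\begin{equation*}
\int_0^t e^{sA}X(s)\,ds=\frac{1}{\gamma}\Big(e^{tA}Y(t)-Y(0)-\int_0^t e^{sA}\big[Y'(s)+(\mathcal H+\mathcal K)Y(s)\big]\,ds\Big).
\end{equation*}

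Since quantum dynamical semigroups are trace-norm contractions, the right-hand side is controlled by the trace norms of $Y$, $Y'$, and $(\mathcal H+\mathcal K)Y$. The size of $Y(s)$ is dictated by $\|HP\|_\infty\le C_H$ and $\|K_j^\dagger K_jP\|_\infty\le C_K$ from assumption (iv), producing the boundary constant $C'=4(C_H+|\mathcal J|C_K)/\eta$. To control $(\mathcal H+\mathcal K)Y(s)$ in spite of the unboundedness of $H$ and $K_j$, I would apply the relative bounds (ii)–(iii) with $\phi=(L')^{-1}X(s)$: because $L'\phi=X(s)$, the unbounded pieces $\iota_1\|L'\phi\|$ and $\iota'_1\|L'\phi\|$ are absorbed into $\|X(s)\|$, which is itself bounded by $C_H$ and $C_K$. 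Differentiating $Y(s)$ in $s$ pulls out a factor of $B$ acting on $\sigma(s)$, so one more iteration of the relative bounds is needed; this is where the hypothesis $\|L'HP\|_\infty+\|L'GP\|_\infty\le C_L$ enters, since controlling the $\iota_1\|L'\phi\|$ contribution at the second iteration requires a norm bound on $L'\cdot HP$ (and similarly on $L'\cdot GP$ for the $\mathcal K$ terms). Accumulating all contributions gives the stated form of $C$; the simplification for $\mathcal K=0$ is immediate, as the $\iota'_i$ and $C_K$ terms drop out.

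The hardest step will be making these formal manipulations rigorous in the presence of unbounded generators. One must verify that $Y(s)$ lies in $\operatorname{dom}(A)$, that $(\mathcal H+\mathcal K)Y(s)$ is a well-defined element of $T_1(\mathscr H)$, and that $s\mapsto e^{sA}Y(s)$ is strongly differentiable on $T_1(\mathscr H)$. The finite-dimensionality of $\operatorname{range}(P)$, together with the fact that $(L')^{-1}$ maps off-diagonal trace-class operators back into $\operatorname{dom}(L')\subseteq\operatorname{dom}(H)\cap\bigcap_j\operatorname{dom}(K_j^\dagger K_j)$ by assumptions (ii)–(iii), is what lets the relative bounds produce uniformly bounded trace norms along the whole chain. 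A standard approximation via spectral cutoffs on $L'$, proving the identities for the resulting bounded problems, and then passing to the limit using strong continuity of the semigroups and dominated convergence, should close the remaining gap.
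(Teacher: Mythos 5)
Your proposal is correct in outline and shares the paper's first step — the Duhamel comparison reducing the difference to an integral of the purely off-diagonal term $(I-\mathcal{P})(\mathcal{H}+\mathcal{K})\mathcal{P}(\sigma(s))$ — but it extracts the crucial factor $1/\gamma$ by a genuinely different mechanism. The paper performs a \emph{second} Duhamel expansion comparing $e^{(t-s)(\gamma\mathcal{L}+\mathcal{H}+\mathcal{K})}$ with $e^{(t-s)\gamma\mathcal{L}}$, uses that on off-diagonal blocks $e^{u\gamma\mathcal{L}}$ coincides with the anticommutator semigroup $e^{-\frac{u\gamma}{2}\{L',\cdot\}}$, and obtains $1/\gamma$ by integrating the resulting decay $e^{-\gamma\eta u/2}$; you instead invert $\mathcal{L}$ algebraically on off-diagonal blocks (where it acts as $-\tfrac12 L'(\cdot)$ or $(\cdot)(-\tfrac12 L')$, with inverse of norm $2/\eta$ by (i)), substitute $\mathcal{L}=\gamma^{-1}(A-\mathcal{H}-\mathcal{K})$, and integrate by parts — the classical adiabatic-theorem route. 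Both are sound, and the boundary terms give the same $C'=4(C_H+|\mathcal{J}|C_K)/\eta$. The noteworthy difference is in where the relative bounds bite: in your route the unbounded factor $L'$ cancels exactly against $(L')^{-1}$, so (ii)–(iii) are applied with $\|L'\phi\|=\|X(s)\|$ already controlled by $C_H$ and $C_K$, and the hypothesis $\|L'HP\|_\infty+\|L'GP\|_\infty\le C_L$ is never actually needed; the paper, by contrast, must commute $L'$ through $e^{-\frac{u\gamma}{2}L'}$ and genuinely uses $C_L$ to handle the region $u\approx 0$. Your attribution of $C_L$ to the $Y'(s)$ term is therefore off: $Y'(s)=\mathcal{L}^{+}\bigl((I-P)(-iH+G)P\,\sigma'(s)P+\mathrm{h.c.}\bigr)$ with $\sigma'(s)=-(\mathcal{H}_{\operatorname{proj}}+\mathcal{K}_{\operatorname{proj}})\sigma(s)$ involves only operators bounded via (iv), so no relative bound enters there. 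The upshot is that your argument proves a bound of the stated form $(tC+C')/\gamma$, with $C$ of the same structure but with $\iota_1,\iota_1'$ multiplied by $C_H+|\mathcal{J}|C_K$ in place of $C_L$ — a legitimate (arguably slightly stronger) variant, though not literally the constants in the statement. The domain and differentiability caveats you flag at the end are real and are the same ones the paper must address; your plan (finite rank of $P$, the mapping property $(L')^{-1}:\operatorname{ran}(I-P)\to\operatorname{dom}(L')$, cutoff-and-limit) is an adequate way to close them.
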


    \begin{proof}
        For the following proof, we denote $\rho_s^{\mathrm{proj}}:=e^{s(\cKproj+\cHproj)}(\rho)$. Note that, $\cKproj$ and $\cHproj$ are bounded operators, so that both define a quantum Markov semigroup \cite{Lindblad.1976}. Then,
        \begin{equation}\label{eq:general-adiabatic-thm-step1}
            \begin{aligned}
                e^{t(\gamma \mathcal{L}+\mathcal{H}+\cK)}(\rho)-\rho_t^{\mathrm{proj}}&=-\int_{0}^t \frac{\partial}{\partial s}\,e^{(t-s)(\gamma \mathcal{L}+\mathcal{H}+\cK)}(\rho_s^{\mathrm{proj}})\,ds\\
                &=\int_{0}^t \,e^{(t-s)(\gamma \mathcal{L}+\mathcal{H}+\cK)}(\gamma \mathcal{L}+\mathcal{H}-\mathcal{H}_{\mathrm{proj}}+\cK-\cKproj)(\rho_s^{\mathrm{proj}})\,ds\\
                &=\int_{0}^t \,e^{(t-s)(\gamma \mathcal{L}+\mathcal{H}+\cK)}(I-\mathcal{P})(\mathcal{H}+\mathcal{K})\mathcal{P}(\rho_s^{\mathrm{proj}})\,ds\,,
            \end{aligned}
        \end{equation}
        where we used that $\rho_s^{\mathrm{proj}}=\mathcal{P}(\rho_s^{\mathrm{proj}})$ with $\mathcal{P}=P\cdot P$, $\cL\mathcal{P}=0$ and $(\mathcal{H}_{\mathrm{proj}}+\cKproj)\mathcal{P}=\mathcal{P}(\mathcal{H}+\cK)\mathcal{P}$. Note that by Assumption $(iv)$ and the uniform boundedness theorem $(\mathcal{H}+\mathcal{K})\mathcal{P}$ is well-defined for any input. In particular, for any $\rho\in\cT_f$
        \begin{equation*}
            \begin{aligned}
                (I-\mathcal{P})\cH\mathcal{P}(\rho)&=-i\bigl((I-P)HP\rho P-P\rho PH(I-P)\bigr)\\
                (I-\mathcal{P})\cK\mathcal{P}(\rho)&=(I-P)GP \rho P+P\rho PG (I-P)\,,
            \end{aligned}
        \end{equation*}
        with $G=-\frac{1}{2}\sum_j K_j^\dagger K_j$ because of 
        \begin{align*}
            (I-\mathcal{P})\cK\mathcal{P}(\rho)&=\sum_{j\in\mathcal{J}}\, (I-\mathcal{P})\Bigl(K_j\mathcal{P}(\rho)K_j^\dagger-\frac{1}{2}\big\{K_j^\dagger K_j,\mathcal{P}(\rho)\big\}\Bigr)\\
            &\overset{(1)}{=}-\frac{1}{2}\sum_{j\in\mathcal{J}}\, (I-\mathcal{P})\Bigl(\big\{K_j^\dagger K_j,\mathcal{P}(\rho)\big\}\Bigr)\\
            &= (I-P)G P\rho P+P\rho PG (I-P)\,,
        \end{align*}
        where we used that $(I-P)K_jP=0$ in $(1)$. To characterize the above found structure, we define the following projection
        \begin{equation}\label{eq:def-sym-projection}
            \widetilde{\mathcal{P}}=(I-P)\cdot P+P\cdot (I-P)
        \end{equation}
        and $X_s=(I-P){(-iH+}G {)}P\rho_s^{\operatorname{proj}}P$  so that \eqref{eq:general-adiabatic-thm-step1} can be rewritten as
        \begin{equation}\label{eq:general-adiabatic-thm-step2}
            \begin{aligned}
                e^{t(\gamma \mathcal{L}+\mathcal{H}+\cK)}(\rho)-\rho_t^{\mathrm{proj}}&=\int_{0}^t \,e^{(t-s)(\gamma \mathcal{L}+\mathcal{H}+\cK)}\widetilde{\mathcal{P}}(X_s+X^\dagger_s)\,ds\,.
            \end{aligned}
        \end{equation}
        Note that $GP$ as well as $HP$ is a bounded operator so that $X_s$ is a bounded operator and the adjoint well-defined. Next, we observe that
        \begin{equation*}
            \mathcal{L}\widetilde{\mathcal{P}}(X)=-\frac{1}{2}\sum_i\{(I-P)L_i^\dagger L_i(I-P),\widetilde{\mathcal{P}}(X)\}
        \end{equation*}
        and $\mathcal{L}\widetilde{\mathcal{P}}(X)=\widetilde{\mathcal{P}}\mathcal{L}\widetilde{\mathcal{P}}(X)$ for all $X\in\cT_f$. Due to these two properties, we show that the above identity extends to the semigroups. First, we remark that  $L'=\sum_i(I-P)L_i^\dagger L_i(I-P)$ is a positive self-adjoint operator by \cite[Proposition 3.18]{schmudgen2012unbounded}. Hence, by \cite[Proposition 6.14]{schmudgen2012unbounded}, $-L'$ generates a contraction semigroup. Next, we use Duhamel's formula again to show
        \begin{equation*}
            \begin{aligned}
                e^{t\cL}\widetilde{\mathcal{P}}&(X)-e^{-\frac{t}{2}\{L',\cdot\}}\widetilde{\mathcal{P}}(X)=\int_0^te^{s\cL}(\mathcal{L}+\frac{1}{2}\{L',\cdot\})\widetilde{\mathcal{P}}e^{-\frac{t-s}{2}\{L',\cdot\}}\widetilde{\mathcal{P}}(X)\,ds=0\,.
            \end{aligned}
        \end{equation*}
        By adding and subtracting this equation from \eqref{eq:general-adiabatic-thm-step2}, we get
        \begin{equation}\label{eq:general-adiabatic-thm-step3}
            \begin{aligned}
                e^{t(\gamma \mathcal{L}+\mathcal{H}+\cK)}(\rho)-\rho_t^{\mathrm{proj}}&=\int_{0}^t \,\Bigl(e^{(t-s)(\gamma \mathcal{L}+\mathcal{H}+\cK)}-e^{(t-s)\gamma \cL}+e^{(t-s)\gamma \cL}\Bigr)\widetilde{\mathcal{P}}(X_s+X^\dagger_s)\,ds\\
                &=\int_{0}^t\int_0^s \,e^{(s-u)(\gamma \mathcal{L}+\mathcal{H}+\cK)}(\mathcal{H}+\mathcal{K})e^{-\frac{u\gamma}{2}\{L',\cdot\}}\widetilde{\mathcal{P}}(X_{t-s}+X^\dagger_{t-s})\,duds\\
                &\qquad+\int_{0}^t \,e^{-\frac{\gamma(t-s)}{2}\{L',\cdot\}}\widetilde{\mathcal{P}}(X_s+X^\dagger_s)\,ds\,.
            \end{aligned}
        \end{equation}
        Next, we take care of each of these integrals separately. First of all, 
        \begin{equation*}
            e^{-\frac{\gamma t}{2}\{L',\cdot\}}\widetilde{\mathcal{P}}(X_s+X^\dagger_s)=e^{-\frac{\gamma t}{2}L'}X_s+X_s^\dagger e^{-\frac{\gamma t}{2}L'}\,.
        \end{equation*}
        Then, by assumption (i), we have 
        \begin{equation}\label{eqint1}
            \begin{aligned}
                \left\|\int_0^t\,e^{-\frac{1}{2}(t-s)\gamma L'}X_s\,ds\right\|_1&\le \int_0^t e^{-\frac{\gamma(t-s)\eta }{2}}\|X_s\|_1 ds \\
                &\le \frac{2\|(I-P)(iH-G) P\|_\infty }{\eta \gamma}\le \frac{2(C_H+|\mathcal{J}|C_K)}{\eta\gamma}
            \end{aligned}
        \end{equation}
        and similarly
        \begin{align}
            \left\|\int_0^tX_s^\dagger e^{-\frac{1}{2}(t-s)\gamma L'}\,ds\right\|_1\le \frac{2(C_H+|\mathcal{J}|C_K)}{\eta\gamma}\,.\label{eqint2}
        \end{align}
        It remains to control the trace norm of the first integral in \eqref{eq:general-adiabatic-thm-step3}: after denoting $B:=(I-P)(iH-G)P$, and since $\|X_s\|_1\le C_H+|\mathcal{J}|C_K$, we get
        \begin{align}
            \|(\mathcal{H}&+\mathcal{K})e^{-\frac{u\gamma}{2}\{L',\cdot\}}\widetilde{\mathcal{P}}(X_{t-s}+X^\dagger_{t-s})\|_1\nonumber\\
            &\le \|(\cH+\cK) (e^{-\frac{\gamma u}{2}L'} X_{t-s})\|_1+\|(\cH+\cK) (X^\dagger_{t-s}e^{-\frac{\gamma u}{2}L'})\|_1\nonumber \\
            &\le 2 \|He^{-\frac{\gamma u}{2}L'}B\|_\infty  +2(C_H+|\mathcal{J}|C_K)e^{-\frac{\gamma \eta u}{2}}\,\|PH\|_\infty\nonumber\\
            &\qquad+\sum_{j\in\mathcal{J}}\Big\|K_j \big(e^{-\frac{\gamma u}{2}L'}X_{t-s}\big)K_j^\dagger-\frac{1}{2}\big\{K_j^\dagger K_j, e^{-\frac{\gamma u}{2}L'}X_{t-s}\big\}\Big\|_1\nonumber\\
            &\qquad+\sum_{j\in\mathcal{J}}\Big\|K_j\big(X_{t-s}^\dagger e^{-\frac{\gamma u}{2}L'}\big)K_j^\dagger-\frac{1}{2}\big\{K_j^\dagger K_j,X_{t-s}^\dagger e^{-\frac{\gamma u}{2}L'}\big\}\Big\|_1\nonumber\\
            &\le 2 \Big(\iota_1\|L'e^{-\frac{\gamma u}{2}L'}B\|_\infty+\iota_2 (C_H+|\mathcal{J}|C_K)e^{-\frac{\gamma u\eta}{2}}\Big)  +2(C_H+|\mathcal{J}|C_K)e^{-\frac{\gamma \eta u}{2}}\,C_H\nonumber\\
            &\qquad+\sum_{j\in\mathcal{J}}\Big\|K_j \big(e^{-\frac{\gamma u}{2}L'}X_{t-s}\big)K_j^\dagger-\frac{1}{2}\big\{K_j^\dagger K_j, e^{-\frac{\gamma u}{2}L'}X_{t-s}\big\}\Big\|_1\nonumber\\
            &\qquad+\sum_{j\in\mathcal{J}}\Big\|K_j\big(X_{t-s}^\dagger e^{-\frac{\gamma u}{2}L'}\big)K_j^\dagger-\frac{1}{2}\big\{K_j^\dagger K_j,X_{t-s}^\dagger e^{-\frac{\gamma u}{2}L'}\big\}\Big\|_1\nonumber
        \end{align}
        where we also used Assumptions (ii) and (iv) above. Similarly, each of the summands in the above two sums can be controlled as follows:
        \begin{align*}
            &\sum_j\Big\|K_j \big(e^{-\frac{\gamma u}{2}L'}X_{t-s}\big)K_j^\dagger-\frac{1}{2}\big\{K_j^\dagger K_j, e^{-\frac{\gamma u}{2}L'}X_{t-s}\big\}\Big\|_1\\
            &\, \le \sum_j\left(\|K_jP\|_\infty\Big\|K_je^{-\frac{\gamma u}{2}L'} B\Big\|_\infty+\Big\|K_j^\dagger K_j e^{-\frac{\gamma u}{2}L'}B\Big\|_\infty\!\!\!\!+ (C_H+|\mathcal{J}|C_K)e^{-\frac{\gamma \eta u}{2}}\|P K_j^\dagger K_j\|_\infty\right)\\
            &\,\le \big(C_K+1\big) \Big(\iota_1'\Big\|L'e^{-\frac{\gamma u}{2}L'}B\Big\|_\infty\!\!\!\!\!\!+(C_H+|\mathcal{J}|C_K)\iota_2'e^{-\frac{\gamma u \eta}{2}}\Big)\!+\!(C_H+|\mathcal{J}|C_K)|\mathcal{J}| e^{-\frac{\gamma \eta u}{2}} C_K
        \end{align*}
        where we used Assumptions (iii) and (iv) again. Combining the above bounds and using that $\|L'B\|_\infty\le C_L$, we get 
        \begin{align*}
            \|&(\mathcal{H}+\mathcal{K})e^{-\frac{u\gamma}{2}\{L',\cdot\}}\widetilde{\mathcal{P}}(X_{t-s}+X^\dagger_{t-s})\|_1\\
            &\le 2e^{-\frac{\gamma\eta u}{2}}\Bigl(\iota_1C_L+(\iota_2+C_H+|\mathcal{J}|C_K) (C_H+|\mathcal{J}|C_K) \\
            &\qquad\qquad\qquad\qquad+\big(C_K+1\big) \Big(\iota_1' C_L\!\!+(C_H+|\mathcal{J}|C_K)\iota_2'\Big)\Bigr)\\
            &\eqcolon\frac{\eta C}{2}e^{-\frac{\gamma\eta u}{2}}\,.
        \end{align*}
        We can now integrate over $u$ and $s$ in \eqref{eq:general-adiabatic-thm-step3} to get 
        \begin{align*}
            &\left\|\int_{0}^t\int_0^s \,e^{(s-u)(\gamma \mathcal{L}+\mathcal{H}+\cK)}(\mathcal{H}+\mathcal{K})e^{-\frac{u\gamma}{2}\{L',\cdot\}}\widetilde{\mathcal{P}}(X_{t-s}+X^\dagger_{t-s})\,duds\right\|_1\\
            &\qquad\le \int_{0}^t\int_0^s \,\|(\mathcal{H}+\mathcal{K})e^{-\frac{u\gamma}{2}\{L',\cdot\}}\widetilde{\mathcal{P}}(X_{t-s}+X^\dagger_{t-s})\|_1\,duds\\
            &\qquad \le  \frac{\eta C}{2} \int_{0}^t\int_0^se^{-\frac{\gamma \eta u}{2}}duds\\
            &\qquad \le \frac{tC}{\gamma}\,.
        \end{align*}
        The result follows after combining this bound with \eqref{eqint1} and \eqref{eqint2}.
    \end{proof}
    
\subsection{Bosonic cat code}
\label{subsec:bosonic_cat}
    Next, we briefly discuss the bosonic cat code as introduced in \cite{Mirrahimi.2014,Azouit.2016,Guillaud.2019,Guillaud.2023} and others. For that, we define the annihilation operator $b$ and the creation operator $b^\dagger$ on the separable Hilbert space $\mathscr{H} \coloneqq L^2(\mathbb{R})$ of square-integrable functions with orthonormal basis denoted by $\{\ket{n}\}_{n \in \mathbb{N}}$ for $\mathbb{N} = \{0,1,\dots\}$, known as the Fock basis. The annihilation and creation operators are characterized by
    \begin{equation*}
        b\ket{n} = \sqrt{n} \ket{n-1}, \qquad b^\dagger\ket{n} = \sqrt{n+1} \ket{n+1}\,.
    \end{equation*}
    In particular, they obey the canonical commutation relation:
    \begin{equation}\label{eq:ccr-onemode}
        [b, b^\dagger] = I.
    \end{equation}
    Furthermore, we define the number operator as $N \coloneqq b^\dagger b$ and refer for more details to \cite[Sec.~12]{Holevo.2012}. The bosonic cat code is constructed on these bosonic systems and is introduced in the following discussion. The cat code relies on $r$-photon driven dissipation, defined by the master equation:
    \begin{equation}\label{eq:def-photon-dissipation}
        \frac{d}{dt} \rho(t) = \mathcal{L}[L_r](\rho(t)) \qquad \text{with} \qquad L_r \coloneqq b^r - \alpha^r,
    \end{equation}
    with $\alpha \in \mathbb{C}$ and $\mathcal{L}[L](\rho) = L \rho L^\dagger - \frac{1}{2} \{L^\dagger L, \rho\}$. Motivated by a driven damped harmonic oscillator, the $r$-photon driven dissipation is theoretically constructed using a harmonic oscillator with multi-photon drive and dissipation. A more detailed description can be found in Section \ref{sec:experimental_realization} and Appendix  \ref{app:experimental_realizations}. 
    For $r = 2$, a detailed construction of the above dynamics can be found in \cite[Sec.~3]{Guillaud.2023}. These master equations admit unique solution operators, as established in \cite{Azouit.2016, Gondolfetal2023}. One of the major properties of the $r$-photon driven dissipation is that it admits the following invariant subspace --- the code space:
    \begin{equation}\label{eq:codespace-r-photon-diss}
        \mathcal{C}_r(\alpha) \coloneqq \mathrm{span}\left\{\ket{\alpha_1}\bra{\alpha_2} : \alpha_1, \alpha_2 \in \left\{\alpha e^{\frac{i2\pi j}{r}} : j \in \{0, \ldots, r-1\}\right\}\right\}.
    \end{equation}
    The property that the $r$-photon dissipation rapidly converges to the code space is fundamental for continuous-time error correction codes. However, this was proven only in the following sense \cite{Azouit.2016}:
    \begin{equation*}
        \mathrm{tr}\big[L_r e^{t\mathcal{L}[L_r]}(\rho)L_r^\dagger\big] \leq e^{-tr!} \mathrm{tr}[L_r\rho L_r^\dagger]\,.
    \end{equation*}
    Using a similar approach to Theorem \ref{thm:general-adiabatic-limit}, we prove the following convergence in trace norm:

    \begin{prop}\label{prop:cat-convergence}
        Let $r\geq0$, $L_r$ be the jump operator defining the $r$-photon dissipation $\mathcal{L}[L_r]$ \eqref{eq:def-photon-dissipation} and let $P$ be the orthogonal projection onto $\mathcal{C}_r(\alpha)$ \eqref{eq:codespace-r-photon-diss}. Then, for all $t\geq0$
        \begin{equation*}
            \begin{aligned}
                \|(e^{t\mathcal{L}[L_r]} - \mathcal{P})(\rho)\|_1 &\leq 3e^{- \frac{t}{2} r!}\Bigl( \frac{2}{r!}\|L_r \rho L_r^\dagger\|_1 + \|\rho\|_1 \Bigr), &  & \rho \in \mathrm{dom}(L\cdot L^\dagger), \\
                &\leq 6e^{- \frac{t}{2} r!}\Bigl( (1 + \frac{|\alpha|^r}{\sqrt{r!}})^2\|\widetilde{N}^{r/2} \rho \widetilde{N}^{r/2}\|_1 + \|\rho\|_1 \Bigr), & & \rho \in \mathrm{dom}(\widetilde{N}^{r/2}\cdot \widetilde{N}^{r/2}),
            \end{aligned}
        \end{equation*}
        where $\widetilde{N} = N + I$ and $\mathcal{P} = P \cdot P$.
    \end{prop}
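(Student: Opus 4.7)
My plan is to adapt the Duhamel-type strategy used in the proof of Theorem \ref{thm:general-adiabatic-limit} to the specific setting of $r$-photon driven dissipation, combining it with the weighted convergence estimate from \cite{Azouit.2016}, thereby upgrading a weighted bound to a genuine trace norm bound. First I establish the algebraic infrastructure: $\ker(L_r)=\operatorname{span}\{\ket{\alpha e^{2\pi ij/r}}:j=0,\ldots,r-1\}$ since $b^r\ket{\alpha e^{2\pi ij/r}}=\alpha^r\ket{\alpha e^{2\pi ij/r}}$, so $P$ is the orthogonal projection onto this finite-dimensional subspace and $L_rP=L_r^\dagger L_rP=0$. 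A direct Fock-basis computation gives the canonical commutator estimate $[L_r,L_r^\dagger]=[b^r,(b^\dagger)^r]\geq r!\,I$, which provides both the spectral gap $L_r^\dagger L_r\geq r!(I-P)$ and the strict positivity $L_rL_r^\dagger\geq r!\,I$. In particular the code space lies in the kernel of $\mathcal{L}[L_r]$ and is invariant under the semigroup, giving $\mathcal{L}[L_r]\mathcal{P}=\mathcal{P}\mathcal{L}[L_r]=0$.

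Setting $B:=-\tfrac{1}{2}\{L_r^\dagger L_r,\cdot\}$ and $\sigma:=\rho-\mathcal{P}(\rho)$, Duhamel's formula together with $\mathcal{L}[L_r]\mathcal{P}(\rho)=0$ gives
\begin{equation*}
    e^{t\mathcal{L}[L_r]}(\rho)-\mathcal{P}(\rho)=e^{tB}(\sigma)+\int_0^t e^{(t-s)B}\bigl(L_r\,e^{s\mathcal{L}[L_r]}(\sigma)\,L_r^\dagger\bigr)\,ds.
\end{equation*}
For the drift term I decompose $\sigma=P\rho(I-P)+(I-P)\rho P+(I-P)\rho(I-P)$ and use that $e^{-tL_r^\dagger L_r/2}$ commutes with $P$, restricts to the identity on $\operatorname{range}(P)$, and has operator norm at most $e^{-tr!/2}$ on $\operatorname{range}(I-P)$; each of the three summands then picks up at least one factor $e^{-tr!/2}$, yielding $\|e^{tB}(\sigma)\|_1\leq 3e^{-tr!/2}\|\rho\|_1$. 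For the integral term, Azouit's bound together with $L_r\mathcal{P}(\rho)=0$ gives $\|L_re^{s\mathcal{L}[L_r]}(\sigma)L_r^\dagger\|_1=\tr[L_re^{s\mathcal{L}[L_r]}(\rho)L_r^\dagger]\leq e^{-sr!}\|L_r\rho L_r^\dagger\|_1$; combined with positivity of the integrand, the block decomposition $e^{-(t-s)L_r^\dagger L_r}=P+(I-P)e^{-(t-s)L_r^\dagger L_r}(I-P)$, and the intertwining identity $L_rf(L_r^\dagger L_r)=f(L_rL_r^\dagger)L_r$ paired with $L_rL_r^\dagger\geq r!\,I$, this contributes after integration in $s$ at most $(2/r!)\,e^{-tr!/2}\|L_r\rho L_r^\dagger\|_1$. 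The second inequality follows from the first by bounding $\|L_r\rho L_r^\dagger\|_1$ in terms of $\|\widetilde{N}^{r/2}\rho\widetilde{N}^{r/2}\|_1$ via $(b^\dagger)^rb^r\leq\widetilde{N}^r$ and the triangle estimate $\|L_r\psi\|\leq\|b^r\psi\|+|\alpha|^r\|\psi\|\leq(1+|\alpha|^r/\sqrt{r!})\|\widetilde{N}^{r/2}\psi\|$, where the last step uses $\sqrt{r!}\,\|\psi\|\leq\|\widetilde{N}^{r/2}\psi\|$ after absorbing $\widetilde{N}\geq I$.

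The main technical obstacle is extracting the additional $e^{-(t-s)r!/2}$ decay from the drift semigroup acting on $L_r\tau_sL_r^\dagger$. Because $PL_r\neq 0$ (since $\ker(L_r^\dagger)=\{0\}$), the operator $L_r\tau_sL_r^\dagger$ is \emph{not} supported on $\operatorname{range}(I-P)$, so the spectral gap $L_r^\dagger L_r\geq r!(I-P)$ cannot be invoked naively as a contraction factor for $e^{(t-s)B}$. The intertwining relation above effectively trades the ``gap on $(I-P)$'' property of $L_r^\dagger L_r$ for the strict positivity of $L_rL_r^\dagger$ that holds on the entire Hilbert space, and it is precisely this replacement that allows the $e^{-(t-s)r!/2}$ factor to propagate through the jump integral and combine multiplicatively with the $e^{-sr!}$ from Azouit.
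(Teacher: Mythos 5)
Your route is genuinely different from the paper's. You expand $e^{t\mathcal{L}}(\rho-\mathcal{P}(\rho))$ by Duhamel around the drift $B=-\tfrac{1}{2}\{L_r^\dagger L_r,\cdot\}$, placing the full semigroup on the \emph{inside} of the jump term and the drift semigroup on the \emph{outside}; the paper instead first writes $\|(e^{t\mathcal{L}}-\mathcal{P})(\rho)\|_1=\|(I-\mathcal{P})\circ e^{t\mathcal{L}}(\rho)\|_1$ and uses the reverse-ordered Duhamel expansion, so that the decay comes from $\|(I-\mathcal{P})\circ e^{-\frac{s}{2}\{L',\cdot\}}\|_{1\to1}\le 3e^{-s r!/2}$ acting on the outside. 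Your treatment of the kernel, the commutator bound, the drift term $e^{tB}(\sigma)$ (giving $3e^{-tr!/2}\|\rho\|_1$), and the use of the Azouit estimate are all fine.

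The gap sits exactly at the step you flag, and the fix you propose does not work. The block $P(L_r\tau_s L_r^\dagger)P$ (with $\tau_s=e^{s\mathcal{L}}(\sigma)$) passes through $e^{(t-s)B}$ unchanged, and the intertwining identity $L_r f(L_r^\dagger L_r)=f(L_r L_r^\dagger)L_r$ is inapplicable to $e^{-uL_r^\dagger L_r/2}\,L_r\tau_s L_r^\dagger\,e^{-uL_r^\dagger L_r/2}$: the products that actually occur are $e^{-uL_r^\dagger L_r/2}L_r$ and its adjoint $L_r^\dagger e^{-uL_r^\dagger L_r/2}$, which are \emph{not} of the form $L_r f(L_r^\dagger L_r)$ or $f(L_r^\dagger L_r)L_r^\dagger$, so the exponential cannot be traded for $e^{-uL_rL_r^\dagger/2}$. (The intertwining does apply in the reverse Duhamel ordering, $L_r e^{-uL'/2}\sigma e^{-uL'/2}L_r^\dagger=e^{-uL_rL_r^\dagger/2}L_r\sigma L_r^\dagger e^{-uL_rL_r^\dagger/2}$, but then the inner factor is the merely contractive $e^{s\mathcal{L}}$ and the Azouit decay is lost; one cannot harvest both exponentials this way.) Without decay on the $PXP$ block you only get $\int_0^t\|PL_r\tau_sL_r^\dagger P\|_1\,ds\le \tfrac{1}{r!}(1-e^{-tr!})\|L_r\rho L_r^\dagger\|_1$, with no $e^{-tr!/2}$ prefactor. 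This is not an artifact of your method: since $\ker(L_r^\dagger)=\{0\}$ one has $PL_r\neq 0$, hence $\tfrac{d}{dt}P\rho_tP=PL_r\rho_tL_r^\dagger P\neq 0$, and $\int_0^\infty PL_r\tau_sL_r^\dagger P\,ds=\mathcal{P}(\rho_\infty)-\mathcal{P}(\rho)$ is precisely the generically nonzero discrepancy between the true asymptotic state and $P\rho P$ (already visible for $r=1$, where $\rho_t\to\ket{\alpha}\bra{\alpha}$ rather than $\braket{\alpha|\rho|\alpha}\ket{\alpha}\bra{\alpha}$). The paper's own argument absorbs the same obstruction into its very first equality, which presupposes $\mathcal{P}\circ e^{t\mathcal{L}}=\mathcal{P}$. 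Separately, a minor slip: $\sqrt{r!}\,\|\psi\|\le\|\widetilde{N}^{r/2}\psi\|$ is false for $r\ge 2$ (test $\psi=\ket{0}$); the $\sqrt{r!}$ in the final constant should instead be extracted from the prefactor $2/r!$ together with $\|L_r\widetilde{N}^{-r/2}\|_\infty\le 1+|\alpha|^r\le\sqrt{r!}+|\alpha|^r$.
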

    \begin{proof}
        As stated in the result, we assume that $\rho \in \mathrm{dom}(L\cdot L^\dagger)$ or $\rho \in \mathrm{dom}(\widetilde{N}^{r/2} \cdot \widetilde{N}^{r/2})$. By \cite[Prop.~3.18 and 6.14]{schmudgen2012unbounded}, the operator $-L' \coloneqq (I - P)L_r^\dagger L_r (I - P)$ generates a contractive semigroup on $\mathscr{H}$, so that
        \begin{align*}
            \|(e^{t\mathcal{L}} - \mathcal{P})(\rho)\|_1 
            &= \|(I - \mathcal{P})\circ e^{t\mathcal{L}} (\rho)\|_1 \\
            &\leq \|(I - \mathcal{P})\circ(e^{t\mathcal{L}} - e^{-\frac{t}{2}\{L',\cdot\}})(\rho)\|_1 
            + \|(I - \mathcal{P})\circ e^{-\frac{t}{2}\{L',\cdot\}}(\rho)\|_1 \\
            &\leq t\bigl\|\int_0^1 (I - \mathcal{P})\circ e^{-\tau \frac{t}{2} \{L',\cdot\}} 
            \Bigl(L_r (e^{(1-\tau)t\mathcal{L}}(\rho))L_r^\dagger\Bigr) d\tau\bigr\|_1 \\
            &\qquad\quad+ \|(I - \mathcal{P})\circ e^{-\frac{t}{2} \{L',\cdot\}}(\rho)\|_1 \\
            &\leq t\int_0^1 \|(I - \mathcal{P})\circ e^{-\tau \frac{t}{2} \{L',\cdot\}}\|_{1\rightarrow 1} \|L_r (e^{(1-\tau)t\mathcal{L}} (\rho)) L_r^\dagger\|_1 d\tau \\
            &\quad\qquad + \|(I - \mathcal{P})\circ e^{-\frac{t}{2} \{L',\cdot\}}(\rho)\|_1\,,
        \end{align*}
        where we used that $\cL-\frac{1}{2}\{L',\cdot\}=L_r\cdot L^\dagger_r$. In the next step, we use the following two inequalities: First, for every $X\in\mathcal{T}_1$, we have
        \begin{equation*}
            \begin{aligned}
                \|&(I - \mathcal{P})\circ e^{-\frac{s}{2} \{L',\cdot\}}(X)\|_{1\rightarrow 1}\\
                &=\| Pe^{-\frac{s}{2} L'} X e^{-\frac{s}{2} L'}(I-P)+(I-P)e^{-\frac{s}{2} L'} X e^{-\frac{s}{2} L'}P+(I-P)e^{-\frac{s}{2} L'} X e^{-\frac{s}{2} L'}(I-P)\|_{1\rightarrow 1}\\
                &\leq \Bigl(\|e^{-\frac{s}{2} L'}(I - P)\|_\infty + \|(I - P)e^{-\frac{s}{2} L'}\|_\infty + \|e^{-\frac{s}{2} L'}(I - P)\|_\infty   \|(I - P)e^{-\frac{s}{2} L'}\|_\infty \Bigr) \|X\|_1\\
                &\leq 3e^{-\frac{s}{2}\eta}\,,
            \end{aligned}
        \end{equation*}
        where we used assumption $(i)$, which is satisfied due to Lemma \ref{lemmaetaAB} combined with Lemma \ref{lem:pseudoinverse-bound-photon-dissipation} with $L'\geq \eta (I-P)$ and $\eta\geq r!$. Second, we use the bound given in \cite{Azouit.2016}, which is
        \begin{equation*}
            \mathrm{tr}\big[L_r e^{t\mathcal{L}[L_r]}(\rho)L_r^\dagger\big] \leq e^{-tr!} \mathrm{tr}[L_r\rho L_r^\dagger]\,.
        \end{equation*}
        Together these two bounds show
        \begin{align*}
            \|(e^{t\mathcal{L}} - \mathcal{P})(\rho)\|_1 
            &\leq 3t\int_0^1 e^{-\frac{\tau t}{2} r!} e^{-(1 - \tau) t r!}d\tau \, \|L_r \rho L_r^\dagger\|_1 d\tau + 3 e^{-\frac{t}{2} r!} \|\rho\|_1 \\
            &\leq 3e^{- \frac{t}{2} r!}\Bigl(t\int_0^1 e^{-(1-\tau) t\frac{r!}{2}}d\tau\|L_r \rho L_r^\dagger\|_1 + \|\rho\|_1 \Bigr) \\
            &\leq 3e^{- \frac{t}{2} r!} \Bigl(\frac{2}{r!} (1-e^{- \frac{t}{2} r!})\|L_r \rho L_r^\dagger\|_1 + \|\rho\|_1 \Bigr) \\
            &\leq 3e^{- \frac{t}{2} r!}\Bigl( \frac{2}{r!}(\sqrt{r!} + |\alpha|^r)^2\|\widetilde{N}^{r/2} \rho \widetilde{N}^{r/2}\|_1 + \|\rho\|_1 \Bigr).
        \end{align*}
        In the last line, we used Hölder's inequality combined with the relative boundedness
        \begin{equation*}
            \|\widetilde{N}^{-r/2}L_r\|_{\infty}=\|L_r^\dagger\widetilde{N}^{-r/2}\|_{\infty}\leq(1 + |\alpha|^r)
        \end{equation*}
        which is proven by following the strategy in \cite[Lemma B.1]{Gondolfetal2023}, i.e. 
        \begin{equation*}
			\begin{aligned}
				\|b^{r\dagger}\widetilde{N}^{-r/2}\ket{\varphi}\|^2&=\bra{\varphi}(N+I)^{-r/2} b^rb^{r\dagger}(N+I)^{-r/2}\ket{\varphi}\\
                &=\bra{\varphi}(N+I)^{-r}(N+I)\cdots(N+r)\ket{\varphi}\\
				&\leq r!\|\varphi\|^2
			\end{aligned}
		\end{equation*}      
        so that $\|L_r^\dagger\widetilde{N}^{-r/2}\|_{\infty}\leq \|b^{r\dagger}\widetilde{N}^{-r/2}\|_{\infty}+|\alpha|^r\leq \sqrt{r!}+|\alpha|^r$, which completes the proof.
    \end{proof}

    Next, add a Hamiltonian to the $r$-photon dissipation and use the adiabatic Theorem \ref{thm:general-adiabatic-limit} to prove that for any Hamiltonian \begin{equation}\label{eq:single-mode-Hamiltonian}
        H = \sum_{j+j'\leq d} h_{j,j'}(b^\dagger)^j b^{j'},
    \end{equation}
    where $\overline{h}_{j,j'} = h_{j',j}$ and $|h_{j,j'}| \leq 1$, the time evolution it generates converges to its effective dynamics if it is dominated by a $r$-photon driven dissipation with $ d/2 \leq r$:
    \begin{prop} \label{prop:adiabatic-cat}
        Let $d \in \mathbb{N}$, $H$ be as defined in \eqref{eq:single-mode-Hamiltonian}, $d/2\leq r\in\N$, $\alpha\in\C$, $L_r$ be the jump operator defining the $r$-photon dissipation $\mathcal{L}[L_r]$ \eqref{eq:def-photon-dissipation}, and let $P$ be the orthogonal projection onto $\mathcal{C}_r(\alpha)$ \eqref{eq:codespace-r-photon-diss}. Then, for all $t \geq 0$ and $\rho \in \mathcal{C}_r(\alpha)$
        \begin{equation*}
            \|e^{-it[H, \cdot] + t\gamma\mathcal{L}[b^r - \alpha^r]}(\rho) - e^{-it[PHP, P \cdot P]}(\rho)\|_1 \leq\frac{tC+C'}{\gamma}\,
        \end{equation*}
        for constants $C, C'\geq0$ (see \eqref{eq:constant-cat-perturbation}).
    \end{prop}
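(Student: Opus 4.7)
The plan is to derive Proposition \ref{prop:adiabatic-cat} as a direct application of the general adiabatic Theorem \ref{thm:general-adiabatic-limit} with $\mathcal{K}=0$, one jump operator $L_r=b^r-\alpha^r$, and Hamiltonian $H$ from \eqref{eq:single-mode-Hamiltonian}. First I would observe that the code space $\mathcal{C}_r(\alpha)$ is spanned by the coherent states $\ket{\alpha e^{2\pi ij/r}}$, each of which is annihilated by $L_r$ since $b\ket{\alpha e^{2\pi ij/r}}=\alpha e^{2\pi ij/r}\ket{\alpha e^{2\pi ij/r}}$; hence $P$ coincides with the orthogonal projection onto the intersection of the kernels of the jump operators, as required in the framework preceding Theorem \ref{thm:general-adiabatic-limit}. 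Since $\mathcal{K}=0$, conditions (iii) and (v) are vacuous, and the theorem simplifies to $C=\tfrac{4}{\eta}(\iota_1 C_L+(\iota_2+C_H)C_H)$, $C'=4C_H/\eta$. So the task reduces to verifying (i), (ii) and (iv) and reading off the constants.

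For (i) I would reuse the gap $L'\ge\eta(I-P)$ with $\eta\ge r!$ which is already established en route to Proposition \ref{prop:cat-convergence}, via the accompanying Lemma \ref{lem:pseudoinverse-bound-photon-dissipation}, and hence no extra work is needed. For (iv), the code space is finite-dimensional and spanned by coherent states, so every polynomial in $b$ and $b^\dagger$ applied to a vector in $\operatorname{range}(P)$ lives in a dense domain with trace-class image; for a monomial $(b^\dagger)^j b^{j'}$ acting on $\ket{\alpha_\ell}$, one gets by repeatedly differentiating the coherent-state displacement a linear combination whose operator norm grows polynomially in $|\alpha|$ of order $|\alpha|^{j+j'}\le|\alpha|^d$. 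Summing over the at most $\mathcal{O}(d^2)$ coefficients of $H$, each bounded by $1$, then yields $C_H=\mathcal{O}(\operatorname{poly}(d)\,\operatorname{poly}(|\alpha|^d))$. Composing with $L'=(I-P)L_r^\dagger L_r(I-P)$, which is a polynomial of degree $2r$ in $b,b^\dagger$, multiplies this by an extra factor polynomial in $|\alpha|^{2r}$ and yields $C_L$.

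The genuinely non-trivial verification is (ii), the relative boundedness $\|H\psi\|\le\iota_1\|L'\psi\|+\iota_2\|\psi\|$ for $\psi\in\operatorname{dom}(L')$. My approach would be to expand $L_r^\dagger L_r=(b^\dagger)^r b^r-\alpha^r(b^\dagger)^r-\bar\alpha^r b^r+|\alpha|^{2r}$, note that on Fock vectors $(b^\dagger)^r b^r\ket{n}=n(n-1)\cdots(n-r+1)\ket{n}$ scales as $n^r$, and show that the cross terms $\alpha^r(b^\dagger)^r$ and $\bar\alpha^r b^r$ are infinitesimally bounded with respect to $(b^\dagger)^r b^r$ by the standard Kato--Rellich argument. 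Using the classical creation-annihilation estimates $\|(b^\dagger)^j b^{j'}\psi\|\le C_{j,j'}\|\widetilde N^{(j+j')/2}\psi\|$ (as in the bound on $\widetilde N^{r/2}L_r^\dagger \widetilde N^{-r/2}$ in the proof of Proposition \ref{prop:cat-convergence}) together with the hypothesis $j+j'\le d\le 2r$, the desired relative bound follows once one transfers the estimate from $L_r^\dagger L_r$ to its compression $L'$ on $\operatorname{range}(I-P)$. Since $P$ projects onto a finite-dimensional subspace, the difference $L_r^\dagger L_r-L'$ is a bounded perturbation and can be absorbed into $\iota_2$. The resulting constants $\iota_1=\mathcal{O}(1)$ and $\iota_2$ polynomial in $|\alpha|^d$ then feed into the simplified formulas for $C$ and $C'$, completing the proof.
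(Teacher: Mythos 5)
Your proposal is correct and follows essentially the same route as the paper: the proof is an application of Theorem \ref{thm:general-adiabatic-limit} with $\mathcal{K}=0$, where (i) is the gap $\eta\ge r!$ from Lemma \ref{lem:pseudoinverse-bound-photon-dissipation} (via Lemma \ref{lemmaetaAB}), (iv) follows from finite-dimensionality of $\mathcal{C}_r(\alpha)\subset\operatorname{dom}(H)\cap\operatorname{dom}(L'H)$, and (ii) is the relative bound of $H$ by $L_r^\dagger L_r$. The only cosmetic difference is that you re-derive (ii) by a Kato--Rellich-type sketch, whereas the paper invokes its Corollary \ref{cor:rel-bounded-gate} (built on Lemmas \ref{lem:rel-bounded-learning-H-N} and \ref{lem:rel-bounded-learning-N-photon-loss}), which carries out exactly the number-operator comparison you describe with explicit constants.
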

    \begin{proof}        
        The statement is a direct consequence of Theorem \ref{thm:general-adiabatic-limit}, which is based on the assumptions $(i)-(v)$. For that reason, we specify the constants $\eta, \iota_1, \iota_2, \iota_1', \iota_2', C_H, C_K, C_L, |\mathcal{J}|$ in the following. Due to the structure of the generator, it holds that $C_K=\iota_1'=\iota_2'=0$ and $|\mathcal{J}|=1$. Moreover,
        \begin{itemize}
            \item[(i)] in Lemma \ref{lemmaetaAB} with Lemma \ref{lem:pseudoinverse-bound-photon-dissipation} it is proven that $\eta\geq r!$;
            \item[(ii)]  Corollary \ref{cor:rel-bounded-gate} proves that
                \begin{equation*}
                    \|H\ket{\psi}\|\leq d^{d}\begin{cases}
                    \frac{1}{\sqrt{1-\delta}}\|L^\dagger L\ket{\psi}\|+\sqrt{\frac{c_r}{2r(1-\delta)}\Bigl(\frac{2r-1}{2r}\frac{c_r}{\delta}\Bigr)^{2r-1}}\|\ket{\psi}\| \quad& \text{for}\qquad r\geq2\\
                    \frac{1}{\sqrt{1-\delta}}\|L^\dagger L\ket{\psi}\|+\sqrt{\frac{c_1}{4(1-\delta)}\left(\frac{3}{4}\frac{c_1}{\delta}\right)^3}\|\ket{\psi}\|\quad& \text{for}\qquad r=1
                \end{cases}
                \end{equation*}
                for any $\delta\in(0,1)$, state $|\psi\rangle\in \operatorname{dom}(L^\dagger L)$ and $c_r=(r+1)r+4|\alpha|^rr!^{3/2}+2\sqrt{(2r)!}|\alpha|^{2r}$ so that for $\delta=1/2$
                \begin{equation*}
                    \begin{aligned}
                        \iota_1=\sqrt{2}d^d\qquad\text{and}\qquad\iota_2= d^d\sqrt{\frac{c_r}{r}(2 c_r)^{2r-1+2\cdot1_{r=1}}}\,;
                    \end{aligned}
                \end{equation*}
            \item[(iv)] $\mathcal{C}_r(\alpha) \subset \mathrm{dom}(H), \mathrm{dom}(L'H)$, and $H P$, $L'H P$ are finite-dimensional operators when restricted to $\mathcal{C}_r(\alpha)$. Since all finite-dimensional operators are bounded, this implies the existence of the constants $C_H$ and $C_L$ such that $\|H P\|_\infty \leq C_H$ and $\|L'H P\|_\infty \leq C_L$. An explicit construction of this result can be found by following the approach presented in Appendix \ref{subsec:single-mode-adiabatic-thm}.
        \end{itemize}
        These define the constants 
        \begin{equation}\label{eq:constant-cat-perturbation}
            C=\frac{4}{\eta}\Big(\iota_1C_L+(\iota_2+C_H) C_H\Big)\quad \text{and} \quad C'=\frac{4C_H}{\eta}    
        \end{equation}
        and finish the proof.
    \end{proof}
    Next, we consider the cases $r =2$ which is of particular interest (see \cite{Guillaud.2019,Guillaud.2023}). Here, the projection onto the code space can be defined using the orthonormal Schrödinger cat states, which also define logical qubits:
    \begin{equation*}
        \ket{CAT_{\alpha}^+} \coloneqq \frac{1}{\sqrt{2(1+e^{-2|\alpha|^2})}}\bigl(\ket{\alpha} + \ket{-\alpha}\bigr), \quad 
        \ket{CAT_{\alpha}^-} \coloneqq \frac{1}{\sqrt{2(1-e^{-2|\alpha|^2})}}\bigl(\ket{\alpha} - \ket{-\alpha}\bigr)\,.
    \end{equation*}
    Then, the rotation around the $x$-axis, described in \cite{Mirrahimi.2014} and experimentally realized in \cite{Touzard.2018}, is 
    \begin{equation}\label{eq:rotation}
        X(\theta) = \cos\Bigl(\frac{\theta}{2}\Bigr)(P^+_{\alpha} + P^-_{\alpha}) + i\sin\Bigl(\frac{\theta}{2}\Bigr)X_{\alpha},
    \end{equation}
    where $P^+_\alpha = \ket{CAT^+_{\alpha}}\bra{CAT^+_{\alpha}}$, $P^-_\alpha = \ket{CAT^-_{\alpha}}\bra{CAT^-_{\alpha}}$, $P_\alpha = P^+_\alpha + P^-_\alpha$, and $X_\alpha = \ket{CAT^+_{\alpha}}\bra{CAT^-_{\alpha}} + \ket{CAT^-_{\alpha}}\bra{CAT^+_{\alpha}}$. One way to construct this gate is via the driving Hamiltonian $H = b + b^\dagger$, combined with the adiabatic convergence of the dynamics. For $\rho \in \mathcal{C}_2(\alpha)$, we have
    \begin{equation*}
        \|e^{-it[H,\cdot] + \gamma t \mathcal{L}[b^2 - \alpha^2]}(\rho) - e^{-it[P_\alpha H P_{\alpha}, P_{\alpha} \cdot P_{\alpha}]} (\rho)\|_{1} = \mathcal{O}\Bigl(\frac{t+1}{\gamma}\Bigr),
    \end{equation*}
    where
    \begin{equation*}
        P_\alpha H P_\alpha = (\alpha + \overline\alpha) X_{\alpha}.
    \end{equation*}
    Thus, the effective dynamics generate the gate:
    \begin{equation*}
        e^{it P_\alpha H P_\alpha} P_\alpha = \cos(t(\alpha + \overline\alpha)) P_\alpha + i \sin(t(\alpha + \overline\alpha)) X_{\alpha}.
    \end{equation*}

\subsection{Modified photon-dissipation and decoupling}\label{subsec:mod-photon-dissipation}
    In the following appendix, we consider a modified family of photon dissipation processes. Here, the invariant subspace is adjusted in such a way that the effective dynamics of a Hamiltonian still retain all the information needed to characterize the original Hamiltonian. The method for extracting this information is presented in Section \ref{sec:main-learning} and Appendix~\ref{sec:learning}, achieving the Heisenberg limit:
    \begin{equation}\label{eq:def-modified-dissipation}
        \mathcal{L}[L_{r,\alpha}](\cdot) = L_{r,\alpha} \cdot L_{r,\alpha}^\dagger - \frac{1}{2} \{L_{r,\alpha}^\dagger L_{r,\alpha}, \cdot\} \qquad \text{with} \qquad L_{r,\alpha} = b^r (b - \alpha),
    \end{equation}
    for $\alpha \in \mathbb{C}$ and $r \in \mathbb{N}$.  The above single-mode case naturally extends to the multi-mode case by summing over all modes. However, we first consider the single-mode case.

    \subsubsection{Single-mode dynamics}\label{subsec:single-mode-adiabatic-thm}
        In this appendix, we prove Proposition \ref{prop-main:single-modes-adiabatic-limit}, restated as Proposition \ref{prop:single-modes-adiabatic-limit} below for convenience. In what follows, we make the notation less cumbersome by writing $L_1=L_{1,\alpha}$ and $L_r=L_{r,\alpha}$. 

        \begin{prop}\label{prop:single-modes-adiabatic-limit}
            Let $d \in \mathbb{N}$, $H$ be as defined in \eqref{eq:single-mode-Hamiltonian}, $r=\lceil\frac{d}{2}\rceil-1$, $\alpha\in\C$, and let $P$ be the orthogonal projection onto $\mathrm{span}(\ket{0},\ket{\alpha})$. Then, for all $t \geq 0$ and $\rho=P\rho P$ 
            \begin{equation*}
                \|e^{t\cH + t\gamma\mathcal{L}[b^r(b-\alpha)]+t\gamma\mathcal{L}[b(b-\alpha)]}(\rho) - e^{-it[PHP, P \cdot P]}(\rho)\|_1 \le \frac{tC+C'}{\gamma}
            \end{equation*}
            for constants $C=C_d(|\alpha|^{4d+2}+1)$ and $C'=C_d'(|\alpha|^d+1)$ for $C_d$ and $C_d'$ depending on $d$. 
        \end{prop}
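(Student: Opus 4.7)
The plan is to apply the general adiabatic Theorem~\ref{thm:general-adiabatic-limit} directly, with the choice $\mathcal{K}=0$, jump index set $\mathcal{I}=\{1,r\}$, and jump operators $L_1=L_{1,\alpha}=b(b-\alpha)$ and $L_r=L_{r,\alpha}=b^r(b-\alpha)$. Because $L_r=b^{r-1}L_1$, we have $\ker L_1\subseteq \ker L_r$, and a short calculation in the Fock basis (solving $(b-\alpha)\ket{\psi}\in\mathrm{span}(\ket{0})$) shows that $\ker L_1=\mathrm{span}(\ket{0},\ket{\alpha})$. Hence the projector $P$ appearing in Theorem~\ref{thm:general-adiabatic-limit} coincides with the one in the statement, and the effective generator it produces is exactly $-i[PHP,P\cdot P]$. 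The proof then reduces to verifying assumptions (i), (ii), (iv) with explicit constants and assembling the formula $C=\tfrac{4}{\eta}(\iota_1 C_L+(\iota_2+C_H)C_H)$, $C'=\tfrac{4C_H}{\eta}$ from the simplified ($\mathcal{K}=0$) version of the theorem.

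For (i), I would lower bound $L'=(I-P)(L_1^\dagger L_1+L_r^\dagger L_r)(I-P)$ below by $\eta(I-P)$ for a constant $\eta>0$ independent of $d$ and $\alpha$. The key observation is that the restriction of $L_1$ to $(I-P)\mathscr{H}$ is injective with closed range, so $L_1^\dagger L_1$ already has a strictly positive bottom of spectrum there; a convenient way to quantify this is to adapt Lemma~\ref{lem:pseudoinverse-bound-photon-dissipation}-type arguments used in the cat code proof to the modified dissipator, using that $(b-\alpha)$ restricted to the orthogonal complement of $\mathrm{span}(\ket{\alpha})$ and $b$ restricted to the orthogonal complement of $\mathrm{span}(\ket{0})$ are bounded below. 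For (ii), relative boundedness $\|H|\psi\rangle\|\le \iota_1\|L'|\psi\rangle\|+\iota_2\||\psi\rangle\|$ follows the template of Corollary~\ref{cor:rel-bounded-gate}: since $L_r^\dagger L_r$ has total degree $2(r+1)=2\lceil d/2\rceil\ge d$ in $(b,b^\dagger)$ and $H$ has degree at most $d$ with unit coefficients, one can dominate each monomial $(b^\dagger)^j b^{j'}$ with $j+j'\le d$ by $L_r^\dagger L_r+(\text{lower order})$ by repeatedly using $N^{r/2}$-relative bounds of the form $\|b^{k\dagger}\tilde N^{-k/2}\|_\infty\le\sqrt{k!}$. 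This yields $\iota_1=\mathcal{O}(d^d)$ and an $\iota_2$ which is polynomial in $|\alpha|$ with degree depending on $d$.

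Assumption (iv) is where the explicit $|\alpha|$-scaling of the constants $C,C'$ arises. Since $P$ has the two-dimensional range $\mathrm{span}(\ket{0},\ket{\alpha})$, both $HP$ and $L'HP$ are finite-rank, and hence bounded. I would compute their norms explicitly by noting that for any monomial $(b^\dagger)^j b^{j'}$ with $j+j'\le d$, $\|(b^\dagger)^j b^{j'}\ket{0}\|=\sqrt{j!}\,\delta_{j',0}$ and $\|(b^\dagger)^j b^{j'}\ket{\alpha}\|=|\alpha|^{j'}\|(b^\dagger)^j\ket{\alpha}\|$, the latter being bounded via $\bra{\alpha}b^j(b^\dagger)^j\ket{\alpha}=\sum_{k=0}^{j}\binom{j}{k}^2 k!\,|\alpha|^{2(j-k)}$. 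Summing over the $\mathcal{O}(d^2)$ monomials of $H$ gives $C_H=C_d(|\alpha|^d+1)$. For $C_L=\|L'HP\|_\infty$, the same type of counting with an additional factor from $L'$ (of combined degree $\le 2\lceil d/2\rceil\le d+1$) produces a polynomial of degree at most $3d+1$ in $|\alpha|$, and the $\iota_1 C_L$ and $\iota_2 C_H$ terms in the formula for $C$ then combine to give the announced $C=C_d(|\alpha|^{4d+2}+1)$ (while $C'=4C_H/\eta=C_d'(|\alpha|^d+1)$).

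The main obstacle I expect is (i): unlike in the pure cat-code setting where $L'\ge r!\,(I-P)$ was obtained via a clean commutator identity, here $L_1^\dagger L_1=(b^\dagger-\bar\alpha)N(b-\alpha)$ is not as transparent, and one has to be careful that no eigenvector of $L_1^\dagger L_1$ in $(I-P)\mathscr{H}$ escapes to zero as $|\alpha|\to\infty$. I would argue this by decomposing $(I-P)\mathscr{H}$ into the orthogonal complement of $\mathrm{span}(\ket{0})$ and the orthogonal complement of $\mathrm{span}(\ket{\alpha})$ inside it and bounding $L_1$ below on each piece. The secondary difficulty is bookkeeping in (ii) to obtain $\iota_2$ with the stated $d$-dependence $C_d$; this is largely mechanical given Corollary~\ref{cor:rel-bounded-gate}. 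Once (i), (ii), (iv) are verified with the above constants, Theorem~\ref{thm:general-adiabatic-limit} (simplified form) immediately yields the claimed bound $\tfrac{tC+C'}{\gamma}$.
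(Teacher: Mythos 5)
Your proposal is correct and follows essentially the same route as the paper: verify assumptions (i), (ii), (iv) of Theorem~\ref{thm:general-adiabatic-limit} with $\mathcal{K}=0$ and assemble $C=\tfrac{4}{\eta}(\iota_1 C_L+(\iota_2+C_H)C_H)$, $C'=\tfrac{4C_H}{\eta}$. Two small remarks: the "main obstacle" you flag in (i) is dispatched in the paper by the pseudoinverse machinery (Lemma~\ref{lemmaetaAB} together with Lemma~\ref{lempseudoinv}), which gives $\eta\ge 1$ uniformly in $\alpha$ since $\|(b-\alpha)^+\|_\infty=\|b^+\|_\infty\le 1$ by displacement covariance; and for (ii) the directly applicable result is Corollary~\ref{cor:rel-bounded-learning} (for the modified dissipator $b^r(b-\alpha)$) rather than Corollary~\ref{cor:rel-bounded-gate}.
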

        \begin{proof}
            To prove the above statement, we have to verify the assumptions of Theorem \ref{thm:general-adiabatic-limit}. For $\eta$, we establish the following bound for $L'=L_1^\dagger L_1 + L_r^\dagger L_r$ and $L_r=b^r(b-\alpha)$
            \begin{align*}
                (I - P)L'(I - P)\geq (I - P) b^\dagger (b^\dagger - \overline{\alpha}) b (b - \alpha) (I - P)\geq (I - P),
            \end{align*}
            where the last bound was derived in Lemmas \ref{lemmaetaAB} and \ref{lempseudoinv}. For $C_H$, we need to bound $\|H P\|_\infty$. Since $P$ is the projection operator onto $\ker(b(b - \alpha)) = \mathrm{span}\{\ket{0}, \ket{\alpha}\}$, an orthonormal basis can be constructed as $\{\ket{0}, \ket{\perp}\}$, where
            \begin{equation*}
                \ket{\perp} = \frac{\sum_{k=1}^\infty \frac{\alpha^k}{\sqrt{k!}} \ket{k}}{\sqrt{\sum_{k=1}^\infty \frac{|\alpha|^{2k}}{k!}}} = \frac{1}{\sqrt{\sum_{k=1}^\infty \frac{|\alpha|^{2k}}{k!}}} \big(e^{|\alpha|^2/2} \ket{\alpha} - \ket{0} \big)\,.    
            \end{equation*}
            Then, 
            \begin{equation*}
                b \ket{\perp} = \frac{\alpha}{\sqrt{\sum_{k=1}^\infty \frac{|\alpha|^{2k}}{k!}}} \ket{0} + \alpha \ket{\perp}\,.
            \end{equation*}
            shows $\|b \ket{\perp} \| \leq \sqrt{1 + |\alpha|^2}$. Consequently, we obtain: $\|b P\|_\infty \leq \sqrt{1 + |\alpha|^2}$ and similarly $\|P b^\dagger\|_\infty \leq \sqrt{1 + |\alpha|^2}$. To upper bound $\|HP\|_\infty$, first note that 
            \begin{equation}\label{eq:norm-bound-H2}
                \|H P\|_\infty^2 = \|P H^2 P\|_\infty\,.
            \end{equation}
            Then, we rearrange the terms in $H^2$ in normal ordering (placing creation operators to the left and annihilation operators to the right) and place a $P$ in between the annihilation and creation operators using the property $P b P = bP$ and $P b^\dagger P = P b^\dagger$. This allows us to use Hölder's inequality iteratively and we obtain $\|P H^2 P\|_\infty \leq (\widetilde{C}_d)^2 (|\alpha|^{2d}+1)$, which implies $\|H P\|_\infty \leq \widetilde{C}_d (|\alpha|^d+1)$. Here, $\widetilde{C}_d$ is a non-negative constant depending only on $d$. 
            Next, we establish a bound on $\|L' H P\|_\infty$. We do this by splitting the norm as follows, using $Pb^\dagger (b^\dagger - \overline{\alpha})=0$:
            \begin{equation}\label{eq:bound-L'HP}
                \begin{aligned}
                    \|L' H P\|_\infty &= \Big\|\big[(I - P) (b^\dagger (b^\dagger - \overline{\alpha}) b (b - \alpha) (I - P) \\
                    &\qquad + (I - P) b^{\dagger r} (b^\dagger - \overline{\alpha}) b^r (b - \alpha) (I - P) \big] H P \Big\|_\infty \\
                    &\leq \|b^\dagger (b^\dagger - \overline{\alpha}) b (b - \alpha) H P\|_\infty + \|b^{\dagger r} (b^\dagger - \overline{\alpha}) b^r (b - \alpha) H P\|_\infty \\
                    &\leq C'_{r,d} (|\alpha|^{2(r+1)+d}+1),
                \end{aligned}
            \end{equation}
            for some constant $C'_{r,d}$ that depends only on $r$ and $d$. Note that for the above bound, we used again the trick in \eqref{eq:norm-bound-H2}, but for $L'HP$, which shows $\|L'HP\|_\infty=\sqrt{\|PHL'L'HP\|_\infty}$. Then, reordering the annihilation operators and creation operators via the canonical commutation relation and bounding the terms as above shows the above upper bound. For the choice $r=\lceil\frac{d}{2}\rceil-1$ we defined $\widetilde{C}_d'=C_{r,d}'$ For $\iota_1$ and $\iota_2$, we use Corollary \ref{cor:rel-bounded-learning}, which states that for $\delta=1/2$ and any state $\ket{\psi} \in \operatorname{dom}(L')$, we have
            \begin{equation*}
                \begin{aligned}
                    \|H\ket{\psi}\|&\leq \sqrt{2}d^{d}\biggl(\|L'\ket{\psi}\|+\sqrt{\frac{c}{(4r+4)}\left(\frac{4r+3}{4r+4}2c\right)^{4r+3}}\|\ket{\psi}\|\biggr)\\
                    &\leq \sqrt{2}d^{d}\biggl(\|L'\ket{\psi}\|+\frac{\left(2c\right)^{2(r+1)}}{\sqrt{4r+4}}\|\ket{\psi}\|\biggr)\,,
                \end{aligned}
            \end{equation*}
            with $c=(r+2)(r+1)+16|\alpha|(1+|\alpha|^2)+8\sqrt{2}|\alpha|^2$. Choosing $r=\lceil\frac{d}{2}\rceil-1 $, we obtain $\iota_1 =\sqrt{2}d^{d}$ and $\iota_2 = \widetilde{C}_d'' (|\alpha|^{3d+2}+1)$. Overall, we found the constants summarized in \eqref{eq:single-mode-constants} needed in order to apply Theorem \ref{thm:general-adiabatic-limit}.
            \begin{equation}\label{eq:single-mode-constants}
                \eta = 1, \quad C_H \leq \widetilde{C}_d(|\alpha|^d+1), \quad C_L\leq\widetilde{C}_d'(|\alpha|^{2d+1}+1),\quad\iota_1\leq\sqrt{2}d^{d},\quad\iota_2\leq\widetilde{C}_d'' (|\alpha|^{3d+2}+1)
            \end{equation}
            These determine the final constants
            \begin{equation}\label{eq:single-mode-constants-final}
                C=\frac{4}{\eta}\Big(\iota_1C_L+(\iota_2+C_H) C_H\Big)\leq C_d(|\alpha|^{4d+2}+1)\quad\text{and}\quad C'=\frac{4C_H}{\eta}\leq C_d'(|\alpha|^d+1)
            \end{equation}
            and finish the proof.
        \end{proof}

    \subsubsection{Few-modes dynamics}\label{subsec:few-mode-adiabatic-thm}
        Next, we extend the above single-mode system to an $m$-mode system with notation adopted from \cite{HuangTongFangSu2023learning}. We are in the setting of Definition \ref{def-main:bosonic-H}, thus Hamiltonians of the form
        \begin{align*}
            H = \sum_{a=1}^M E_a\qquad\text{with}\qquad E_a = \sum_{\substack{\mathbf{j},\mathbf{j}' \in \mathbb{N}^{ \mathfrak{k}}\\ \|\mathbf{j}+\mathbf{j}'\|_1\le d}} h^{(a)}_{\mathbf{j},\mathbf{j}'} (\mathbf{b}^\dagger)^{\mathbf{j}}\,\mathbf{b}^{\mathbf{j}'}\,,
        \end{align*}
        
     We define the modified photon dissipation as
        \begin{equation}\label{eq:dissipation-multi-mode}
            \mathcal{L} := \sum_{\ell=1}^m \mathcal{L}[b_\ell (b_\ell - \alpha_\ell)] + \mathcal{L}[b_\ell^r (b_\ell - \alpha_\ell)]   
        \end{equation}
        for some coefficients $\alpha \in \mathbb{C}^m$. As before, we then consider generators of the form
        \begin{align*}
            \mathcal{L} = \mathcal{H} + \gamma \mathcal{L}\,,
        \end{align*}
        in particular, we are interested in the adiabatic limit for large $\gamma$.
        \begin{prop}\label{prop:few-modes-adiabatic-limit}
            Let $d \in \mathbb{N}$, $H$ as defined in Definition \ref{def-main:bosonic-H}, $\mathcal{L}$ the sum over single-mode modified $r$-photon dissipation \eqref{eq:dissipation-multi-mode} with $r=\lceil\frac{d}{2}\rceil-1$, $\alpha\in\C^m$ and let $P=\bigotimes_{\ell=1}^m P_\ell$, where $P_\ell$ is the orthogonal projection onto $\operatorname{span}\{|0\rangle,|\alpha_\ell\rangle\}$. Then, for all $t \geq 0$ and $\rho=P\rho P$
            \begin{equation*}
                \|e^{-it\cH + \gamma t\mathcal{L}}(\rho) - e^{-it[PHP, P \cdot P]}(\rho)\|_1 \leq \frac{tC}{\gamma}+\frac{C'}{\gamma}\,
            \end{equation*}
            for constants $C=(m+M)M C_{d, \mathfrak{k}}(\|\alpha\|_\infty^{4d+2}+1)$ and $C'=M\,C_{d, \mathfrak{k}}'(\|\alpha\|_\infty^d+1)$ for $C_{d, \mathfrak{k}}$ and $C_{d, \mathfrak{k}}'$ depending on $d$ and $ \mathfrak{k}$.
        \end{prop}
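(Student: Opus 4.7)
The plan is to mirror the proof of Proposition \ref{prop:single-modes-adiabatic-limit}, applying Theorem \ref{thm:general-adiabatic-limit} with $\mathcal{K}=0$ (so conditions (iii) and (v) are vacuous) and verifying (i), (ii), (iv) while carefully tracking the scaling in $m$, $M$, $\|\alpha\|_\infty$, and $d$. The key enabler is the tensor product structure $P = \bigotimes_{\ell=1}^m P_\ell$ together with the fact that each jump operator acts on a single mode, which reduces every estimate to the single-mode case plus some low-intersection bookkeeping.

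For condition (i), since $L_{1,\alpha_\ell}P_\ell = L_{r,\alpha_\ell}P_\ell = 0$, the operator $L'$ simplifies to a sum of mutually commuting, mode-local positive operators $L_\ell := L_{1,\alpha_\ell}^\dagger L_{1,\alpha_\ell} + L_{r,\alpha_\ell}^\dagger L_{r,\alpha_\ell}$. The single-mode gap $L_\ell \geq I - P_\ell$ carries over directly from the argument used in Proposition \ref{prop:single-modes-adiabatic-limit}, and combined with the telescoping inequality $I - \bigotimes_\ell P_\ell \leq \sum_\ell (I - P_\ell)$ this yields $L' \geq I - P$, i.e.\ $\eta \geq 1$.

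For condition (iv), I would bound $\|HP\|_\infty \leq \sum_a \|E_a P_{\operatorname{supp}(E_a)}\|_\infty$ via the same normal-ordering plus H\"older argument as in the single-mode proof, giving $C_H \leq M\widetilde{C}_{d,\mathfrak{k}}(\|\alpha\|_\infty^d + 1)$. For $\|L'HP\|_\infty$, the crucial observation is that $L_\ell$ commutes with $E_a$ whenever $\ell \notin \operatorname{supp}(E_a)$, and $L_\ell P = 0$, hence $L_\ell E_a P = 0$ for such pairs; only the $(\ell, a)$ with $\ell \in \operatorname{supp}(E_a)$ contribute, of which there are at most $\mathcal{O}(m)$ by low intersection, yielding $C_L \leq m\widetilde{C}'_{d,\mathfrak{k}}(\|\alpha\|_\infty^{2d+1}+1)$. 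For condition (ii), I would apply Corollary \ref{cor:rel-bounded-learning} termwise to obtain $\|E_a|\psi\rangle\| \leq \sqrt{2}\, d^d \|L'_{\operatorname{supp}(E_a)}|\psi\rangle\| + \widetilde{C}''_{d,\mathfrak{k}}(\|\alpha\|_\infty^{3d+2}+1)\||\psi\rangle\|$; mutual commutativity and positivity of the $L_\ell$ imply $(L')^2 - (L'_{\operatorname{supp}(E_a)})^2 \geq 0$, hence $\|L'_{\operatorname{supp}(E_a)}|\psi\rangle\| \leq \|L'|\psi\rangle\|$, and summing over $a$ produces $\iota_1 \leq \sqrt{2}\, M d^d$ and $\iota_2 \leq M\widetilde{C}''_{d,\mathfrak{k}}(\|\alpha\|_\infty^{3d+2}+1)$.

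Substituting these constants into $C = 4(\iota_1 C_L + (\iota_2 + C_H)C_H)/\eta$ and $C' = 4 C_H/\eta$, the first contribution $\iota_1 C_L$ scales as $mM$ while $(\iota_2 + C_H)C_H$ scales as $M^2$, jointly giving the claimed $(m+M)M (\|\alpha\|_\infty^{4d+2}+1)$, and the bound on $C'$ is immediate. The main obstacle is extracting $m$ rather than $M^2$ scaling from $\|L'HP\|_\infty$: one must exploit $[L_\ell, E_a]=0$ for $\ell \notin \operatorname{supp}(E_a)$ to reduce the double sum to only $\mathcal{O}(m)$ nontrivial terms, which is precisely what enables the logarithmic-in-$m$ total evolution time in Theorem \ref{thm-main:multi-mode-learning}.
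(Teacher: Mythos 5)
Your proposal is correct and follows the same overall route as the paper: both verify assumptions (i), (ii), (iv) of Theorem~\ref{thm:general-adiabatic-limit} with $\mathcal{K}=0$, get $\eta\ge 1$ from the single-mode gap lemmas, bound $C_H$ by normal ordering plus H\"older on each $E_a$, and obtain $\iota_1,\iota_2$ from Corollary~\ref{cor:rel-bounded-learning} (your termwise application with $\|L'_{\supp(E_a)}\ket{\psi}\|\le\|L'\ket{\psi}\|$ via commuting positive summands lands in the same place as the paper's direct invocation of the corollary for the full $H$). The one substantive difference is the bound on $\|L'HP\|_\infty$. The paper simply applies the triangle inequality over all $mM$ pairs $(\ell,a)$ and bounds each term uniformly, giving $C_L\le mM\,\widetilde{C}'_{d,\mathfrak{k}}(\|\alpha\|_\infty^{2d+1}+1)$; you instead observe that $(I-P)L_\ell(I-P)=L_\ell$ (since $L_\ell P=PL_\ell=0$) and that $L_\ell E_aP=E_aL_\ell P=0$ whenever $\ell\notin\supp(E_a)$, so only the at most $M\mathfrak{k}=\Or(m)$ pairs with $\ell\in\supp(E_a)$ survive, giving $C_L\lesssim m$. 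This refinement is not merely cosmetic: with $\iota_1\sim M$, the paper's $C_L\sim mM$ would make the term $\iota_1 C_L$ scale as $mM^2$ rather than the $mM$ needed for the stated $C=(m+M)M\,C_{d,\mathfrak{k}}(\|\alpha\|_\infty^{4d+2}+1)$, whereas your version yields $\iota_1C_L\sim mM$ and $(\iota_2+C_H)C_H\sim M^2$, reproducing the claimed constant exactly. So your argument is, if anything, the cleaner justification of the proposition as stated.
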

        \begin{proof}
            To apply Theorem \ref{thm:general-adiabatic-limit}, we first note that the gap $\eta$ for the global system can be controlled by the ones of its single-mode constituents, which implies $\eta\geq 1$ by Lemma \ref{lemmaetaAB} and \ref{lempseudoinv}. 

            Next, we consider the upper bound on $\|HP\|_\infty$ and $\|L'HP\|_\infty$. By the product structure, these bounds mostly reduce to the single-mode case (see Proposition \ref{prop:single-modes-adiabatic-limit}) because the orthogonal projection onto the kernel of $\cL$ is given by $P=\bigotimes_{\ell=1}^m P_\ell$, where $P_\ell$ projects onto $\operatorname{span}\{|0\rangle,|\alpha_\ell\rangle\}$ and moreover
            \begin{equation}\label{eq:few-mode-H}
                \begin{aligned}
                    \|HP\|_\infty&\leq M \binom{ \mathfrak{k}+d}{d}^2\max_{\|\mathbf{j}+\mathbf{j}'\|_1\le d}\|(\mathbf{b}^\dagger)^{\mathbf{j}}\,\mathbf{b}^{\mathbf{j}'}P\|_\infty\\
                    &\leq M \binom{ \mathfrak{k}+d}{d}^2\max_{\|\mathbf{j}+\mathbf{j}'\|_1\le d}\prod_{\ell=1}^{ \mathfrak{k}}\|b_\ell^{\dagger {j}_\ell}b_\ell^{j_\ell'}P_\ell\|_\infty\\
                    &\leq M \binom{ \mathfrak{k}+d}{d}^2\widehat{C}_{d,\mathfrak{k}} (\|\alpha\|_\infty^d+1)\,,
                \end{aligned}
            \end{equation}
            where the upper bound on the sum 
            \begin{equation*}
                \sum_{\substack{\mathbf{j},\mathbf{j}' \in \mathbb{N}^{ \mathfrak{k}}\\ \|\mathbf{j}+\mathbf{j}'\|_1\le d}}1\leq\Bigl(\sum_{\substack{\mathbf{j} \in \mathbb{N}^{ \mathfrak{k}}\\ \|\mathbf{j}\|_1\le d}}1\Bigr)^2
            \end{equation*}
            can be interpreted as distributing $d$ plain balls to $ \mathfrak{k}+1$ labeled boxes giving the above binomial coefficient. This shows that there is a constant $\widetilde{C}_{d, \mathfrak{k}}\geq0$ depending on $d$ and $ \mathfrak{k}$ so that $\|HP\|_\infty\leq M \widetilde{C}_{d, \mathfrak{k}}(\|\alpha\|_\infty^d+1)$. Similarly, there is a constant $\widetilde{C}_{d, \mathfrak{k}}'\geq0$ depending on $d$ and $ \mathfrak{k}$ so that $\|L'HP\|_\infty\leq m M \widetilde{C}_{d, \mathfrak{k}}'(\|\alpha\|_\infty^{2d+1}+1)$ assuming the choice $r=\lceil\frac{d}{2}\rceil-1$. Here, $L'=\sum_{j=1}^m L_{r,j}^\dagger L_{r,j}+L_{1,j}^\dagger L_{1,j}$ for $L_{j,r}=b_j^r(b_j-\alpha_j)$. First, we apply the triangle inequality so that
            \begin{equation*}
                \begin{aligned}
                    \|L'HP\|_\infty\leq m M \max_{j\in[m], a\in[M]}\|(I-P)&L_{r,j}^\dagger L_{r,j} (I-P)E_a P\|_\infty\\
                    &+\|(I-P)L_{1,j}^\dagger L_{1,j} (I-P)E_a P\|_\infty\,.
                \end{aligned}
            \end{equation*}
            Then, we follow the reasoning of \eqref{eq:few-mode-H} to reduce the problem to a product of single-mode inequalities, for which we then apply  \eqref{eq:norm-bound-H2} and \eqref{eq:bound-L'HP}. Finally, we use Corollary \ref{cor:rel-bounded-learning} for bounding $\iota_1$ and $\iota_2$. The result directly states that for $\delta=1/2$ and any state $\ket{\psi} \in \operatorname{dom}(L')$, we have
            \begin{equation*}
                \|H\ket{\psi}\|\leq M( \mathfrak{k}+d-1)^{d}d!^{\mathfrak{k}/2-1}\biggl(\sqrt{2}\|L'\ket{\psi}\|+\sqrt{\frac{c}{2(r+1)}\left(2c\right)^{4r+3}}\|\ket{\psi}\|\biggr)
            \end{equation*}
            with $c=(r+2)(r+1)+16\|\alpha\|_\infty(1+\|\alpha\|_\infty^2)+8\sqrt{2}\|\alpha\|_\infty^2$ and note that $L_{r,j}P=0$ implies $(I-P)L_{r,j}^\dagger L_{r,j}(I-P)=L_{r,j}^\dagger L_{r,j}$. With $r=\lceil\frac{d}{2}\rceil-1 $, we obtain $\iota_1 =\sqrt{2}M( \mathfrak{k}+d-1)^{d}d!^{\mathfrak{k}/2-1}$ and $\iota_2 = \widetilde{C}_d''' (\|\alpha\|_\infty^{3d+2}+1)$. Overall, we found the constants summarized in \eqref{eq:few-mode-constants} needed to apply Theorem \ref{thm:general-adiabatic-limit}.
            \begin{equation}\label{eq:few-mode-constants}
                \begin{aligned}
                    \eta &\geq 1, &\quad C_H &\leq M \widetilde{C}_{d, \mathfrak{k}}(\|\alpha\|_\infty^d+1), &\quad C_L&\leq m M \widetilde{C}_{d, \mathfrak{k}}'(\|\alpha\|_\infty^{2d+1}+1),\\
                    &&\iota_1&\leq \sqrt{2}M( \mathfrak{k}+d-1)^{d}d!^{\mathfrak{k}/2-1},&\quad\iota_2&\leq\widetilde{C}_d''' (\|\alpha\|_\infty^{3d+2}+1)
                \end{aligned}
            \end{equation}
            These determine the final constants $C_{d, \mathfrak{k}}$ and $C_{d, \mathfrak{k}}'$
            \begin{equation}\label{eq:few-mode-constants-final}
                \begin{aligned}
                    C&=\frac{4}{\eta}\Big(\iota_1C_L+(\iota_2+C_H) C_H\Big)\leq (m+M)M C_{d, \mathfrak{k}}(\|\alpha\|_\infty^{4d+2}+1),\\
                    C'&=\frac{4C_H}{\eta}\leq M\,C_{d, \mathfrak{k}}'(\|\alpha\|_\infty^d+1)\,,
                \end{aligned}
            \end{equation}
            which finish the proof.
        \end{proof}

    \subsubsection{Multi-mode decoupling}\label{subsec:decoupling-mode-adiabatic-thm}
        In this appendix, we adapt the result in Proposition \ref{prop:few-modes-adiabatic-limit} such that the limit is a product of decoupled, local, Hamiltonian evolutions. Again, we consider Hamiltonian of the form introduced in Definition \ref{def-main:bosonic-H} given by
        \begin{align*}
            H = \sum_{a=1}^M E_a\qquad\text{with}\qquad E_a = \sum_{\substack{\mathbf{j},\mathbf{j}' \in \mathbb{N}^{ \mathfrak{k}}\\ \|\mathbf{j}+\mathbf{j}'\|_1\le d}} h^{(a)}_{\mathbf{j},\mathbf{j}'} (\mathbf{b}^\dagger)^{\mathbf{j}}\,\mathbf{b}^{\mathbf{j}'}\,,
        \end{align*}
        To describe the interaction pattern of the above Hamiltonian, we define the interacting cluster in the following:
        \begin{defn}[Interacting cluster]
        \label{defn:interacting_cluster}
            For each $a$, let $\operatorname{supp}(E_a)$ be the support of $E_a$, that is, the collection of modes in which $E_a$ acts non-trivially. From the set $\{\operatorname{supp}(E_a)\}$, we remove all $\operatorname{supp}(E_a)$ such that $\operatorname{supp}(E_a)\subset \operatorname{supp}(E_b)$ for some $b\in[M]$. The remaining $\operatorname{supp}(E_a)$'s form the set $\mathcal{V}$. Each element of $\mathcal{V}$ we call an interacting cluster. Clearly $|\mathcal{V}|\le M$.
        \end{defn}
        \begin{defn}[Cluster interaction graph]\label{def:interaction-graph}
            The cluster interaction graph $\mathcal{G}=(\mathcal{V},\mathcal{E})$ has interacting clusters as its vertices. The set of edges $\mathcal{E}$ is defined as follows: for each pair of interacting clusters $C$ and $C'$ ($C\ne C'$) in $\mathcal{V}$, $(C,C')\in\mathcal{E}$ if $C\cap C'\ne \emptyset$ or if there exists $C''\in\mathcal{V}$ such that $C\cap C''\ne \emptyset$ and $C'\cap C''\ne\emptyset$. 
        \end{defn}
        For a low-intersecting Hamiltonian, the degree of $\mathcal{G}$, $\operatorname{deg}(\mathcal{G})$, is upper bounded by a constant that is independent of the system size $m$. More precisely, $\operatorname{deg}(\mathcal{G})\le \mathfrak{d}^2$. In order to optimize the sample complexity, we aim at estimating different interacting clusters in parallel. For this, we need to color the graph $\mathcal{G}$ so that adjacent vertices have different colors. The number of colors $\chi$, known as chromatic number of the graph, satisfies $\chi\le \operatorname{deg}(\mathcal{G})+1=\mathcal{O}(1)$. Hence, $\mathcal{V}$ can be divided into a disjoint union 
        \begin{align}
            \mathcal{V}=\bigsqcup_{c=1}^\chi\mathcal{V}_c\,,
        \end{align}
        where no two adjacent vertices are in the same $\mathcal{V}_c$. In other words, for any $C$ and $C'$ in $\mathcal{V}_c$, $C\cap C'=\emptyset $, and for any $C''\in \mathcal{V}$, either $C\cap C''=\emptyset$ or $C'\cap C''=\emptyset$. Next, given a color $c\in[\chi]$, we denote
        \begin{align*}
            \mathcal{A}_c=\bigsqcup_{C\in\mathcal{V}_c}C\,.
        \end{align*}
        Next, we fix a color $c$ and we consider two dissipative generators, one supported on $[m]\backslash \mathcal{A}_c$ for decoupling interaction clusters of color $c$ from other modes, $\mathcal{L}^{c}_{\operatorname{dec}}$, and the other supported on $\mathcal{A}_c$ which serves as a means to locally project to an effective finite dimensional Hamiltonian on each cluster, $\mathcal{L}^c_{\operatorname{proj}}$. More precisely, 
        \begin{equation}\label{eq:coloring}
            \begin{aligned}
                \mathcal{L}^c_{\operatorname{dec}}&\coloneqq\sum_{\ell\in [m]\backslash \mathcal{A}_c}\mathcal{L}_\ell\\
                \mathcal{L}^c_{\operatorname{proj}}&\coloneqq\sum_{C\in \mathcal{V}_c}\cL^c_C
            \end{aligned}
        \end{equation}
        where each evolution $e^{t\cL_\ell}$ acts on mode $\ell$, with Lindblad operators $L_{\ell,1}=b_\ell$ and $L_{\ell,2}=b_\ell^{r+1}$, i.e. 
        \begin{equation}\label{eq:decoupling-lindbladian}
            \cL_\ell=\cL[b_\ell^{r+1}]+\cL[b_\ell]
        \end{equation}
        with $r=\lceil\frac{d}{2}\rceil-1$, such that the projection onto the common kernel is given by the vacuum state $P_\ell=\ket{0}\bra{0}$. Therefore, $P_\ell E_a P_\ell=0$ for all $a$ so that $\ell \in \operatorname{supp}(E_a)$. On the other site $e^{t\cL_C^c}$ acts on modes in cluster $C\in\mathcal{V}_c$ and has Lindblad operators $L_{C}$ described in Appendix \ref{subsec:few-mode-adiabatic-thm}, in particular in \eqref{eq:dissipation-multi-mode}
        \begin{equation}\label{eq:projection-lindbladian}
            \mathcal{L}_C^c \coloneqq \sum_{\ell\in C} \mathcal{L}[b_\ell (b_\ell - \alpha_\ell)] + \mathcal{L}[b_\ell^r (b_\ell - \alpha_\ell)]\,.   
        \end{equation}
        With this choice of generators, we can prove the following decoupling result:
        \begin{prop}\label{prop:decoupling-adiabatic-limit}
            Let $d \in \mathbb{N}$, $H$ as defined in Definition \ref{def-main:bosonic-H}, $\mathcal{L}$ the sum over single-mode modified $r$ photon dissipation with respect to the coloring \eqref{eq:coloring} defined in \eqref{eq:decoupling-lindbladian} and \eqref{eq:projection-lindbladian} with $r=\lceil\frac{d}{2}\rceil-1$, $\alpha\in\C^m$ and let $P=\bigotimes_{\ell=1}^m P_\ell$, where $P_\ell$ is the orthogonal projection onto $\operatorname{span}\{|0\rangle,|\alpha_\ell\rangle\}$. Here, $\alpha_\ell=0$ if $\ell\in[m]\backslash \mathcal{A}_c$. Then, for all $t \geq 0$ and $\rho=P\rho P$
            \begin{equation*}
                \|e^{-it\cH + \gamma t\mathcal{L}}(\rho) - e^{-it[PHP, P \cdot P]}(\rho)\|_1 \leq \frac{tC+C'}{\gamma}\,
            \end{equation*}
            for constants $C=(m+M)M C_{d, \mathfrak{k}}(\|\alpha\|_\infty^{4d+2}+1)$ and $C'=M\,C_{d, \mathfrak{k}}'(\|\alpha\|_\infty^d+1)$ for $C_{d, \mathfrak{k}}$ and $C_{d, \mathfrak{k}}'$ depending on $d$ and $ \mathfrak{k}$.
        \end{prop}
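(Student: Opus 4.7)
The plan is to derive the decoupling statement as a direct application of Theorem~\ref{thm:general-adiabatic-limit} (with $\mathcal{K}=0$), in close analogy with the proof of Proposition~\ref{prop:few-modes-adiabatic-limit}. The one genuinely new feature is that the global Lindbladian $\mathcal{L}=\mathcal{L}^c_{\mathrm{dec}}+\mathcal{L}^c_{\mathrm{proj}}$ now consists of two different single-mode dissipations: the vacuum-projecting one $\mathcal{L}[b_\ell]+\mathcal{L}[b_\ell^{r+1}]$ on decoupling modes $\ell\in[m]\setminus\mathcal{A}_c$, and the cat-like one $\mathcal{L}[b_\ell(b_\ell-\alpha_\ell)]+\mathcal{L}[b_\ell^r(b_\ell-\alpha_\ell)]$ on modes $\ell\in\mathcal{A}_c$. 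The task is thus to verify, mode by mode, that both choices satisfy the assumptions of Theorem~\ref{thm:general-adiabatic-limit} with the \emph{same} constants (up to factors depending only on $d$ and $\mathfrak{k}$) as those already produced in~\eqref{eq:few-mode-constants}.

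First I would identify the common kernel of all jump operators and check that it coincides with the image of $P=\bigotimes_\ell P_\ell$: for $\ell\in\mathcal{A}_c$ this is $\ker(b_\ell(b_\ell-\alpha_\ell))=\mathrm{span}\{\ket{0},\ket{\alpha_\ell}\}$, as in the single-mode case; for $\ell\in[m]\setminus\mathcal{A}_c$ the common kernel of $b_\ell$ and $b_\ell^{r+1}$ is $\ker(b_\ell)=\mathrm{span}\{\ket{0}\}$, which matches the convention $\alpha_\ell=0$ in the statement. Next I would verify assumption (i): on each mode one has $(I-P_\ell)L'_\ell(I-P_\ell)\geq (I-P_\ell)$, for the cat-like dissipation exactly as in Proposition~\ref{prop:single-modes-adiabatic-limit}, and for the vacuum-projecting dissipation because $b_\ell^\dagger b_\ell+b_\ell^{\dagger(r+1)}b_\ell^{r+1}\geq N_\ell\geq I-\ket{0}\bra{0}$ on the orthogonal complement of $\ket{0}$. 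The orthogonal-sum structure of $L'=\sum_\ell L'_\ell$ together with Lemmas~\ref{lemmaetaAB} and~\ref{lempseudoinv} then yields the global gap $\eta\geq 1$.

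For the relative bound (ii), the key observation is that both local $L'_\ell$ dominate a sufficient power of the local number operator, so a direct application of Corollary~\ref{cor:rel-bounded-learning} gives the same $\iota_1$ and $\iota_2$ as in~\eqref{eq:few-mode-constants}. For assumption (iv), I would use the product structure of $P$ to reduce $\|HP\|_\infty$ and $\|L'HP\|_\infty$ to products of single-mode norms, exactly as in~\eqref{eq:few-mode-H} and~\eqref{eq:bound-L'HP}. The crucial simplification for $\ell\in[m]\setminus\mathcal{A}_c$ is that $\|(b_\ell^\dagger)^{j_\ell}b_\ell^{j_\ell'}P_\ell\|_\infty$ is either $0$ (whenever $j_\ell'>0$) or $\sqrt{j_\ell!}\leq\sqrt{d!}$, and so never exceeds the corresponding $\alpha_\ell\neq 0$ bound; the same control carries over to $\|L'HP\|_\infty$ after normal ordering via the canonical commutation relation. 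Consequently $C_H$ and $C_L$ satisfy the same estimates as in~\eqref{eq:few-mode-constants} (with $\|\alpha\|_\infty$ now being determined solely by the non-vanishing entries), and Theorem~\ref{thm:general-adiabatic-limit} delivers the claim with exactly the same $C$ and $C'$ as in Proposition~\ref{prop:few-modes-adiabatic-limit}. The only step demanding vigilance is checking that assembling the two families of single-mode contributions into a single $L'$ does not spoil the mode-wise bounds; this is guaranteed by the disjoint supports of the two families together with the uniformity of each single-mode estimate in $\alpha_\ell\leq\|\alpha\|_\infty$.
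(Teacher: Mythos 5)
Your proposal is correct and follows essentially the same route as the paper: reduce to Theorem~\ref{thm:general-adiabatic-limit} by re-verifying assumptions (i)--(iv) as in Proposition~\ref{prop:few-modes-adiabatic-limit}, noting that on decoupling modes the kernel collapses to $\mathrm{span}\{\ket{0}\}$ (consistent with $\alpha_\ell=0$) and that the modified jump $\mathcal{L}[b_\ell]$ only changes non-leading-order terms, so the constants from \eqref{eq:few-mode-constants} carry over. Your mode-by-mode verification is in fact somewhat more explicit than the paper's own argument, which simply asserts that the altered terms are not leading order.
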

        \begin{proof}
            The statement follows from a similar proof to that of Proposition \ref{prop:few-modes-adiabatic-limit}. By definition of the decoupling and projective dissipation, the only difference lies in the entries of $\alpha$ that are assumed to be $\alpha_\ell=0$ if $\ell \in [m] \backslash \mathcal{A}_c$ and the second term of the dissipation defined by $\mathcal{L}[b_\ell^{r}(b_\ell-0)]+\mathcal{L}[b_\ell]$. However, the latter term does not contribute to the bounds of $C_L$, $C_H$, $\iota_1$, and $\iota_2$ because it is not leading order, so that 
            \begin{equation*}
                \begin{aligned}
                    \eta &\geq 1, &\quad C_H &\leq M \widetilde{C}_{d, \mathfrak{k}}(\|\alpha\|_\infty^d+1), &\quad C_L&\leq m M \widetilde{C}_{d, \mathfrak{k}}'(\|\alpha\|_\infty^{2d+1}+1),\\
                    &&\iota_1&\leq \sqrt{2}M( \mathfrak{k}+d-1)^{d}d!^{\mathfrak{k}/2-1},&\quad\iota_2&\leq\widetilde{C}_d''' (\|\alpha\|_\infty^{3d+2}+1)
                \end{aligned}
            \end{equation*}
            with the only difference to \eqref{eq:few-mode-constants} being that all $\alpha_\ell=0$ for $\ell\in[m]\backslash \mathcal{A}_c$. Then, 
            \begin{equation*}
                \begin{aligned}
                    C&=\frac{4}{\eta}\Big(\iota_1C_L+(\iota_2+C_H) C_H\Big)\leq (m+M)M C_{d, \mathfrak{k}}(\|\alpha\|_\infty^{4d+2}+1),\\
                    C'&=\frac{4C_H}{\eta}\leq M\,C_{d, \mathfrak{k}}'(\|\alpha\|_\infty^d+1)\,.
                \end{aligned}
            \end{equation*}
            Moreover, the projection is defined by $P=\bigotimes_{\ell=1}^m P_\ell$, where $P_\ell$ projects onto $\operatorname{span}\{|0\rangle,|\alpha_\ell\rangle\}$ if $\ell\in\mathcal{A}_c$ and $\operatorname{span}\{|0\rangle\}$ otherwise. This then directly deletes all parts of the Hamiltonian interacting with $[m] \backslash \mathcal{A}_c$ since every Hamiltonian term consists of at least one annihilation or creation operator.
        \end{proof}

\section{The learning protocol}\label{sec:learning}

In this appendix we will describe our Hamiltonian learning protocol. We will start with the remaining proofs for the single-mode case that we obmitted from Section \ref{sec:main-learning}, generalize it to the few-mode setting where we assume the number of modes $\mathfrak{k}=\Or(1)$, and then use it to solve the multi-mode setting where we apply strong engineered dissipation to decouple the system into constant-sized clusters as discussed in Appendix~\ref{subsec:decoupling-mode-adiabatic-thm}. Our goal is always to learn all Hamiltonian coefficients to precision $\epsilon$ with probability at least $1-\delta$.

\subsection{Learning a single-mode Hamiltonian}
\label{subsec:single_mode_learning}
In this appendix, we provide proofs for statements that had to be omitted from Section \ref{sec:main-learning} due to space constraints and we provide pseudocode in Algorithm \ref{alg:single_mode_learning} for our learning algorithm in the single mode case.
\begin{algorithm}
%\label{alg:single_mode_learning}
  \DontPrintSemicolon
  \SetKwFunction{FHamLearn}{HamLearn}
  \SetKwFunction{FExpectVal}{ExpectVal}
  \SetKwComment{Comment}{/* }{ */}
  \SetKwProg{Fn}{Function}{:}{}
  \Fn{\FHamLearn{$\epsilon$, $\delta$, $d$}}{
        Let $A_\mu$, $\mu=1,2,\cdots,d+1$, be as defined in \eqref{eq:choices_of_ri_ab}\;
        Let $\theta_{u,l} = \pi u/(l+1)$, $l=0,1,\cdots,d$, $u=0,1,\cdots,l$\;
        Let $\epsilon_1=\Theta(\epsilon),\delta_1=\Theta(\delta)$ be sufficiently small (but within a constant factor of $\epsilon$ and $\delta$ respectively) \;
        \For{$l=0,1,\cdots,d$}{
            \For{$u=0,1,\cdots,l$}{
                \For{$\mu=1,2,\cdots,d+1$}
                {
                    \For{$s=1,2,\cdots,S=\Or(\log(1/\delta_1))$}{
                        Let $H_{\mu,u,l,s} = \textsc{ExpectVal}(A_\mu e^{i\theta_{u,l}},\epsilon_1,d)$ \;
                    }
                    $H_{\mu,u,l}=\mathrm{median}\{H_{\mu,u,l,s}\}$ \;
                }
                Compute estimates for $h_l(\theta_{u,l})$ (defined in \eqref{eq:poly_of_r_coefs}) using polynomial regression with data $\{H_{\mu,u,l}\}_{\mu=1}^d$ as described in Section~\ref{sec:recovering_the_coefficients} \; 
            }
            Apply Fourier transform on the estimates for $\{h_l(\theta_{u,l})\}_{u=0}^l$ to obtain coefficient estimates $\{\hat{h}_{j,l-j}\}_{j=0}^l$ as described in Section~\ref{sec:recovering_the_coefficients}\;
        }
        \KwRet $\{\hat{h}_{j,j'}\}$ such that $|\hat{h}_{j,j'}-h_{j,j'}|\leq\epsilon$ for all $j,j'$ with probability at least $1-\delta$\;
  }
  \;
  \SetKwProg{Fn}{Function}{:}{\KwRet}
  \Fn{\FExpectVal{$\alpha$, $\epsilon$, $d$}}{
        Let $\Phi=\Or(d^2|\alpha|^d)$ be an upper bound of $|\braket{\alpha|H|\alpha}|$ \;
        Let $L=\lceil\log_{3/2}(3\Phi/\epsilon)\rceil$ \;
        Let $t_l,m_l$, $l=1,2,\cdots,L$, be as given in the proof of Theorem~\ref{thm:robust_frequency_estimation}\;
        \For{$l=1,2,\cdots,L$}{
            \For{$n=1,2,\cdots, m_l$}{
                \For{$\mathfrak{m}=1,2,\cdots,100$}{
                    Prepare state $\ket{+}\ket{0}$ \Comment*[r]{The first register is an ancilla qubit}
                    Apply displacement $D(\alpha)$ controlled on the ancilla qubit \;
                    Evolve under $-i[H,\cdot]+\gamma\mathcal{L}$ \Comment*[r]{$\mathcal{L}$ is the engineered dissipation}
                    Apply displacement $D(-\alpha)$ controlled on the ancilla qubit \;
                    Measure the ancilla qubit in $X$ basis to obtain $X_{l,n,\mathfrak{m}}\in\{\pm 1\}$ \;
                    Repeat the same experiment but measure in $Y$ basis to obtain $Y_{l,n,\mathfrak{m}}\in\{\pm 1\}$ \;
                }
                Compute $X_{l,n}=(1/100)\sum_{\mathfrak{m}=1}^{100}X_{l,n,\mathfrak{m}}$ \;
                Compute $Y_{l,n}=(1/100)\sum_{\mathfrak{m}=1}^{100}Y_{l,n,\mathfrak{m}}$ \;
            }
        }
        Apply robust frequency estimation with data $\{X_{l,n},Y_{l,n}\}$ as described in Theorem~\ref{thm:robust_frequency_estimation} to generate $\hat{\theta}$ \;
        \KwRet  $\hat{\theta}$ such that $\mathbb{E}[|\hat{\theta}-\braket{\alpha|H|\alpha}|^2]^{1/2}\leq \epsilon + \Or(d^2 |\alpha|^d e^{-|\alpha|^2/2})$\;
  }
  \caption{The pseudocode for the single-mode learning protocol. }\label{alg:single_mode_learning}
\end{algorithm}

The key tool for the proof of Theorem \ref{thm:robust_frequency_estimation} in the main text is the following lemma that allows us to incrementally refine the frequency estimate: 

\begin{lem}
    \label{lem:frequency_est_refine}
    Let $\theta\in[a,b]$.
    Let $Z(t)$ be a random variable such that, with probability $1$,
    \begin{equation}
        \label{eq:rv_correctness_condition}
        |Z(t)-e^{i\theta t}|< 1/2.
    \end{equation}
    Then we can correctly distinguish between two overlapping cases $\theta\in[a,(a+2b)/3]$ and $\theta\in[(2a+b)/3,b]$ with one sample of $Z(\pi/(b-a))$. 
\end{lem}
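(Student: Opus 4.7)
The plan is to exploit the fact that with $t = \pi/(b-a)$, the phase $\theta t$ ranges over an interval of length exactly $\pi$ as $\theta$ varies over $[a,b]$, so there is no $2\pi$-wraparound and a single threshold on the imaginary part of a suitably derotated $Z(t)$ will separate the two hypotheses up to their built-in overlap.

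Concretely, I would introduce the centered phase $\psi := (\theta - (a+b)/2)\,\pi/(b-a) \in [-\pi/2, \pi/2]$ and the derotated sample $W := e^{-i(a+b)\pi/(2(b-a))}\, Z(\pi/(b-a))$. The assumption $|Z(t) - e^{i\theta t}| < 1/2$ then reads $|W - e^{i\psi}| < 1/2$, and in particular $|\Im W - \sin\psi| < 1/2$. Under this change of variables, the hypothesis $\theta \in [a, (a+2b)/3]$ translates to $\psi \in [-\pi/2, \pi/6]$, hence $\sin\psi \leq 1/2$; and $\theta \in [(2a+b)/3, b]$ translates to $\psi \in [-\pi/6, \pi/2]$, hence $\sin\psi \geq -1/2$.

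The decision rule I propose is: output case 1 if $\Im W \leq 0$, and case 2 otherwise. To verify correctness, suppose the true $\theta$ lies in $[a, (2a+b)/3]$, i.e.\ $\psi \in [-\pi/2, -\pi/6]$; then $\sin\psi \leq -1/2$, so $\Im W < \sin\psi + 1/2 \leq 0$, and we declare case 1, which is correct. Symmetrically, if $\theta \in [(a+2b)/3, b]$, then $\sin\psi \geq 1/2$ and $\Im W > 0$, so we declare case 2. For $\theta$ in the overlap $((2a+b)/3, (a+2b)/3)$ both hypotheses are valid answers, so either output is correct regardless of where $\Im W$ lands.

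There is no real obstacle in this argument; the content of the lemma is a geometric calibration. The choice $t = \pi/(b-a)$ is exactly what matches the intrinsic margin $|\sin\psi| \geq 1/2$ at the hypothesis boundaries to the $1/2$ noise bound in the hypothesis on $Z(t)$, and the $1/3$ overlap between the two intervals absorbs the remaining $\pi/3$ of angular slack in $\psi$-space where the sign of $\Im W$ is indeterminate.
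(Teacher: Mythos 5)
Your proof is correct and is essentially the paper's argument: both derotate $Z(\pi/(b-a))$ by $e^{-i(a+b)\pi/(2(b-a))}$, compare its imaginary part against the threshold $0$, and use that $\sin\bigl(\tfrac{\pi}{b-a}(\theta-\tfrac{a+b}{2})\bigr)$ has margin $1/2$ at the boundaries of the two overlapping intervals. The only cosmetic difference is that you verify the rule in the forward direction (true $\theta$ outside the overlap forces the correct output) while the paper argues the contrapositive (the observed sign of $\Im W$ certifies membership in the reported interval); these are logically equivalent.
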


\begin{proof}
    These two situations described in the statement can be distinguished by looking at the value of $f(\theta)=\sin\left(\frac{\pi}{b-a}\left(\theta-\frac{a+b}{2}\right)\right)$. Eq.\eqref{eq:rv_correctness_condition} shows $\left|\Im\left(e^{-i\frac{(a+b)\pi}{2(b-a)}}Z(\pi/(b-a))\right)-f(\theta)\right|<1/2$. If $$\Im\left(e^{-i\frac{(a+b)\pi}{2(b-a)}}Z(\pi/(b-a))\right)\leq 0,$$ then $f(\theta)<1/2$, which implies $\theta\in[a,(a+2b)/3]$. If $$\Im\left(e^{-i\frac{(a+b)\pi}{2(b-a)}}Z(\pi/(b-a))\right)> 0,$$ then $f(\theta)>-1/2$, which implies $\theta\in[(2a+b)/3,b]$. 
\end{proof}

Now we are ready to prove Theorem \ref{thm:robust_frequency_estimation}, which we restate here for convenience:
\begin{thm}
    Let $\theta\in[-\Phi,\Phi]$.
    Let $X(t)$ and $Y(t)$ be independent random variables satisfying
    \begin{equation} \label{eq:rv_correctness_condition_prob}
        \begin{aligned}
            &|X(t)-\cos(\theta t)|< 1/\sqrt{8}, \text{ with probability at least }2/3, \\
            &|Y(t)-\sin(\theta t)|< 1/\sqrt{8}, \text{ with probability at least }2/3.
        \end{aligned}
    \end{equation}
    Then with independent samples $X(t_1),X(t_2),\cdots,X(t_{\Gamma})$ and $Y(t_1),Y(t_2),\cdots,Y(t_{\Gamma})$, with
    \begin{equation}
        {\Gamma}=\Or(\log^2(\Phi/\epsilon)), \quad T=t_1+t_2+\cdots+t_{\Gamma}=\Or(1/\epsilon), \quad \max_j t_j=\Or(1/\epsilon),
    \end{equation}
    and $t_j\geq 0$, we can construct  an estimator $\hat{\theta}$ such that
    \begin{equation}
        \sqrt{\mathbb{E}[|\hat{\theta}-\theta|^2]}\leq \epsilon.
    \end{equation}
\end{thm}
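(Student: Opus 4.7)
The plan is to run a multi-level adaptive refinement. Fix $L = \lceil \log_{3/2}(3\Phi/\epsilon) \rceil$ and maintain a nested sequence of intervals $I_1 \supset I_2 \supset \cdots \supset I_{L+1}$, each containing $\theta$ (provided no refinement has failed), with $|I_l| = 2\Phi (2/3)^{l-1}$. At level $l$, I would choose evolution time $t_l := \pi/|I_l|$ and invoke Lemma \ref{lem:frequency_est_refine} on the complex-valued random variable $Z(t_l) := X(t_l) + i Y(t_l)$ to decide which of the two overlapping sub-intervals of length $\tfrac{2}{3}|I_l|$ is consistent with $\theta$, producing $I_{l+1}$. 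Note that whenever both given $1/\sqrt{8}$-accuracy events for $X(t_l)$ and $Y(t_l)$ hold, $|Z(t_l) - e^{i \theta t_l}|^2 < 2 \cdot 1/8 = 1/4$, so the lemma's hypothesis is met.

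To boost the success probability beyond the bare $1 - 2/3$ that a naive union gives, at each level $l$ I would draw $m_l$ independent copies of $X(t_l)$ and $Y(t_l)$ and aggregate them (via majority vote applied to the sign test inside the proof of Lemma \ref{lem:frequency_est_refine}, or equivalently via medians of the real and imaginary coordinates). A standard Chernoff/Hoeffding bound then ensures refinement at level $l$ is correct with probability at least $1 - 2 e^{-c m_l}$ for a universal constant $c > 0$.

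For the RMSE bound, let $E_l$ be the event that the \emph{first} refinement failure occurs at level $l$, so $\Pr(E_l) \le 2 e^{-c m_l}$. When no failure ever occurs, $|\hat\theta - \theta| \le |I_{L+1}|/2 \le \epsilon/2$; when the first failure is at level $l$, all subsequent refinements still take place inside the (wrong) sub-interval of $I_l$, so $|\hat\theta - \theta| \le |I_l| = 2\Phi (2/3)^{l-1}$. Hence
\begin{equation*}
\mathbb{E}\bigl[|\hat\theta - \theta|^2\bigr] \;\le\; \epsilon^2/4 \;+\; 8\Phi^2 \sum_{l=1}^L (4/9)^{l-1} e^{-c m_l}.
\end{equation*}
Choosing the \emph{decreasing} schedule $m_l$ so that $e^{-c m_l} \asymp (3/2)^{l-1}\epsilon^2/\Phi^2$, i.e.\ $m_l = \Or(L - l + 1)$, collapses the sum to a geometric series with ratio $(4/9)(3/2) = 2/3$, giving overall MSE at most $\Or(\epsilon^2)$.

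Finally I would read off the resource counts: the total sample count is $\Gamma = 2\sum_l m_l = \Or(L^2) = \Or(\log^2(\Phi/\epsilon))$; since $t_l = \Theta((3/2)^{l-1}/\Phi)$, the total evolution time is $\sum_l m_l t_l = \Or(\Phi^{-1}) \sum_l (L-l+1)(3/2)^{l-1} = \Or((3/2)^L/\Phi) = \Or(1/\epsilon)$; and $\max_l t_l = t_L = \Or(1/\epsilon)$. The main obstacle is calibrating $\{m_l\}$ just right: a uniform choice $m_l = \Or(\log(\Phi/\epsilon))$ satisfies the union bound trivially but loses a logarithmic factor in $T$, so the linearly-decreasing schedule above—putting more amplification at early, cheap levels where a failure is most costly—is essential to achieve the Heisenberg total time $T = \Or(1/\epsilon)$ simultaneously with $\Gamma = \Or(\log^2(\Phi/\epsilon))$.
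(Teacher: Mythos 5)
Your construction follows the paper's proof essentially step for step: iterate Lemma~\ref{lem:frequency_est_refine} on nested intervals shrinking by a factor $2/3$ with $t_l\propto(3/2)^{l-1}/\Phi$, amplify each decision with $m_l$ repetitions, bound the MSE by conditioning on the level of the first failure, and put more repetitions at the early (cheap) levels. Two quantitative steps, however, do not hold as written. First, the amplification must be done by taking the median of the $X$-samples and the median of the $Y$-samples \emph{separately} (as the paper does): the hypothesis only guarantees each coordinate is accurate with probability $2/3$, so by the union bound the joint event $|Z(t)-e^{i\theta t}|<1/2$ is only guaranteed probability $1/3$ per sample, and a majority vote over the per-sample sign tests is therefore not a valid amplifier; your ``or equivalently'' conflates two procedures that are not equivalent, and only the coordinate-wise median is justified.

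Second, the calibration $e^{-cm_l}\asymp(3/2)^{l-1}\epsilon^2/\Phi^2$ does \emph{not} give $m_l=\Or(L-l+1)$: since $\Phi/\epsilon=\Theta((3/2)^{L})$, it gives $m_l=\Theta(2L-l+1)$, so in particular $m_L=\Theta(L)$, and the dominant last level then contributes $m_Lt_L=\Theta(L(3/2)^{L}/\Phi)=\Theta(\log(\Phi/\epsilon)/\epsilon)$ to the total time, losing a logarithmic factor against the claimed $T=\Or(1/\epsilon)$ (your MSE bound survives, but your final accounting of $T$ silently substitutes $L-l+1$ for the $m_l$ your calibration actually forces). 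The schedule $m_l=c_3+c_4(L-l)$ that you announce is the right one, but it must be derived from a different requirement: make the level-$l$ failure probability at most $\rho^{\,L-l}$ for a constant $\rho<4/9$, so that $\Phi^2\sum_l(4/9)^{l-1}\rho^{\,L-l}$ is a geometric series dominated by its $l=L$ term, which is $\Theta(\epsilon^2)$ because $\Phi^2(4/9)^{L}=\Theta(\epsilon^2)$. This is exactly the paper's choice $\delta_l(4/9)^{l-2}\le\epsilon^2/(9\cdot2^{L-l}\Phi^2)$, which reduces to $\delta_l\lesssim(2/9)^{L-l}$ and hence to $m_l=\Or(L-l+1)$ with $\Or(1)$ repetitions at the final, most expensive level. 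With these two corrections your argument coincides with the paper's proof.
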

\begin{proof}
    We build a random variable $S(t)$ satisfying \eqref{eq:rv_correctness_condition}, with which we will iteratively narrow down the interval $[a,b]$ containing $\theta$ until $|b-a|\leq 2\epsilon/3$, at which point we choose $\hat{\theta}=(a+b)/2$. If $\theta\in[a,b]$ we will then ensure $|\hat{\theta}-\theta|\leq \epsilon/3$. However, each iteration will involve some failure probability, which we will analyze later.

    To build the random variable $S(t)$, we first use $m$ independent samples of $X(t)$ and then take median $X_{\mathrm{median}}(t)$, which satisfies 
    $
    |X_{\mathrm{median}}(t)-\cos(\theta t)|< 1/\sqrt{8}
    $
    with probability at least $1-\delta/2$, where by \eqref{eq:rv_correctness_condition_prob} and the Chernoff bound 
    $
    \delta = c_1 e^{-c_2 m},
    $
    for some universal constant $c_1,c_2$. The way we apply the Chernoff bound is as follows: in order for the median $X_{\mathrm{median}}(t)$ to deviate from $\cos(\theta t)$ by at least $1/\sqrt{8}$, then at least $1/2$ of the samples of $X(t)$ will have such large deviations. However, since for a single sample this large deviation occurs with probability at most $1/3$, the Chernoff bound tells us that an error exceeding the above bound in $X_{\mathrm{median}}(t)$ occurs with a probability that decays exponentially in the number of samples.
    Similar to $X_{\mathrm{median}}(t)$ we can obtain $Y_{\mathrm{median}}(t)$ such that
    $
    |Y_{\mathrm{median}}(t)-{\sin}(\theta t)|< 1/\sqrt{8}
    $
    with probability at least $1-\delta/2$. With these medians we then define
    $
    S(t) = X_{\mathrm{median}}(t) + iY_{\mathrm{median}}(t).
    $
    This random variable satisfies
    $
    |S(t)-e^{i\theta t}|< 1/2
    $
    with probability at least $1-\delta$ using the union bound. It therefore allows us to solve the discrimination task in Lemma~\ref{lem:frequency_est_refine} with probability at least $1-\delta$.

    Whether each iteration proceeds correctly or not, the algorithm terminates after 
    $
    L=\lceil \log_{3/2}(3\Phi/\epsilon) \rceil
    $
    iterations. By Lemma~\ref{lem:frequency_est_refine}, in the $l$-th iteration the length of the interval $[a,b]$ is $(2/3)^{l-1}2\Phi$ and hence we sample $X(t_l)$ and $Y(t_l)$ where $t_l=(\pi/2\Phi)(3/2)^{l-1}$. We let the $l$-th iteration use $m_l$ samples of $X(t_l)$ and $Y(t_l)$ for computing the median, and therefore the failure probability is at most $\delta_l=c_1 e^{-c_2 m_l}$.

    The total number of samples (for either $X(t)$ or $Y(t)$) is therefore
    \begin{equation}
        \label{eq:total_num_samples}
        {\Gamma} = \sum_{l=1}^{L} m_l,
    \end{equation}
    whereas all the $t$ in each sample added together is
    \begin{equation}
        \label{eq:total_time}
        T = \sum_{l=1}^{L} m_l t_l.
    \end{equation}
    The above is the objective that we want to minimize. The constraint is imposed by the precision that we want to achieve, which we compute below.

    With probability
    $
    (1-\delta_1)(1-\delta_2)\cdots (1-\delta_{l-1})\delta_l,
    $
    the $l$-th iteration is the first one to fail. The resulting error is no larger than $2\Phi(2/3)^{l-2}$. Therefore the mean squared error (MSE) is at most
    \begin{equation}
        \label{eq:mean_squared_error_choose_delta}
        \begin{aligned}
            &\mathbb{E}[|\hat{\theta}-\theta|^2]\\
            &=4\Phi^2\sum_{l=1}^L (1-\delta_1)(1-\delta_2)\cdots (1-\delta_{l-1})\delta_l(4/9)^{l-2} + (1-\delta_1)(1-\delta_2)\cdots (1-\delta_{L})(\epsilon/3)^2 \\
            &\leq 4\Phi^2\sum_{l=1}^L\delta_l(4/9)^{l-2} + \epsilon^2/9.
        \end{aligned}
    \end{equation}
    Now we choose $\delta_l$ so that
    $
    \delta_l(4/9)^{l-2}\leq \frac{\epsilon^2/9}{2^{L-l}\Phi^2},
    $
    which ensures $ \mathbb{E}[|\hat{\theta}-\theta|^2]\leq \epsilon^2$.

    We therefore only need to choose
    $
    \delta_l\leq \left(\frac{4}{9}\right)^2\left(\frac{2}{9}\right)^{L-l} \leq (9/4)^{l-2}\frac{\epsilon^2/9}{2^{L-l}\Phi^2},
    $
    where we have used $\epsilon/(3\Phi)\geq (2/3)^L$.
    For the above choice of $\delta_l$, it suffices to choose $m_l = c_3 + c_4(L-l)$ for some universal constants $c_3,c_4$.
    Therefore we can compute
    \begin{equation}
        {\Gamma} = \sum_{l=1}^{L} m_l = \sum_{l=1}^{L} ( c_3 + c_4(L-l)) \leq \Or(L^2)=\Or(\log^2(3\Phi/\epsilon)), 
    \end{equation}
    and
    \begin{equation}
        T = \sum_{l=1}^{L} m_l t_l = \frac{\pi}{2\Phi}\sum_{l=1}^{L} (c_3 + c_4(L-l)) \left(\frac{3}{2}\right)^{l-1} = \Or((3/2)^L/\Phi)=\Or(1/\epsilon),
    \end{equation}
    where we have used the elementary equality that
    $
    \sum_{l=1}^L x^{l-1}(L-l) = \frac{x^L-1}{(x-1)^2}-\frac{L}{x-1}.
    $
\end{proof}

\subsection{Learning a few-mode Hamiltonian}
\label{sec:learning_few_mode_ham}

The protocol for learning a Hamiltonian on $\mathfrak{k}$ bosonic modes for $\mathfrak{k}=\Or(1)$ follows largely the same idea as the single-mode case. The Hamiltonian takes the form 
\begin{equation}
    \label{eq:ham_k_modes}
    H = \sum_{l_1=0}^{d_1}\cdots\sum_{l_\mathfrak{k}=0}^{d_\mathfrak{k}} \sum_{j_1+j_1'=l_1}\cdots\sum_{j_\mathfrak{k}+j_\mathfrak{k}'=l_\mathfrak{k}} h_{j_1j_1'\cdots j_\mathfrak{k}j_\mathfrak{k}'} (b_1^\dag)^{j_1}b_1^{j_1'}\cdots (b_\mathfrak{k}^\dag)^{j_\mathfrak{k}}b_\mathfrak{k}^{j_\mathfrak{k}'}.
\end{equation}
for coefficients $h_{j_1j_1'\cdots j_\mathfrak{k}j_\mathfrak{k}'} $ with {$h_{j_1j_1'\cdots j_\mathfrak{k}j_\mathfrak{k}'} = \overline{h_{j_1'j_1\cdots j_\mathfrak{k}'j_\mathfrak{k}} }$} and $|h_{j_1j_1'\cdots j_\mathfrak{k}j_\mathfrak{k}'} |\le 1$. {We also denote $d=\sum_q d_q$ as in Definition~\ref{def-main:bosonic-H}.} Our learning protocol will generate estimates $\hat{h}_{j_1j_1'\cdots j_\mathfrak{k}j_\mathfrak{k}'}$ such that
\begin{equation}
    \max_{j_1j_1'\cdots j_\mathfrak{k}j_\mathfrak{k}'}|\hat{h}_{j_1j_1'\cdots j_\mathfrak{k}j_\mathfrak{k}'}-{h}_{j_1j_1'\cdots j_\mathfrak{k}j_\mathfrak{k}'}|\leq \epsilon\text{ with probability at least }1-\delta.
\end{equation}

\subsubsection{Experimental setup}
\label{sec:few_mode_experimental_setup}

The experimental setup is similar to the one described in Section~\ref{sec:experimental_setup}.
There are two differences: On each mode $q$ we apply dissipation with jump operators $b_q (b_q-\alpha_q)$ and $b_q^{r}(b_q-\alpha_q)$, where {$r=\lceil d/2\rceil-1$}, so that the dynamics is stabilized within a $2^\mathfrak{k}$ dimensional subspace spanned by $\bigotimes_{q=1}^\mathfrak{k}\{\ket{0},\ket{\alpha_q}\}$. Also, instead of applying the controlled displacement on one bosonic mode we apply it on $\mathfrak{k}$ modes, and the displacement is described by a vector $\alpha=(\alpha_1,\cdots,\alpha_\mathfrak{k})$. The experiment allows us to measure $X$ and $Y$ observables whose expectation values are approximately $\cos(\braket{\alpha|H|\alpha}t)$ and $\sin(\braket{\alpha|H|\alpha}t)$ respectively. We will see that by analyzing the effective Hamiltonian, with the same robust frequency estimation procedure described in Section~\ref{sec:robust_frequency_estimation}, we can again estimate the value of $\braket{\alpha|H|\alpha}$ with Heisenberg-limited scaling.

\subsubsection{The effective Hamiltonian}

We let $|\alpha_q|=A_q$. and throughout this section we assume $A_q\geq 1$.
Similar to the single-mode case, we introduce
\begin{equation}
    \label{eq:single_mode_basis_q}
    \ket{\Psi_0^q} = \ket{0},\quad \ket{\Psi_1^q} = \frac{\ket{\alpha_q}-\ket{0}\braket{0|\alpha_q}}{\sqrt{1-e^{-A_q^2}}}.
\end{equation}
Then we define, for a bitstring $b=(b_1,\cdots,b_\mathfrak{k})\in\{0,1\}^\mathfrak{k}$,
\begin{equation}
    \label{eq:multi_mode_basis}
    \ket{\Psi_b} = \bigotimes_{q=1}^\mathfrak{k} \ket{\Psi_{b_q}^q}.
\end{equation}
We therefore have a basis for the subspace stabilized by the dissipation. The effective Hamiltonian is
\begin{equation}
    H^{\mathrm{proj}} = \Pi H\Pi,
\end{equation}
where
\begin{equation}
    \Pi=\sum_{b\in\{0,1\}^\mathfrak{k}}  \ket{\Psi_b}\bra{\Psi_b}.
\end{equation}
We want to show that this effective Hamiltonian can be well-approximated by 
\begin{equation}
    \widetilde{H}^{\mathrm{proj}} = \sum_{b\in\{0,1\}^\mathfrak{k}}  \ket{\Psi_b}\bra{\Psi_b}H\ket{\Psi_b}\bra{\Psi_b}.
\end{equation}
In other words, we want to show that the off-diagonal elements can be ignored. This can be done by computing an upper bound for the off-diagonal elements. We first describe the effect of the annihilation operator on $\ket{\Psi_1^q}$:
\begin{equation}
    \label{eq:effect_annihilation_op}
    b_q^{j}\ket{\Psi_1^q} = 
    \begin{cases}
        \ket{\Psi_1^q},&\text{if } j=0\\
        \frac{\alpha_q^j\ket{\alpha_q}}{\sqrt{1-e^{-A_q^2}}},&\text{if } j\geq 1\,.
    \end{cases}
\end{equation}
Using this we can compute $\braket{\Psi_0^q|(b_q^\dag)^{j}b_q^{j'}|\Psi_1^q}$. Note that if $j\neq 0$ then $\braket{\Psi_0^q|(b_q^\dag)^{j}b_q^{j'}|\Psi_1^q}=0$. Therefore we only need to focus on the case with $j=0$, and using \eqref{eq:effect_annihilation_op} we have
\begin{equation}
    \label{eq:different_element_overlap_single_mode_component}
    |\braket{\Psi_0^q|(b_q^\dag)^{j}b_q^{j'}|\Psi_1^q}| = \Or(A_q^{j'}e^{-A_q^2}).
\end{equation}
Similarly we can bound $\braket{\Psi_1^q|(b_q^\dag)^{j}b_q^{j'}|\Psi_1^q}$ by examining all cases of $j=0$ or $j\neq 0$ and $j'=0$ or $j'\neq 0$. We can obtain a bound 
\begin{equation}
    \label{eq:identical_element_overlap_single_mode_component}
    |\braket{\Psi_1^q|(b_q^\dag)^{j}b_q^{j'}|\Psi_1^q}| = \Or(A_q^{j+j'}).
\end{equation}
We also know that $|\braket{\Psi_0^q|(b_q^\dag)^{j}b_q^{j'}|\Psi_0^q}|\leq 1$.
Based on these observations, we have, for any $b\neq b'$ 

\begin{equation}
    \braket{\Psi_b|(b_1^\dag)^{j_1}b_1^{j_1'}\cdots (b_\mathfrak{k}^\dag)^{j_\mathfrak{k}}b_\mathfrak{k}^{j_\mathfrak{k}'}|\Psi_{b'}} = \prod_{q=1}^\mathfrak{k} \braket{\Psi^q_{b_q}|(b_q^\dag)^{j_q}b_q^{j_q'}|\Psi^q_{b'_q}} = \Or\left(e^{-\min\{A_q^2\}}\prod_{q}A_q^{j_q+j_q'}\right).
\end{equation}
Therefore
\begin{equation}
\label{eq:off_diag_entries_bound}
     |\braket{\Psi_b|H|\Psi_{b'}}|=\Or\left(e^{-\min\{A_q^2\}}\prod_{q}(A_q+1)^{d_q}\right),
\end{equation}
which gives us 
\begin{equation}
\label{eq:err_removing_orthogonal_entries_multi_mode}
    \|H^{\mathrm{proj}}-\widetilde{H}^{\mathrm{proj}}\|_\infty=\Or\left(4^\mathfrak{k}e^{-\min\{A_q^2\}}\prod_{q}(A_q+1)^{d_q}\right).
\end{equation}

We then analyze the effect of the non-orthogonality of coherent states on the phase estimation experiment. Note that
\begin{equation}
    \widetilde{H}^{\mathrm{proj}}\ket{\alpha} = \sum_{b\in\{0,1\}^b} \ket{\Psi_b}\braket{\Psi_b|H|\Psi_b}\braket{\Psi_b|\alpha}.
\end{equation}
Similar to \eqref{eq:off_diag_entries_bound} we can derive a bound
\begin{equation}
    |\braket{\Psi_b|H|\Psi_b}| = \Or\left(\prod_{q}(A_q+1)^{d_q}\right).
\end{equation}
Also observe that
\begin{equation}
    |\braket{\Psi_b|\alpha}| = \Or(e^{-\min\{A_q^2\}})
\end{equation}
if $b\neq (1,1,\cdots,1)$, we have
\begin{equation}
    \widetilde{H}^{\mathrm{proj}}\ket{\alpha} = \braket{\alpha|H|\alpha}\ket{\alpha} + \Or\left(2^{\mathfrak{k}}e^{-\min\{A_q^2\}}\prod_{q}(A_q+1)^{d_q}\right),
\end{equation}
where we have also used
\begin{equation}
    |\braket{\alpha|H|\alpha}-\braket{\Psi_{(1,1,\cdots,1)}|H|\Psi_{(1,1,\cdots,1)}}|= \Or\left(\mathfrak{k}e^{-\min\{A_q^2\}}\prod_{q}(A_q+1)^{d_q}\right).
\end{equation}
{The latter follows in turn from $\|H\ket{\Psi_{(1,1,\cdots,1)}}\| = \Or\left(\prod_{q}(A_q+1)^{d_q}\right)$, $\|H\ket{\alpha}\| = \Or\left(\prod_{q}(A_q+1)^{d_q}\right)$ (again similar to \eqref{eq:off_diag_entries_bound}) and $\|\ket{\alpha} - \ket{\Psi_{(1,1,\cdots,1)}}\| = \Or(\mathfrak{k} e^{-\min\{A_q^2\}})$ by \eqref{eq:distance_bw_Psi1_and_alpha}.}

Combining the above with \eqref{eq:err_removing_orthogonal_entries_multi_mode}, we have
\begin{equation}
    H^{\mathrm{proj}}\ket{\alpha} = \braket{\alpha|H|\alpha}\ket{\alpha} + \Or\left(4^\mathfrak{k} e^{-\min\{A_q^2\}}\prod_{q}(A_q+1)^{d_q}\right).
\end{equation}
From this we have
\begin{equation}
\label{eq:deviation_from_alpha_expectation_few_mode}
    \|(e^{-iH^{\mathrm{proj}}t}-e^{-i\braket{\alpha|H|\alpha}t})\ket{\alpha}\| = \Or\left(4^\mathfrak{k} e^{-\min\{A_q^2\}}\prod_{q}(A_q+1)^{d_q}t\right).
\end{equation}

\subsubsection{Recovering the coefficients}
 Next, we briefly discuss how to recover the coefficients from $\braket{\alpha|H|\alpha}$ for different values of $\alpha$, and postpone the proofs of the results used in this appendix to Appendix \ref{sec:polyregression}. Let $\alpha_q = A_q e^{i\theta_q}$ for $q=1,2,\cdots,\mathfrak{k}$. Then 
\begin{equation}
\label{eq:ham_expect_val_decompose}
    \braket{\alpha|H|\alpha} = \sum_{l_1=0}^{d_1}\cdots\sum_{l_k=0}^{d_\mathfrak{k}} h_{l_1\cdots l_\mathfrak{k}}(\theta)A_1^{l_1}\cdots A_\mathfrak{k}^{l_\mathfrak{k}},
\end{equation}
where $\theta=(\theta_1,\cdots,\theta_\mathfrak{k})$, and
\begin{equation}
\label{eq:ham_expect_val_poly_coef}
    h_{l_1\cdots l_\mathfrak{k}}(\theta) = \sum_{j_1+j_1'=l_1}\cdots\sum_{j_\mathfrak{k}+j_\mathfrak{k}'=l_\mathfrak{k}} h_{j_1j_1'\cdots j_\mathfrak{k}j_\mathfrak{k}'}e^{-i\theta(j_1-j_1')}\cdots e^{-i\theta(j_\mathfrak{k}-j_\mathfrak{k}')}.
\end{equation}
From the above we can see that $\braket{\alpha|H|\alpha}$ is a polynomial of $A_1,A_2,\cdots,A_q$, whose coefficients $h_{l_1\cdots l_\mathfrak{k}}(\theta)$ are Fourier sums of the angles $\theta_1,\cdots,\theta_\mathfrak{k}$. Therefore we first use polynomial interpolation to estimate the polynomial coefficients $h_{l_1\cdots l_\mathfrak{k}}(\theta)$ for any given $\theta$, and from that estimate the Hamiltonian coefficients $h_{j_1j_1'\cdots j_\mathfrak{k}j_\mathfrak{k}'}$ through a Fourier transform. 

More concretely, for a given $\theta$ we denote $p_\theta(A)=\braket{\alpha|H|\alpha}$, where $A=(A_1,\cdots,A_\mathfrak{k})$, and $\alpha=(A_1 e^{i\theta_1},\cdots,A_\mathfrak{k} e^{i\theta_\mathfrak{k}})$. $p_\theta(A)$ is therefore a polynomial of $A$ where the degrees for $A_1,\cdots,A_\mathfrak{k}$ are $d_1,\cdots,d_\mathfrak{k}$ respectively. We define a set of $(d_1+1)(d_2+1)\cdots (d_\mathfrak{k}+1)$ nodes $A_{\mu_1,\mu_2,\cdots,\mu_\mathfrak{k}}=(A^1_{\mu_1},A^2_{\mu_2},\cdots,A^\mathfrak{k}_{\mu_\mathfrak{k}})$, where each $A^{q}_{\mu_q}$ is 
\begin{equation}
    \label{eq:choose_A_few_mode}
    A^q_{\mu_q} = \frac{a_q+b_1}{2} + \frac{b_q-a_q}{2}\cos\left(\frac{(2\mu_q-1)\pi}{2(d_q+1)}\right),
\end{equation}
as defined in Appendix \ref{sec:polyregression}'s equation \eqref{eq:quadrature_pts_ab_multivariate} with some change of notation, on a hypercube $[a_1,b_1]\times\cdots\times[a_\mathfrak{k},b_\mathfrak{k}]$, so that the coefficients of $p_\theta(A)$ can be uniquely determined from $\{p_{\theta}(A_{\mu_1,\mu_2,\cdots,\mu_\mathfrak{k}})\}$. From experiments, we can generate estimates $\{\hat{p}_{\theta,\mu_1,\mu_2,\cdots,\mu_\mathfrak{k}}\}$ satisfying 
\begin{equation}
    |p_{\theta}(A_{\mu_1,\mu_2,\cdots,\mu_\mathfrak{k}})-\hat{p}_{\theta,\mu_1,\mu_2,\cdots,\mu_\mathfrak{k}}|\leq \epsilon_1
\end{equation}
for all $\theta$ and $(\mu_1,\dots,\mu_\mathfrak{k})$ and with high probability. Lemma~\ref{lem:bounding_coef_with_chebyshev_nodes_multivariate} in Appendix \ref{sec:polyregression} then tells us that as long as
\begin{equation}
\label{eq:interval_extrapolation_condition}
    b_q-a_q\geq 2,b_q\geq 2 a_q>0,
\end{equation}
for all $q=1,2,\cdots,\mathfrak{k}$, we can guarantee that the recovered coefficients $\hat{h}_{\theta,l_1\cdots l_\mathfrak{k}}$ are also accurate
\begin{equation}
    |h_{l_1\cdots l_\mathfrak{k}}(\theta)-\hat{h}_{\theta,l_1\cdots l_\mathfrak{k}}|\leq C_{d_1,d_2,\cdots,d_\mathfrak{k}} \epsilon_1,
\end{equation}
where $C_{d_1,d_2,\cdots,d_\mathfrak{k}}$ is a constant that depends only on $d_1,d_2,\cdots,d_\mathfrak{k}$. 
In estimating $h_{l_1\cdots l_\mathfrak{k}}(\theta)$ we are essentially estimating the high-order derivatives of $p_\theta(A)$ around $A=0$. Because we cannot obtain the value of $p_\theta(A)$ for small $\|A\|$ (the states $\ket{\alpha}$ and $\ket{0}$ would have large overlap in this case), the above procedure is essentially an extrapolation from the hypercube $[a_1,b_1]\times\cdots\times[a_\mathfrak{k},b_\mathfrak{k}]$ to $0$. This is the reason why we require \eqref{eq:interval_extrapolation_condition} so that the distance we extrapolate is not too large compared to the size of the interval in which we have the needed information $p_\theta(A)$.

From the above we can see that we can estimate each $h_{j_1j_1'\cdots j_kj_k'}$ to precision $\epsilon$ if we can estimate each $p_{\theta_u}(A_{i_1,i_2,\cdots,i_\mathfrak{k}})$ to precision $\Or(\epsilon)$, where
\begin{equation}
\label{eq:angles_fourier}
    \theta_u = (\theta_{u_1}^1,\theta_{u_2}^2,\cdots,\theta_{u_\mathfrak{k}}^\mathfrak{k})= \left(\frac{u_1 \pi}{l_1+1},\frac{u_2 \pi}{l_2+1},\cdots,\frac{u_\mathfrak{k} \pi}{l_\mathfrak{k}+1}\right)
\end{equation}
is chosen so that we can recover the coefficients $h_{j_1j_1'\cdots j_\mathfrak{k}j_\mathfrak{k}'}$ using the Fourier transform. 
Each $p_\theta(A)$ is obtained through the phase estimation experiment introduced in Section~\ref{sec:few_mode_experimental_setup}, and can be estimated to precision $\epsilon_1$ with $\Or(1/\epsilon_1)$ total evolution time when $A_q=\Omega(\sqrt{\log(1/\epsilon_1)})$ using \eqref{eq:deviation_from_alpha_expectation_few_mode} and the robust frequency estimation procedure in Section~\ref{sec:robust_frequency_estimation}.
We therefore have the following theorem:
\begin{thm}
    \label{thm:estimating_few_mode}
    Running the experiments in Section~\ref{sec:few_mode_experimental_setup} with dissipation strength of order $\gamma=\Or(\epsilon^{-1}\log^{2d+1/2}(1/\epsilon))$ and measuring the $X$ and $Y$ observables on the ancilla qubit as also discussed in Section~\ref{sec:few_mode_experimental_setup}, with 
    \[
    \alpha = (A_{i_1}^1 e^{i\theta^1_{u_1}},A_{i_2}^2 e^{i\theta^2_{u_2}},\cdots,A_{i_\mathfrak{k}}^\mathfrak{k} e^{i\theta^\mathfrak{k}_{u_\mathfrak{k}}}),
    \]
    where $A_{i_\nu}^\nu$ is defined in \eqref{eq:choose_A_few_mode}, $\theta_{u_\nu}^\nu$ is defined in \eqref{eq:angles_fourier}, with $a_\nu,b_\nu$ satisfying \eqref{eq:interval_extrapolation_condition} and $a_\nu=\Omega(\sqrt{\log(1/\epsilon_1)})$, we can estimate all coefficients $h_{j_1j_1'\cdots j_\mathfrak{k}j_\mathfrak{k}'}$ to precision $\epsilon$ with probability at least $1-\delta$ with 
    \begin{align*}
    \Or((1/\epsilon)\log(1/\delta)) ~~\text{total evolution time and }~~~~\Or(\log^2(\log(1/\epsilon)/\epsilon)\log(1/\delta))~~\text{experiments\,.}
    \end{align*}
\end{thm}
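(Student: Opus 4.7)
The plan is to combine the ingredients already assembled in Sections \ref{sec:main-learning} and \ref{sec:learning_few_mode_ham} into a single chain of error bounds, then invoke a union bound across all $\theta_u$, all nodes $A_{\mu_1,\dots,\mu_\mathfrak{k}}$, and all phase-estimation repetitions. Concretely, I would set a target single-expectation precision $\epsilon_1 = \Theta(\epsilon)$ (absorbing the constant $C_{d_1,\dots,d_\mathfrak{k}}$ produced by Lemma~\ref{lem:bounding_coef_with_chebyshev_nodes_multivariate} together with the inverse Fourier transform factor $1/(l_q+1)$) and aim to produce estimates $\hat p_{\theta_u,\mu_1,\dots,\mu_\mathfrak{k}}$ of $\braket{\alpha|H|\alpha}$ with $|\hat p - \braket{\alpha|H|\alpha}|\le \epsilon_1$ for every one of the $\Or(d_1 d_2\cdots d_\mathfrak{k}\cdot (d_1+1)\cdots(d_\mathfrak{k}+1))=\Or(1)$ relevant parameter choices, each with failure probability at most $\delta/\Or(1)$.

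For a single expectation value, I would first argue that the measured $X$ and $Y$ statistics faithfully approximate $\cos(\braket{\alpha|H|\alpha}t)$ and $\sin(\braket{\alpha|H|\alpha}t)$ with a small bias. This uses two approximation steps: (i) Proposition~\ref{prop:few-modes-adiabatic-limit} with $\gamma = \Or(\epsilon^{-1}\log^{2d+1/2}(1/\epsilon))$ bounds the trace-distance error between the dissipative dynamics and $e^{-it[H^{\mathrm{proj}},\cdot]}$ by $(tC+C')/\gamma$, which is $\Or(1)$ for $t=\Or(1/\epsilon)$ once $a_\nu = \Omega(\sqrt{\log(1/\epsilon_1)})$ controls the $\|\alpha\|_\infty^{4d+2}$ factor; (ii) the effective-Hamiltonian estimate \eqref{eq:deviation_from_alpha_expectation_few_mode} shows that $e^{-iH^{\mathrm{proj}}t}\ket\alpha$ deviates from $e^{-i\braket{\alpha|H|\alpha}t}\ket\alpha$ by $\Or(4^\mathfrak{k} e^{-\min_q A_q^2}\prod_q(A_q+1)^{d_q}\, t)$, which is also a small constant because of the exponential factor given the choice of $a_\nu$. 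Together, a constant fraction of samples land within $1/\sqrt 8$ of $\cos(\braket{\alpha|H|\alpha}t)$ and $\sin(\braket{\alpha|H|\alpha}t)$, so the hypotheses of Theorem~\ref{thm:robust_frequency_estimation} are met. Since $|\braket{\alpha|H|\alpha}| = \Or(\prod_q (A_q+1)^{d_q}) = \Or(\mathrm{polylog}(1/\epsilon))$, the robust frequency estimation returns $\hat p$ with $\sqrt{\mathbb E|\hat p - \braket{\alpha|H|\alpha}|^2}\le \epsilon_1$ using $\Or(1/\epsilon_1)$ total evolution time and $\Or(\log^2(\log(1/\epsilon_1)/\epsilon_1))$ experiments, from which Markov's inequality and $\Or(\log(1/\delta))$ majority-vote repetitions boost this to a deterministic-style bound with probability $1-\delta/\Or(1)$.

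Having $\epsilon_1$-accurate estimates of $\braket{\alpha|H|\alpha}$ at all Chebyshev nodes $A_{\mu_1,\dots,\mu_\mathfrak{k}}$ with angle $\theta_u$ fixed, I apply Lemma~\ref{lem:bounding_coef_with_chebyshev_nodes_multivariate}: under the condition $b_\nu - a_\nu \ge 2$ and $b_\nu \ge 2 a_\nu > 0$ from \eqref{eq:interval_extrapolation_condition}, the recovered polynomial coefficients $\hat h_{\theta_u,l_1\dots l_\mathfrak{k}}$ satisfy $|h_{l_1\dots l_\mathfrak{k}}(\theta_u) - \hat h_{\theta_u,l_1\dots l_\mathfrak{k}}|\le C_{d_1,\dots,d_\mathfrak{k}}\epsilon_1$. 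Then, choosing the angles $\theta_u$ as in \eqref{eq:angles_fourier}, the map from Hamiltonian coefficients $h_{j_1 j_1'\dots j_\mathfrak{k} j_\mathfrak{k}'}$ to $h_{l_1\dots l_\mathfrak{k}}(\theta_u)$ is a multidimensional discrete Fourier transform (as in the single-mode computation \eqref{eq:fourier_transform_for_coefs}), whose inverse is a normalized sum and therefore only amplifies the $\epsilon_1$ error by a constant factor depending on $d_1,\dots,d_\mathfrak{k}$. Absorbing all these constants into the choice $\epsilon_1 = \Theta(\epsilon)$ yields $|\hat h_{j_1 j_1'\dots j_\mathfrak{k} j_\mathfrak{k}'} - h_{j_1 j_1'\dots j_\mathfrak{k} j_\mathfrak{k}'}|\le \epsilon$ coefficient by coefficient.

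The main obstacle I anticipate is to verify that all the tolerances compose consistently: the dissipation-induced $\Or(tC/\gamma + C'/\gamma)$ error, the finite-$|\alpha|$ bias of size $\Or(4^\mathfrak{k} e^{-\min A_q^2}\prod_q(A_q+1)^{d_q}t)$, and the polynomial-interpolation amplification $C_{d_1,\dots,d_\mathfrak{k}}$ must all be absorbed below the $1/\sqrt 8$ threshold required by the robust frequency estimation hypothesis \eqref{eq:rv_correctness_condition_prob} simultaneously for all $\Or(1)$ values of $(\mu,u)$ and for evolution times up to $\Or(1/\epsilon)$. This is exactly what forces $\gamma = \Or(\epsilon^{-1}\log^{2d+1/2}(1/\epsilon))$ and $a_\nu = \Omega(\sqrt{\log(1/\epsilon)})$; a careful bookkeeping of how the $\|\alpha\|_\infty$-dependent constants in \eqref{eq:few-mode-constants-final} balance against the exponential suppression $e^{-\min A_q^2}$ is the nontrivial part. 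The resource counts then follow directly: $\Or(1)$ expectation-value estimations, each using $\Or(\epsilon^{-1}\log(1/\delta))$ evolution time and $\Or(\log^2(\log(1/\epsilon)/\epsilon)\log(1/\delta))$ experiments after Chernoff amplification, give the claimed scalings.
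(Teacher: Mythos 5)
Your proposal is correct and follows essentially the same route as the paper: the paper's proof of Theorem~\ref{thm:estimating_few_mode} is exactly the assembly of Proposition~\ref{prop:few-modes-adiabatic-limit} (dissipation strength), the bias bound \eqref{eq:deviation_from_alpha_expectation_few_mode}, Theorem~\ref{thm:robust_frequency_estimation} with median/Chernoff amplification, Lemma~\ref{lem:bounding_coef_with_chebyshev_nodes_multivariate}, and the Fourier inversion over the angles, with a union bound over the $\Or(1)$ parameter choices. The only slip is the phrasing that $a_\nu=\Omega(\sqrt{\log(1/\epsilon_1)})$ ``controls'' the $\|\alpha\|_\infty^{4d+2}$ factor --- that lower bound is what suppresses the $e^{-\min_q A_q^2}$ bias, while keeping $a_\nu=\Theta(\sqrt{\log(1/\epsilon)})$ is what keeps $\|\alpha\|_\infty^{4d+2}$ polylogarithmic so that $\gamma$ can absorb it --- but this does not affect the argument.
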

In the above the dissipation strength is obtained through Proposition~\ref{prop:few-modes-adiabatic-limit}.

\subsection{Scalable learning of a multi-mode Hamiltonian}
\label{sec:learn_multimode}

We now consider a system consisting of $m$ modes. Recall that the Hamiltonian is $H=\sum_a E_a$ in \eqref{eq:main-bosonic-H}, where each term $E_a$ has a different support.
We assume that each Hamiltonian term acts on at most $\mathfrak{k}=\Or(1)$ modes, and each term $E_a$ overlaps with at most $\mathfrak{d}$ other terms.
As discussed in Section~\ref{subsec:decoupling-mode-adiabatic-thm}, 
We introduce a set $\mathcal{V}$ of all interacting clusters as defined in Definition~\ref{defn:interacting_cluster}. They form a cluster interaction graph as defined in Definition~\ref{def:interaction-graph}.
By coloring the interaction graph as in Section~\ref{subsec:decoupling-mode-adiabatic-thm}, $\mathcal{V}$ can be divided into a disjoint union 
$
\mathcal{V}=\bigsqcup_{c\in\chi}\mathcal{V}_c\,,
$
where no two adjacent vertices are in the same $\mathcal{V}_c$. In other words, for any $C$ and $C'$ in $\mathcal{V}_c$, $C\cap C'=\emptyset $, and for any $C''\in \mathcal{V}$, either $C\cap C''=\emptyset$ or $C'\cap C''=\emptyset$. 
Next, we fix a color $c$ and apply the dissipation $\mathcal{L}^c_{\mathrm{dec}}$ described in \eqref{eq:coloring}. This has the effect of resulting in a decoupled effective Hamiltonian
\begin{equation}
    H_{\mathrm{eff}}  = \sum_{C\in \mathcal{V}_c} H_C.
\end{equation}
Each $H_C$ is supported only on $C$, which contains at most $\mathfrak{k}$ bosonic modes. Moreover, these $C\in \mathcal{V}_c$ do not overlap with each other due to the coloring. Consequently we can learn each $H_C$ using the few-mode learning protocol described in Section~\ref{sec:learning_few_mode_ham} in parallel, for which Theorem~\ref{thm:estimating_few_mode} directly tells us the total evolution time and number of experiments needed. The dissipation strength $\gamma$ needed can be determined from Proposition~\ref{prop:decoupling-adiabatic-limit}.

\begin{thm}
    \label{thm:estimating_multi_mode}
    Applying dissipation $\mathcal{L}^c_{\mathrm{dec}}$ for each $c\in [\chi]$ and 
    running the experiments in Section~\ref{sec:few_mode_experimental_setup} for each cluster $C\in \mathcal{V}_c$ in parallel, and measuring the $X$ and $Y$ observables on the ancilla qubit as also discussed in Section~\ref{sec:few_mode_experimental_setup}, with dissipation strength $\gamma=\Or(m^2\epsilon^{-1}\log^{2d+1/2}(1/\epsilon))$ and other parameters chosen according to Theorem~\ref{thm:estimating_few_mode}, we can estimate all coefficients of $H$ to precision $\epsilon$ with probability at least $1-\delta$
\begin{align*}
\Or((1/\epsilon)\log(m/\delta)) ~~\text{total evolution time and}~~~~ \Or(\log^2(\log(1/\epsilon)/\epsilon)\log(m/\delta))~~\text{ experiments\,.}
\end{align*}
\end{thm}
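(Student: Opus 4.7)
The plan is to bootstrap Theorem \ref{thm:estimating_few_mode} (the few-mode result) to the global setting by exploiting the coloring $\mathcal{V}=\bigsqcup_{c=1}^\chi \mathcal{V}_c$ of the cluster interaction graph $\mathcal{G}$. Since $H$ is low-intersection, $\operatorname{deg}(\mathcal{G})\le \mathfrak{d}^2=\Or(1)$, so $\chi=\Or(1)$. The global protocol simply loops over the constant number of colors; within a single color $c$, the clusters $C\in\mathcal{V}_c$ are pairwise non-adjacent in $\mathcal{G}$, so the effective Hamiltonian after applying $\mathcal{L}^c_{\mathrm{dec}}+\mathcal{L}^c_{\mathrm{proj}}$ decomposes as a sum of local Hamiltonians on disjoint supports that act on fresh ancillas, and can therefore be probed in parallel.

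First, I would invoke Proposition \ref{prop:decoupling-adiabatic-limit} to control the adiabatic error. For a fixed color $c$, choosing the projection $P=\bigotimes_\ell P_\ell$ with $P_\ell$ the projector onto $\operatorname{span}\{\ket{0},\ket{\alpha_\ell}\}$ for $\ell\in\mathcal{A}_c$ and onto $\operatorname{span}\{\ket{0}\}$ otherwise, the proposition yields $\|e^{-it\mathcal{H}+\gamma t \mathcal{L}}(\rho)-e^{-it[PHP,P\cdot P]}(\rho)\|_1\le (tC+C')/\gamma$ with $C=(m+M)MC_{d,\mathfrak{k}}(\|\alpha\|_\infty^{4d+2}+1)$. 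Using $M=\Or(m)$ from the low-intersection assumption and $\|\alpha\|_\infty=\Or(\sqrt{\log(1/\epsilon)})$ from \eqref{eq:choice_of_a}, together with the requirement $t=\Or(1/\epsilon)$ inherited from the robust frequency estimation in Theorem \ref{thm:robust_frequency_estimation}, forcing the deviation to be a sufficiently small constant gives exactly the advertised dissipation strength $\gamma=\Or(m^2\epsilon^{-1}\log^{2d+1/2}(1/\epsilon))$. Moreover, because $P$ projects onto the vacuum on every mode outside $\mathcal{A}_c$, the effective Hamiltonian $PHP$ kills every interaction term $E_a$ whose support has any mode in $[m]\setminus\mathcal{A}_c$, leaving $PHP=\sum_{C\in\mathcal{V}_c}PH_CP$, a non-interacting sum over the clusters of color $c$.

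Next, I would run the few-mode protocol of Section \ref{sec:learning_few_mode_ham} in parallel on all $C\in\mathcal{V}_c$: each cluster has its own ancilla qubit, its own controlled displacement $D(\alpha_C)$, and its own $X/Y$ measurement. Because the $C$'s are disjoint and $PHP$ factorizes across them, the measurement statistics on cluster $C$ depend only on the local coefficients $\{h^{(a)}_{\mathbf{j},\mathbf{j}'}\}_{a:\,\operatorname{supp}(E_a)\subseteq C}$. Theorem \ref{thm:estimating_few_mode} then provides estimators with precision $\epsilon$ at failure probability $\delta'$ using $\Or(\epsilon^{-1}\log(1/\delta'))$ evolution time and $\Or(\log^2(\log(1/\epsilon)/\epsilon)\log(1/\delta'))$ experiments per cluster. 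Iterating over the constant number of colors $\chi$ sweeps over every interacting cluster in $\mathcal{V}$, and hence over every coefficient of $H$.

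Finally, a union bound over all $|\mathcal{V}|\le M=\Or(m)$ clusters requires $\delta'=\delta/|\mathcal{V}|$, which contributes a single $\log(m/\delta)$ factor to both the total evolution time and the number of experiments. Since the $\chi$ colors are processed sequentially and $\chi=\Or(1)$, this overhead is absorbed into the big-$\Or$. Combining the adiabatic error bound with the statistical error of the robust frequency estimation, and ensuring both are at most $\Or(\epsilon)$, yields the stated guarantees. The main subtlety—and the only place where unbounded operators intrude—is verifying that the adiabatic estimate of Proposition \ref{prop:decoupling-adiabatic-limit} remains valid during the interleaved $D(\pm\alpha)$ gates and the ancilla-controlled evolution; this follows because the controlled displacement only maps coherent states to coherent states within the domain assumptions already imposed in conditions $(i)$–$(v)$ of Theorem \ref{thm:general-adiabatic-limit}, so no extra work beyond Proposition \ref{prop:decoupling-adiabatic-limit} is needed.
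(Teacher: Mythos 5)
Your proposal is correct and follows essentially the same route as the paper: color the cluster interaction graph, use $\mathcal{L}^c_{\mathrm{dec}}$ together with Proposition~\ref{prop:decoupling-adiabatic-limit} to reduce $PHP$ to a non-interacting sum $\sum_{C\in\mathcal{V}_c}H_C$, learn each cluster in parallel via Theorem~\ref{thm:estimating_few_mode}, and absorb the union bound over the $\Or(m)$ clusters into the $\log(m/\delta)$ factors. The paper's own argument is no more detailed than yours; if anything, you spell out the derivation of the $m^2$ factor in $\gamma$ from $C=(m+M)M C_{d,\mathfrak{k}}(\|\alpha\|_\infty^{4d+2}+1)$ more explicitly than the paper does.
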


We remark that the $m$ in the logarithmic factors comes from taking a union bound to make sure that the estimation is accurate for all coefficients. Moreover, we use the fact that $M=\Or(m)$ in a low-intersection Hamiltonian to replace every $M$ with $m$ in the total evolution time and the number of experiments.

\section{Relative boundedness by the (modified) photon dissipation}\label{sec:realtive-boundedness}
    In this appendix, we prove explicit relative boundedness assumptions of a given Hamiltonian by variations of the photon dissipation. The strategy is to first upper bound the Hamiltonian by a product of number operators, then further by the sum of local number operators and finally each term by a modified on-site photon dissipations. First, we start with the upper bound on the Hamiltonian defined in Definition \ref{def-main:bosonic-H}, i.e. 
    \begin{align*}
        H=\sum_{a=1}^M \ E_a\,,\qquad\text{where}\qquad E_a= \sum_{\substack{\mathbf{j},\mathbf{j}' \in \mathbb{N}^{ \mathfrak{k}}\\ \|\mathbf{j}+\mathbf{j}'\|_1\le d}}h^{(a)}_{\mathbf{j},\mathbf{j'}} (\mathbf{b}^\dagger)^{\mathbf{j}}\,\mathbf{b}^{\mathbf{j}'}\,.
    \end{align*}
    We assume that at least either $\mathbf{j}\neq0$ or $\mathbf{j}'\neq0$, $|h^{(a)}_{\mathbf{j},\mathbf{j}'}|\le 1$ and $\overline{(h^{(a)})_{\mathbf{j},\mathbf{j}'}}=h^{(a)}_{\mathbf{j}',\,\mathbf{j}}$. Moreover, it is assumed that each $E_a$ acts on at most $ \mathfrak{k}=\mathcal{O}(1)$ modes, and for each $a$, $E_a$ overlaps with at most $ \mathfrak{d}=\mathcal{O}(1)$ other $E_{\tilde{a}}$'s.
    \begin{lem}\label{lem:rel-bounded-learning-H-N}
        For any $H$ given in Definition \ref{def-main:bosonic-H}, any state $|\psi\rangle\in \operatorname{dom}(\sum_{\ell\in\{1,...,m\}}N_\ell^{d/2})$ and $\delta\in(0,1)$
        \begin{equation*}
            \|H\ket{\psi}\|\leq M( \mathfrak{k}+d-1)^{2d}d!^{\mathfrak{k}/2-1}\max_{\ell\in\supp(H)}\|(N_\ell+I)^{d/2}\ket{\psi}\|\,.
        \end{equation*}
        with $\supp(H)=\bigcup_{a=1}^M\supp(E_A)$
    \end{lem}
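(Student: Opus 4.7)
The plan is to peel $H$ apart with the triangle inequality down to individual bosonic monomials $(\mathbf{b}^\dagger)^\mathbf{j}\mathbf{b}^{\mathbf{j}'}$, use the explicit action of such monomials on product Fock states to extract an elementary coefficient bound, and translate the resulting Fock-weight back into $\|(N_\ell+I)^{d/2}\ket{\psi}\|$ via a $\max\le\sum\le\mathfrak{k}\max$ argument. Expanding $\ket{\psi}=\sum_\mathbf{n}c_\mathbf{n}\ket{\mathbf{n}}$ in the product Fock basis, two triangle inequalities together with $|h^{(a)}_{\mathbf{j},\mathbf{j}'}|\le 1$ reduce the statement to
\begin{equation*}
\|H\ket{\psi}\|\le \sum_{a=1}^M\sum_{\mathbf{j},\mathbf{j}':\,\|\mathbf{j}+\mathbf{j}'\|_1\le d}\|(\mathbf{b}^\dagger)^\mathbf{j}\mathbf{b}^{\mathbf{j}'}\ket{\psi}\|,
\end{equation*}
so the core task is to control each monomial norm uniformly in $(\mathbf{j},\mathbf{j}')$.

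Because different modes commute, $(\mathbf{b}^\dagger)^\mathbf{j}\mathbf{b}^{\mathbf{j}'}\ket{\mathbf{n}}=c(\mathbf{n})\ket{\mathbf{n}+\mathbf{j}-\mathbf{j}'}$ with $|c(\mathbf{n})|^2=\prod_{\ell\in\supp(E_a)}\tfrac{n_\ell!}{(n_\ell-j_\ell')!}\cdot\tfrac{(n_\ell+j_\ell-j_\ell')!}{(n_\ell-j_\ell')!}$. I would bound each factor by $n_\ell^{j_\ell'}(n_\ell+j_\ell)^{j_\ell}\le (n_\ell+d)^{j_\ell+j_\ell'}$, then invoke the elementary inequality $n_\ell+d\le d(n_\ell+1)$ (valid for $d\ge 1$) to obtain the per-mode estimate $|c_\ell(n_\ell)|^2\le d^{j_\ell+j_\ell'}(n_\ell+1)^{j_\ell+j_\ell'}$. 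Multiplying over the at most $\mathfrak{k}$ modes in $\supp(E_a)$ and using $\sum_\ell(j_\ell+j_\ell')\le d$ gives
\begin{equation*}
|c(\mathbf{n})|^2\le d^d\prod_{\ell\in\supp(E_a)}(n_\ell+1)^{j_\ell+j_\ell'}\le d^d\bigl(\max_{\ell\in\supp(E_a)}(n_\ell+1)\bigr)^d.
\end{equation*}
Since the monomial sends distinct Fock vectors to distinct (still orthogonal) Fock vectors, $\|(\mathbf{b}^\dagger)^\mathbf{j}\mathbf{b}^{\mathbf{j}'}\ket{\psi}\|^2=\sum_\mathbf{n}|c_\mathbf{n}|^2|c(\mathbf{n})|^2$. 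Replacing the pointwise $\max_\ell$ by a sum over the $\le\mathfrak{k}$ modes in $\supp(E_a)$ and recognizing that $(N_\ell+I)^{d/2}$ is diagonal in the Fock basis produces
\begin{equation*}
\|(\mathbf{b}^\dagger)^\mathbf{j}\mathbf{b}^{\mathbf{j}'}\ket{\psi}\|\le d^{d/2}\sqrt{\mathfrak{k}}\,\max_{\ell\in\supp(H)}\|(N_\ell+I)^{d/2}\ket{\psi}\|.
\end{equation*}

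Plugging back into the double sum leaves a purely combinatorial task: the admissible pairs $(\mathbf{j},\mathbf{j}')\in\mathbb{N}^{\mathfrak{k}}\times\mathbb{N}^{\mathfrak{k}}$ with $\|\mathbf{j}+\mathbf{j}'\|_1\le d$ are at most $\binom{\mathfrak{k}+d-1}{d}^2\le (\mathfrak{k}+d-1)^{2d}$ in number, and combining with the $M$ interactions and the mode-independent prefactor $d^{d/2}\sqrt{\mathfrak{k}}$ yields a constant of the advertised shape. The main obstacle I anticipate is matching the precise factorial factor $d!^{\mathfrak{k}/2-1}$ in the target constant: the crude per-mode estimate above produces the excess prefactor $d^{d/2}\sqrt{\mathfrak{k}}$ rather than the factorial form. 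Reconciling the two should require replacing $n_\ell+d\le d(n_\ell+1)$ by the tighter $n_\ell+j_\ell+j_\ell'\le(j_\ell+j_\ell')(n_\ell+1)$ and then using a multinomial rearrangement $\prod_\ell(j_\ell+j_\ell')^{j_\ell+j_\ell'}\le d^d/\binom{d}{(j_\ell+j_\ell')_{\ell}}$, together with Stirling's formula, to trade powers of $d$ against factorials in a way compatible with the exponent $\mathfrak{k}/2-1$. The remaining algebra is routine bookkeeping.
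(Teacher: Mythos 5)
Your strategy coincides with the paper's: triangle inequality down to the individual monomials $(\mathbf{b}^\dagger)^{\mathbf{j}}\mathbf{b}^{\mathbf{j}'}$, a per-mode bound of $(b_\ell^\dagger)^{j_\ell}b_\ell^{j'_\ell}$ by a power of $N_\ell+I$, a reduction of the product over modes to a maximum over $\ell\in\operatorname{supp}(E_a)$, and a count of admissible $(\mathbf{j},\mathbf{j}')$. Every individual step you write is valid; in particular the observation that a fixed monomial maps distinct Fock vectors to distinct Fock vectors, which legitimizes the Pythagorean identity $\|(\mathbf{b}^\dagger)^{\mathbf{j}}\mathbf{b}^{\mathbf{j}'}\ket{\psi}\|^2=\sum_{\mathbf{n}}|c_{\mathbf{n}}|^2|c(\mathbf{n})|^2$, is correct.

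Two loose ends remain. First, your constant $M(\mathfrak{k}+d-1)^{2d}d^{d/2}\sqrt{\mathfrak{k}}$ does not imply the stated one: for $\mathfrak{k}=2$ the lemma claims $M(d+1)^{2d}$ (since $d!^{\mathfrak{k}/2-1}=1$), so as written you prove a strictly weaker inequality. Your proposed repair via $\prod_\ell(j_\ell+j'_\ell)^{j_\ell+j'_\ell}$ and Stirling is not the route that works; the clean fix is to sharpen the creation-operator factor to $\tfrac{(n_\ell+j_\ell-j'_\ell)!}{(n_\ell-j'_\ell)!}\le(n_\ell+1)\cdots(n_\ell+j_\ell)\le j_\ell!\,(n_\ell+1)^{j_\ell}$ (each ratio $(n_\ell+k)/(n_\ell+1)\le k$), keep $(n_\ell+1)^{j'_\ell}$ for the annihilation part, and collect $\prod_\ell\sqrt{j_\ell!}\le d!^{\mathfrak{k}/2}$; the residual $d!^{-1}$ then comes from counting the pairs as $\binom{\mathfrak{k}+d}{d}^2$ instead of $(\mathfrak{k}+d-1)^{2d}$. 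This is exactly the paper's derivation, which also replaces your max-to-sum step (source of the extra $\sqrt{\mathfrak{k}}$) by a weighted AM--GM applied spectrally to the commuting $N_\ell$. For the downstream use any $(d,\mathfrak{k})$-dependent constant suffices, so this is bookkeeping, but the lemma as literally stated is not what your argument delivers. Second, your term-by-term manipulation is only justified for finitely supported Fock vectors; to extend the inequality to all of $\operatorname{dom}(\sum_\ell N_\ell^{d/2})$ one still needs the closing core/density argument (the finite-particle subspace is a core for $(N_\ell+I)^{d/2}$), which the paper supplies explicitly and which you omit.
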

    We begin by proving the result on the following subspace
    \begin{equation*}
        \mathscr{H}_f =\operatorname{span}(\{\ket{n}\,|\,n\in\N^m\})
    \end{equation*}
    and then extend the inequality to all $|\psi\rangle \in \operatorname{dom}(\sum_{\ell \in \{1, \ldots, m\}} N_\ell^{d/2})$ by continuity of the norm.
    \begin{proof}[Proof of Lemma \ref{lem:rel-bounded-learning-H-N}]
        First, let $\ket{\psi}\in\mathscr{H}_f$ and apply the triangle inequality to reduce the problem to a fixed $a\in\{1,...,M\}$ and $\mathbf{j},\mathbf{j}' \in \mathbb{N}^{ \mathfrak{k}}$, i.e.
        \begin{equation*}
            \|H\ket{\psi}\|\leq \sum_{a=1}^{M}\sum_{\substack{\mathbf{j},\mathbf{j}' \in \mathbb{N}^{ \mathfrak{k}}\\ \|\mathbf{j}+\mathbf{j}'\|_1\le d}}1_{h^{(a)}_{\mathbf{j},\mathbf{j'}}\neq0} \|(\mathbf{b}^\dagger)^{\mathbf{j}}\,\mathbf{b}^{\mathbf{j}'}\ket{\psi}\|\,.
        \end{equation*}
        Then, the commutation relation shows in each mode for $\ket{\varphi}$ in an appropriate domain 
        \begin{equation*}
            \begin{aligned}
                \|(b_\ell^{j_\ell})^\dagger b_\ell^{j'_{\ell}}\ket{\varphi}\|_\infty^2&=\bra{\varphi}(b_\ell^{j'_{\ell}})^\dagger (N_\ell+I)\cdots (N_\ell +(1+j_\ell)I) b_\ell^{j'_{\ell}}\ket{\varphi}\\
                &=\bra{\varphi}N_\ell\cdots (N_\ell-j'_\ell I) (N_\ell+(1-j'_\ell)I)\cdots (N_\ell +(1+j_\ell-j'_\ell)I)\ket{\varphi}\\
                &\leq (j_\ell-j'_\ell)!\bra{\varphi}(N_\ell+I)^{j_\ell+j_\ell'}\ket{\varphi}\,,
            \end{aligned}
        \end{equation*}
        which shows
        \begin{equation*}
            \begin{aligned}
                \|(\mathbf{b}^\dagger)^{\mathbf{j}}\,\mathbf{b}^{\mathbf{j}'}\ket{\psi}\|&\leq \|\prod_{\ell\in\supp(E_a)} \sqrt{j_\ell!}(N_\ell+I)^{(j_\ell+j'_\ell)/2}\ket{\psi}\|\\
                &\leq d!^{\mathfrak{k}/2}\|\prod_{\ell\in\supp(E_a)}(N_\ell+I)^{(j_\ell+j'_\ell)/2}\ket{\psi}\|\,.
            \end{aligned}
        \end{equation*}
        Moreover, we can apply Young's inequality to show
        \begin{align*}
            \|\prod_{\ell\in\supp(E_a)}&(N_\ell+I)^{(j_\ell+j'_\ell)/2}\ket{\psi}\|\\
            &\leq\|\sum_{\ell\in\supp(E_a)}\frac{j_{\ell}+j'_{\ell}}{\sum_{\ell'\in\supp(E_a)}(j_{\ell'}+j'_{\ell'})}(N_\ell+I)^{\sum_{\ell\in\supp(E_a)}(j_\ell+j'_\ell)/2}\ket{\psi}\|\\
            &\leq\max_{\ell\in\supp(E_a)}\|(N_\ell+I)^{d/2}\ket{\psi}\|\,,
        \end{align*}
        where we applied triangle inequality in the last step. Combining the above inequalities show
        \begin{align*}
            \|H\ket{\psi}\|&\leq \sum_{a=1}^{M}\sum_{\substack{\mathbf{j},\mathbf{j}' \in \mathbb{N}^{ \mathfrak{k}}\\ \|\mathbf{j}+\mathbf{j}'\|_1\le d}}d!^{\mathfrak{k}/2}\max_{\ell\in\supp(E_a)}\|(N_\ell+I)^{d/2}\ket{\psi}\|\,\\
            &\leq M\binom{ \mathfrak{k}+d}{d}^2d!^{\mathfrak{k}/2}\max_{\ell\in\supp(H)}\|(N_\ell+I)^{d/2}\ket{\psi}\|\\
            &\leq M( \mathfrak{k}+d-1)^{2d}d!^{\mathfrak{k}/2-1}\max_{\ell\in\supp(H)}\|(N_\ell+I)^{d/2}\ket{\psi}\|\,,
        \end{align*}
        where we followed the same combinatorial argument as in Proposition \ref{prop:few-modes-adiabatic-limit} and $\supp(H)=\bigcup_{a=1}^M\supp(E_A)$. Since $\mathscr{H}_f$ is a core of the operator $(N_\ell+I)^{d/2}$ (see \cite[lem.~2.12]{Gondolfetal2023}), we complete the input space such that the inequality holds for all $\ket{\psi}\in\operatorname{dom}(\sum_{\ell\in\{1,...,m\}}N_\ell^{d/2})$, which finishes the proof.
    \end{proof}
    Next, we prove two relative bounds: First, how the photon dissipation relatively bounds the number operator to the power $r$ and second by the modified photon dissipation (see  \eqref{eq:def-modified-dissipation}):
    \begin{lem}\label{lem:rel-bounded-learning-N-photon-loss}
        For $r\in\N$, the photon dissipation $L=b^r-\alpha^r$ given by an $\alpha\in\C$, any $\delta\in(0,1)$ and any state $|\psi\rangle\in \operatorname{dom}(L^\dagger L)$, we show:
        \begin{equation*}
            \begin{aligned}
                \|(N+I)^r\ket{\psi}\|\leq\begin{cases}
                    \frac{1}{\sqrt{1-\delta}}\|L^\dagger L\ket{\psi}\|+\sqrt{\frac{c_r}{2r(1-\delta)}\Bigl(\frac{2r-1}{2r}\frac{c_r}{\delta}\Bigr)^{2r-1}}\|\ket{\psi}\| \quad& \text{for}\qquad r\geq2\\
                    \frac{1}{\sqrt{1-\delta}}\|L^\dagger L\ket{\psi}\|+\sqrt{\frac{c_1}{4(1-\delta)}\left(\frac{3}{4}\frac{c_1}{\delta}\right)^3}\|\ket{\psi}\|\quad& \text{for}\qquad r=1
                \end{cases}
            \end{aligned}
        \end{equation*}
        with $c_r=(r+1)r+4|\alpha|^rr!^{3/2}+2\sqrt{(2r)!}|\alpha|^{2r}$.
    \end{lem}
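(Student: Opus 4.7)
My plan is to compare $(N+I)^r$ with $L^\dagger L$ directly via an algebraic decomposition and then to control the difference by a lower-order power of $N+I$. Expanding,
\begin{equation*}
L^\dagger L = (b^{\dagger r} - \bar\alpha^r)(b^r - \alpha^r) = b^{\dagger r}b^r - \alpha^r b^{\dagger r} - \bar\alpha^r b^r + |\alpha|^{2r}\,,
\end{equation*}
and using the identity $b^{\dagger r}b^r = N(N-1)\cdots(N-r+1)$ on the Fock basis, the leading part agrees with $(N+I)^r$ up to a polynomial in $N$ of degree $r-1$. Setting $R := (N+I)^r - b^{\dagger r}b^r$ (a polynomial in $N$ of degree $r-1$ with leading coefficient $r(r+1)/2$) and $B := \alpha^r b^{\dagger r} + \bar\alpha^r b^r - |\alpha|^{2r}$ produces the self-adjoint identity $(N+I)^r = L^\dagger L + R + B$. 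All estimates are first carried out on the finite-particle subspace $\mathscr{H}_f = \operatorname{span}\{\ket{n} : n \in \N\}$, which is a core of $L^\dagger L$ and of every $(N+I)^s$, and then extended to $\operatorname{dom}(L^\dagger L)$ by continuity.

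The central analytic step is to prove the relative boundedness
\begin{equation*}
\|(R+B)\ket{\psi}\|^2 \leq c_r\,\|(N+I)^{r-1/2}\ket{\psi}\|^2
\end{equation*}
with $c_r = (r+1)r + 4|\alpha|^r r!^{3/2} + 2\sqrt{(2r)!}|\alpha|^{2r}$ exactly as in the statement. I would do this by expanding $\|(R+B)\ket{\psi}\|^2 = \langle \psi |(R+B)^2|\psi\rangle$, putting each resulting term in normal order using the canonical commutation relation $[b,b^\dagger] = I$, and bounding the pointwise polynomials in $N$ via the elementary estimates $n(n-1)\cdots(n-k+1) \leq (n+1)^k$ and $(n+1)(n+2)\cdots(n+k) \leq k!(n+1)^k$. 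The three summands of $c_r$ track three types of terms: $r(r+1)$ comes from $\langle R^2 \rangle$ (twice the squared leading coefficient after the natural Cauchy-Schwarz splitting), $4|\alpha|^r r!^{3/2}$ from the cross terms between $R$ and the $b^{\dagger r}, b^r$ parts of $B$ (using the standard estimate $\|b^{\dagger r}\ket{\psi}\| \leq \sqrt{r!}\,\|(N+I)^{r/2}\ket{\psi}\|$ already invoked in the proof of Proposition~\ref{prop:cat-convergence}), and $2\sqrt{(2r)!}|\alpha|^{2r}$ from the diagonal $(\alpha^r b^{\dagger r})^2 + (\bar\alpha^r b^r)^2$ contributions in $B^2$. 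Finally, the operator inequalities $(N+I)^{r-1}, (N+I)^{r/2} \leq (N+I)^{r-1/2}$ (valid for $r \geq 1$ since $N + I \geq I$) collapse every intermediate scale to $(N+I)^{r-1/2}$.

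With the key bound in hand, the proof is concluded by a quadratic parallelogram argument. Writing $L^\dagger L = (N+I)^r - (R+B)$ and splitting the cross term in $\|L^\dagger L\ket{\psi}\|^2 = \|(N+I)^r\ket{\psi} - (R+B)\ket{\psi}\|^2$ via the inequality $2|\Re\langle(N+I)^r\psi, (R+B)\psi\rangle| \leq \delta\|(N+I)^r\ket{\psi}\|^2 + \delta^{-1}\|(R+B)\ket{\psi}\|^2$ yields
\begin{equation*}
\|L^\dagger L\ket{\psi}\|^2 \geq (1 - \delta)\|(N+I)^r\ket{\psi}\|^2 - (\delta^{-1} - 1)\|(R+B)\ket{\psi}\|^2\,.
\end{equation*}
Substituting the key bound and applying the pointwise Young inequality $x^{2r-1} \leq \tfrac{2r-1}{2r}c\, x^{2r} + \tfrac{1}{2rc^{2r-1}}$ to the spectrum of $(N+I)$ with the parameter $c$ chosen so that the residual coefficient of $\|(N+I)^r\ket{\psi}\|^2$ on the right exactly cancels against a share of the $(1-\delta)$-prefactor on the left, rearranges to $(1-\delta)\|(N+I)^r\ket{\psi}\|^2 \leq \|L^\dagger L\ket{\psi}\|^2 + \tfrac{c_r}{2r}\bigl(\tfrac{2r-1}{2r}\tfrac{c_r}{\delta}\bigr)^{2r-1}\|\ket{\psi}\|^2$. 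Dividing by $1-\delta$ and taking square roots via $\sqrt{A+B} \leq \sqrt A + \sqrt B$ produces the claimed bound, with the common factor $\tfrac{1}{\sqrt{1-\delta}}$ appearing in front of both terms.

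The $r = 1$ case is treated separately because $(N+I)^{r-1} = I$ is bounded and the Young exponent $2r-1 = 1$ is degenerate; there I would replace the interpolation step by Young's inequality with conjugate exponents $(4, 4/3)$ applied to $(N+I)^{1/2}$ against $(N+I)$, which reproduces the $\bigl(\tfrac{3}{4}\tfrac{c_1}{\delta}\bigr)^{3}$ factor. The main technical obstacle is in the second paragraph: one must carefully organize the many cross terms arising in the expansion of $(R+B)^2$ and apply Cauchy-Schwarz with judiciously chosen splitting parameters so that all contributions compress into exactly the stated $c_r$, rather than a looser polynomial in $|\alpha|$ that a crude estimation would produce. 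The rest of the argument is then elementary from the algebraic identity and the parallelogram-Young combination.
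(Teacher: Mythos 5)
Your overall architecture (work on the finite-particle subspace, reduce everything to a comparison between $(L^\dagger L)^2$ and powers of $N+I$, then a scalar Young/minimization step and subadditivity of the square root) is the same as the paper's, but your central estimate is false as stated, and this is not a matter of bookkeeping. You claim $\|(R+B)\ket{\psi}\|^2\le c_r\,\|(N+I)^{r-1/2}\ket{\psi}\|^2$ with the lemma's $c_r=(r+1)r+4|\alpha|^r r!^{3/2}+2\sqrt{(2r)!}\,|\alpha|^{2r}$, where $R+B=(N+I)^r-L^\dagger L$. Test this on $\ket{\psi}=\ket{0}$: since $b^{\dagger r}b^r\ket{0}=0$, one gets $(R+B)\ket{0}=(1-|\alpha|^{2r})\ket{0}+\alpha^r\sqrt{r!}\ket{r}$, hence $\|(R+B)\ket{0}\|^2=(1-|\alpha|^{2r})^2+r!\,|\alpha|^{2r}=\Theta(|\alpha|^{4r})$, while the right-hand side equals $c_r=O(|\alpha|^{2r})$. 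Already for $r=1$ and real $\alpha=3$ the left side is $73$ and the right side is about $39$. The structural reason is that the paper never upper-bounds the perturbation $(N+I)^r-L^\dagger L$; it lower-bounds $(L^\dagger L)^2$ directly. In that expansion the large positive operators $|\alpha|^{2r}\bigl(b^{\dagger r}b^r+b^rb^{\dagger r}\bigr)+|\alpha|^{4r}I$ sit on the favorable side of the inequality and are simply discarded, whereas in your decomposition the very same terms land inside $\|(R+B)\ket{\psi}\|^2$, i.e.\ on the side being upper-bounded, and force the constant up to order $|\alpha|^{4r}$. Your route can therefore only yield the lemma with a strictly larger $c_r$, not the stated one; the individual ingredients you cite ($R^2\le r(r+1)(N+I)^{2r-1}$, the cross-term and $b^{\pm 2r}$ bounds) are fine, but they do not account for all the terms your expansion of $(R+B)^2$ actually produces.

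A secondary issue: even granting a corrected key bound of the form $\|(R+B)\ket{\psi}\|^2\le \tilde c\,\bra{\psi}(N+I)^{2r-1}\ket{\psi}$, your finish combines a Cauchy--Schwarz splitting (parameter $\delta'$) with a separate Young interpolation (parameter $c$), which is strictly lossier than the paper's single step of writing $(L^\dagger L)^2\ge(1-\delta)(N+I)^{2r}+\delta\bigl((N+I)^{2r}-\tfrac{c_r}{\delta}(N+I)^{2r-1}\bigr)$ and minimizing the scalar polynomial on the spectrum; one can check that for the downstream choice $\delta=1/2$ there is no admissible pair $(\delta',c)$ reproducing both the prefactor $\tfrac{1}{\sqrt{1-\delta}}$ and the additive constant $\tfrac{c_r}{2r}\bigl(\tfrac{2r-1}{2r}\tfrac{c_r}{\delta}\bigr)^{2r-1}$. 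To get the exact statement, follow the paper: expand $(L^\dagger L)^2$ in normal order, drop the sign-favorable terms, bound the cross terms by $4|\alpha|^rr!^{3/2}(N+I)^{3r/2}$ and $2\sqrt{(2r)!}|\alpha|^{2r}(N+I)^r$ (this is also the real reason $r=1$ is treated separately: the cross term scales as $(N+I)^{3r/2}$, which dominates $(N+I)^{2r-1}$ exactly when $r=1$), and then perform the one-parameter minimization.
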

    \begin{proof}
        Assume that $\ket{\psi}\in\mathscr{H}_f$. In a first step, we rewrite $L^\dagger L$ using
        the identities
        \begin{equation}\label{eq:a*a-number-op}
            \begin{aligned}
                &(b^\dagger)^rb^r=(N-(r-1)I)(N-(r-2)I)\cdots(N-I)N\equiv N[-r+1:0]\,,\\
                & b^r(b^\dagger)^r=(N+I)(N+2I)\cdots(N+(r-1)I)(N+rI)\equiv N[1:r]
            \end{aligned}
        \end{equation}
        with
        \begin{equation}
            N[j:k]:=(N+j)\dots (N+k)\qquad\text{and}\qquad n[j:k]:=(n+j)\dots (n+k)\,.
        \end{equation}
        Then, we rewrite $L^\dagger L$ as
        \begin{align*}
            ((b^\dagger)^r-\overline{\alpha}^r)(b^r-\alpha^r)&=N[-r+1:0]+|\alpha|^{2r}-(b^\dagger)^{r}\alpha^{r}-\overline{\alpha}^{r}b^{r}\\
            &=A-(b^\dagger)^{r}\alpha^{r}-\overline{\alpha}^{r}b^{r}\,.
        \end{align*}
        with $A\coloneqq N[-r+1:0]+|\alpha|^{2r}$. In a next step, we rewrite the product $(L^\dagger L)^2$:
        \begin{equation*}
            \begin{aligned}
                (L^\dagger L)^2&=\underbrace{A^2+|\alpha|^{2r}((b^\dagger)^rb^r+b^r(b^\dagger)^r)}_{I}\\
                &\qquad-\underbrace{\bigl(\alpha^r\{A,(b^\dagger)^r\}+\overline{\alpha}^r\{A,b^r\}\bigr)}_{II}\\
                &\qquad-\underbrace{(-\alpha^{2r}(b^\dagger)^{2r}-\overline{\alpha}^{2r}b^{2r})}_{III}\,.
            \end{aligned}
        \end{equation*}
        Then, we apply the following commutation relation for a real-valued measurable function $f:\mathbb{Z}\to\mathbb{R}$, 
        \begin{equation}\label{eq:ccr-function}
            \begin{aligned}
                b&f(N+jI)&&=&&f(N + (j+1)I)b,&\quad\quad b^\dagger&f(N-jI)\,&&=&&f(N - (j+1)I)b^\dagger\,,\\
                &f(N-jI)b\,&&=&b&f(N - (j+1)I)\,,&\quad\quad &f(N+jI)b^\dagger&&=&b^\dagger&f(N + (j+1)I)\,,
        	\end{aligned}
        \end{equation}
        to $II$ and $III$, and again  \eqref{eq:a*a-number-op} to $I$:
        \begin{equation*}
            \begin{aligned}
                I&=N[-r+1:0]^2+|\alpha|^{2r}(3N[-r+1:0]+N[1:r])+|\alpha|^{4r}\\
                II&=\alpha^r(N[-r+1:0]+N[-2r+1:-r])(b^\dagger)^r+\overline{\alpha}^rb^r(N[-r+1:0]+N[-2r+1:-r])\,,\\
                III&=-\alpha^{2r}(b^\dagger)^{2r}-\overline{\alpha}^{2r}b^{2r}\,.
            \end{aligned}
        \end{equation*}
        Then, we prove lower bounds on each sum $I$, $II$, and $III$. For $I$, we first restate Lemma \ref{lem:poly-lower-bound}:
        \begin{equation}
            \begin{aligned}
                N[-r+1:0]^2\geq(N+I)^{2r}-(r+1)r(N+I)^{2r-1}\,.
            \end{aligned}
        \end{equation}
        This directly implies
        \begin{equation*}
            \begin{aligned}
                I&\geq N[-r+1:0]^2\\
                &\geq(N+I)^{2r}-(r+1)r(N+I)^{2r-1}\,.
            \end{aligned}
        \end{equation*}
        For the second and third sum, we apply \cite[Lemma~B.3]{Gondolfetal2023}:
        \begin{equation*}
            \begin{aligned}
                II&=\alpha^r(N[-r+1:0]+N[-2r+1:-r])(b^\dagger)^r+\overline{\alpha}^rb^r(N[-r+1:0]+N[-2r+1:-r])\\
                &\leq 2|\alpha|^r\sqrt{N[1:r]}\Bigl(N[1:r]+N[-r+1:0]\Bigr)\\
                &\leq 4|\alpha|^r(r!)^{3/2}(N+I)^{3r/2}
            \end{aligned}
        \end{equation*}
        and
        \begin{equation*}
            \begin{aligned}
                III&=-\alpha^{2r}(b^\dagger)^{2r}-\overline{\alpha}^{2r}b^{2r}\leq 2|\alpha|^{2r}\sqrt{N[1:2r]}\leq2|\alpha|^{2r}\sqrt{(2r)!}(N+I)^{r}\,.
            \end{aligned}
        \end{equation*}
        Overall, we achieve
        \begin{equation*}
            \begin{aligned}
                (L^\dagger& L)^2\\
                &\geq(N+I)^{2r}-(r+1)r(N+I)^{2r-1}-4|\alpha|^r(r!)^{3/2}(N+I)^{3r/2}-2|\alpha|^{2r}\sqrt{(2r)!}(N+I)^{r}\\
                &\geq(N+I)^{2r}-\Bigl(\underbrace{(r+1)r+4|\alpha|^rr!^{3/2}+2\sqrt{(2r)!}|\alpha|^{2r}}_{\eqqcolon c_r}\Bigr)\begin{cases}
                    (N+I)^{2r-1}&\text{for}\qquad r\geq2\\
                    (N+I)^{3/2}&\text{for}\qquad r=1
                \end{cases}\\
                &\eqqcolon (N+I)^{2r}-\begin{cases}
                    c_r(N+I)^{2r-1}&\text{for}\qquad r\geq2\\
                    c_1(N+I)^{3/2}&\text{for}\qquad r=1
                \end{cases}\,.
            \end{aligned}
        \end{equation*}
        Finally, we bound the second term by a constant depending on $c$, $r$, and $\delta$ via
        \begin{align*}
            (L^\dagger L)^2\geq&(1-\delta)(N+I)^{2r}+\begin{cases}
                    \delta(N+I)^{2r}-c_r(N+I)^{2r-1}&\text{for}\qquad r\geq2\\
                    \delta(N+I)^{2r}-c_1(N+I)^{3/2}&\text{for}\qquad r=1
                \end{cases}
        \end{align*}
        and minimizing the following functions
        \begin{equation*}
            n^{2r}-\frac{c_r}{\delta}n^{2r-1}\qquad\text{and}\qquad n^{2}-\frac{c_1}{\delta}n^{3/2}\,.
        \end{equation*}
        For that, we redefined $p=2r-1$ and $x^2=n$ so that the minimization reduces to
        \begin{equation*}
            n^{p+1}-\frac{c_r}{\delta}n^{p}\qquad\text{and}\qquad x^{4}-\frac{c_1}{\delta}x^{3}\,,
        \end{equation*}
        which attains its minimums at $n=pc_r/((p+1)\delta)$ and $x=3c_1/(4\delta)$ (see \cite[Lemma 11]{mobus2023dissipation})
        so that 
        \begin{equation*}
            n^{p+1}-\frac{c_r}{\delta}n^{p}\geq-\frac{c_r}{\delta(p+1)}\left(\frac{p}{p+1}\frac{c_r}{\delta}\right)^p\qquad\text{and}\qquad x^{4}-\frac{c_1}{\delta}x^{3}\geq-\frac{c_1}{4\delta}\left(\frac{3}{4}\frac{c_1}{\delta}\right)^3\,.
        \end{equation*}
        Substituting $p$ by $2r-1$ and $x^2$ by $n$, we have shown
        \begin{equation*}
            \delta\Bigl(n^{2r}-\frac{c_r}{\delta}n^{2r-1}\Bigr)\geq-\frac{c_r}{2r}\left(\frac{2r-1}{2r}\frac{c_r}{\delta}\right)^{2r-1}\qquad\text{and}\qquad \delta\Bigl(n^{2}-\frac{c_1}{\delta}n^{3/2}\Bigr)\geq-\frac{c_1}{4}\left(\frac{3}{4}\frac{c_1}{\delta}\right)^3\,.
        \end{equation*}
        Therefore, the definition of the norm and subadditivity of the square root, i.e.
        \begin{equation*}
            \begin{aligned}
                \|(N+I)^r\ket{\psi}\|&=\sqrt{\bra{\psi}(N+I)^{2r}\ket{\psi}}\\
                &\leq\frac{1}{\sqrt{1-\delta}}\sqrt{\bra{\psi}\biggl((L^\dagger L)^{2}+\frac{c_r}{2r}\Bigl(\frac{2r-1}{2r}\frac{c_r}{\delta}\Bigr)^{2r-1}\biggr)\ket{\psi}}\\
                &\leq\frac{1}{\sqrt{1-\delta}}\|L^\dagger L\ket{\psi}\|+\sqrt{\frac{c_r}{2r(1-\delta)}\Bigl(\frac{2r-1}{2r}\frac{c_r}{\delta}\Bigr)^{2r-1}}\|\ket{\psi}\|
            \end{aligned}
        \end{equation*}
        finishes the case $r\geq2$ with $(r+1)r+4|\alpha|^rr!^{3/2}+2\sqrt{(2r)!}|\alpha|^{2r}$. Similarly, we achieve
        \begin{equation*}
            \begin{aligned}
                \|(N+I)^r\ket{\psi}\|&\leq\frac{1}{\sqrt{1-\delta}}\|L^\dagger L\ket{\psi}\|+\sqrt{\frac{c_1}{4(1-\delta)}\left(\frac{3}{4}\frac{c_1}{\delta}\right)^3}\|\ket{\psi}\|
            \end{aligned}
        \end{equation*}
        in the case $r=1$ and with $2+4|\alpha|+2\sqrt{2}|\alpha|^{2}$. Finally, we note that $L^\dagger L$ can also relatively upper bounded by the number operator $(N+I)^r$ (see \cite[Lem.~E.2]{Gondolfetal2023}) meaning that the graph norms are equivalent. Since $\mathscr{H}_f$ is a core of $(N+I)^r$ it is also a core of $L^\dagger L$, which finishes the proof by completing $\mathscr{H}_f$ with respect to the graph norm of $L^\dagger L$.
    \end{proof}
    Next, we follow a similar strategy to bound the number operator to the power of order $r+1$ by the modified photon dissipation.
    \begin{lem}\label{lem:rel-bounded-learning-N-L}
        For $r\in\N$, the modified photon dissipation is defined by $L_r=b^r(b-\alpha)$ for an $\alpha\in\C$. Then, $L'=L_r^\dagger L_r+L_1^\dagger L_1$ satisfies for any $\delta\in(0,1)$ and any state $|\psi\rangle\in \operatorname{dom}(L')$
        \begin{equation*}
            \begin{aligned}
                \|(N+I)^{r+1}\ket{\psi}\|&\leq\frac{1}{\sqrt{1-\delta}}\|L'\ket{\psi}\|+\sqrt{\frac{c}{(4r+4)(1-\delta)}\left(\frac{4r+3}{4r+4}\frac{c}{\delta}\right)^{4r+3}}\|\ket{\psi}\|
            \end{aligned}
        \end{equation*}
        with $c=(r+2)(r+1)+16|\alpha|(1+|\alpha|^2)+8\sqrt{2}|\alpha|^2$\,. 
    \end{lem}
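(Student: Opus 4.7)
The strategy closely mirrors Lemma~\ref{lem:rel-bounded-learning-N-photon-loss}: I would prove on the core $\mathscr{H}_f$ an operator inequality of the form
\begin{equation*}
(L')^2 \;\ge\; (N+I)^{2(r+1)} \;-\; c\,(N+I)^{2r+3/2}\,,
\end{equation*}
where $c = (r+2)(r+1) + 16|\alpha|(1+|\alpha|^2) + 8\sqrt{2}|\alpha|^2$, then absorb $\delta(N+I)^{2r+2}$ from the leading term into a residual of polynomial degree $2r+3/2$ to be minimized over the spectrum $n\in\N$, and finally extend from $\mathscr{H}_f$ to $\operatorname{dom}(L')$ by the graph-norm equivalence argument used at the end of the preceding proof.

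First I would bring $L_r^\dagger L_r$ and $L_1^\dagger L_1$ into normal-ordered form via \eqref{eq:ccr-function} and the identities \eqref{eq:a*a-number-op}. A direct computation yields
\begin{equation*}
L_r^\dagger L_r \;=\; N[-r:0] \;+\; |\alpha|^2\, N[-r+1:0] \;-\; \alpha\, b^\dagger N[-r+1:0] \;-\; \bar\alpha\, N[-r+1:0]\, b\,,
\end{equation*}
and analogously for $L_1^\dagger L_1$ with $r=1$, so that $L' = P - \alpha U - \bar\alpha U^\dagger$ where $P$ is a positive polynomial in $N$ containing $N[-r:0]$ as leading term and $U = b^\dagger(N[-r+1:0]+N)$. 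Expanding
$(L')^2 = P^2 + (\alpha U + \bar\alpha U^\dagger)^2 - \{P,\,\alpha U + \bar\alpha U^\dagger\}$
splits the problem into a \emph{diagonal} part and two kinds of $\alpha$-dependent \emph{cross terms}: those linear in $b$ or $b^\dagger$, and the ``double'' terms $\alpha^2 U^2 + \bar\alpha^2 (U^\dagger)^2$ containing $b^2$ or $(b^\dagger)^2$.

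For the diagonal piece, $P^2 \ge N[-r:0]^2$ followed by Lemma~\ref{lem:poly-lower-bound} (applied with index $r+1$) gives
$N[-r:0]^2 \ge (N+I)^{2(r+1)} - (r+2)(r+1)(N+I)^{2r+1}$,
which is responsible for the first summand of $c$. For the single-creation/annihilation cross terms I would use exactly the same completing-the-square trick as in \cite[Lem.~B.3]{Gondolfetal2023}, which was already invoked in the proof of Lemma~\ref{lem:rel-bounded-learning-N-photon-loss}: for any positive $Y=f(N)$ one has $\alpha Y b^\dagger + \bar\alpha b Y \le 2|\alpha|\sqrt{N[1:1]}\,Y$, and for the double terms $\alpha^2 Y(b^\dagger)^2 + \bar\alpha^2 b^2 Y \le 2|\alpha|^2 \sqrt{N[1:2]}\,Y$. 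Substituting $Y$ by the polynomials in $N$ that arise in $\{P,U\}$ and in $U U^\dagger + U^\dagger U$, and bounding every resulting product of number-operator polynomials by $(N+I)^{2r+3/2}$ times a combinatorial constant, produces the two remaining $\alpha$-dependent contributions $16|\alpha|(1+|\alpha|^2)$ and $8\sqrt{2}|\alpha|^2$ that exactly match the stated $c$.

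Once the quadratic-form inequality is in hand, I would finish by splitting $(N+I)^{2r+2} = (1-\delta)(N+I)^{2r+2} + \delta (N+I)^{2r+2}$ and minimizing $n\mapsto \delta n^{2r+2} - c\, n^{2r+3/2}$. This is the same half-integer-exponent situation as the $r=1$ branch of Lemma~\ref{lem:rel-bounded-learning-N-photon-loss}, so the substitution $n = x^2$ turns it into $\delta x^{4r+4} - c\, x^{4r+3}$, whose minimum is $-\frac{c}{4r+4}\!\left(\frac{4r+3}{4r+4}\cdot\frac{c}{\delta}\right)^{4r+3}$ by the elementary computation reproduced after \cite[Lemma~11]{mobus2023dissipation}. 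Dividing by $1-\delta$, taking expectations against $|\psi\rangle$, and using subadditivity of the square root yields the stated inequality on $\mathscr{H}_f$; the relative boundedness of $(N+I)^{r+1}$ by $L'$ (analogous to \cite[Lem.~E.2]{Gondolfetal2023}) ensures the graph norms are equivalent, so $\mathscr{H}_f$ is a core of $L'$ and the bound extends to all $|\psi\rangle\in\operatorname{dom}(L')$. The main obstacle, as usual in this type of estimate, is Step~3: making sure that every mixed term produced by expanding $(L_r^\dagger L_r + L_1^\dagger L_1)^2$ is absorbed into exactly the advertised constants, with no hidden $r$- or $\alpha$-dependence left over, requires painstaking but mechanical bookkeeping using the commutation relations \eqref{eq:ccr-function}.
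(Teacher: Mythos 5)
Your proposal follows essentially the same route as the paper's proof: the same normal-ordering of $L'$ into a diagonal polynomial in $N$ plus single- and double-off-diagonal cross terms, the same lower bound on the diagonal part via Lemma~\ref{lem:poly-lower-bound} applied to $N[-r:0]^2$, the same use of \cite[Lemma~B.3]{Gondolfetal2023} to absorb the cross terms into the $16|\alpha|(1+|\alpha|^2)$ and $8\sqrt{2}|\alpha|^2$ contributions, the same $n=x^2$ substitution with \cite[Lemma~11]{mobus2023dissipation} for the half-integer-exponent minimization, and the same core/graph-norm extension to $\operatorname{dom}(L')$. The approach is correct and matches the paper's argument in all essentials.
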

    \begin{proof}
        First, note that it is assumed that $\ket{\psi}\in \mathscr{H}_f$ and $r\geq1$. Otherwise, we can directly apply Lemma \ref{lem:rel-bounded-learning-N-photon-loss}. Then, we rewrite $L_r^\dagger L_r$ again using
        the identities in \eqref{eq:a*a-number-op} by
        \begin{align*}
            (b^\dagger-\overline{\alpha})&(b^\dagger)^rb^r(b-\alpha)\\
            &=(b^\dagger)^{r+1}b^{r+1}-(b^\dagger)^{r+1}b^r\alpha-\overline{\alpha}(b^\dagger)^rb^{r+1}+|\alpha|^2(b^\dagger)^rb^{r}\\
            &=N[-r:0]-\alpha b^\dagger N[-r+1:0]-\overline{\alpha}N[-r+1:0]b+|\alpha|^2N[-r+1:0]\\
            &=N[-r+1:0]\big(N-r+|\alpha|^2\big)-\overline{\alpha}bN[-r:-1]-\alpha N[-r:-1]b^\dagger\\
            &\coloneqq A_r(N)-bB_r(N)-B_r(N)^\dagger b^\dagger\,.
        \end{align*}
        with $A_r(N)\coloneqq N[-r+1:0]\big(N-r+|\alpha|^2\big)$ and $B_r(N)\coloneqq\overline{\alpha}N[-r:-1]$. This shows, 
        \begin{equation*}
            \begin{aligned}
                L^\dagger_rL_r+L^\dagger_1L_1=A(N)-bB(N)-B(N)^\dagger b^\dagger
            \end{aligned}
        \end{equation*}
        with $A(N)=A_r(N)+A_1(N)$ and $B(N)=B_r(N)+B_1(N)$. For the sake of notation, we drop the input variable $N$ of the functions $A$ and $B$ in the next line
        \begin{equation*}
            \begin{aligned}
                (L^\dagger L)^2&=A^2-AbB-AB^\dagger b^\dagger-bBA+(bB)^2+bBB^\dagger b^\dagger-B^\dagger b^\dagger A+B^\dagger b^\dagger bB+(B^\dagger b^\dagger)^2\\
                &=\underbrace{A^2+bBB^\dagger b^\dagger+B^\dagger b^\dagger bB}_{I}\\
                &\qquad-\underbrace{(AbB+AB^\dagger b^\dagger+bBA+B^\dagger b^\dagger A)}_{II}\\
                &\qquad-\underbrace{(-(bB)^2-(B^\dagger b^\dagger)^2)}_{III}\,.
            \end{aligned}
        \end{equation*}
        Then, we apply the commutation relations in \eqref{eq:ccr-function} to $I$, $II$, and $III$. First, 
        \begin{equation*}
            I=A(N)^2+B(N+I)B(N+I)^\dagger (N+I)+B^\dagger(N) NB(N)\,.
        \end{equation*}
        Second,
        \begin{equation*}
            \begin{aligned}
                II=b(A(N-I)B(N)+A(N)B(N))+(A(N-I)B(N)+A(N)B(N))^\dagger b^\dagger\,,
            \end{aligned}
        \end{equation*}
        where we used that $A^\dagger=A$ holds. Third, 
        \begin{equation*}
            III=-b^2B(N-I)B(N)-(B(N-I)B(N))^\dagger (b^\dagger)^2\,.
        \end{equation*}
        Then, we find lower bounds on $I$, $II$, and $III$ separately. For $I$, we use the bound in Lemma \ref{lem:poly-lower-bound}, i.e.
        \begin{equation*}
            \begin{aligned}
                N[-r+1:0]^2\geq(N+I)^{2r}-(r+1)r(N+I)^{2r-1}\,.
            \end{aligned}
        \end{equation*}
        to show
        \begin{equation*}
            \begin{aligned}
                A(N)^2&+B(N+I)B(N+I)^\dagger (N+I)+B^\dagger(N) NB(N)\\
                &=\Bigl(N[-r+1:0](N-r+|\alpha|^2)+N(N-1+|\alpha|^2)\Bigr)^2\\
                &\qquad+|\alpha|^2(N[-r+1:0]+N)^2(N+1)\\
                &\qquad+|\alpha|^2(N[-r:-1]+N-I)^2N\\
                &\geq N[-r:0]^2\\
                &\geq(N+I)^{2r+2}-(r+2)(r+1)(N+I)^{2r+1}\,,
            \end{aligned}
        \end{equation*}
        where the bounds above are applied with respect to the spectrum because all operators are diagonalizable in the Fock basis. For the second and third sum, we apply \cite[Lemma B.3]{Gondolfetal2023}:
        \begin{equation*}
            \begin{aligned}
                b(&A(N-I)+A(N))B(N)+((A(N-I)+A(N))B(N))^\dagger b^\dagger\\
                &\leq 2|\alpha|\sqrt{N+1}\Bigl(N[-r+1:0](N-r+|\alpha|^2)+N(N-1+|\alpha|^2)\\
                &\qquad\qquad+N[-r+2:1](N+1-r+|\alpha|^2)+(N+I)(N+|\alpha|^2)\Bigr)(N[-r+1:0]+N)\\
                &\leq 16|\alpha|\sqrt{N+1}(N+I)^{2r+1}(1+|\alpha|^2)
            \end{aligned}
        \end{equation*}
        and
        \begin{equation*}
            \begin{aligned}
                -b^2&B(N-I)B(N)-(B(N-I)B(N))^\dagger (b^\dagger)^2\\
                &\leq 2|\alpha|^2\sqrt{(N+I)(N+2I)}(N[-r+1:0]+N)(N[-r+2:1]+N+I)\\
                &\leq8\sqrt{2}|\alpha|^2(N+I)^{2r+1}\,.
            \end{aligned}
        \end{equation*}
        Overall, we achieve
        \begin{equation}\label{eq:rel-boundedness-operator-inequality}
            \begin{aligned}
                (L^\dagger L)^2&\geq(N+I)^{2r+2}-(r+2)(r+1)(N+I)^{2r+1}\\
                &\qquad\qquad-16|\alpha|(N+I)^{2r+3/2}(1+|\alpha|^2)-8\sqrt{2}|\alpha|^2(N+I)^{2r+1}\\
                &\geq(N+I)^{2r+2}-\Bigl(\underbrace{(r+2)(r+1)+16|\alpha|(1+|\alpha|^2)+8\sqrt{2}|\alpha|^2}_{\eqqcolon c}\Bigr)(N+I)^{2r+3/2}\\
                &\eqqcolon (N+I)^{2r+2}- c(N+I)^{2r+3/2}\,.
            \end{aligned}
        \end{equation}
       Finally, we find a negative constant depending on $c$, $r$, and $\delta$ by employing the leading order, i.e.
        \begin{align*}
            (N+I)^{2r+2}-c(N+I)^{2r+3/2}=(1-\delta)(N+I)^{2r+2}+\delta\left((N+I)^{2r+2}-\frac{c}{\delta}(N+I)^{2r+3/2}\right)
        \end{align*}
        and lower bounding the last two terms, which reduces to 
        \begin{equation*}
            n^{2r+2}-\frac{c}{\delta}n^{2r+3/2}\,.
        \end{equation*}
        By substituting $n=x^2$ and $p+1=4r+4$, the minimization is
        \begin{equation*}
            x^{p+1}-\frac{c}{\delta}x^{p}\,.
        \end{equation*}
        which is solved in \cite[Lemma 11]{mobus2023dissipation}. 
        Therefore, the minimum is achieved at $x=p/(p+1)\frac{c}{\delta}$ with
        \begin{equation*}
            x^{p+1}-\frac{c}{\delta}x^{p}\geq-\frac{c}{\delta(p+1)}\left(\frac{p}{p+1}\frac{c}{\delta}\right)^p\,.
        \end{equation*}
        so that
        \begin{equation}\label{eq:rel-bound-negative-constant}
            \delta\left((N+I)^{2r+2}-\frac{c}{\delta}(N+I)^{2r+\frac{3}{2}}\right)\geq-\frac{c}{4r+4}\left(\frac{4r+3}{4r+4}\frac{c}{\delta}\right)^{4r+3} I\,.
        \end{equation}
        As before, this leads to the following relative bound
        \begin{equation*}
            \begin{aligned}
                \|(N+I)^{r+1}\ket{\psi}\|&\leq\frac{1}{\sqrt{1-\delta}}\|L^\dagger L\ket{\psi}\|+\sqrt{\frac{c}{(4r+4)(1-\delta)}\left(\frac{4r+3}{4r+4}\frac{c}{\delta}\right)^{4r+3}}\|\ket{\psi}\|
            \end{aligned}
        \end{equation*}
        with $c=(r+2)(r+1)+16|\alpha|(1+|\alpha|^2)+8\sqrt{2}|\alpha|^2$. Finally, we argue similar to Lemma \ref{lem:rel-bounded-learning-N-photon-loss} and note that $L'$ can also be relatively upper bounded by the number operator $(N+I)^{r+1}$ (see \cite[Lem.~E.2]{Gondolfetal2023}). Since, $\mathscr{H}_f$ is a core of $(N+I)^{r+1}$ it is also a core of $L'$, which finishes the proof by completing $\mathscr{H}_f$ with respect to the graph norm of $L^\dagger L$.
    \end{proof}
    In the next part of this appendix, we apply the above results to specific generators used in our applications. We start with the single mode example of the $d$-photon dissipation controlling any Hamiltonian of degree $2d$, for which the following bound holds true by Lemma \ref{lem:rel-bounded-learning-H-N} and \ref{lem:rel-bounded-learning-N-photon-loss}:
    \begin{cor}\label{cor:rel-bounded-gate}
        Any single-mode Hamiltonian $H$ of degree $d$ given in Definition \ref{def-main:bosonic-H} is relatively bounded by the $r$-photon dissipation defined by $L=b^r-\alpha^r$ with $\alpha\in\C$ if $r\geq d/2$ :
        \begin{equation*}
            \|H\ket{\psi}\|\leq Md^{2d}d!^{-1/2}\begin{cases}
                    \frac{1}{\sqrt{1-\delta}}\|L^\dagger L\ket{\psi}\|+\sqrt{\frac{c_r}{2r(1-\delta)}\Bigl(\frac{2r-1}{2r}\frac{c_r}{\delta}\Bigr)^{2r-1}}\|\ket{\psi}\| \quad& \text{for}\quad r\geq2\\
                    \frac{1}{\sqrt{1-\delta}}\|L^\dagger L\ket{\psi}\|+\sqrt{\frac{c_1}{4(1-\delta)}\left(\frac{3}{4}\frac{c_1}{\delta}\right)^3}\|\ket{\psi}\|\quad& \text{for}\quad r=1
                \end{cases}
        \end{equation*}
        for any $\delta\in(0,1)$, state $|\psi\rangle\in \operatorname{dom}(L^\dagger L)$ and $c_r=(r+1)r+4|\alpha|^rr!^{3/2}+2\sqrt{(2r)!}|\alpha|^{2r}$.
    \end{cor}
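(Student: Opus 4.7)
The plan is to chain together the two preceding lemmas which have already done all of the heavy lifting. The corollary is a single-mode instance ($m=1$, $\mathfrak{k}=1$) in which the ``intermediate'' operator is the number operator, so the only real task is to align the exponents of $N+I$ that come out of Lemma \ref{lem:rel-bounded-learning-H-N} with those that go into Lemma \ref{lem:rel-bounded-learning-N-photon-loss}, and then read off the constants.

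First, I specialize Lemma \ref{lem:rel-bounded-learning-H-N} to $m=\mathfrak{k}=1$. With only one mode, the maximum over $\ell\in\supp(H)$ collapses, and the prefactor $M(\mathfrak{k}+d-1)^{2d}d!^{\mathfrak{k}/2-1}$ becomes $M d^{2d}d!^{-1/2}$. This yields, for all $\ket\psi$ in the appropriate domain,
\begin{equation*}
    \|H\ket\psi\|\le M d^{2d}d!^{-1/2}\,\|(N+I)^{d/2}\ket\psi\|\,.
\end{equation*}
Next, since $N\ge 0$ on the Fock space we have $N+I\ge I$, and therefore, under the hypothesis $r\ge d/2$, the operator inequality $(N+I)^r\ge (N+I)^{d/2}$ holds spectrally. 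Consequently
\begin{equation*}
    \|(N+I)^{d/2}\ket\psi\|\le \|(N+I)^{r}\ket\psi\|
\end{equation*}
for every $\ket\psi\in\operatorname{dom}((N+I)^r)$; and by the relative-boundedness argument at the end of the proof of Lemma \ref{lem:rel-bounded-learning-N-photon-loss}, the domains $\operatorname{dom}(L^\dagger L)$ and $\operatorname{dom}((N+I)^r)$ coincide, so this step is justified on the hypothesized domain.

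Finally, I apply Lemma \ref{lem:rel-bounded-learning-N-photon-loss} directly, which bounds $\|(N+I)^r\ket\psi\|$ by $(1-\delta)^{-1/2}\|L^\dagger L\ket\psi\|$ plus an explicit additive constant (depending on $r$, $\delta$, and $c_r$) times $\|\ket\psi\|$, with separate cases for $r\ge 2$ and $r=1$. Multiplying this estimate through by $M d^{2d}d!^{-1/2}$ yields precisely the stated inequality, with the two displayed cases matching those of Lemma \ref{lem:rel-bounded-learning-N-photon-loss}.

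There is essentially no obstacle: the main technical effort (the operator identity $(L^\dagger L)^2\ge (N+I)^{2r}-c_r(N+I)^{2r-1}$ and the scalar minimization to absorb the negative lower-order term) was already handled inside Lemma \ref{lem:rel-bounded-learning-N-photon-loss}, and the combinatorial prefactor was settled in Lemma \ref{lem:rel-bounded-learning-H-N}. The only point requiring a small check is the exponent alignment, which hinges on the hypothesis $r\ge d/2$ and the observation that $N+I\ge I$; everything else is a direct substitution.
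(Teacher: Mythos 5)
Your proposal is correct and follows exactly the same route as the paper, whose proof is precisely the two-step chaining of Lemma \ref{lem:rel-bounded-learning-H-N} (specialized to $m=\mathfrak{k}=1$, giving the prefactor $Md^{2d}d!^{-1/2}$) with Lemma \ref{lem:rel-bounded-learning-N-photon-loss}. Your explicit justification of the exponent alignment via $N+I\ge I$ and $r\ge d/2$, and of the domain identification through the equivalence of graph norms, fills in details the paper leaves implicit.
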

    \begin{proof}
        The result follows by first applying Lemma \ref{lem:rel-bounded-learning-H-N} to the Hamiltonian and then Lemma \ref{lem:rel-bounded-learning-N-photon-loss} to the upper bound given by the number operator.
    \end{proof}
    In a similar manner, we can prove that any Hamiltonian given in Definition \ref{def-main:bosonic-H} can be relatively bounded by the modified photon dissipation using Lemma \ref{lem:rel-bounded-learning-H-N} and \ref{lem:rel-bounded-learning-N-L}: 
    \begin{cor}\label{cor:rel-bounded-learning}
        Let $H$ be any Hamiltonian of degree $d$ given in Definition \ref{def-main:bosonic-H} and $L'=\sum_{\ell\in\{1,...,m\}}L_{r,\alpha_\ell}^\dagger L_{r,\alpha_\ell}+L_{1,\alpha_\ell}^\dagger L_{1,\alpha_\ell}$, $L_{r,\alpha_\ell}=b_\ell^{r}(b_\ell-\alpha_\ell)$  with $\alpha\in\mathbb{C}^{m}$ and $r\geq d/2-1$. Then, for any state $|\psi\rangle\in \operatorname{dom}(L')$ and $\delta\in(0,1)$
        \begin{equation*}
            \begin{aligned}
                \|H\ket{\psi}\|\leq M( \mathfrak{k}+d-1)^{2d}d!^{\mathfrak{k}/2-1}\biggl(&\frac{1}{\sqrt{1-\delta}}\|L'\ket{\psi}\|\\
                &+\sqrt{\frac{c}{(4r+4)(1-\delta)}\left(\frac{4r+3}{4r+4}\frac{c}{\delta}\right)^{4r+3}}\|\ket{\psi}\|\biggr)
            \end{aligned}
        \end{equation*}
        with $c=(r+2)(r+1)+16\|\alpha\|_\infty(1+|\alpha|^2)+8\sqrt{2}\|\alpha\|_\infty^2$.
    \end{cor}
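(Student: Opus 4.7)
The plan is to chain the two preceding lemmas of this appendix: first reduce the Hamiltonian norm to local number-operator norms via Lemma~\ref{lem:rel-bounded-learning-H-N}, then absorb each on-site number operator into the single-mode modified photon dissipation via Lemma~\ref{lem:rel-bounded-learning-N-L}, and finally pass from the local $L'_\ell$ to the global $L'$ using that the summands act on disjoint modes.

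Concretely, I would first apply Lemma~\ref{lem:rel-bounded-learning-H-N} verbatim to obtain
\begin{equation*}
\|H\ket{\psi}\| \leq M(\mathfrak{k}+d-1)^{2d} d!^{\mathfrak{k}/2-1}\, \max_{\ell \in \supp(H)} \|(N_\ell+I)^{d/2}\ket{\psi}\|\,.
\end{equation*}
Since by hypothesis $r \geq d/2 -1$, i.e.\ $r+1 \geq d/2$, and $N_\ell+I \geq I$, the spectral calculus for $N_\ell$ gives $\|(N_\ell+I)^{d/2}\ket{\psi}\| \leq \|(N_\ell+I)^{r+1}\ket{\psi}\|$ for each $\ell$. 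For every such $\ell$ I would then apply Lemma~\ref{lem:rel-bounded-learning-N-L} on mode $\ell$ with jump operators $L_{1,\alpha_\ell}$ and $L_{r,\alpha_\ell}$. The per-mode constant $(r+2)(r+1)+16|\alpha_\ell|(1+|\alpha_\ell|^2)+8\sqrt{2}|\alpha_\ell|^2$ is uniformly dominated by the stated $c = (r+2)(r+1)+16\|\alpha\|_\infty(1+\|\alpha\|_\infty^2)+8\sqrt{2}\|\alpha\|_\infty^2$, so I obtain
\begin{equation*}
\|(N_\ell+I)^{r+1}\ket{\psi}\| \leq \tfrac{1}{\sqrt{1-\delta}}\, \|L'_\ell \ket{\psi}\| + \sqrt{\tfrac{c}{(4r+4)(1-\delta)}\Bigl(\tfrac{4r+3}{4r+4}\tfrac{c}{\delta}\Bigr)^{4r+3}}\,\|\ket{\psi}\|
\end{equation*}
with $L'_\ell := L_{r,\alpha_\ell}^\dagger L_{r,\alpha_\ell} + L_{1,\alpha_\ell}^\dagger L_{1,\alpha_\ell}$.

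The last step is to replace $L'_\ell$ by the global $L' = \sum_\ell L'_\ell$. Because the $L'_\ell$ are positive and supported on pairwise disjoint modes, they commute and all cross terms $L'_k L'_{k'}$ are positive; expanding $(L')^2 = \sum_{k,k'} L'_k L'_{k'}$ and retaining only the diagonal term of index $\ell$ yields $(L'_\ell)^2 \leq (L')^2$ as quadratic forms on the common core $\mathscr{H}_f$. This gives $\|L'_\ell \ket{\psi}\| \leq \|L'\ket{\psi}\|$ uniformly in $\ell$. Substituting into the previous chain and taking the maximum over $\ell$ on the right-hand side produces exactly the claimed inequality. The only subtle point is the domain bookkeeping, which is handled just as in Lemma~\ref{lem:rel-bounded-learning-N-L}: one first establishes the bound on $\mathscr{H}_f$ and then extends by continuity in the graph norm of $L'$, which dominates the graph norm of each $L'_\ell$. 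I do not anticipate any real obstacle beyond carefully tracking constants.
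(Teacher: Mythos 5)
Your proposal is correct and follows essentially the same route as the paper: apply Lemma~\ref{lem:rel-bounded-learning-H-N} to reduce to local number operators, use $d/2\le r+1$ and Lemma~\ref{lem:rel-bounded-learning-N-L} per mode, and pass from the local $L'_\ell$ to the global $L'$ via the operator inequality $(L'_\ell)^2\le (L')^2$, which the paper likewise justifies by the commutation of terms on different modes. No gaps.
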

    \begin{proof}
        The result follows by first applying Lemma \ref{lem:rel-bounded-learning-H-N} to the Hamiltonian, which shows 
        \begin{equation*}
            \vspace*{-1ex}\|H\ket{\psi}\|\leq M( \mathfrak{k}+d-1)^{d}d!^{\mathfrak{k}/2-1}\max_{\ell\in\supp(H)}\|(N_\ell+I)^{d/2}\ket{\psi}\|
        \end{equation*}
        with \begin{equation*}
            \supp(H)=\bigcup_{a=1}^M\supp(E_A).
        \end{equation*}
        Then, we relatively bound any power of the local number operator with the help of Lemma \ref{lem:rel-bounded-learning-N-L} and take the maximum over the coefficients, which finishes the proof. Note that we can bound $\bra{\psi} (L_{r,\alpha_\ell}^\dagger L_{r,\alpha_\ell}+L_{1,\alpha_\ell}^\dagger L_{1,\alpha_\ell})^2 \ket{\psi} \leq \bra{\psi} (L')^2 \ket{\psi}$ since terms on different modes commute.
    \end{proof}

    Finally, we state one auxiliary Lemma used in the above proofs: 
    \begin{lem}\label{lem:poly-lower-bound}
        For any $r\in \N$, $r\geq1$ and $x\geq0$,
        \begin{equation*}
            (x-r+1)\cdots(x-1)x\geq(x+1)^{2r}-(r+1)r(x+1)^{2r-1}
        \end{equation*}
        which directly implies the operator inequality
        \begin{equation}
            \begin{aligned}
                N[-r+1:0]^2\geq(N+I)^{2r}-(r+1)r(N+I)^{2r-1}\,.
            \end{aligned}
        \end{equation}
    \end{lem}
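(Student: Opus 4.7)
The printed scalar inequality must carry a square on the left-hand side; this is the form required by the implied operator statement $N[-r+1:0]^{2}\geq(N+I)^{2r}-r(r+1)(N+I)^{2r-1}$. I will therefore prove
\[
\bigl((x-r+1)\cdots(x-1)\,x\bigr)^{2}\;\geq\;(x+1)^{2r}-r(r+1)(x+1)^{2r-1}\qquad \text{for all } x\geq 0,
\]
from which the operator version is immediate by functional calculus applied to the self-adjoint number operator $N$, whose spectrum lies in $\N\subset[0,\infty)$.

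The plan is to set $y=x+1\geq 1$, so that the target becomes $\bigl(\prod_{j=1}^{r}(y-j)\bigr)^{2}\geq y^{2r}-r(r+1)y^{2r-1}=y^{2r-1}\bigl(y-r(r+1)\bigr)$, and then to split based on the sign of the right-hand side. If $y\leq r(r+1)$ the right-hand side is nonpositive while the left-hand side is a square, so the inequality holds trivially. If $y>r(r+1)$ then in particular $y>r$, and every factor $y-j$ with $1\leq j\leq r$ is strictly positive. In this regime the Weierstrass product inequality gives
\[
\prod_{j=1}^{r}\!\Bigl(1-\tfrac{j}{y}\Bigr)\;\geq\;1-\sum_{j=1}^{r}\tfrac{j}{y}\;=\;1-\tfrac{r(r+1)}{2y},
\]
hence $\prod_{j=1}^{r}(y-j)\geq y^{r}-\tfrac{r(r+1)}{2}\,y^{r-1}>0$, where strict positivity uses $y>r(r+1)\geq r(r+1)/2$. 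Squaring this inequality between two nonnegative quantities yields
\[
\Bigl(\prod_{j=1}^{r}(y-j)\Bigr)^{\!2}\;\geq\;\Bigl(y^{r}-\tfrac{r(r+1)}{2}y^{r-1}\Bigr)^{\!2}=y^{2r}-r(r+1)y^{2r-1}+\tfrac{r^{2}(r+1)^{2}}{4}y^{2r-2},
\]
and dropping the nonnegative remainder $\tfrac{r^{2}(r+1)^{2}}{4}y^{2r-2}$ establishes the scalar claim.

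The only delicate point is sign handling: squaring the Weierstrass estimate is only valid once all factors $y-j$ are known to be positive, which is precisely why the split at $y=r(r+1)$ is needed. Everything else is elementary algebra, and the transition from the scalar to the operator inequality is a routine application of the spectral theorem, evaluating both polynomial expressions on the nonnegative integer eigenvalues of $N$.
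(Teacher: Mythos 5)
Your proof is correct, and you were right to flag (and silently repair) the typo in the statement: as printed, the left-hand side $(x-r+1)\cdots(x-1)x$ has degree $r$ while the right-hand side has degree $2r$ with positive leading coefficient, so the scalar inequality can only be intended with the square on the left, which is exactly the form the paper's own proof works with (it studies $p_\ell(y)=(y-\ell)^2\cdots(y-1)^2$) and the form the operator inequality requires. Your route is genuinely different from the paper's. The paper expands $p_\ell(y)$ as $y^{2\ell}-(\ell+1)\ell y^{2\ell-1}+r_{2(\ell-1)}(y)$ and proves by induction on $\ell$ that the remainder polynomial $r_{2(\ell-1)}$ is nonnegative for $y\geq\ell$, then patches in the small-$y$ regime using $y^{2\ell}\leq(\ell+1)\ell y^{2\ell-1}$ and the vanishing of $p_\ell$ at small integers. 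You instead split on the sign of the right-hand side at $y=r(r+1)$: below that threshold the bound is trivial since the left side is a square, and above it all factors $y-j$ are positive, so the Weierstrass product inequality $\prod_j(1-j/y)\geq 1-\sum_j j/y$ gives $\prod_{j=1}^r(y-j)\geq y^r-\tfrac{r(r+1)}{2}y^{r-1}>0$, and squaring two positive quantities yields the claim with the extra nonnegative term $\tfrac{r^2(r+1)^2}{4}y^{2r-2}$ to spare. Your sign handling at both delicate points (validity of squaring, positivity of the lower bound before squaring) is explicit and correct, and the passage to the operator inequality via the spectral decomposition in the Fock basis is routine, as you say. Your argument is shorter and avoids tracking an explicit remainder polynomial; the only induction it hides is the one inside the standard Weierstrass inequality. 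It also proves the scalar inequality for all real $y\geq 1$ rather than only on the integers, though that extra generality is not needed for the operator statement.
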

    \begin{proof}
        To prove the above result, we follow the proof of \cite[Lemma~C.3]{Gondolfetal2023}. First, we define 
        \begin{equation*}
            \begin{aligned}
                p_{\ell}(y)&=(y-\ell)^2\cdots(y-1)^2\\
                &=(y^2-2y\ell+\ell^2)\cdots(y^2-2y+1)\\
                &\eqqcolon y^{2\ell}-2\Bigl(\sum_{k=1}^\ell k\Bigr)y^{2\ell-1}+r_{2(\ell-1)}(y)\\
                &\eqqcolon y^{2\ell}-(\ell+1)\ell y^{2\ell-1}+r_{2(\ell-1)}(y)\,.
            \end{aligned}
        \end{equation*}
        Next, we prove by induction over $\ell$ that $r_{2(\ell-1)}(y)$ is positive for $y\geq\ell-1$. The induction start is trivial because $p_\ell$ for $\ell=1$ reduces to the binomial formula. Next, assume that $r_{2(\ell-1)}(y)\geq0$ for all $y\geq\ell-1$. Then,
        \begin{equation*}
            \begin{aligned}
                &p_{\ell+1}(y)\\
                &=(y-\ell-1)^2p_{\ell}(y)\\
                &=y^2p_{\ell}(y)-2y(\ell+1)p_{\ell}(y)+(\ell+1)^2p_{\ell}(y)\\
                &=y^{2(\ell+1)}-(\ell+2)(\ell+1)y^{2(\ell+1)-1}+(y^2-2y(\ell+1))r_{2(\ell-1)}(y)+(\ell+1)^2(p_{\ell}(y)+2\ell y^{2\ell})\\
                &=y^{2(\ell+1)}-(\ell+2)(\ell+1)y^{2(\ell+1)-1}+r_{2\ell}(y)
            \end{aligned}
        \end{equation*}
        for $r_{2\ell}(y) = y(y-2(\ell+1))r_{2(\ell-1)}(y)+(\ell+1)^2(p_{\ell}(y)+2\ell y^{2\ell})$. This can be further simplified by
        \begin{equation*}
            \begin{aligned}
                y&(y-2(\ell+1))r_{2(\ell-1)}(y)+(\ell+1)^2(p_{\ell}(y)+2\ell y^{2\ell})\\
                &=(y^2-2y(\ell+1)+(\ell+1)^2)r_{2(\ell-1)}(y)+(\ell+1)^2(y^{2\ell}-(\ell+1)\ell y^{2\ell-1}+2\ell y^{2\ell})\\
                &=(y-(\ell+1))^2r_{2(\ell-1)}(y)+y^{2\ell-1}(\ell+1)^2((2\ell+1)y-(\ell+1)\ell)\\
                &\geq0
            \end{aligned}
        \end{equation*}
        by assumption $y\geq\ell$ and $r_{2(\ell-1)}\geq0$. This proves
        \begin{equation*}
            \begin{aligned}
                y^{2\ell}-(\ell+1)\ell y^{2\ell-1}\leq p_\ell(y)
            \end{aligned}
        \end{equation*}
        for all $y\geq\ell$. Due to the inequality
        \begin{equation*}
            y^{2\ell}\leq(\ell+1)\ell y^{2\ell-1}
        \end{equation*}
        and $p_\ell(y)=0$ for $1\leq y\leq \ell-1$ and $y\in\N$, the inequality extends to all $\ell\geq 1$ and shoes the following inequality via its spectral decomposition in the Fock basis:        \begin{equation}\label{eq:bounds-on-increasing-product}
                \begin{aligned}
                    N[-r+1:0]^2\geq(N+I)^{2r}-(r+1)r(N+I)^{2r-1}\,.
                \end{aligned}
            \end{equation}
        \end{proof}
    
\section{Polynomial regression bounds}
\label{sec:polyregression}
In the above, we need to regard the expectation values as polynomials in $|\alpha|$, and estimate the coefficients of this polynomial to learn the coefficients in the Hamiltonian. To this end, we will study the following problem: given a polynomial $p(x)$ of degree $d$ that satisfies $|p(x)| \leq \epsilon$ for $x \in [a, b]$, how large can its coefficients be?

\begin{lem}
    \label{lem:polynomial_regression_error}
    Let $p(x) = \sum_{n=0}^d p_n x^n$. If $|p(x)|\leq \epsilon$ for $x\in[a,b]$ and $|b-a|>2|a|$, then
    \begin{equation}
    |p_n| \leq \frac{d(2d-2)!!}{n!}\left|\frac{2}{b-a}\right|^n\frac{\epsilon}{1-2|a|/|b-a|}.
    \end{equation}
\end{lem}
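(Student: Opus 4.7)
The plan is to reduce the estimate on $|p_n|$ to a bound on $|p^{(n)}(0)|$ via the identity $p_n = p^{(n)}(0)/n!$, and to control this derivative by combining an inverse Markov-type inequality on $[a,b]$ with a Taylor extrapolation back to $0$. For concreteness I assume $0 < a < b$; when $0 \in [a,b]$ or $b < 0 < a$ the same argument applies after choosing, as the base point, whichever endpoint is closer in absolute value to $0$.

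The first step is the Markov brothers inequality rescaled from $[-1,1]$ to $[a,b]$: for every $m \in \{0,\ldots,d\}$ and every $x \in [a,b]$,
\begin{equation*}
|p^{(m)}(x)| \;\leq\; T_d^{(m)}(1)\,\Bigl(\tfrac{2}{b-a}\Bigr)^{m}\epsilon,
\end{equation*}
where $T_d$ is the degree-$d$ Chebyshev polynomial of the first kind. The quantity $T_d^{(m)}(1) = d^2(d^2-1^2)\cdots(d^2-(m-1)^2)/(2m-1)!!$ is non-decreasing in $m$ on $\{0,\ldots,d\}$ since each consecutive ratio $(d^2 - m^2)/(2m+1) \geq 1$ for $m \leq d-1$, so it is maximized at $m = d$, where a short computation using $(d-k)(d+k)$ factorizations gives $T_d^{(d)}(1) = 2^{d-1}d! = d\,(2d-2)!!$.

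The second step is to write, by Taylor's theorem, $p^{(n)}(0) = \sum_{k=0}^{d-n} p^{(n+k)}(a)\,(-a)^{k}/k!$. Evaluating the Markov bound at $x = a$, substituting, and bounding every occurrence of $T_d^{(n+k)}(1)$ by the maximum $d(2d-2)!!$ gives
\begin{equation*}
|p^{(n)}(0)| \;\leq\; d(2d-2)!!\,\epsilon\,\Bigl(\tfrac{2}{b-a}\Bigr)^{n} \sum_{k=0}^{d-n} \frac{1}{k!}\Bigl(\tfrac{2|a|}{|b-a|}\Bigr)^{k}.
\end{equation*}
The hypothesis $|b-a| > 2|a|$ places $r := 2|a|/|b-a|$ strictly in $(0,1)$, so the finite sum is bounded crudely by the convergent geometric series $\sum_{k\geq 0} r^{k} = 1/(1-r) = 1/(1 - 2|a|/|b-a|)$; dividing by $n!$ yields the claimed inequality.

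The only non-trivial ingredient is invoking the correct sharp form of the iterated Markov inequality — the Markov brothers inequality rather than the naive iteration $\|p'\|_\infty \leq d^2\|p\|_\infty$ applied $n$ times, which would give the strictly larger factor $d^{2n}$ instead of $d(2d-2)!!$ and would not match the factor appearing in the stated bound. Everything else is routine Taylor estimation plus the geometric majorization that is built into the hypothesis $|b-a| > 2|a|$.
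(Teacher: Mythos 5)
Your proof is correct and follows essentially the same route as the paper's: the reduction $p_n = p^{(n)}(0)/n!$, the Markov brothers' inequality on $[a,b]$ with the constant $C_M(d,k)\le d(2d-2)!!$, Taylor expansion about $a$, and the geometric-series bound enabled by $|b-a|>2|a|$. The only additions are your explicit verification that $C_M(d,k)$ is non-decreasing in $k$ and the identity $T_d^{(d)}(1)=2^{d-1}d!=d(2d-2)!!$, which the paper leaves as a direct computation.
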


\begin{proof}
Note that $p_n = p^{(n)}(0)/n!$, and therefore we will first focus on upper bounding $|p^{(n)}(0)|$.
The following derivation is almost identical derivation to \cite[Lemma E.2]{StilckFrança2024}, but we still include it here for completeness. By Markov brothers' inequality, we have
\begin{equation}
    |p^{(k)}(a)|\leq \left|\frac{2}{b-a}\right|^k C_M(d,k)\epsilon,
\end{equation}
where
\begin{equation}
    C_M(d,k) = \frac{d^2(d^2-1^2)(d^2-2^2)\cdots (d^2-(k-1)^2)}{(2k-1)!!}\leq d((2d-2)!!),
\end{equation}
where the last inequality comes from $C_M(d,k)\leq C_M(d,d)=d(2d-2)!!$ (one can prove $C_M(d,k)\leq C_M(d,k+1)$ by direct computation).
Performing Taylor expansion at $x=a$, we have
$
    p(x) = \sum_{k=0}^d p^{(k)}(a)\frac{(x-a)^k}{k!}.
$
Therefore
$
p^{(n)}(0) = \sum_{k=n}^d \frac{1}{(k-n)!}p^{(k)}(a)(-a)^{k-n}.
$
Consequently,
\begin{equation}
    |p^{(n)}(0)| \leq \sum_{k=n}^d \frac{1}{(k-n)!}\left|\frac{2}{b-a}\right|^k C_M(d,k)|a|^{k-n}\epsilon.
\end{equation}
Using $(k-n)!\geq 1$ and $C_M(d,k)\leq d(2d-2)!!$, we have
\begin{equation}
    |p^{(n)}(0)| \leq d(2d-2)!!\sum_{k=n}^d \left|\frac{2}{b-a}\right|^k |a|^{k-n}\epsilon\leq d(2d-2)!!\left|\frac{2}{b-a}\right|^n\frac{\epsilon}{1-2|a|/|b-a|},
\end{equation}
if $|b-a|>2|a|$ as assumed in the lemma. 
Therefore we have the inequality we want to prove by observing that $p_n=p^{(n)}(0)/n!$.
\end{proof}

\begin{lem}
    \label{lem:chebyshev_interpolation_error}
    Let $f(x)$ be a polynomial with degree at most $d$, and let $x_i=\cos\left(\frac{(2i-1)\pi}{2(d+1)}\right)$, $i=1,2,\cdots,d+1$, be the roots of the $(d+1)$th Chebyshev polynomial of the first kind. Then for any $x\in[-1,1]$, 
    $
    |f(x)|\leq \frac{2d+1}{d+1}\sum_{i=1}^{d+1}|f(x_i)|.
    $
\end{lem}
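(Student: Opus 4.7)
The plan is to use Lagrange interpolation and then obtain a pointwise bound on each individual Lagrange basis polynomial, rather than on their sum. Since $f$ has degree at most $d$ and is interpolated exactly by its values at the $d+1$ Chebyshev nodes $x_i$, we can write
\begin{equation*}
f(x) = \sum_{i=1}^{d+1} f(x_i)\,\ell_i(x), \qquad \ell_i(x) = \prod_{j \ne i}\frac{x-x_j}{x_i-x_j}.
\end{equation*}
Applying the triangle inequality then gives $|f(x)| \le \sum_i |f(x_i)|\cdot|\ell_i(x)|$, so it suffices to prove the uniform bound $|\ell_i(x)| \le (2d+1)/(d+1)$ for every $i$ and every $x \in [-1,1]$.

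The main step is to exploit the \emph{discrete orthogonality} of the Chebyshev polynomials $T_k$ of the first kind at the nodes $x_i=\cos\big(\frac{(2i-1)\pi}{2(d+1)}\big)$, namely
\begin{equation*}
\sum_{i=1}^{d+1} T_j(x_i)\,T_k(x_i) \;=\; \begin{cases} d+1 & j=k=0,\\ (d+1)/2 & j=k\ne 0,\\ 0 & j\ne k,\end{cases}
\end{equation*}
valid for $0\le j,k\le d$. Expanding $\ell_i$ in the basis $\{T_0,\dots,T_d\}$ and extracting the coefficients via this orthogonality (using $\ell_i(x_j)=\delta_{ij}$) yields the closed form
\begin{equation*}
\ell_i(x) \;=\; \frac{1}{d+1}\Bigl(1 + 2\sum_{k=1}^{d} T_k(x_i)\,T_k(x)\Bigr).
\end{equation*}
Since $|T_k(y)|\le 1$ for every $y\in[-1,1]$, the triangle inequality immediately gives $|\ell_i(x)| \le (1+2d)/(d+1)$, which is exactly the required bound.

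I expect the only subtle point to be the correct statement and verification of the discrete orthogonality relation at the first-kind Chebyshev nodes; this can be derived cleanly by substituting $x_i=\cos\theta_i$ with $\theta_i=\frac{(2i-1)\pi}{2(d+1)}$ and reducing $\sum_i \cos(j\theta_i)\cos(k\theta_i)$ to a geometric sum of roots of unity. Everything else is routine. Note that this approach gives a bound on each $|\ell_i(x)|$ separately, which is strictly stronger than the Lebesgue-constant bound needed here and is what allows the clean $(2d+1)/(d+1)$ constant to appear multiplying $\sum_i|f(x_i)|$ rather than $\max_i|f(x_i)|$.
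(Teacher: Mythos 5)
Your proof is correct, and it arrives at exactly the same expansion the paper uses, namely
\begin{equation*}
  f(x) \;=\; \frac{1}{d+1}\sum_{i=1}^{d+1} f(x_i) \;+\; \frac{2}{d+1}\sum_{k=1}^{d} T_k(x)\sum_{i=1}^{d+1} f(x_i)\,T_k(x_i)\,,
\end{equation*}
followed by the same triangle-inequality step with $|T_k|\le 1$. The only difference is how you derive this identity: the paper expands $f$ in the Chebyshev basis using the \emph{continuous} orthogonality relation (the weighted integrals) and then invokes the exactness of Gauss--Chebyshev quadrature with $d+1$ nodes for polynomials of degree up to $2d+1$ to turn the integrals into node sums; you instead start from Lagrange interpolation and use the \emph{discrete} orthogonality of $\{T_k\}_{k=0}^{d}$ at the first-kind nodes to compute each basis polynomial $\ell_i$ in closed form. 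These two routes are two faces of the same fact (quadrature exactness for $f T_n$ of degree $\le 2d$ is precisely what discrete orthogonality encodes), so I would not call the argument genuinely different; your framing does have the minor virtue of isolating the uniform bound $|\ell_i(x)|\le (2d+1)/(d+1)$ on each Lagrange basis polynomial, which makes the constant's origin transparent. Your claimed discrete orthogonality relation is correct for $0\le j,k\le d$ (it reduces to $\sum_i \cos(m\theta_i)=0$ for $1\le m\le 2d+1$), so there is no gap.
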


\begin{proof}
    Because the Chebyshev polynomials $T_0(x),T_1(x),\cdots,T_{d}(x)$ form an orthogonal basis of the vector space of polynomials with degree at most $d$, we have
    \begin{equation}
        f(x) = \sum_{n=1}^d \frac{2T_n(x)}{\pi} \int_{-1}^{1}\frac{f(y)T_n(y)}{\sqrt{1-y^2}}\dd y + \frac{1}{\pi}\int_{-1}^{1}\frac{f(y)}{\sqrt{1-y^2}}\dd y.
    \end{equation}
    Because Gauss-Chebyshev quadrature with $q$ quadrature points is exact for polynomials up to degree $2q-1$, and $f(y)T_n(y)$ has degree at most $2d$, by choosing $q=d+1$ we have
    \begin{equation}
        \int_{-1}^{1}\frac{f(y)T_n(y)}{\sqrt{1-y^2}}\dd y = \frac{\pi}{d+1}\sum_{i=1}^{d+1}f(x_i)T_{n}(x_i),
    \end{equation}
    for $n=0,1,\cdots,d$. Therefore
    \begin{equation}
        f(x) = \sum_{n=1}^d \frac{2T_n(x)}{d+1} \sum_{i=1}^{d+1}f(x_i)T_{n}(x_i) + \frac{1}{d+1}\sum_{i=1}^{d+1}f(x_i).
    \end{equation}
    Consequently
    \begin{equation}
        |f(x)|\leq \sum_{n=1}^d \frac{2}{d+1} \sum_{i=1}^{d+1}|f(x_i)| + \frac{1}{d+1}\sum_{i=1}^{d+1}|f(x_i)| = \frac{2d+1}{d+1}\sum_{i=1}^{d+1}|f(x_i)|,
    \end{equation}
    where we have used the fact that $|T_n(x)|\leq 1$ for all $x\in[-1,1]$.
\end{proof}

We remark that the above lemma holds for arbitrary interval $[a,b]$ instead of just for $[-1,1]$. We only need to redefine
\begin{equation}
\label{eq:quadrature_pts_ab}
    x_i = \frac{a+b}{2} + \frac{b-a}{2}\cos\left(\frac{(2i-1)\pi}{2(d+1)}\right),
\end{equation}
for $i=1,2,\cdots,d+1$.

Combining the above lemmas, we have the following corollary:
\begin{cor}
    \label{cor:coef_est_error}
    Let $f(x)=\sum_{n=0}^d p_n x^n$ be a polynomial with degree at most $d=\Or(1)$. Assume $|f(x_i)|\leq \epsilon$ for $x_i$, $i=1,2,\cdots,d+1$, defined in \eqref{eq:quadrature_pts_ab} in which $(b-a)>2a>0$, then
    $
    |p_n|=\Or\left(\frac{2^n}{(b-a)^n}\frac{\epsilon}{1-2a/(b-a)}\right).
    $
\end{cor}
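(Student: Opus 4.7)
The plan is to chain together the two preceding lemmas, using Lemma~\ref{lem:chebyshev_interpolation_error} to convert pointwise control at the $d+1$ Chebyshev nodes into uniform control on $[a,b]$, and then invoking Lemma~\ref{lem:polynomial_regression_error} to extract bounds on each coefficient $p_n$.

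Concretely, I would first observe that although Lemma~\ref{lem:chebyshev_interpolation_error} is stated on $[-1,1]$, the identical argument applies on $[a,b]$ with the affinely rescaled nodes in \eqref{eq:quadrature_pts_ab}: the change of variables $x \mapsto \frac{a+b}{2}+\frac{b-a}{2}x$ maps the standard Chebyshev nodes to the $x_i$ in \eqref{eq:quadrature_pts_ab} and preserves the degree. Applying this rescaled version together with the hypothesis $|f(x_i)| \le \epsilon$ gives, for every $x \in [a,b]$,
\begin{equation*}
    |f(x)| \;\le\; \frac{2d+1}{d+1}\sum_{i=1}^{d+1}|f(x_i)| \;\le\; (2d+1)\,\epsilon.
\end{equation*}
Thus $f$ is uniformly $O(\epsilon)$ on $[a,b]$, with an implicit constant depending only on $d$.

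Next, I would feed this uniform bound into Lemma~\ref{lem:polynomial_regression_error}, with the $\epsilon$ there replaced by $(2d+1)\epsilon$. The hypothesis $|b-a| > 2|a|$ required by that lemma is exactly what is assumed (written as $(b-a) > 2a > 0$ in the corollary), so the bound
\begin{equation*}
    |p_n| \;\le\; \frac{d(2d-2)!!}{n!}\Bigl|\tfrac{2}{b-a}\Bigr|^{n}\,\frac{(2d+1)\epsilon}{1-2|a|/|b-a|}
\end{equation*}
follows directly. Since $d = \Or(1)$, the combinatorial prefactor $\tfrac{d(2d-2)!!\,(2d+1)}{n!}$ is absorbed into the $\Or(\cdot)$, yielding the claimed $|p_n| = \Or\!\left(\frac{2^n}{(b-a)^n}\cdot\frac{\epsilon}{1-2a/(b-a)}\right)$.

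There is no genuine obstacle in this argument; it is essentially a two-step glueing. The only minor point to be careful about is to verify that the rescaling argument for Lemma~\ref{lem:chebyshev_interpolation_error} really does preserve the inequality with the same constant $\tfrac{2d+1}{d+1}$ (which it does, because both the orthogonality of Chebyshev polynomials and the exactness of Gauss--Chebyshev quadrature are invariant under affine change of variables), and to confirm that the hypothesis $(b-a) > 2a > 0$ implies both $|b-a| > 2|a|$ and positivity of the denominator $1 - 2|a|/|b-a|$ needed in Lemma~\ref{lem:polynomial_regression_error}.
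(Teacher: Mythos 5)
Your proposal is correct and is precisely the argument the paper intends: the paper offers no explicit proof beyond the sentence ``Combining the above lemmas,'' and the combination it has in mind is exactly your two-step glueing of the rescaled Lemma~\ref{lem:chebyshev_interpolation_error} (yielding the uniform bound $(2d+1)\epsilon$ on $[a,b]$) with Lemma~\ref{lem:polynomial_regression_error}. Your checks on the affine rescaling and on the hypothesis $(b-a)>2a>0$ are the right ones and both go through.
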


We next look at the multivariate case. Instead of going through a detailed calculation as we did for the single variable case, we will use the norm equivalence in finite dimensional vectors spaces to establish a similar bound without specifying the constant. We only need to focus on the dependence on $a_1,\cdots,a_k$, which needs to be tuned according to the required precision.

First we define the Chebyshev quadrature nodes $y_{i_1,\cdots,i_k}=(y_{i_1}^1,\cdots,y_{i_k}^k)$:
\begin{equation}
\label{eq:quadrature_pts_multivariate}
    y_{i_\nu}^{\nu} = \cos\left(\frac{(2i_\nu-1)\pi}{2(d_\nu+1)}\right)
\end{equation}
for $i_\nu=1,2,\cdots,d_\nu+1$, $\nu=1,2,\cdots,k$. The Chebyshev quadrature nodes in the domain $[a_1,b_1]\times\cdots\times[a_k,b_k]$ is then
\begin{equation}
\label{eq:quadrature_pts_ab_multivariate}
    r_{i_\nu}^{\nu} = \frac{a_\nu+b_\nu}{2} + \frac{b_\nu-a_\nu}{2}y_{i_\nu}^{\nu},
\end{equation}

\begin{lem}
    \label{lem:bounding_coef_with_chebyshev_nodes_multivariate}
    Let $r_{i_1,\cdots,i_k}=(r_{i_1}^1,r_{i_2}^2,\cdots,r_{i_k}^k)$ be as defined in \eqref{eq:quadrature_pts_ab_multivariate}. Let $p$ be a polynomial of the form
    $
    p(x) = \sum_{l_1=0}^{d_1}\cdots\sum_{l_k=0}^{d_k} p_{l_1,\cdots,l_k}x^{l_1}\cdots x^{l_k}.
    $
    If $b_\nu-a_\nu\geq 2$ and $b_\nu\geq 2a_\nu>0$ for all $\nu=1,2,\cdots,k$, then there exists a constant $C_{d_1,\cdots,d_k}>0$ that depends only on $d_1,\cdots,d_k$ such that
    \begin{equation}
    \label{eq:bounding_coef_with_chebyshev_nodes}
        \max_{l_1,\cdots,l_k}|p_{l_1,\cdots,l_k}|\leq C_{d_1,\cdots,d_k}\max_{i_1,\cdots,i_k} |p(r_{i_1,\cdots,i_k})|.
    \end{equation}
\end{lem}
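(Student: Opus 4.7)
The plan is to reduce the statement to a norm-equivalence on the standard box $[-1,1]^k$ and then track the constants through an affine change of variables.

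First, I would introduce the rescaled polynomial $\tilde p(y) := p(r_1(y_1),\ldots,r_k(y_k))$, where $r_\nu(y_\nu) = (a_\nu+b_\nu)/2 + ((b_\nu-a_\nu)/2)\,y_\nu$ is the affine bijection from $[-1,1]$ onto $[a_\nu,b_\nu]$. By construction $\tilde p$ has the same multi-degree as $p$, and $\tilde p(y_{i_1,\ldots,i_k}) = p(r_{i_1,\ldots,i_k})$ for the standard Chebyshev nodes defined in \eqref{eq:quadrature_pts_multivariate}.

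Second, on the finite-dimensional space $V$ of polynomials of multi-degree at most $(d_1,\ldots,d_k)$, of dimension $N = \prod_\nu (d_\nu+1)$, I would observe that the evaluation map $V \to \R^N$, $\tilde p \mapsto (\tilde p(y_{i_1,\ldots,i_k}))_{i_1,\ldots,i_k}$, is a linear isomorphism: it tensorises univariate Chebyshev interpolation at $d_\nu+1$ distinct nodes, which is bijective (this is exactly the fact exploited in the proof of Lemma~\ref{lem:chebyshev_interpolation_error}). Consequently the coefficient-norm and node-norm on $V$ are equivalent, yielding a constant $C'_{d_1,\ldots,d_k}>0$ depending only on the $d_\nu$'s with
\[
\max_{m} |\tilde p_{m_1,\ldots,m_k}| \;\leq\; C'_{d_1,\ldots,d_k}\,\max_{i_1,\ldots,i_k} |p(r_{i_1,\ldots,i_k})|.
\]

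Third, I would invert the change of variables via $y_\nu = (2x_\nu - a_\nu - b_\nu)/(b_\nu - a_\nu)$ and the binomial theorem to get
\[
p_{l_1,\ldots,l_k} \;=\; \sum_{m\geq l}\tilde p_{m_1,\ldots,m_k}\prod_{\nu=1}^{k}\binom{m_\nu}{l_\nu}\!\left(\frac{2}{b_\nu-a_\nu}\right)^{\!l_\nu}\!\!\left(\frac{-(a_\nu+b_\nu)}{b_\nu-a_\nu}\right)^{\!m_\nu-l_\nu}\!\!.
\]
Here the hypothesis $b_\nu-a_\nu\geq 2$ gives $2/(b_\nu-a_\nu)\leq 1$, while $b_\nu\geq 2a_\nu>0$ forces $b_\nu-a_\nu\geq b_\nu/2$ and hence $(a_\nu+b_\nu)/(b_\nu-a_\nu)\leq 3$. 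Each factor is therefore bounded by a constant depending only on $d_\nu$, and after absorbing the binomial prefactors and the constant $C'_{d_1,\ldots,d_k}$ from Step~2 into a single $C_{d_1,\ldots,d_k}$, we recover \eqref{eq:bounding_coef_with_chebyshev_nodes}.

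The main obstacle I anticipate is the bookkeeping in Step~3, which is precisely where the two hypotheses on $(a_\nu,b_\nu)$ are used. The condition $b_\nu\geq 2a_\nu$ is what prevents the inverse change of variables from blowing up (the ratio $(a_\nu+b_\nu)/(b_\nu-a_\nu)$ diverges as $a_\nu\uparrow b_\nu$), while $b_\nu-a_\nu\geq 2$ controls the other factor; note these assumptions are strictly weaker than the $(b-a)>2a$ required by Corollary~\ref{cor:coef_est_error}, so a naive iteration of the univariate result would not suffice and the soft norm-equivalence route in Step~2 is essential.
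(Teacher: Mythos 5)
Your proposal is correct and follows essentially the same route as the paper: rescale to $[-1,1]^k$, invoke equivalence of norms on the finite-dimensional space of polynomials of bounded multi-degree (with the sup over the tensor-product Chebyshev nodes as one of the norms), and use the hypotheses $b_\nu-a_\nu\geq 2$ and $b_\nu\geq 2a_\nu>0$ to control the back-substitution to the original coefficients. The only difference is cosmetic: the paper pairs the node norm with a sup of partial derivatives over $[-3,3]^k$ and recovers $p_{l_1,\dots,l_k}$ via the chain rule at $\varphi(0)$, whereas you pair it with the monomial-coefficient norm of the rescaled polynomial and expand the affine substitution by the binomial theorem --- both hinge on exactly the same bounds $2/(b_\nu-a_\nu)\leq 1$ and $(a_\nu+b_\nu)/(b_\nu-a_\nu)\leq 3$.
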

 
\begin{proof}
We shift and rescale the domain by defining
$
\varphi(x) = \left(\frac{2x_1-(a_1+b_1)}{b_1-a_1},\cdots, \frac{2x_k-(a_k+b_k)}{b_k-a_k}\right),
$
and 
\begin{equation}
    q(y) = p(\varphi^{-1}(y)).
\end{equation}
Then $p(r_{i_1,\cdots,i_k}) = q(y_{i_1,\cdots,i_k})$.

Because all polynomials with degrees $d_1,d_2,\cdots,d_k$ for variables $x_1,x_2,\cdots,x_k$ form a complex vector space of dimension $(d_1+1)(d_2+1)\cdots(d_k+1)$, it is easy to check that
\begin{equation}
    \|q\|_{\mathrm{nodes}} := \max_{i_1,\cdots,i_k} |q(y_{i_1,\cdots,i_k})|
\end{equation}
is a norm for this vector space. Similarly,
\begin{equation}
      \|q\|_{\max}:= \max_{y\in[-3,3]^k} \max_{l_1,\cdots,l_k}\left|\frac{\partial^{l_1+\cdots+l_k}}{\partial_{y_1}^{l_1}\cdots \partial_{y_k}^{l_k}}q(y)\right|
\end{equation}
also defines a norm. 
Here, we take the maximum over $[-3,3]^k$ because $b_\nu\geq 2a_\nu>0$ ensures $\varphi(0)\in[-3,3]^k$.
By the equivalence between norms for finite-dimensional vector spaces, we then have a constant $C_{d_1,\cdots,d_k}$ that depends only on $d_1,\cdots,d_k$ such that
\begin{equation}
     \|q\|_{\max}\leq C_{d_1,\cdots,d_k}\|q\|_{\mathrm{nodes}}.
\end{equation}

We then relate $\|q\|_{\max}$ to the coefficients of $p(x)$. Note that
$
p_{l_1,\cdots,l_k} = \frac{1}{l_1!\cdots l_k!}\frac{\partial^{l_1+\cdots+l_k}}{\partial_{x_1}^{l_1}\cdots \partial_{x_k}^{l_k}}p(x)\Big|_{x=0} = \prod_{\nu=1}^k\frac{1}{l_\nu!}\left(\frac{2}{b_\nu-a_\nu}\right)^{l_\nu}\frac{\partial^{l_1+\cdots+l_k}}{\partial_{x_1}^{l_1}\cdots \partial_{x_k}^{l_k}}q(y)\Big|_{y=\varphi(0)}.
$
Since $b_\nu-a_\nu\geq 2$ for all $\nu$, and $\varphi(0)\in[-3,3]^k$, which is ensured by $b_\nu\geq 2a_\nu>0$, we have
$
\max_{l_1,\cdots,l_k}|p_{l_1,\cdots,l_k}|\leq \|q\|_{\max}.
$
Further we have
$
\|q\|_{\max}\leq C_{d_1,\cdots,d_k}\|q\|_{\mathrm{nodes}} = C_{d_1,\cdots,d_k}\max_{i_1,\cdots,i_k} |p(r_{i_1,\cdots,i_k})|.
$
Therefore we have \eqref{eq:bounding_coef_with_chebyshev_nodes}.
\end{proof}

\section{Pseudoinverses of unbounded operators}

We consider a separable Hilbert space $\mathcal{H}$. Let $A$ and $B$ be possibly unbounded operators on $\mathcal{H}$, and we want to provide a lower bound for $\eta$ satisfying
\begin{equation}
\label{eq:spectral_gap_lower_bound}
    \braket{\psi|(I-P)B^\dag A^\dag A B(I-P)|\psi}\geq \eta \braket{\psi|I-P|\psi},
\end{equation}
where $P$ is the projection operator onto $\ker(AB)$.

To prove this, we will need the notion of the pseudoinverse. While pseudoinverses are well-known for matrices, they have been extended to unbounded operators on Hilbert spaces \cite{ben1963contributions}.

\begin{defn} \label{def:pseudoinverse}
    Let $\mathcal H_1$, $\mathcal H_2$ be two complex Hilbert spaces. Let $T: \mathcal H_1 \to \mathcal{H}_2$ be a linear operator. Let $P_1$ be the orthogonal projection onto $\overline{\operatorname{ran}(T)}$ and $P_2$ the orthogonal projection onto $\overline{\operatorname{ran}(T^+)}$. Then, the operator $T^+$ is a pseudoinverse of $T$ if $\overline{\operatorname{dom}(T^+)} = \mathcal H_2$ and 
    \begin{align*}
        \operatorname{ran}(T) \subseteq \operatorname{dom}(T^+)\, , &\qquad \operatorname{ran}(T^+) \subseteq \operatorname{dom}(T)\, , \\
        TT^+ = P_1\, , &\qquad T^+T=P_2 \, . 
    \end{align*}
\end{defn} 
The following theorem follows from \cite[Theorem A]{tseng1949generalized} (see also \cite[Theorem 6]{ben1963contributions}):
\begin{lem}
    Let $\mathcal H_1$, $\mathcal H_2$ be two complex Hilbert spaces. Let $T: \mathcal H_1 \to \cH_2$ be a densely defined closed linear operator. Then, $T$ has a unique closed pseudoinverse $T^+$.
\end{lem}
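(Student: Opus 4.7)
The plan is to construct $T^+$ explicitly from the orthogonal decompositions induced by $T$ and then verify the four defining properties of Definition~\ref{def:pseudoinverse}. Since $T$ is closed, $\ker(T)$ is a closed subspace of $\mathcal{H}_1$, so we have $\mathcal{H}_1 = \ker(T) \oplus \ker(T)^\perp$; likewise $\mathcal{H}_2 = \overline{\mathrm{ran}(T)} \oplus \overline{\mathrm{ran}(T)}^\perp$. The restriction $T_0 := T\rvert_{\mathrm{dom}(T) \cap \ker(T)^\perp}$ is injective by construction, and its range equals $\mathrm{ran}(T)$, which is dense in $\overline{\mathrm{ran}(T)}$.

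The candidate pseudoinverse is then defined on the dense subspace $\mathrm{ran}(T) + \overline{\mathrm{ran}(T)}^\perp \subset \mathcal{H}_2$ by setting $T^+ y := T_0^{-1} y$ for $y \in \mathrm{ran}(T)$ and $T^+ y := 0$ for $y \in \overline{\mathrm{ran}(T)}^\perp$, extended by linearity. Uniqueness of the decomposition $y = y_1 + y_2$ into these two orthogonal summands makes $T^+$ well defined; moreover $\mathrm{ran}(T^+) \subseteq \mathrm{dom}(T) \cap \ker(T)^\perp$. Directly from the definitions one checks the inclusions $\mathrm{ran}(T) \subseteq \mathrm{dom}(T^+)$, $\mathrm{ran}(T^+) \subseteq \mathrm{dom}(T)$, and the identities $TT^+ = P_{\overline{\mathrm{ran}(T)}}$ and $T^+T = P_{\ker(T)^\perp}$. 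Since $\overline{\mathrm{ran}(T^+)} = \ker(T)^\perp$, this matches the $P_2$ in Definition~\ref{def:pseudoinverse}.

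The main obstacle is closedness. The restriction $T_0$ is closed because its graph equals $\mathrm{graph}(T) \cap (\ker(T)^\perp \times \mathcal{H}_2)$, an intersection of closed sets. As a closed injection, its algebraic inverse $T_0^{-1}$ with $\mathrm{dom}(T_0^{-1}) = \mathrm{ran}(T)$ is automatically closed. To conclude that the full $T^+$ is closed, I would suppose $y_n \to y$ in $\mathcal{H}_2$ with $T^+ y_n \to x$ in $\mathcal{H}_1$, split $y_n = y_n^{(1)} + y_n^{(2)}$ along $\overline{\mathrm{ran}(T)} \oplus \overline{\mathrm{ran}(T)}^\perp$, and use orthogonality and continuity of the projections to pass the closedness of $T_0^{-1}$ component-by-component. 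This yields $y \in \mathrm{dom}(T^+)$ and $T^+ y = x$.

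Finally, for uniqueness, let $S$ be any closed pseudoinverse. The relation $TS = P_1$ forces $S$ to invert $T$ on $\mathrm{ran}(T)$ up to $\ker(T)$, while $ST = P_2 = P_{\overline{\mathrm{ran}(S)}}$ pins $\overline{\mathrm{ran}(S)} = \ker(T)^\perp$, so $S$ maps $\mathrm{ran}(T)$ into $\ker(T)^\perp \cap \mathrm{dom}(T)$ and must coincide with $T_0^{-1}$ there. On $\overline{\mathrm{ran}(T)}^\perp$, $TS = P_1$ vanishes, and since $\mathrm{ran}(S) \subseteq \ker(T)^\perp$, the kernel constraint forces $Sy = 0$. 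Thus $S = T^+$ on a core of $T^+$, and closedness of both operators upgrades this to equality as closed operators.
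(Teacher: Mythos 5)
The paper offers no proof of this lemma at all: it is imported directly from Tseng's theorem (and Ben-Israel's account of it), so there is nothing in-paper to compare against. Your explicit construction is exactly the classical one behind those citations --- split $\mathcal{H}_1=\ker(T)\oplus\ker(T)^\perp$ and $\mathcal{H}_2=\overline{\operatorname{ran}(T)}\oplus\overline{\operatorname{ran}(T)}^\perp$, invert the injective restriction $T_0$ on $\operatorname{ran}(T)$, and annihilate $\overline{\operatorname{ran}(T)}^\perp$ --- and the existence and closedness parts are sound: the graph of $T_0$ is an intersection of closed sets, the inverse of a closed injection is closed, and the component-wise limit argument through the continuous projection onto $\overline{\operatorname{ran}(T)}$ does close the direct sum. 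Two points deserve to be made explicit. First, for $T^+T=P_2$ you need $\overline{\operatorname{dom}(T)\cap\ker(T)^\perp}=\ker(T)^\perp$; this holds because $\ker(T)\subseteq\operatorname{dom}(T)$, so the orthogonal projection onto $\ker(T)^\perp$ maps the dense set $\operatorname{dom}(T)$ onto $\operatorname{dom}(T)\cap\ker(T)^\perp$. Second, the uniqueness step is where your bookkeeping is thinnest: ``agreement on a core plus closedness of both operators'' is not by itself a valid principle --- you need both inclusions $S\subseteq T^+$ and $S\supseteq T^+$. For the first, note that $TS=P_1$ forces $P_1y=T(Sy)\in\operatorname{ran}(T)$ for every $y\in\operatorname{dom}(S)$, which places $\operatorname{dom}(S)$ inside $\operatorname{ran}(T)+\overline{\operatorname{ran}(T)}^\perp$ and, combined with your identification of $S$ on each summand, gives $S\subseteq T^+$. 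For the second, density of $\operatorname{dom}(S)$ makes $\operatorname{dom}(S)\cap\overline{\operatorname{ran}(T)}^\perp$ dense in $\overline{\operatorname{ran}(T)}^\perp$, where $S$ vanishes, so closedness of $S$ forces all of $\overline{\operatorname{ran}(T)}^\perp$ into $\operatorname{dom}(S)$ and yields $S\supseteq T^+$. With those two observations spelled out, your argument is complete and supplies a self-contained proof where the paper only cites.
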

A priori, $T^+$ might be unbounded, but if $\operatorname{ran}(T)$ is closed, we can say more:
\begin{lem} \label{lem:bounded-pseudoinverse}
    Let $\mathcal H_1$, $\mathcal H_2$ be two complex Hilbert spaces. Let $T: \mathcal H_1 \to H_2$ be a densely defined closed linear operator. Then, the pseudoinverse $T^+$ is a bounded operator if and only if $\operatorname{ran}(T)$ is closed.
\end{lem}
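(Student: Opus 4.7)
The plan is to prove both implications via the concrete representation of the pseudoinverse in terms of the bijective restriction
\begin{equation*}
T_0 := T\big|_{\operatorname{dom}(T) \cap \ker(T)^\perp} : \operatorname{dom}(T) \cap \ker(T)^\perp \longrightarrow \operatorname{ran}(T),
\end{equation*}
which is a bijection thanks to the orthogonal decomposition $\operatorname{dom}(T) = \ker(T) \oplus (\operatorname{dom}(T) \cap \ker(T)^\perp)$, valid because $\ker(T)$ is closed (as $T$ is closed) and is contained in $\operatorname{dom}(T)$.

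For the forward direction, suppose $T^+$ is bounded. By the previous lemma, $T^+$ is closed, and by Definition \ref{def:pseudoinverse} its domain is dense in $\mathcal{H}_2$; hence $T^+$ extends uniquely to a bounded operator on all of $\mathcal{H}_2$. The identity $T T^+ = P_1$ then holds everywhere, and I would conclude
\begin{equation*}
\overline{\operatorname{ran}(T)} = \operatorname{ran}(P_1) = \operatorname{ran}(T T^+) \subseteq \operatorname{ran}(T),
\end{equation*}
so $\operatorname{ran}(T)$ is closed.

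For the reverse direction, suppose $\operatorname{ran}(T)$ is closed. The key step is to show that $T_0^{-1} : \operatorname{ran}(T) \to \ker(T)^\perp$ has closed graph: if $T x_n \to y$ and $x_n \to x$ with $x_n \in \operatorname{dom}(T_0)$, then closedness of $T$ gives $x \in \operatorname{dom}(T)$ with $T x = y$, and closedness of $\ker(T)^\perp$ keeps $x$ in that subspace. Since $\operatorname{ran}(T)$ is closed, $T_0^{-1}$ is a closed, everywhere-defined operator between Hilbert spaces, so the closed graph theorem delivers its boundedness. I would then set $S := T_0^{-1} P_1$, a bounded operator on $\mathcal{H}_2$, and verify that it satisfies the four defining properties of Definition \ref{def:pseudoinverse}. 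Uniqueness of the pseudoinverse then forces $T^+ = S$, proving boundedness.

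The main obstacle will be checking that $S$ truly matches the abstractly defined $T^+$, i.e.\ verifying $T S = P_1$ and $S T = P_2$ with $P_2$ the orthogonal projection onto $\overline{\operatorname{ran}(S)} = \ker(T)^\perp$. The first identity is immediate from $T T_0^{-1} = \mathrm{id}_{\operatorname{ran}(T)}$ together with $\operatorname{ran}(P_1) = \operatorname{ran}(T)$. The second requires the decomposition of $\operatorname{dom}(T)$ above: for $x = k + x_0$ with $k \in \ker(T)$ and $x_0 \in \operatorname{dom}(T) \cap \ker(T)^\perp$, one has $T x = T x_0$, hence $S T x = T_0^{-1} T x_0 = x_0$, which is precisely the projection of $x$ onto $\ker(T)^\perp$. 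Once this book-keeping is done, the uniqueness statement from the preceding lemma closes the argument.
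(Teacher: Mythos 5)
Your proof is correct. Note that the paper itself does not prove this lemma at all: it is stated as an imported fact from the literature on generalized inverses of unbounded operators (the references around \cite{ben1963contributions} and \cite{tseng1949generalized}), so there is no in-paper argument to compare against. Your route --- working with the bijective restriction $T_0$ of $T$ to $\operatorname{dom}(T)\cap\ker(T)^\perp$, showing $T_0^{-1}$ is closed and everywhere defined on the Hilbert space $\operatorname{ran}(T)$ so that the closed graph theorem gives boundedness, and then identifying $S=T_0^{-1}P_1$ with $T^+$ via the uniqueness of the closed pseudoinverse --- is the standard textbook argument and all the steps check out: the decomposition $\operatorname{dom}(T)=\ker(T)\oplus(\operatorname{dom}(T)\cap\ker(T)^\perp)$ is valid because $\ker(T)$ is closed and contained in $\operatorname{dom}(T)$; the density of $\operatorname{dom}(T)\cap\ker(T)^\perp$ in $\ker(T)^\perp$ (needed so that $P_2$ is the projection onto $\ker(T)^\perp$) follows by projecting a sequence from the dense set $\operatorname{dom}(T)$; and $S$ is bounded and everywhere defined, hence closed, so uniqueness applies. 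In the forward direction you can streamline slightly: a closed operator that is bounded on its domain automatically has closed domain, so density already forces $\operatorname{dom}(T^+)=\mathcal H_2$ with no extension step needed, and the inclusion $\overline{\operatorname{ran}(T)}=\operatorname{ran}(P_1)=\operatorname{ran}(TT^+)\subseteq\operatorname{ran}(T)$ finishes it. (If one does extend by continuity instead, one should note that closedness of $T$ is what guarantees the extended operator still satisfies $TT^+=P_1$.) These are presentational points only; the argument is complete.
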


In order to calculate $T^+$ explicitly, Theorem 3.2 \cite{lardy1975series} provides a series expansion:
\begin{lem} \label{lem:pseudoinverse-series}
    Let $\mathcal H_1$, $\mathcal H_2$ be two complex Hilbert spaces. Let $T: \mathcal H_1 \to \mathcal H_2$ be a densely defined closed linear operator with closed range. Then,
    \begin{equation*}
        T^+ = \sum_{k=1}^\infty T^\dagger (I_{\mathcal H_2} + T T^\dagger)^{-k}
    \end{equation*}
    in the uniform operator topology.
\end{lem}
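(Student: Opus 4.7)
The plan is to reduce the infinite series to a functional-calculus statement about the self-adjoint operator $S := TT^\dagger$ on $\mathcal H_2$, and then identify the resulting limit with $T^+$ via the defining properties in Definition~\ref{def:pseudoinverse}. The decomposition $\mathcal H_2 = \ker(T^\dagger) \oplus \overline{\operatorname{ran}(T)}$ trivially handles the $\ker(T^\dagger)$ component, since $T^\dagger$ vanishes there and so does each term $T^\dagger(I+S)^{-k}$; all the work happens on $\overline{\operatorname{ran}(T)}$.

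First I would establish that $S$ is a nonnegative self-adjoint operator whose restriction to $\overline{\operatorname{ran}(T)}$ has spectrum contained in $[\eta,\infty)$ for some $\eta > 0$. This is the only place where the closed-range hypothesis enters. By the closed-range theorem together with an open-mapping argument applied to $T^\dagger$ restricted to $\overline{\operatorname{ran}(T)} = (\ker T^\dagger)^\perp$, there exists $\eta>0$ with $\|T^\dagger\psi\|^2\ge \eta \|\psi\|^2$ for all $\psi$ in that subspace, equivalently $\langle \psi, S\psi\rangle \ge \eta\|\psi\|^2$. With this gap in hand, a spectral-calculus computation on $\overline{\operatorname{ran}(T)}$ gives the telescoping identity
\begin{equation*}
\sum_{k=1}^N (I+S)^{-k} = S^{-1}\bigl(I - (I+S)^{-N}\bigr)\,,
\end{equation*}
and $\|(I+S)^{-N}\|_\infty \le (1+\eta)^{-N}$ on this subspace shows convergence in operator norm to $S^{-1} P_1$, where $P_1$ denotes the orthogonal projection onto $\overline{\operatorname{ran}(T)}$.

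Next I would transfer convergence across the unbounded factor $T^\dagger$ via the resolvent-swap identity $T^\dagger(I+TT^\dagger)^{-1} = (I+T^\dagger T)^{-1}T^\dagger$, which follows from $(I+T^\dagger T) T^\dagger = T^\dagger(I+TT^\dagger)$ on appropriate domains. Iterating yields $T^\dagger(I+S)^{-k} = (I+T^\dagger T)^{-k}T^\dagger$, and together with the elementary bound $\|T^\dagger(I+S)^{-1/2}\|_\infty \le 1$ (from $S(I+S)^{-1}\le I$), this gives $\|T^\dagger(I+S)^{-k}\|_\infty \le (1+\eta)^{-k+1/2}$ on $\overline{\operatorname{ran}(T)}$, so $\sum_k T^\dagger(I+S)^{-k}$ converges uniformly to $T^\dagger S^{-1} P_1$. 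To identify this limit with $T^+$, I would verify the conditions in Definition~\ref{def:pseudoinverse}: both sides vanish on $\ker(T^\dagger)$; on $\overline{\operatorname{ran}(T)}$ one computes $T\cdot T^\dagger S^{-1} = S S^{-1} = I$, hence $TT^+ = P_1$; and an analogous calculation using the companion self-adjoint operator $S' := T^\dagger T$ (together with its spectral gap on $(\ker T)^\perp$) gives $T^+ T = P_2$. Uniqueness of the pseudoinverse then closes the argument.

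The main obstacle will be domain bookkeeping for the unbounded operators: ensuring that $T^\dagger$ may be applied term-by-term inside the series (which reduces to the inclusion $\operatorname{ran}((I+S)^{-1}) \subseteq \operatorname{dom}(TT^\dagger) \subseteq \operatorname{dom}(T^\dagger)$) and that the resolvent-swap identity is a genuine operator identity on the relevant domain rather than a formal manipulation. These follow from the self-adjointness of $T^\dagger T$ (von Neumann's theorem), which guarantees that $(I+T^\dagger T)^{-1}$ is everywhere defined and bounded, but each step must be carried out carefully to avoid sleight of hand with unbounded operators.
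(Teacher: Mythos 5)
The paper itself offers no proof of this lemma: it is imported verbatim as Theorem 3.2 of \cite{lardy1975series}, so there is no in-paper argument to compare yours against. Your blind proof is a correct, self-contained reconstruction along the standard spectral-theoretic lines, and all the essential ingredients are in place. The decomposition $\mathcal H_2=\ker(T^\dagger)\oplus\overline{\operatorname{ran}(T)}$ is legitimate because $\ker(TT^\dagger)=\ker(T^\dagger)$ and $(\ker S)^\perp$ reduces the self-adjoint $S=TT^\dagger$; the gap $\langle\psi,S\psi\rangle\ge\eta\|\psi\|^2$ on $\overline{\operatorname{ran}(T)}$ is exactly where closed range enters (via the closed range theorem and the closed graph theorem applied to $(T^\dagger|_{(\ker T^\dagger)^\perp})^{-1}$); the telescoping identity and the bound $\|(I+S)^{-N}P_1\|_\infty\le(1+\eta)^{-N}$ then give norm convergence of $\sum_k(I+S)^{-k}P_1$ to $S^{-1}P_1$. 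The one genuinely delicate step --- passing the unbounded $T^\dagger$ through the limit --- is handled correctly by your factorization $T^\dagger(I+S)^{-k}=\bigl[T^\dagger(I+S)^{-1/2}\bigr](I+S)^{-(k-1/2)}$, since $\operatorname{dom}\bigl((TT^\dagger)^{1/2}\bigr)=\operatorname{dom}(T^\dagger)$ by the polar decomposition of the closed operator $T^\dagger$ and $\|T^\dagger(I+S)^{-1/2}\|_\infty\le 1$ follows from $s/(1+s)\le 1$. The identification of the limit $T^\dagger S^{-1}P_1$ with $T^+$ via $TT^+=P_1$, the companion computation for $T^+T=P_2$, the range inclusions ($S^{-1}P_1$ maps into $\operatorname{dom}(TT^\dagger)$, so $\operatorname{ran}(T^+)\subseteq\operatorname{dom}(T)$), and uniqueness of the closed pseudoinverse complete the argument. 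What remains is only the domain bookkeeping you explicitly flag, none of which hides an obstruction.
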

In the following, we use the above results to prove rigorous lower bounds on $\eta$ in \eqref{eq:spectral_gap_lower_bound}. The idea is to find a lower bound in terms of the norms of the pseudoinverse of $A$ and $B$ on $\eta$, which is proven in the following result: 

\begin{lem}\label{lemmaetaAB}
    Let $A,B$ be densely defined closed operators in Hilbert space $\mathcal{H}$ with closed range. We assume that $\ker(A)\subset \operatorname{ran}(B)$. Then 
    \begin{equation*}
        \braket{\psi|(I-P)B^\dag A^\dag A B(I-P)|\psi}\geq \eta \braket{\psi|I-P|\psi}
    \end{equation*}
    holds with $\eta=\|B^+\|^{-2}\|A^+\|^{-2}$, where $P$ is the orthogonal projection operator into $\ker(AB) = B^+\ker(A) + \ker(B)$.
\end{lem}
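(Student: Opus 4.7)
The plan is to reduce the target inequality $\langle\psi|(I-P)B^\dagger A^\dagger A B(I-P)|\psi\rangle\geq \eta\,\langle\psi|I-P|\psi\rangle$ to the operator-theoretic bound $\|AB\phi\| \ge \|A^+\|^{-1}\|B^+\|^{-1}\|\phi\|$ for every $\phi=(I-P)\psi$, i.e., for every $\phi$ in $(\ker AB)^\perp\cap\operatorname{dom}(AB)$; squaring this inequality and integrating against $|\psi\rangle\langle\psi|$ yields the claim with $\eta=\|A^+\|^{-2}\|B^+\|^{-2}$. The key tool will be the standard characterization of the closed-range pseudoinverse from Lemma \ref{lem:bounded-pseudoinverse}: for any densely defined closed operator $T$ with closed range, one has the dual inequalities $\|T\xi\|\ge \|T^+\|^{-1}\|\xi\|$ on $(\ker T)^\perp\cap\operatorname{dom}(T)$ and $\|T^+\zeta\|\le \|T^+\|\|\zeta\|$ on $\operatorname{ran}(T)$, which I will use twice.

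The first step is to strip off the kernel of $A$. Writing $Q$ for the orthogonal projection onto $\ker A$, we have $AQB\phi=0$, so
\begin{equation*}
\|AB\phi\|=\|A(I-Q)B\phi\|\ge \|A^+\|^{-1}\|(I-Q)B\phi\|,
\end{equation*}
where the inequality is the pseudoinverse lower bound for $A$ applied to $(I-Q)B\phi\in(\ker A)^\perp\cap\operatorname{dom}(A)$.

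The second step is to show $\|(I-Q)B\phi\|\ge \|B^+\|^{-1}\|\phi\|$. I would introduce $T:=(I-Q)B$ and observe that $\xi\in\ker T$ iff $B\xi\in\ker A$ iff $\xi\in\ker(AB)$, so $\ker T=\ker(AB)$ and $\phi\in(\ker T)^\perp$. The pseudoinverse bound then reduces the task to proving $\|T^+\|\le\|B^+\|$, which is the place where the hypothesis $\ker A\subset\operatorname{ran}(B)$ enters. Fix $\zeta\in\operatorname{ran}(T)\subseteq\operatorname{ran}(B)$; closedness of $\operatorname{ran}(B)$ gives $BB^+\zeta=\zeta$, and since $\zeta\in(\ker A)^\perp$, we obtain $TB^+\zeta=(I-Q)\zeta=\zeta$. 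Hence $B^+\zeta$ is \emph{one} solution of $T\xi=\zeta$, and by the minimum-norm characterization of the pseudoinverse $T^+\zeta$ is the solution of least norm, so $\|T^+\zeta\|\le\|B^+\zeta\|\le\|B^+\|\|\zeta\|$; taking the supremum over $\zeta$ yields $\|T^+\|\le\|B^+\|$. Chaining this with the first step gives the desired bound.

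The main obstacle is the domain and range bookkeeping in the second step: one must verify that $T=(I-Q)B$ is closed with closed range $(\ker A)^\perp\cap\operatorname{ran}(B)$ (so that $T^+$ exists and is bounded via Lemma \ref{lem:bounded-pseudoinverse}), and that $B^+\zeta\in\operatorname{dom}(B)\cap\operatorname{dom}(T)$ (guaranteed by the inclusions $\operatorname{ran}(T^+)\subseteq\operatorname{dom}(T)$ built into Definition \ref{def:pseudoinverse}). Both the closed-range hypothesis on $B$ and the inclusion $\ker A\subset\operatorname{ran}(B)$ are used precisely at the identity $BB^+\zeta=\zeta$; without the latter, a vector $\zeta\in(\ker A)^\perp$ need not lie in $\operatorname{ran}(B)$ and the construction of $B^+\zeta$ as a preimage under $T$ would fail.
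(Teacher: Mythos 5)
Your overall strategy is the same as the paper's: reduce to showing $\|AB\phi\|\ge\|A^+\|^{-1}\|B^+\|^{-1}\|\phi\|$ for $\phi\perp\ker(AB)$, peel off the factor $\|A^+\|^{-1}$ first, and then use the hypothesis $\ker(A)\subset\operatorname{ran}(B)$ to show that the correction terms land inside $\ker(AB)$ and are therefore invisible to $\phi$. The paper implements this by writing $\psi=(I-P)B^+A^+(AB\psi)$ directly, solving the equation forwards and letting $I-P$ annihilate the $\ker(AB)$ pieces; you implement it through the auxiliary operator $T=(I-Q)B$ and its pseudoinverse. The one genuine gap is exactly the point you flag as ``the main obstacle'' and then assert can be verified: $T=(I-Q)B$ need \emph{not} be a closed operator, so Lemma \ref{lem:bounded-pseudoinverse} and the series construction of $T^+$ do not apply to it. Closedness can fail whenever $\ker(A)$ is infinite-dimensional: take $B=N+I$ on $\ell^2$ (closed, with $\operatorname{ran}(B)=\ell^2$), let $\ker(A)=\overline{\operatorname{span}}\{e_{2k}\}$, and set $\xi_n=\sum_{k\le n}k^{-1}e_{2k}$; then $T\xi_n=0$ for every $n$ while $\xi_n\to\xi\notin\operatorname{dom}(B)$, so the graph of $T$ is not closed. (Your claim that $\operatorname{ran}(T)=\operatorname{ran}(B)\cap(\ker A)^\perp$ is closed is correct and does use the hypothesis, but graph-closedness is a separate issue.) Since the lemma is stated for general $A,B$, this cannot be waved away, even though in the paper's applications $\ker(A)$ is one-dimensional and $T$ would then be a finite-rank perturbation of $B$.

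Fortunately your own argument already contains the repair, and $T^+$ is never needed. You correctly observe that, for $\zeta:=T\phi$, one has $\zeta\in\operatorname{ran}(B)$ (this is where $\ker(A)\subset\operatorname{ran}(B)$ enters, since $T\phi=B\phi-QB\phi$ and $QB\phi\in\ker(A)\subset\operatorname{ran}(B)$), hence $B^+\zeta$ solves $T\eta=\zeta$, and that $\ker(T)=\ker(AB)$. But $\phi$ itself is also a solution of $T\eta=\zeta$, so $\phi-B^+\zeta\in\ker(AB)$; since $\phi\perp\ker(AB)$ this gives $\|\phi\|^2=\langle\phi,B^+\zeta\rangle\le\|\phi\|\,\|B^+\|\,\|\zeta\|$, i.e.\ $\|(I-Q)B\phi\|\ge\|B^+\|^{-1}\|\phi\|$, with no appeal to a minimum-norm characterization of $T^+$. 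Substituting this two-line computation for the pseudoinverse of $T$ closes the gap and brings your proof essentially into line with the paper's.
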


\begin{proof}
    Note that in \eqref{eq:spectral_gap_lower_bound}, we only need to consider $\ket{\psi}$ such that $P\ket{\psi}=0$ and therefore we make this assumption without loss of generality. We denote $\ket{\phi} = AB\ket{\psi}$, and the goal is to relate $\|\ket{\phi}\|$ to $\|\ket{\psi}\|$. We assume that $A$ and $B$ are densely defined closed operators with closed range such that they have bounded pseudoinverses by Lemma \ref{lem:bounded-pseudoinverse}. From $\ket{\phi} = AB\ket{\psi}$, we have
    \begin{equation*}
        B\ket{\psi} =A^+AB\ket{\psi} +(I-A^+A)B\ket{\psi}\eqqcolon A^+\ket{\phi} + \ket{r_1},
    \end{equation*}
    where $\ket{r_1}\in\ker(A)$. Similarly, we achieve
    \begin{equation}\label{eq:solution_of_linear_system_AB_new}
        \ket{\psi} = B^+B\ket{\psi}+(I-B^+B)\ket{\psi}\eqqcolon B^+A^+\ket{\phi} + B^+\ket{r_1} + \ket{r_2},
    \end{equation}
    where $\ket{r_2}\in\ker(B)$. Note that $AB\ket{r_2}=0$, and therefore $\ket{r_2}\in\ker(AB)$. Thus, as a result $(I-P)\ket{r_2}=0$. Definition \ref{def:pseudoinverse} states that $BB^+=P_B$, where $P_B$ is the projection operator onto $\operatorname{ran}(B)$. Therefore, if we further assume that $\ker(A)\subset \operatorname{ran}(B)$, then 
    \begin{equation*}
        ABB^+\ket{r_1} = AP_B\ket{r_1}= A\ket{r_1}=0\,.   
    \end{equation*}
    Therefore, $B^+\ket{r_1} \in \operatorname{ker}(AB)$ and $(I-P)B^+\ket{r_1} = 0$. Now multiplying $I-P$ to both sides of \eqref{eq:solution_of_linear_system_AB_new}, we have
    \begin{equation}
        \ket{\psi} = (I-P)\ket{\psi} = (I-P)B^+A^+\ket{\phi} + (I-P)(B^+\ket{r_1} + \ket{r_2})=(I-P)B^+A^+\ket{\phi}.
    \end{equation}
    Therefore 
    $\|\ket{\psi}\|=\|(I-P)B^+A^+\ket{\phi}\|\leq \|B^+\|\|A^+\|\|\ket{\phi}\|$, which finishes the proof.
\end{proof}

In the context of the bosonic systems introduced in Appendix \ref{subsec:mod-photon-dissipation} the jump operator $L = b^k(b - \alpha)$ split into $A = b - \alpha$ and $B = b^k$. For $k=1$, we can easily verify that $\operatorname{ran}(B) = \mathcal{H}$, $\ker(A) = \operatorname{span}\{\ket{\alpha}\}$, and $\ker(B) = \operatorname{span}\{\ket{0}\}$. Therefore, the assumptions of Lemma \ref{lemmaetaAB} are satisfied. For arbitrary $k\in\N_{\geq1}$, we need that the closed operators $b^k$ have closed range:

\begin{lem}\label{lem:bk-closed-range}
    Let $k \in \mathbb N$, $k \geq 1$. Then, $\operatorname{ran}(b^k)=\mathcal{H}$ is closed.
\end{lem}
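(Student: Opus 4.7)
The plan is to show that $b^k$ is actually surjective, from which the closedness of the range follows trivially since $\mathcal{H}$ is closed in itself. I will work in the Fock basis $\{|n\rangle\}_{n \in \mathbb{N}}$, where $b$ acts by $b|n\rangle = \sqrt{n}\,|n-1\rangle$, so $b^k|n\rangle = \sqrt{n(n-1)\cdots(n-k+1)}\,|n-k\rangle$ for $n \geq k$ and $b^k|n\rangle = 0$ for $n < k$.

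Given an arbitrary $|\phi\rangle = \sum_{m=0}^\infty c_m |m\rangle \in \mathcal{H}$, I would define the candidate preimage
\begin{equation*}
    |\psi\rangle \coloneqq \sum_{m=0}^\infty \frac{c_m}{\sqrt{(m+1)(m+2)\cdots(m+k)}}\,|m+k\rangle\,.
\end{equation*}
Since $(m+1)(m+2)\cdots(m+k) \geq 1$ for all $m \geq 0$, the coefficient-wise bound immediately gives $\||\psi\rangle\|^2 \leq \||\phi\rangle\|^2 < \infty$, so $|\psi\rangle \in \mathcal{H}$.

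Next I would verify that $|\psi\rangle \in \operatorname{dom}(b^k)$. Writing $|\psi\rangle = \sum_n a_n |n\rangle$ with $a_{m+k} = c_m/\sqrt{(m+1)\cdots(m+k)}$ and $a_n = 0$ for $n<k$, membership in $\operatorname{dom}(b^k)$ amounts to $\sum_n |a_n|^2 n(n-1)\cdots(n-k+1) < \infty$. Setting $n=m+k$, this sum equals $\sum_m |c_m|^2$, which is finite. The same computation shows that
\begin{equation*}
    b^k|\psi\rangle = \sum_{m=0}^\infty \frac{c_m}{\sqrt{(m+1)\cdots(m+k)}}\sqrt{(m+k)\cdots(m+1)}\,|m\rangle = \sum_{m=0}^\infty c_m |m\rangle = |\phi\rangle\,.
\end{equation*}

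Hence $\operatorname{ran}(b^k) = \mathcal{H}$, which in particular is a closed subspace of itself. There is no real obstacle: the only minor point to be careful about is distinguishing $|\psi\rangle \in \mathcal{H}$ from $|\psi\rangle \in \operatorname{dom}(b^k)$, since $b^k$ is unbounded and its domain is strictly smaller than $\mathcal{H}$; but the explicit formula for $|\psi\rangle$ makes both verifications immediate. (An alternative would be to invoke Lemma \ref{lem:bounded-pseudoinverse} in reverse by exhibiting the bounded pseudoinverse $b^{+k}$ via the same formula, but the direct argument above is cleaner.)
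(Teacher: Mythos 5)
Your proof is correct and follows essentially the same route as the paper: both construct the explicit preimage with coefficients $c_m/\sqrt{(m+1)\cdots(m+k)}$ and conclude that $b^k$ is surjective, so its range is all of $\mathcal{H}$. The only difference is cosmetic — the paper verifies that the preimage lies in $\operatorname{dom}(b^k)$ by applying closedness of $b^k$ (as the adjoint of the densely defined $(b^\dagger)^k$) to the sequence of partial sums, whereas you check the summability condition $\sum_n |a_n|^2\, n(n-1)\cdots(n-k+1) < \infty$ directly, which amounts to the same thing.
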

\begin{proof}
    Since $\mathcal{H}$ is a separable Hilbert space, for every $\ket{\psi} \in \mathcal{H}$, there exists a sequence $(\psi_n)_{n\in\mathbb{N}}$ such that $\ket{\psi} = \lim_{m\rightarrow\infty}\sum_{n=0}^m \psi_n \ket{n}$. We then define the sequence $(\phi_n)_{n\in\mathbb{N}}$ by 
    \begin{equation*}
        \phi_n =\begin{cases}
            0 & \text{for } n \in \{0, \ldots, k-1\}\\
            \frac{1}{\sqrt{n[k-1:0]}} \psi_{n+k} &\text{for } n \in \{k, k+1, \ldots \}
        \end{cases}
    \end{equation*}
    The sequence $\ket{\phi_m} = \sum_{n=0}^m \phi_n \ket{n}$ is a Cauchy sequence by definition and converges to a state $\ket{\phi} \in \mathcal{H}$. Moreover, $b^k \ket{\phi_n} = \ket{\psi_n}$ converges by definition to $\ket{\psi}$. Since $b^k$ is a closed operator,  because it is the adjoint of $ (b^\dagger)^k $, which is densely defined \cite[Theorem VIII.1]{reed1981functional}, we have $b^k \ket{\phi} = \ket{\psi}$, which completes the proof.
\end{proof}

Next, we show that, for example, $AB = b(b - \alpha)$ satisfies \eqref{eq:spectral_gap_lower_bound} with $\eta = 1$.

\begin{lem}\label{lempseudoinv}
    Let $j \in \mathbb N$ and $A=(b-\alpha)^j$, $B=b^j$. Then, $\|A^+\|_{\infty}$, $\|B^+\|_{\infty} \leq \frac{1}{\sqrt{j!}}$.
\end{lem}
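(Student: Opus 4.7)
The plan is to first compute $B^+$ explicitly on the Fock basis and read off its operator norm, then reduce $A$ to $B$ by unitary conjugation with the displacement operator.

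For $B=b^j$, note that $B$ is densely defined, closed, and has range equal to all of $\mathcal{H}$ (by Lemma on $\operatorname{ran}(b^k)$ above), so by Lemma \ref{lem:bounded-pseudoinverse} the pseudoinverse is bounded. Since $B\ket{n}=\sqrt{n[-j+1:0]}\,\ket{n-j}$ for $n\ge j$ and $\ket{0},\dots,\ket{j-1}\in\ker(B)$, the orthogonal complement of $\ker(B)$ is the span of $\{\ket{n}\}_{n\ge j}$, and the relations $BB^+=I$, $B^+B=I-\sum_{n<j}\ket{n}\bra{n}$ of Definition \ref{def:pseudoinverse} force
\begin{equation*}
    B^+\ket{n}=\frac{1}{\sqrt{(n+1)(n+2)\cdots(n+j)}}\,\ket{n+j}\qquad\text{for all } n\in\mathbb{N}.
\end{equation*}
Since $B^+$ acts diagonally on the Fock basis (up to a shift of index), its operator norm is the supremum of the coefficients, which is attained at $n=0$, giving $\|B^+\|_\infty=\tfrac{1}{\sqrt{j!}}$.

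For $A=(b-\alpha)^j$, the idea is to conjugate away the coherent shift with the displacement operator $D(\alpha)=e^{\alpha b^\dagger-\overline{\alpha}b}$. Using the Baker-Campbell-Hausdorff identity $[\alpha b^\dagger-\overline{\alpha}b,\,b]=\alpha$ (together with the canonical commutation relation \eqref{eq:ccr-onemode}) one obtains $D(\alpha)^\dagger b\, D(\alpha)=b-\alpha$ on a suitable core, and hence $A=D(\alpha)^\dagger B\, D(\alpha)$ as closed operators. Because $D(\alpha)$ is unitary, the unique closed pseudoinverse of $A$ (see Definition \ref{def:pseudoinverse}) is $A^+=D(\alpha)^\dagger B^+ D(\alpha)$: indeed, conjugating the four defining identities of $B^+$ by $D(\alpha)$ yields the four defining identities of $A^+$, so uniqueness of the pseudoinverse identifies the two operators. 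Therefore
\begin{equation*}
    \|A^+\|_\infty=\|D(\alpha)^\dagger B^+ D(\alpha)\|_\infty=\|B^+\|_\infty=\frac{1}{\sqrt{j!}}\,.
\end{equation*}

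The only potential pitfall is the unbounded-operator bookkeeping when asserting the identity $A=D(\alpha)^\dagger B D(\alpha)$ and its consequence for $A^+$: one must justify this equality on a common core (the span of Fock states is such a core for $b,b^\dagger$ and is preserved by $D(\alpha)$ up to dense inclusion), and then invoke closedness to extend to the full domain. Once this is in place, the rest is a direct consequence of the explicit Fock-basis computation for $B^+$.
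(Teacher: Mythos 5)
Your proposal is correct, and for the bound on $\|B^+\|_\infty$ it takes a genuinely different route from the paper. The paper invokes the series representation $B^+=\sum_{k\ge 1}(b^\dagger)^j(I+N[1:j])^{-k}$ from Lemma \ref{lem:pseudoinverse-series} and estimates the resulting sum, whereas you write down the candidate weighted shift $B^+\ket{n}=\bigl((n+1)\cdots(n+j)\bigr)^{-1/2}\ket{n+j}$ directly, verify the defining relations of Definition \ref{def:pseudoinverse} (all straightforward since your candidate is bounded and everywhere defined, with $\overline{\operatorname{ran}(B^+)}=\ker(B)^\perp$ and $\operatorname{ran}(B^+)\subset\operatorname{dom}(b^j)$), and conclude by uniqueness of the closed pseudoinverse. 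This is more elementary and yields the exact value $\|B^+\|_\infty=1/\sqrt{j!}$ rather than just an upper bound; the paper's series machinery has the advantage of being reusable verbatim for $L=b^k-\alpha^k$ in Lemma \ref{lem:pseudoinverse-bound-photon-dissipation}, where no explicit Fock-basis diagonalization is available. The second half of your argument --- transferring the bound to $A=(b-\alpha)^j$ via unitary conjugation with $D(\alpha)$ --- is exactly the paper's argument, and your remark about justifying the conjugation identity on a core and extending by closedness addresses a point the paper also glosses over. One cosmetic slip: with $D(\alpha)=e^{\alpha b^\dagger-\overline{\alpha}b}$ one has $[\alpha b^\dagger-\overline{\alpha}b,\,b]=-\alpha$, so the standard relation is $D(\alpha)^\dagger bD(\alpha)=b+\alpha$; this only flips which of $D(\alpha),D(\alpha)^\dagger$ appears in the conjugation and has no effect on the norm equality.
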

\begin{proof}
    By Lemma \ref{lem:bk-closed-range}, we know that $ b^j $ has closed range. By Lemma \ref{lem:pseudoinverse-series} and the identity $b^j (b^j)^\dagger = (N + I) \cdot \ldots \cdot (N + jI) =: N[1:j]$, we can represent the pseudoinverse of $B$ by 
    \begin{equation*}
        (b^j)^+ = \sum_{k=1}^\infty (b^\dagger)^j (I + N[1:j])^{-k} \, .
    \end{equation*}
    For $\ket{\psi} = \sum_{n=0}^\infty \psi_n \ket{n}$, we can estimate the norm using Tonelli's theorem:
    \begin{align*}
        \norm{B^+ \ket{\psi}}^2 
        &\leq \sum_{k=1}^\infty \left\| \sum_{n=0}^\infty \sqrt{n[1:j]} (n[1:j] + 1)^{-k} \psi_n \ket{n + j} \right\|^2 \\
        &\leq \sum_{k=1}^\infty \sum_{n=0}^\infty \frac{n[1:j]}{(1 + n[1:j])^{2k}} |\psi_n|^2 \\
        &\leq \sum_{n=0}^\infty \frac{1}{2 + n[1:j]} |\psi_n|^2 \leq \frac{1}{j!}\,,
    \end{align*}
    so that $\|B^+\|_\infty \leq \frac{1}{\sqrt{j!}}$. The operators $A$ and $B$ are related through $B = D(\alpha) A D(\alpha)^\dagger$, where $D(\alpha) = e^{\alpha b^\dagger - \overline \alpha b}$ is the bosonic displacement operator. Since the displacement operator in unitary, $\|A^+\|_\infty = \|B^+\|_\infty \leq \frac{1}{\sqrt{j!}}$.
\end{proof}

\begin{lem}
     Let $A=(b-\alpha)$, $B=b$. Then, $B^+\ker(A)$ is spanned by $B^+\ket{\alpha} = \frac{1}{\alpha}(\ket{\alpha}-e^{-|\alpha|^2/2}\ket{0})$. Therefore $\ker(AB) = \operatorname{span}\{\ket{0},\ket{\alpha}\}$.
\end{lem}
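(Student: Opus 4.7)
The proof splits into two tasks: computing $B^+\ket{\alpha}$ explicitly, and then combining this with the identity $\ker(AB)=B^+\ker(A)+\ker(B)$ that was used in the statement of Lemma \ref{lemmaetaAB}.

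First, since coherent states are eigenvectors of $b$ with $b\ket{\alpha}=\alpha\ket{\alpha}$, one has $\ker(A)=\ker(b-\alpha)=\operatorname{span}\{\ket{\alpha}\}$, and of course $\ker(B)=\ker(b)=\operatorname{span}\{\ket{0}\}$. Because $\operatorname{ran}(b)$ is closed by Lemma \ref{lem:bk-closed-range}, Definition \ref{def:pseudoinverse} forces $B^+\ket{\alpha}$ to be the unique vector $\ket{\phi}$ satisfying $b\ket{\phi}=\ket{\alpha}$ and $\ket{\phi}\perp\ker(b)$, i.e.~$\braket{0|\phi}=0$.

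To find $\ket{\phi}$ I would simply expand in the Fock basis. Writing $\ket{\phi}=\sum_{n\ge 1}\phi_n\ket{n}$ and using $\ket{\alpha}=e^{-|\alpha|^2/2}\sum_{m\ge 0}\frac{\alpha^m}{\sqrt{m!}}\ket{m}$, the equation $b\ket{\phi}=\ket{\alpha}$ reduces to the one-term recursion
\begin{equation*}
\phi_{m+1}\sqrt{m+1}=e^{-|\alpha|^2/2}\frac{\alpha^m}{\sqrt{m!}}\,,\qquad m\ge 0\,,
\end{equation*}
so $\phi_{n}=e^{-|\alpha|^2/2}\alpha^{n-1}/\sqrt{n!}$ for $n\ge 1$. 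Factoring out $1/\alpha$ and re-adding the missing $n=0$ term to reassemble the coherent state gives
\begin{equation*}
B^+\ket{\alpha}=\tfrac{1}{\alpha}\bigl(\ket{\alpha}-e^{-|\alpha|^2/2}\ket{0}\bigr)\,,
\end{equation*}
exactly as claimed. (For $\alpha=0$ the operator $A=b$ has kernel $\operatorname{span}\{\ket{0}\}\subseteq\ker(b)$, so $B^+\ker(A)=\{0\}$ and the statement is trivial; the displayed formula is used in the regime $\alpha\ne 0$ anyway.)

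It then remains to invoke the set-theoretic identity $\ker(AB)=B^+\ker(A)+\ker(B)$ from Lemma \ref{lemmaetaAB}, which yields
\begin{equation*}
\ker(AB)=\operatorname{span}\bigl\{\tfrac{1}{\alpha}(\ket{\alpha}-e^{-|\alpha|^2/2}\ket{0})\bigr\}+\operatorname{span}\{\ket{0}\}=\operatorname{span}\{\ket{0},\ket{\alpha}\}\,,
\end{equation*}
where the last equality is a linear-algebra triviality since $\ket{0}$ can be used to cancel the $e^{-|\alpha|^2/2}\ket{0}$ component. No step here is delicate: the only computation is the explicit Fock-basis solution of $b\ket{\phi}=\ket{\alpha}$, and the only conceptual ingredient is the orthogonality constraint $\ket{\phi}\perp\ket{0}$ coming from $\operatorname{ran}(B^+)\subseteq\ker(B)^\perp$ in Definition \ref{def:pseudoinverse}.
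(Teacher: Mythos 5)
Your proof is correct, and it reaches the formula for $B^+\ket{\alpha}$ by a genuinely different (and more elementary) route than the paper. The paper computes the matrix elements $\braket{n|B^+|\alpha}$ directly from the series representation $B^+=\sum_{k\ge 1}b^\dagger(I+bb^\dagger)^{-k}$ of Lemma \ref{lem:pseudoinverse-series}, summing the resulting geometric series for each $n$; you instead characterize $B^+\ket{\alpha}$ abstractly as the unique solution of $b\ket{\phi}=\ket{\alpha}$ orthogonal to $\ker(b)$ and solve the one-term Fock-basis recursion. Your characterization is legitimate: since $\operatorname{ran}(b)=\mathscr{H}$ (Lemma \ref{lem:bk-closed-range} with $k=1$), $BB^+=P_1=I$ gives $b\,B^+\ket{\alpha}=\ket{\alpha}$, and the orthogonality $B^+\ket{\alpha}\perp\ket{0}$ follows in one line from $B^+B=P_2$: for $z\in\ker(b)\cap\operatorname{dom}(b)$ one has $P_2 z=B^+Bz=0$, so $z\perp\overline{\operatorname{ran}(B^+)}$. (Definition \ref{def:pseudoinverse} does not literally state $\operatorname{ran}(B^+)\subseteq\ker(B)^\perp$, so this one-line derivation is worth writing out, but it is immediate.) Your approach buys simplicity — no series manipulation, only the surjectivity of $b$ and the defining relations — and it incidentally avoids the small index slip in the paper's intermediate line (which should read $\braket{n-1|\alpha}$ rather than $\braket{n|\alpha}$; the paper's final answer is nonetheless correct). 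Both proofs then conclude identically by invoking the decomposition $\ker(AB)=B^+\ker(A)+\ker(B)$ from Lemma \ref{lemmaetaAB}, so you are on exactly the same footing as the paper there. Your remark on the degenerate case $\alpha=0$ is a welcome addition the paper omits.
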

\begin{proof}
    It is easy to see that $ \ker(A) $ is spanned by $ \ket{\alpha} $, since $ A = b - \alpha $. Moreover, the continuity of the map $ \bra{\phi} : \mathcal{H} \to \mathbb{C} $, along with Lemma \ref{lem:pseudoinverse-series}, shows that $ \bra{0} B^+ \ket{\alpha} = 0 $. In a similar vein, we can compute for $ n \in \mathbb{N} $:
    \begin{align*}
        \bra{n} B^+ \ket{\alpha} 
        &= \sum_{k=1}^\infty \bra{n} b^\dagger (2I + N)^{-k} \ket{\alpha} \\
        &= \sqrt{n} \sum_{k=1}^\infty (n + 1)^{-k} \braket{n | \alpha} \\
        &= \frac{1}{\sqrt{n}} \braket{n | \alpha} \\
        &= \frac{1}{\alpha} \frac{\alpha^n}{\sqrt{n!}} e^{-|\alpha|^2/2} \, .
    \end{align*}
    Using the definition of the coherent state, the first assertion follows. 
    For the second assertion, we observe from Lemma \ref{lemmaetaAB} that $\ker(AB) = B^+ \ker(A) + \ker(B)$. Since $ \ker(B) $ is spanned by $ \ket{0} $, the second assertion follows from the first.
\end{proof}

\begin{lem} \label{lem:lower-bound-commutator-L}
    Let $L = b^k - \alpha^kI $. Then, $[L, L^\dagger] \geq k! I$.
\end{lem}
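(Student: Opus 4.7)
The plan is to reduce the commutator to a purely number-operator computation, diagonalize it in the Fock basis, and then verify the eigenvalue bound case by case.

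First, I would observe that $[L,L^\dagger]=[b^k-\alpha^k I,(b^\dagger)^k-\overline{\alpha}^k I]=[b^k,(b^\dagger)^k]$, since the two scalar summands commute with everything. Hence the claim reduces to showing
\begin{equation*}
    [b^k,(b^\dagger)^k]\;\geq\;k!\,I
\end{equation*}
on the finite-particle domain $\mathscr{H}_f$ (a core for all operators involved), and extends to all of $\operatorname{dom}([L,L^\dagger])$ by density/spectral calculus since this commutator is diagonal in the Fock basis.

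Next, using the identities already recorded in \eqref{eq:a*a-number-op}, I would write
\begin{equation*}
    b^k(b^\dagger)^k=(N+I)(N+2I)\cdots(N+kI)=:f(N),\qquad (b^\dagger)^k b^k=N(N-I)\cdots(N-(k-1)I),
\end{equation*}
so that both operators act diagonally on $\ket{n}$ with eigenvalues $f(n)$ and $g(n):=n(n-1)\cdots(n-k+1)$ respectively (with the convention $g(n)=0$ for $n<k$, which is automatic since $b^k\ket{n}=0$ there).

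The remaining task is the elementary inequality $f(n)-g(n)\geq k!$ for every $n\in\mathbb{N}_0$. I would split into two cases. For $0\le n\le k-1$, we have $g(n)=0$ and $f(n)=(n+1)\cdots(n+k)\ge 1\cdot 2\cdots k=k!$, since $f$ is increasing on $\mathbb{N}_0$. For $n\ge k$, I would telescope
\begin{equation*}
    f(n)-g(n)=f(n)-f(n-k)=\sum_{j=0}^{k-1}\bigl(f(n-j)-f(n-j-1)\bigr),
\end{equation*}
and compute each increment $f(m+1)-f(m)=k\,(m+2)(m+3)\cdots(m+k)$ by factoring out the $k-1$ common factors. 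For $m\ge 0$ this increment is at least $k\cdot k!$, so the whole sum exceeds $k!$ comfortably. Combining both cases yields the desired spectral lower bound, and translating back to the operator inequality on $\mathscr{H}_f$ (then closing it) gives $[L,L^\dagger]\ge k!\,I$. There is no genuine obstacle here; the only subtlety is the bookkeeping of the domain, which is handled because $[L,L^\dagger]$ has a self-adjoint extension diagonal in the Fock basis and $\mathscr{H}_f$ is a core.
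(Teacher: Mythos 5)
Your proof is correct and follows essentially the same route as the paper: both reduce $[L,L^\dagger]$ to $[b^k,(b^\dagger)^k]=(N+I)\cdots(N+kI)-N(N-I)\cdots(N-(k-1)I)$ and verify the spectral bound $\geq k!$ in the Fock basis. The only difference is cosmetic — you check the scalar inequality by a case split and telescoping, while the paper splits off the factor $(N+2I)\cdots(N+kI)\geq k!\,I$ and shows the remainder is nonnegative; both are valid (note only that for $k=1$ your telescoped sum equals $k!$ rather than exceeding it, which still suffices).
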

\begin{proof}
    This result follows essentially from \cite{Azouit.2016}. There, they prove that 
    \begin{equation*}
        [L, L^\dagger] = (N + I)(N + 2I) \ldots (N + kI) - N (N - I)_+ \ldots (N - (k - 1)I)_+ \, .
    \end{equation*}
    Here, for $ \mu \in \mathbb{N} $, the operator $ (N - \mu I)_+ $ denotes the positive part of $ N - \mu I $, i.e.,  
    \begin{equation}
        (N - \mu I)_+ \ket{n} =
        \begin{cases}
            0 & \text{for } n \in \{0, \ldots, \mu\}, \\
            (n - \mu) \ket{n} & \text{for } n > \mu, \ n \in \mathbb{N}\,.
        \end{cases}
    \end{equation}
    We can rewrite this as
    \begin{align*}
        [L, L^\dagger] =& N (N + 2I)(N + 3I) \ldots (N + kI) + (N + 2I)(N + 3I) \ldots (N + kI)\\& - N (N - I)_+ \ldots (N - (k - 1)I)_+ \, .
    \end{align*}
    We observe that $ (N + 2I)(N + 3I) \ldots (N + kI) \geq k! I $. Moreover, by applying the spectral decomposition, we find that
    \begin{align*}
        &N (N + 2I)(N + 3I) \ldots (N + kI) - N (N - I)_+ \ldots (N - (k - 1)I)_+ \\
        &= N \left[ (N + 2I)(N + 3I) \ldots (N + kI) - (N - I)_+ (N - 2I)_+ \ldots (N - (k - 1)I)_+ \right] \\
        &\geq 0 \, .
    \end{align*}
    This completes the proof.
\end{proof}

\begin{lem}\label{lem:L-closed-range}
    Let $k \in \mathbb N$, $k \geq 1$, $\alpha \in \mathbb C$ and $L = b^k - \alpha^k I$. Then, $L$ is closed and has closed range, where $\operatorname{dom}(L) = \operatorname{dom}(b^k)$.
\end{lem}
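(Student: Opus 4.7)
The plan is to verify closedness and closed-range separately, leveraging the commutator gap from Lemma \ref{lem:lower-bound-commutator-L}.

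For closedness, the operator $b^k$ is the Hilbert-space adjoint of the densely defined $(b^\dagger)^k$, hence automatically closed. Since $\alpha^k I$ is bounded, $L = b^k - \alpha^k I$ is a bounded perturbation of $b^k$ and therefore closed on $\operatorname{dom}(L) = \operatorname{dom}(b^k)$.

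For closed range, the plan is to show $L^\dagger$ is bounded below and invoke the closed range theorem. On the finite-particle subspace $\mathscr{H}_f$, Lemma \ref{lem:lower-bound-commutator-L} gives $[L, L^\dagger] \geq k! I$. Combined with $L^\dagger L \geq 0$, this yields
\begin{equation*}
\|L^\dagger \psi\|^2 \;=\; \langle \psi | L^\dagger L \,\psi \rangle + \langle \psi | [L, L^\dagger]\, \psi \rangle \;\geq\; k!\,\|\psi\|^2
\end{equation*}
for all $\psi \in \mathscr{H}_f$. So on the core $\mathscr{H}_f$, the operator $L^\dagger = (b^\dagger)^k - \overline{\alpha^k} I$ is bounded below by $\sqrt{k!}$.

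The next step is to extend the bound from $\mathscr{H}_f$ to the full domain $\operatorname{dom}(L^\dagger)$. This uses the standard fact that $\mathscr{H}_f$ is a core for $(b^\dagger)^k$ (see \cite[Lem.~2.12]{Gondolfetal2023}), and that subtracting a bounded operator does not change cores, so $\mathscr{H}_f$ is also a core for $L^\dagger$. Given any $\psi \in \operatorname{dom}(L^\dagger)$, pick a sequence $\psi_n \in \mathscr{H}_f$ with $\psi_n \to \psi$ and $L^\dagger \psi_n \to L^\dagger \psi$; passing to the limit in $\|L^\dagger \psi_n\| \geq \sqrt{k!}\,\|\psi_n\|$ gives the same bound for $\psi$. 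Consequently, $L^\dagger$ is injective with closed range: if $L^\dagger \psi_n \to \phi$, then $(\psi_n)$ is Cauchy by the lower bound, so $\psi_n \to \psi \in \operatorname{dom}(L^\dagger)$ by closedness and $L^\dagger \psi = \phi$, showing $\phi \in \operatorname{ran}(L^\dagger)$. Finally, the closed range theorem for closed densely defined operators on Hilbert spaces yields that $\operatorname{ran}(L)$ is closed if and only if $\operatorname{ran}(L^\dagger)$ is, completing the proof.

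The main obstacle is the domain/core argument: one must justify that the pointwise commutator inequality on $\mathscr{H}_f$ propagates to the full domain $\operatorname{dom}(L^\dagger)$. This step is not hard but requires explicit verification that $\mathscr{H}_f$ is a graph-norm core for $L^\dagger$; everything else is a direct consequence of the spectral-gap style bound inherited from $[L, L^\dagger] \geq k! I$ and standard Hilbert-space operator theory.
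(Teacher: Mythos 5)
Your proposal is correct and follows essentially the same route as the paper: closedness via $b^k=((b^\dagger)^k)^\dagger$ plus a bounded perturbation, and closed range by deriving $\|L^\dagger u\|^2=\langle u|LL^\dagger|u\rangle\geq\langle u|[L,L^\dagger]|u\rangle\geq k!\|u\|^2$ from Lemma \ref{lem:lower-bound-commutator-L} and then transferring closed range from $L^\dagger$ to $L$ (the paper packages the last two steps as Kato's reduced-minimum-modulus theorems IV.5.2 and IV.5.13, while you spell out the equivalent Cauchy-sequence and closed-range-theorem argument). Your explicit core-extension step from $\mathscr{H}_f$ to $\operatorname{dom}(L^\dagger)$ is a point of rigor the paper leaves implicit, but it does not change the substance of the argument.
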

\begin{proof}
    We know that $ b^k $ is a closed operator because it is the adjoint of $ (b^\dagger)^k $, which is densely defined \cite[Theorem VIII.1]{reed1981functional}. Moreover, $ -\alpha^k I $ is a bounded operator. Thus, $ L $ is closed by Problem III.5.6 of \cite{kato2013perturbation}. To prove that $ L $ also has closed range, we consider the reduced minimum modulus of $L^\dagger$ (see \cite[Section III.5.1]{kato2013perturbation}), defined for some closed operator $ T $ as  
    \begin{equation*}
        \gamma(T) := \sup \{ \gamma \geq 0 \mid \|Tu\| \geq \gamma \|\tilde u\| \quad \forall u \in \operatorname{dom}(T) \} \, ,
    \end{equation*}
    where $ \|\tilde u\| = \inf_{z \in \ker(T)} \|u - z\| $. Using Lemma \ref{lem:lower-bound-commutator-L} and the fact that $ \ker(L^\dagger) = \{0\} $, we obtain
    \begin{equation*}
        \|L^\dagger u \|^2 = \bra{u} L L^\dagger \ket{u} \geq \bra{u} [L, L^\dagger] \ket{u} \geq k! \|u\|^2\,.
    \end{equation*}
    It follows that $ \gamma(L^\dagger) \geq \sqrt{k!} $. By Theorem IV.5.2 of \cite{kato2013perturbation}, $ \gamma(L^\dagger) > 0 $ implies that $ \operatorname{ran}(L^\dagger) $ is closed. By Theorem IV.5.13 of \cite{kato2013perturbation}, this further implies that $ \operatorname{ran}(L) $ is closed as well.
\end{proof}

\begin{lem}\label{lem:pseudoinverse-bound-photon-dissipation}
    Let $k \in \mathbb N$, $\alpha \in C$ and $L = b^k - \alpha^kI $. Then, $\|L^+\|_\infty\leq \frac{1}{\sqrt{k!}}$.
\end{lem}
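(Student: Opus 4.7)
The strategy mirrors that of Lemma \ref{lempseudoinv}: combine the series representation of the pseudoinverse with a lower bound on $L L^\dagger$ derived from the commutator estimate in Lemma \ref{lem:lower-bound-commutator-L}. The new wrinkle, compared with Lemma \ref{lempseudoinv}, is that $L L^\dagger$ is not diagonal in the Fock basis when $\alpha \neq 0$, so the series cannot be summed term-by-term; the bound must instead be extracted abstractly through the spectral calculus of $M := L L^\dagger$.

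First, Lemma \ref{lem:L-closed-range} shows that $L$ is closed with closed range, so by Lemmas \ref{lem:bounded-pseudoinverse} and \ref{lem:pseudoinverse-series} the pseudoinverse $L^+$ is bounded and
\[
    L^+ \;=\; \sum_{n=1}^{\infty} L^\dagger (I + L L^\dagger)^{-n}
\]
in the uniform operator topology. Since $L$ is closed and densely defined, $M = L L^\dagger$ is self-adjoint and nonnegative. Using $M = L^\dagger L + [L, L^\dagger]$ together with $L^\dagger L \geq 0$ and Lemma \ref{lem:lower-bound-commutator-L}, the quadratic-form bound $\|L^\dagger \phi\|^2 \geq k! \, \|\phi\|^2$ holds on $\operatorname{dom}(L^\dagger) = \operatorname{dom}(M^{1/2})$, which upgrades to the self-adjoint operator inequality $M \geq k! \, I$; in particular $\sigma(M) \subseteq [k!, \infty)$ and $M^{-1}$ is a bounded self-adjoint operator with $\|M^{-1}\|_\infty \leq 1/k!$.

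I would then evaluate $\|L^+ \psi\|^2$ directly. The key point is that $M$ commutes with every $(I + M)^{-n}$ via the functional calculus, so the inner products can be grouped as
\begin{align*}
    \|L^+ \psi\|^2
    &= \lim_{N \to \infty} \Big\langle \psi, \Big( \sum_{n=1}^{N} (I + M)^{-n} \Big) M \Big( \sum_{n'=1}^{N} (I + M)^{-n'} \Big) \psi \Big\rangle \\
    &= \lim_{N \to \infty} \big\langle \psi, M^{-1} \bigl( I - (I + M)^{-N} \bigr)^{2} \psi \big\rangle
    \;=\; \big\langle \psi, M^{-1} \psi \big\rangle \;\leq\; \tfrac{1}{k!} \|\psi\|^2 \, ,
\end{align*}
where the second equality uses the telescoping geometric identity $\sum_{n=1}^{N} (I + M)^{-n} = M^{-1}\bigl(I - (I + M)^{-N}\bigr)$ in the functional calculus of $M$, and the limit uses $\|(I + M)^{-1}\|_\infty \leq 1/(1 + k!) < 1$. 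Taking square roots gives the claim.

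The main obstacle will be the rigorous passage from the quadratic-form bound of Lemma \ref{lem:lower-bound-commutator-L} (initially available only on the finite-Fock-basis core $\mathscr H_f$) to the honest self-adjoint operator inequality $M \geq k! \, I$; this relies on $\mathscr H_f$ being a core for $M^{1/2} = |L^\dagger|$, in line with the domain arguments used in the proofs of Lemmas \ref{lem:rel-bounded-learning-N-photon-loss} and \ref{lem:rel-bounded-learning-N-L}. Once this is in place, everything else reduces to routine functional calculus on the already spectrally-gapped operator $M$.
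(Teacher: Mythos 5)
Your proposal is correct and rests on the same two pillars as the paper's proof: the series representation of the pseudoinverse (Lemma \ref{lem:pseudoinverse-series}, available because Lemma \ref{lem:L-closed-range} gives closed range) and the spectral gap $LL^\dagger \ge (1+k!)I - I$ coming from Lemma \ref{lem:lower-bound-commutator-L}. The one genuine difference is how the resulting double sum is evaluated. The paper works with $(L^\dagger)^+=\sum_r (I+LL^\dagger)^{-r}L$, bounds the partial sums on $\operatorname{dom}(L)$ by peeling off one geometric factor via $\sum_r (I+LL^\dagger)^{-r}\le \tfrac{1}{k!}I$ and then absorbing the remainder with the projection identity $\sum_r L^\dagger(I+LL^\dagger)^{-r}L = L^+L=P$, before extending by density and passing to the adjoint. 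You instead apply the series directly to $L$ and sum the geometric series exactly in the functional calculus of $M=LL^\dagger$, arriving at the identity $(L^+)^\dagger L^+=(LL^\dagger)^{-1}$ (equivalently $L^+=L^\dagger(LL^\dagger)^{-1}$, the standard Moore--Penrose formula for a surjective operator, since $\ker L^\dagger=\{0\}$ and $\operatorname{ran}L$ is closed). This buys a slightly cleaner argument that avoids the detour through $(L^\dagger)^+$ and makes the bound manifestly tight (it is saturated at $\alpha=0$ on $\ket{0}$). The obstacle you flag — upgrading the form bound $\|L^\dagger\phi\|^2\ge k!\|\phi\|^2$ from the Fock core to all of $\operatorname{dom}(L^\dagger)=\operatorname{dom}(M^{1/2})$ so that $\sigma(M)\subseteq[k!,\infty)$ — is real but routine, and the paper treats the same point with no more care (it is buried in the proof of Lemma \ref{lem:L-closed-range}); the only other items to check are the domain manipulations justifying $\langle L^\dagger\phi,L^\dagger\phi'\rangle=\langle\phi,M\phi'\rangle$ for $\phi'\in\operatorname{dom}(M)$, which follow from $L=(L^\dagger)^\dagger$.
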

\begin{proof}
   We find by Lemma \ref{lem:L-closed-range} that $L$ has closed range. We aim to apply Lemma \ref{lem:pseudoinverse-series} for the pseudoinverse of $ L^\dagger $. Let $ \ket{\psi} \in \operatorname{dom}(L) $. Then, by \cite[Theorem 5.1.9(iii)]{pedersen2012analysis}, it holds that
    \begin{equation*}
        L (I + L^\dagger L)^{-1} \ket{\psi} = (I + L L^\dagger)^{-1} L \ket{\psi}
    \end{equation*}
    (see also the proof of Theorem 3.2 in \cite{lardy1975series} for a direct argument). Now define 
    \begin{equation*}
        A_n := \sum_{r=1}^n (I + L L^\dagger)^{-r} L \,,
    \end{equation*}
    which is a bounded operator. We claim that $ \|A_n \ket{\psi} \| \leq \frac{1}{\sqrt{k!}} \|\ket{\psi}\| $ for all $ \ket{\psi} \in \operatorname{dom}(L) $. Assuming this claim for now, since $ \operatorname{dom}(L) $ is dense, the BLT theorem implies that $ \|A_n\|_\infty \leq \frac{1}{\sqrt{k!}} $ for all $ n \in \mathbb{N} $. Lemma \ref{lem:pseudoinverse-series} states that the operators $ A_n $ converge to $ (L^\dagger)^+ $ in norm, thus also $ \|(L^\dagger)^+\|_\infty \leq \frac{1}{\sqrt{k!}} $. Since $ (L^\dagger)^+ = (L^+)^\dagger $ (see \cite[Theorem 8(g)]{ben1963contributions}, which is attributed to \cite{tseng1949generalized}), the assertion follows. It remains to prove the claim. We start by bounding
    \begin{align}
        \sum_{r=1}^n (I + L L^\dagger)^{-r} 
        &\leq \sum_{r=1}^n (1+k!)^{-r} I \nonumber \\ 
        &\leq \sum_{r=1}^\infty (1+k!)^{-r} I = \frac{1}{k!} I \label{lem:bound-inner-sum-pseudoinverse}
    \end{align}
    for any $ n \in \mathbb{N} $, where we have used Lemma \ref{lem:lower-bound-commutator-L} in the first inequality since $ I + L L^\dagger \geq I + [L, L^\dagger] $. Next, we consider $ \ket{\psi} \in \operatorname{dom}(L) $ again and find that
    \begin{align*}
        \bra{\psi} A_n^\dagger A_n \ket{\psi} &= \bra{\psi} L^\dagger \left(\sum_{r=1}^n (I + L L^\dagger)^{-r}\right)^2 L \ket{\psi} \\
        &\leq \frac{1}{k!} \bra{\psi} L^\dagger \sum_{r=1}^n (I + L L^\dagger)^{-r} L \ket{\psi} \\
        &\leq \frac{1}{k!} \bra{\psi} \sum_{r=1}^\infty L^\dagger (I + L L^\dagger)^{-r} L \ket{\psi} \\
        &\leq \frac{1}{k!} \bra{\psi} P \ket{\psi} \leq \frac{1}{k!} \,.
    \end{align*} 
    In the first inequality, we applied \eqref{lem:bound-inner-sum-pseudoinverse}. In the third line, we used that $ (I + L L^\dagger)^{-r} $ is positive for all $ r \in \mathbb{N} $. In the final line, we used the definition of the pseudoinverse of $ L $ from Lemma \ref{lem:pseudoinverse-series} and the fact that $ L^+ L = P $, where $ P $ is the projection onto the closure of $ \operatorname{ran}(L^+) $ (see \cite[Eq. (26)]{ben1963contributions}).
\end{proof}

\section{Discussion on possible experimental realizations}
\label{app:experimental_realizations}

The required dissipation $L_{r,\alpha}=b^{r}\left(b-\alpha\right)$ can be obtained by engineering a coupling Hamiltonian $H=g a^{\dagger}b^{r}\left(b-\alpha\right)+h.c.$
between the probe and a fast decaying ancilla.
Here, we discuss possible platforms and protocols to obtain this Hamiltonian.
We study the case of $r=1,$ i.e. 
$L_{1,\alpha}=b\left(b-\alpha\right),$
where a more general and detailed discussion is left for future work.

The standard two-photon dissipation used for cat code is typically obtained by coupling the two resonators, the probe (referred to as the storage mode) and the ancilla (buffer mode),
to asymmetrically threaded SQUID (ATS).
The ATS consists of two Josephson junctions with energies $E_{j_{1}},E_{j_{2}}$ that are connected in parallel and split in the middle by a linear inductor with inductance $L_b$ which forms two loops (see Refs \cite{lescanne2020exponential, chamberland2022building, Guillaud.2023} for more details and illustrations).
The loops have magnetic fluxes of $\varphi_{\text{ext},1},$ $\varphi_{\text{ext},2}.$
The potential energy of this system is given by:
\begin{align}
\begin{split}
& U_{\varphi}=\frac{1}{2}E_{L,b}\varphi^{2}-2E_{j_{1}}\cos\left(\varphi_{\text{ext},1}+\varphi_{1}\right)-2E_{j_{2}}\cos\left(\varphi_{\text{ext},2}+\varphi_{2}\right)\\
&=\frac{1}{2}E_{L,b}\varphi^{2}-2E_{J}\cos\left(\varphi_{\Sigma}\right)\cos\left(\varphi+\varphi_{\Delta}\right)-2\Delta E_{J}\sin\left(\varphi_{\Sigma}\right)\sin\left(\varphi+\varphi_{\Delta}\right),
\label{eq:potential_energy_ats}
\end{split}
\end{align}
where $\varphi$ is the operator of the phase difference across the ATS,
$E_{J}=\frac{E_{j_{1}}+E_{j_{2}}}{2},\Delta E_{J}=\frac{E_{j_{1}}-E_{j_{2}}}{2},$
and $\varphi_\Sigma=\frac{\varphi_1+\varphi_2}{2},$ $\varphi_\Delta=\frac{\varphi_1-\varphi_2}{2}.$
Taking $\varphi_{\Delta}=\pi/2,$ $\varphi_{\Sigma}=\pi/2+\epsilon\left(t\right)$ where $ \epsilon\left(t\right)=\epsilon_{0}\cos\left(\omega_{p}t\right)$:
\begin{align*}
U_{\varphi}=\frac{1}{2}E_{L,b}\varphi^{2}+2E_{J}\epsilon_{1}\left(t\right)\cos\left(\varphi+\varphi_{\Delta}\right)-2\Delta E_{J}\sin\left(\varphi+\varphi_{\Delta}\right).    
\end{align*}
Taking $\Delta E_{J}=0$ we get $U_{\varphi}=\frac{1}{2}E_{L,b}\varphi^{2}-2E_{J}\epsilon\left(t\right)\sin\left(\varphi\right),$ the full Hamiltonian is then
\begin{align*}
 H=\omega_{a}a^{\dagger}a+\omega_{b}b^{\dagger}b-2E_{J}\epsilon\left(t\right)\sin\left(\varphi_{a}\left(a+a^{\dagger}\right)+\varphi_{b}\left(b+b^{\dagger}\right)\right)   
\end{align*}
Expanding the sine up to third order yields
\begin{align*}
&H=\omega_{a}a^{\dagger}a+\omega_{b}b^{\dagger}b-2E_{J}\epsilon\left(t\right)\left[\varphi_{a}\left(a+a^{\dagger}\right)+\varphi_{b}\left(b+b^{\dagger}\right)\right]\\
&+\frac{E_{J}}{3}\epsilon\left(t\right)\left[\varphi_{a}\left(a+a^{\dagger}\right)+\varphi_{b}\left(b+b^{\dagger}\right)\right]^{3}.
\end{align*}
In the standard two-photon dissipation a drive of $\left(\epsilon_{d}e^{-i\omega_{d}t}+\epsilon_{d}^{*}e^{i\omega_{d}t}\right)\left(a+a^{\dagger}\right)$ is introduced, in our case we need to introduce instead a time-dependent “beam-splitter” interaction between the modes $\left(\epsilon_{d}e^{-i\omega_{d}t}+\epsilon_{d}^{*}e^{i\omega_{d}t}\right)\left(a^{\dagger}b+h.c.\right).$
Moving to the interaction frame with respect to $\omega_{a}a^{\dagger}a+\omega_{b}b^{\dagger}b$ we get
\begin{align*}
&H_{\text{rot}}=-2E_{J}\epsilon\left(t\right)\left[\varphi_{a}\left(ae^{-i\omega_{a}t}+a^{\dagger}e^{i\omega_{a}t}\right)+\varphi_{b}\left(be^{-i\omega_{b}t}+b^{\dagger}e^{i\omega_{b}t}\right)\right]+\\
&\frac{E_{J}}{3}\epsilon\left(t\right)\left[\varphi_{a}\left(ae^{-i\omega_{a}t}+a^{\dagger}e^{i\omega_{a}t}\right)+\varphi_{b}\left(be^{-i\omega_{b}t}+b^{\dagger}e^{i\omega_{b}t}\right)\right]^{3}+\\
&\left(\epsilon_{d}e^{-i\omega_{d}t}+\epsilon_{d}^{*}e^{i\omega_{d}t}\right)\left(a^{\dagger}be^{i\left(\omega_{a}-\omega_{b}\right)t}+h.c.\right).    
\end{align*}
By taking $\omega_{p}=2\omega_{b}-\omega_{a}$ we get that the $b^{\dagger2}a+h.c.$ are non-rotating, and by taking $\omega_{d}=\omega_{b}-\omega_{a}$ we obtain a non-rotating $a^{\dagger}b+h.c.$.
Given that $\omega_{b}\neq\omega_{a}$ it can be seen that all other terms are fast rotating,
hence the rotating wave approximation leaves us with 
\begin{align*}
H_{\text{RWA}}=g_{2}\left(b^{2}a^{\dagger}+h.c.\right)+g_{1}\left(ba^{\dagger}+h.c.\right)=g_{2}\left[ b\left(b+\frac{g_{1}}{g_{2}}\right)a^{\dagger}+h.c. \right]  
\end{align*}
where $g_{2}=\frac{E_{J}\epsilon_{0}}{2}\varphi_{b}^{2}\varphi_{a}$,  $g_{1}=\epsilon_{d}\varphi_{b}\varphi_{a}$.

This protocol requires replacing the drive $\left(\epsilon_{d}e^{-i\omega_{d}t}+\epsilon_{d}^{*}e^{i\omega_{d}t}\right)\left(a+a^{\dagger}\right)$
with a time dependent coupling $\left(\epsilon_{d}e^{-i\omega_{d}t}+\epsilon_{d}^{*}e^{i\omega_{d}t}\right)\left(a^{\dagger}b+h.c.\right)$ which may not be feasible experimentally. We therefore propose another approach with this platform that may be more realistic.
In \eqref{eq:potential_energy_ats} we can choose the following parameters $\varphi_{\Delta}=0,$ $\varphi_{\Sigma}=\pi/4+\epsilon\left(t\right), \epsilon\left(t\right)\ll1$
and also $\Delta E_J >0.$ The potential energy, in leading order of $\epsilon \left(t\right),$ is then:
\begin{align*}
U_{\varphi}=\frac{1}{2}E_{L,b}\varphi^{2}-2E_{J}\frac{1}{\sqrt{2}}\left(1-\epsilon\left(t\right)\right)\cos\left(\varphi\right)-2\Delta E_{J}\frac{1}{\sqrt{2}}\left(1+\epsilon\left(t\right)\right)\sin\left(\varphi\right),    
\end{align*}
and the Hamiltonian is thus:
\begin{align*}
H&=\omega_{a}a^{\dagger}a+\omega_{b}b^{\dagger}b-\sqrt{2}\left(E_{J}\cos\left(\varphi\right)+\Delta E_{J}\sin\left(\varphi\right)\right)+\sqrt{2}\epsilon\left(t\right)\left(E_{J}\cos\left(\varphi\right)-\Delta E_{J}\sin\left(\varphi\right)\right)\\
&=H_{0}+H_{1}+H_{\epsilon},
\end{align*}
where $H_0=\omega_{a}a^{\dagger}a+\omega_{b}b^{\dagger}b,$
$H_1=-\sqrt{2}\left(E_{J}\cos\left(\varphi\right)+\Delta E_{J}\sin\left(\varphi\right)\right),$
and $H_\epsilon=\sqrt{2}\epsilon\left(t\right)\left(E_{J}\cos\left(\varphi\right)-\Delta E_{J}\sin\left(\varphi\right)\right).$
Let us focus on $H_{\epsilon}$ and expand it up to the third order:
\begin{align*}
&H_{\epsilon}=-\sqrt{2}\Delta E_{J}\epsilon\left(t\right)\left[\varphi_{a}\left(a+a^{\dagger}\right)+\varphi_{b}\left(b+b^{\dagger}\right)-\frac{1}{6}\left[\varphi_{a}\left(a+a^{\dagger}\right)+\varphi_{b}\left(b+b^{\dagger}\right)\right]^{3}\right]\\
&-\sqrt{2}\epsilon\left(t\right)E_{J}\frac{\left(\varphi_{a}\left(a+a^{\dagger}\right)+\varphi_{b}\left(b+b^{\dagger}\right)\right)^{2}}{2}.    
\end{align*}
Note that the since $\cos\left(\varphi\right)$ contains $a^{\dagger}a$ and $b^{\dagger}b$ terms the frequencies $\omega_{a},\omega_{b}$ are changed to $\bar{\omega}_{a}=\omega_{a}-\sqrt{2}E_{J}\varphi_{a}^{2}$, $\bar{\omega}_{b}=\omega_{b}-\sqrt{2}E_{J}\varphi_{b}^{2}$.
Moving to the interaction frame with respect to $\bar{\omega}_{a}a^{\dagger}a+\bar{\omega}_{b}b^{\dagger}b$ we get that $\epsilon\left(t\right)$ is resonant with $\left(a^{2}b^{\dagger}e^{-i\left(2\bar{\omega}_{a}-\bar{\omega}_{b}\right)t}+h.c.\right)$ if $\epsilon\left(t\right)$ has a frequency of $2\bar{\omega}_{a}-\bar{\omega}_{b}.$
In addition, $\epsilon\left(t\right)$ is resonant with $\left(ab^{\dagger}e^{-i\left(\bar{\omega}_{a}-\bar{\omega}_{b}\right)t}+h.c.\right)$ if $\epsilon\left(t\right)$ has a frequency of $\bar{\omega}_{a}-\bar{\omega}_{b}.$
We can therefore use $\epsilon\left(t\right)$ that has two frequencies: $\epsilon\left(t\right)=\epsilon_{1}\cos\left(\omega_{p,1}t\right)+\epsilon_{2}\cos\left(\omega_{p,2}t\right)$ with $\omega_{p,1}=2\bar{\omega}_{a}-\bar{\omega}_{b}$, $\omega_{p,2}=\bar{\omega}_{a}-\bar{\omega}_{b}.$
It should be checked now that all other terms are fast rotating and can be therefore neglected.
In $H_{\epsilon}$ all other terms rotate as $\bar{\omega}_{a/b},2\bar{\omega}_{a/b},3\bar{\omega}_{a/b}$ hence $\epsilon\left(t\right)$ is not resonant with any of these terms. $H_{1}$ does not contain any $\epsilon\left(t\right)$ hence all its terms are fast rotating. We therefore obtain $H_{\text{RWA}}=g_{2}\left(b^{2}a^{\dagger}+h.c.\right)+g_{1}\left(ba^{\dagger}+h.c.\right)$ with $g_{2}=\Delta E_{J}\epsilon_{1}\frac{\varphi_{b}^{2}\varphi_{a}}{\sqrt{2}}$, 
$g_{1}=\sqrt{2}E_{J}\epsilon_{2}\varphi_{a}\varphi_{b}$.

For $r>1$ higher-orders of $\sin\left(\varphi\right), \cos\left(\varphi\right)$ should be used with suitable frequency matching conditions.

\end{document}